\theoremstyle{plain}
\newtheorem{theorem}{Theorem}
\newtheorem{lemma}[theorem]{Lemma}
\newtheorem{proposition}[theorem]{Proposition}
\newtheorem{definition}{Definition}
\theoremstyle{remark}
\newtheorem{remark}[theorem]{Remark}
\title[Evolution of finite temperature Bose-Einstein Condensates]{Evolution of finite temperature Bose-Einstein Condensates: Some rigorous studies on condensate growth}
\author[G. Staffilani]{Gigliola Staffilani
}
\address{Department of Mathematics, Massachusetts Institute of Technology, Cambridge, MA 02139, USA}
\email{gigliola@math.mit.edu} 
\thanks{G.S. is  funded in part by  the NSF grants DMS-2052651, DMS-2306378 and the Simons Foundation through the Simons Collaboration on Wave Turbulence.}
\author[M.-B. Tran]{Minh-Binh Tran}
\address{Department of Mathematics, Texas A\&M University, College Station, TX 77843, USA}
\email{minhbinh@tamu.edu} 
\thanks{M.-B. T is  funded in part by  a   Humboldt Fellowship,   NSF CAREER  DMS-2303146, and NSF Grants DMS-2204795, DMS-2305523,  DMS-2306379.}
\begin{document}
\date{\today}

\begin{abstract} 
In trapped Bose-Einstein condensates (BECs), \emph{condensate growth} refers to the process in which an increasing number of quasi-particles are immediately transferred from the non-condensate state (the thermal cloud) into the condensate state following the initial formation of the BEC. Despite its physical significance, this phenomenon has not yet been studied rigorously from a mathematical standpoint.

In this work, we investigate a kinetic equation whose collision operator includes three types of wave interactions: one corresponding to a 3-wave process, and two classified as 4-wave processes. This wave kinetic equation models the evolution of the density function of the thermal cloud. We establish the immediate formation of condensation in solutions to this equation, thus providing a rigorous demonstration of the condensate growth phenomenon.
\end{abstract}

\maketitle

 \tableofcontents

\section{Introduction}\label{intro} 
\subsection{The evolution of Bose-Einstein condensations after being formed  and condensate growth}\label{Subsec:ConGrowth}

The experiments on Bose-Einstein condensation (BEC) in dilute atomic gases~\cite{WiemanCornell,Ketterle,bradley1995evidence} have initiated a period of intense theoretical and experimental activity. In these experiments, rapid evaporative cooling lowers the temperature of the Bose gas below the condensation point--the BEC transition temperature--leading to the formation of a condensate. One of the most fundamental aspects of this phenomenon concerns the nonequilibrium growth of the condensate, which occurs after the BEC has initially formed at finite temperature. \footnote{Although the system temperature is extremely low below the BEC transition temperature, absolute zero cannot be reached experimentally. Consequently, the finite-temperature regime is of primary importance.
}

\textit{Condensate growth} refers to the process by which the condensate, once formed at finite temperature, immediately begins to grow at a finite rate~\cite{anglin2002bose,miesner1998bosonic}. Figure~\ref{fig1} illustrates this phenomenon. The image is adapted from Figure 5 in~\cite{bijlsma2000condensate}. The condensate initially contains $10^2$, $10^3$, $10^4$, $10^5$, or $10^6$ particles at time $t = 0$; these values serve as the initial conditions. The figure displays the growth in the number of particles in the condensate for $t > 0$. A similar illustration can be found in Figure 3 of~\cite{miesner1998bosonic}.

\begin{figure}[h!]
	\centering
	\includegraphics[scale=0.3]{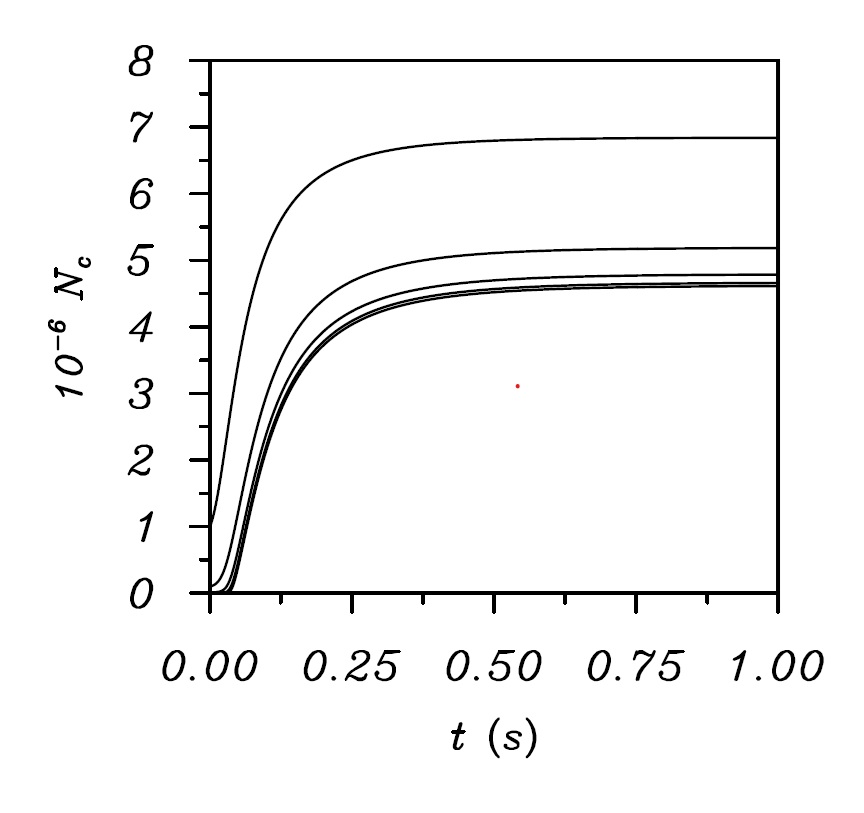}
\caption{Condensate growth curves for different initial numbers of condensed particles. At time $t = 0$, the condensate initially contains $10^2$, $10^3$, $10^4$, $10^5$, or $10^6$ particles. This is Figure 5 in~\cite{bijlsma2000condensate}.}

	\label{fig1}
\end{figure}

Understanding the theoretical foundation of condensate growth remains a central area of research \cite{anglin2002bose,miesner1998bosonic}. This  quantum kinetics problem has been studied in several perspectives, from  quantum optics to  condensed matter theory \cite{gardiner1998quantum,gardiner1997kinetics,kohl2002growth,moss2002formation}. 
 
Theoretically, the atomic states can be divided into two categories: the condensate band, which includes energy levels influenced by the ground-state condensate, and the non-condensate band, which encompasses all higher energy levels. The non-condensate band, often referred to as the thermal cloud, acts as a thermal reservoir containing the majority of atoms, and is responsible for supplying atoms that contribute to condensate growth.
At finite temperatures, the BEC coexists with a thermal cloud of non-condensed atoms. In this setting, two types of dynamical processes can be distinguished: scattering and growth. Scattering dynamics involve collisions between atoms within the same energy band, which do not lead to transitions between bands. In contrast, growth dynamics involve collisions between atoms in the thermal cloud that result in the transfer of atoms into the condensate band, thereby increasing the condensate fraction. The growth dynamics can be described by collisional kinetic theory.

While rigorous proofs of the formation of BECs have been established~\cite{erdHos2010derivation,lieb2002proof}, a mathematically rigorous proof of condensate growth  remains completely open, despite its significance and the availability of numerous numerical and experimental studies. Therefore, our main goal is to provide a first step toward establishing such a mathematical foundation.

To this end, we consider the following equation, which describes the dynamics of the density function of the thermal cloud $f(t,k)$

\begin{equation}
	\label{4wave}
	\partial_t f(t,k) = \mathscr Q[f](t,k) := C_{12}[f](t,k) + C_{22}[f](t,k) + C_{31}[f](t,k), \ \ \ \ f(0,k) \ = \ f_0(k) 
\end{equation}
where the collision operators are given by

\begin{align}
	\label{C12}
	\begin{split}
		C_{12}[f] :=\ &\mathfrak c_{12}\iint_{\mathbb{R}^3 \times \mathbb{R}^3} \mathrm{d}k_1\, \mathrm{d}k_2 \; \mathcal{W}_{12}(|k|,|k_1|,|k_2|) \, \delta(\omega - \omega_1 - \omega_2) \, \delta(k - k_1 - k_2) \left[ f_1 f_2 - (f_1 + f_2) f \right] \\
		&- 2 \iint_{\mathbb{R}^3 \times \mathbb{R}^3} \mathrm{d}k_1\, \mathrm{d}k_2 \; \mathcal{W}_{12}(|k_1|,|k|,|k_2|) \, \delta(\omega_1 - \omega - \omega_2) \, \delta(k_1 - k - k_2)  \left[ f f_2 - (f + f_2) f_1 \right],
	\end{split}
\end{align}

\begin{equation}
	\label{C22}
	\begin{aligned}
		C_{22}[f] :=\ &\mathfrak c_{22}\iiint_{\mathbb{R}^{3 \times 3}} \mathrm{d}k_1\, \mathrm{d}k_2\, \mathrm{d}k_3 \; \mathcal{W}_{22}(|k|,|k_1|,|k_2|,|k_3|) \delta(k + k_1 - k_2 - k_3) \, \delta(\omega + \omega_1 - \omega_2 - \omega_3) \\
		&\quad \times \left[ f_2 f_3 (f_1 + f) - f f_1 (f_2 + f_3) \right],
	\end{aligned}
\end{equation}

\begin{align}
	\label{C31}
	\begin{split}
		C_{31}[f] :=\ &\mathfrak c_{31}\iiint_{\mathbb{R}^{3 \times 3}} \mathrm{d}k_1\, \mathrm{d}k_2\, \mathrm{d}k_3 \; \mathcal{W}_{31}(|k|,|k_1|,|k_2|,|k_3|)  \delta(\omega - \omega_1 - \omega_2 - \omega_3) \, \delta(k - k_1 - k_2 - k_3) \\
		&\quad \times \left[ f_1 f_2 f_3 - (f_1 f_2 + f_2 f_3 + f_1 f_3) f \right] \\
		&- 3 \iiint_{\mathbb{R}^{3 \times 3}} \mathrm{d}k_1\, \mathrm{d}k_2\, \mathrm{d}k_3 \; \mathcal{W}_{31}(|k_1|,|k|,|k_2|,|k_3|) \delta(\omega_1 - \omega - \omega_2 - \omega_3) \, \delta(k_1 - k - k_2 - k_3) \\
		&\quad \times \left[ f f_2 f_3 - (f f_2 + f_2 f_3 + f f_3) f_1 \right],
	\end{split}
\end{align}
where $\mathfrak{c}_{12}, \mathfrak{c}_{22}, \mathfrak{c}_{31}$ are positive constants, $t \in \mathbb{R}_+$ denotes the time variable,  $k \in \mathbb{R}^3$ is the three-dimensional momentum variable, $\omega(k)=\omega(|k|)$ is the dispersion relation of the quasi-particles, and $f_0(k)=f_0(|k|)\ge0$ is the initial density function of the thermal cloud. We assume that both the dispersion relation $\omega$ and the solution $f$ are radial. We refer the readers to Assumption X below for the more precise assumptions on $\omega$. Thus, we can identify $f(k)$ with $f(|k|)$ and with $f(\omega)$. We have used the shorthand notations:
\begin{equation}\label{Shorthand}	\begin{aligned}
& f = f(k) = f(|k|) = f(\omega),\quad
f_1 = f(k_1) = f(|k_1|) = f(\omega_1),\quad
f_2 = f(k_2) = f(|k_2|) = f(\omega_2),\\
&\text{and likewise for } \omega = \omega(k) = \omega(|k|)	\text{ etc.}\end{aligned}
\end{equation}

Following \cite{soffer2020energy}, we set the collision kernel $\mathcal{W}_{12}(|k|,|k_1|,|k_2|)$ to be
\begin{equation}
	\label{W21}
	\mathcal{W}_{12}(|k|,|k_1|,|k_2|) = |k|\,|k_1|\,|k_2|.
\end{equation}

Following \cite{staffilani2024energy,staffilani2024condensation}, we set
\begin{equation}
	\label{W22}
	\mathcal{W}_{22}(|k|,|k_1|,|k_2|,|k_3|) = \left[ \frac{\max\{|\omega - \omega_2|,\  |\omega_1 - \omega_2|,\ |\omega - \omega_3|,\  |\omega_1 - \omega_3|\}}{\omega + \omega_1 + \omega_2 + \omega_3} \right]^\mu,
\end{equation}
for $\mu \ge 0$, where the case $\mu = 0$ corresponds to the one studied in \cite{staffilani2024energy,staffilani2024condensation}.

Finally, we define, as a simplification of the kernel computed in \cite{tran2020boltzmann}
\begin{equation}
	\label{W31}
	\mathcal{W}_{31}(|k|,|k_1|,|k_2|,|k_3|) = \left|\,|k| - |k_1| - |k_2| - |k_3|\,\right|^{-1}|k|\,|k_1|\,|k_2|\,|k_3|.
\end{equation}

We refer to Subsection~\ref{Subsec:PhysicalContext} for the physical context of this model.  The BEC can be mathematically interpreted as a Dirac mass \( n \delta_{k=0} \), where \( n \) denotes the condensate density. In the condensate growth process, the initial density \( f_0(k) \) is a regular function which does not contain any Dirac mass. However, for an immediate time \( t > 0 \), the solution \( f(t,k) \) immediately develops a Dirac mass, representing the accumulation of quasi-particles from the thermal cloud into the condensate, thereby marking the onset of condensate growth.

The main goal of this work is to rigorously prove the onset of condensate growth by demonstrating the immediate formation of a Dirac delta function at an immediate time \( t > 0 \) in the solution to the thermal cloud equation~\eqref{4wave}, starting from a regular initial condition.
In other words, we initially assume that  (see \eqref{Radon} for the precise definition of the space of solutions)
\begin{equation}\label{Ini}
	\int_{\{0\}}\mathrm{d}k\, f_0(k) = 0,
\end{equation}
and define  
\begin{equation}\label{T0}
	T_0 := \sup\left\{T ~\middle|~ \int_{\{0\}}\mathrm{d}k\, f(t,k) = 0 \quad \text{for all } t \in [0,T) \right\}.
\end{equation}

Our goal is to prove that \( T_0 = 0 \) under the following  assumptions on the parameters of the equation, and hence, establish the first rigorous study of condenstate growth.

\bigskip

\textbf{Assumption A:} \\
\textit{
	\begin{itemize}
		\item[(A1)] There exist constants \(  1 < 1/\delta', 1/\delta \le 2 \), and \( 0 < C_{\mathrm{disper}}', C_{\mathrm{disper}} \) such that the dispersion relation satisfies
		\begin{equation}\label{CondDisper}
			C_{\mathrm{disper}}' |k|^{1/\delta'} \ge \omega(k) \ge C_{\mathrm{disper}} |k|^{1/\delta}.
		\end{equation}
	The function \(\omega(|k|)\) is continuous and satisfies \(\omega'(|k|) \ge 0\) for all \( k \in \mathbb{R}^3 \).	Moreover,
		\begin{equation}\label{CondDisper1}
			\omega(|k_1| + |k_2|) \ge \omega(|k_1|) + \omega(|k_2|), \quad \forall\, k_1, k_2 \in \mathbb{R}^3.
		\end{equation}
		For each \(\omega\), there exists a unique value of \(|k|\) such that \(\omega(|k|) = \omega\). We denote this dependence explicitly by writing \(|k| = |k|(\omega)\). 		In addition, we assume $\omega(0)=0$.
		\item[(A2)] We assume that the solution \( f \) is radial, i.e., \( f(k) = f(|k|) \). Then, by a change of variables, we have
		\begin{equation}
			\int_{\mathbb{R}^3} \mathrm{d}k\, f(k) = 2\pi^2 \int_{[0,\infty)} \mathrm{d}\omega\, \frac{|k|^2}{\omega'(|k|)} f(\omega).
		\end{equation}
		Therefore, we can identify \( f(k) \) with \( f(\omega) \) and write \( f(k) = f(\omega) \).
			\item[(A3)] There exist constants \( 0 < \theta\le 1, C_\mathfrak{A} \) such that
		\begin{equation}\label{CondA}
			0 \le \mathfrak{A}(\omega) := \frac{|k|^2}{\omega'(|k|)}  \le C_\mathfrak{A} \omega^\theta, \quad \text{and} \quad \mathfrak{A}(0) = 0.
		\end{equation}
		Moreover,   \( \mathfrak{A}(|k|) \) is a non-decreasing function  \( |k| \).	
		\item[(A4)] There exist constants \( 0 \le \varrho\le 1 \), and \( 0 < C_\Theta', C_\Theta \) such that
		\begin{equation}\label{CondTheta}
			C_\Theta' \omega^\varrho \le \Theta(\omega) := \frac{|k|}{\omega'(|k|)} \le C_\Theta \omega^\varrho.
		\end{equation}
		Moreover, the function \( \Theta(\omega) \) is continuous and non-decreasing with respect to \( \omega \).	
		\item[(A5)] The constants satisfy the following constraint:
		\begin{equation}\label{Parameters}
			2\delta - \tfrac{1}{2} > \varrho.
		\end{equation}
	\end{itemize}
}

\begin{remark}
	As an example, consider the dispersion relation \( \omega(k) = |k|^2 \). Then
	\[
	\Theta(\omega) = \frac{|k|}{\omega'(|k|)} = \frac{1}{2}, \quad \text{and} \quad \mathfrak{A}(\omega) = \frac{|k|^2}{\omega'(|k|)} = \frac{|k|}{2}.
	\]
	In this case, we have \( \varrho = 0 \), \( \delta =\delta'= \tfrac{1}{2} \), and \( \theta = \frac12 \). 
\end{remark}

\subsection{Physical context of the model}\label{Subsec:PhysicalContext}
In the pioneering works \cite{KD1,KD2}, Kirkpatrick and Dorfman pioneered to attack the complex problem of formulating the kinetic equation for a gas of particles outside the condensate. This line of research was later extended by Zaremba, Nikuni, and Griffin in \cite{ZarembaNikuniGriffin:1999:DOT}, where they derived a fully coupled system consisting of a quantum Boltzmann equation for the density function of the normal fluid (thermal cloud) and a Gross-Pitaevskii equation for the wavefunction of the Bose-Einstein condensate (BEC). Independently, Pomeau, Brachet, Metens, and Rica also developed a similar model in \cite{PomeauBrachetMetensRica}.

All of these models involve two types of collision operators, whose explicit forms will be provided later.

\begin{itemize}
	\item The $\mathscr{C}_{22}$ collision operator models $2 \leftrightarrow 2$ interactions among the excited atoms themselves.
	\item The $\mathscr{C}_{12}$ collision operator models $1 \leftrightarrow 2$ interactions between the condensate and the excited atoms.
\end{itemize}
A completely new collision operator, $\mathscr{C}_{31}$, which was previously absent from existing models, was proposed by Reichl and Gust in \cite{ReichlGust:2012:CII}. This operator accounts for $1 \leftrightarrow 3$ type collisions between excitations. The existence of this new collision $\mathscr{C}_{31}$ remained an open problem for over a decade, until it was mathematically confirmed in \cite{tran2020boltzmann}. Experimental evidences of $\mathscr{C}_{31}$ have also been provided in \cite{reichl2019kinetic}. 
For further discussions on this topic, we refer the reader to \cite{GriffinNikuniZaremba:BCG:2009, PomeauBinh, tran2021thermal}.

The full kinetic equation, which includes all three quantum kinetic collision operators, reads
\begin{equation}
	\label{KineticFinal}
	\partial_t f(t,k) = \mathscr{C}_{12}[f](k) + \mathscr{C}_{22}[f](t,k) + \mathscr{C}_{31}[f](t,k), \ \ \ \ f(0,k) \ = \ f_0(k) 
\end{equation}
where the forms of the quantum kinetic collision operators $\mathscr{C}_{12}$, $\mathscr{C}_{22}$, and $\mathscr{C}_{31}$ are given below:

\begin{equation}
	\label{C12Discrete}
	\begin{aligned}
		\mathscr{C}_{12}[f](t,k) =\; & 4\pi g^2 n \iiint_{\mathbb{R}^3 \times \mathbb{R}^3 \times \mathbb{R}^3} \mathrm{d}k_1\, \mathrm{d}k_2\, \mathrm{d}k_3 \; 
		\left[\delta(k - k_1) - \delta(k - k_2) - \delta(k - k_3)\right] \\
		& \times \delta(\omega(k_1) - \omega(k_2) - \omega(k_3)) \, (K_{1,2,3}^{1,2})^2 \, \delta(k_1 - k_2 - k_3) \\
		& \times \left[ f(k_2) f(k_3) (f(k_1) + 1) - f(k_1)(f(k_2) + 1)(f(k_3) + 1) \right],
	\end{aligned}
\end{equation}

\begin{equation}
	\label{C22Discrete}
	\begin{aligned}
		\mathscr{C}_{22}[f](t,k) =\; & \pi g^2 \iiiint_{\mathbb{R}^3 \times \mathbb{R}^3 \times \mathbb{R}^3 \times \mathbb{R}^3} \mathrm{d}k_1\, \mathrm{d}k_2\, \mathrm{d}k_3\, \mathrm{d}k_4 \;
		\left[\delta(k - k_1) + \delta(k - k_2) - \delta(k - k_3) - \delta(k - k_4)\right] \\
		& \times (K_{1,2,3,4}^{2,2})^2 \, \delta(k_1 + k_2 - k_3 - k_4) \, \delta(\omega(k_1) + \omega(k_2) - \omega(k_3) - \omega(k_4)) \\
		& \times \left[ f(k_3) f(k_4) (f(k_1) + 1)(f(k_2) + 1) - f(k_1) f(k_2) (f(k_3) + 1)(f(k_4) + 1) \right],
	\end{aligned}
\end{equation}

\begin{equation}
	\label{C31Discrete}
	\begin{aligned}
		\mathscr{C}_{31}[f](t,k) =\; & \pi g^2 \iiiint_{\mathbb{R}^3 \times \mathbb{R}^3 \times \mathbb{R}^3 \times \mathbb{R}^3} \mathrm{d}k_1\, \mathrm{d}k_2\, \mathrm{d}k_3\, \mathrm{d}k_4 \;
		\left[\delta(k - k_1) - \delta(k - k_2) - \delta(k - k_3) - \delta(k - k_4)\right] \\
		& \times (K_{1,2,3,4}^{3,1})^2 \, \delta(k_1 - k_2 - k_3 - k_4) \, \delta(\omega(k_1) - \omega(k_2) - \omega(k_3) - \omega(k_4)) \\
		& \times \left[ f(k_2) f(k_3) f(k_4)(f(k_1) + 1) - f(k_1)(f(k_2) + 1)(f(k_3) + 1)(f(k_4) + 1) \right],
	\end{aligned}
\end{equation}
where $n$ is the condensate density and $g$ is the interaction constant. The quantities $(K_{1,2,3}^{1,2})^2$, $(K_{1,2,3,4}^{2,2})^2$, and $(K_{1,2,3,4}^{3,1})^2$ are explicit collision kernels.

 In general, the dynamics of the BEC can be described by the Gross-Pitaevskii equation, from which the condensate density \( n \) can be deduced as a function of time \( t \). However, since equation~\eqref{KineticFinal} is already quite complex, we assume that the condensate fraction is sufficiently large and therefore treat \( n \) as a constant. For the full dynamical coupling system, in which the full equation for $n$ is written, we refer the readers to \cite{PomeauBinh} (see also Figure \ref{fig2}). 
 In this case, we define the constants
\begin{equation}\label{Simpl1}
\mathfrak c_{12} := 4\pi g^2 n, \qquad \mathfrak c_{22} := \pi g^2, \qquad \mathfrak c_{31} := \pi g^2,
\end{equation}
to simplify notation in the collision operators.

 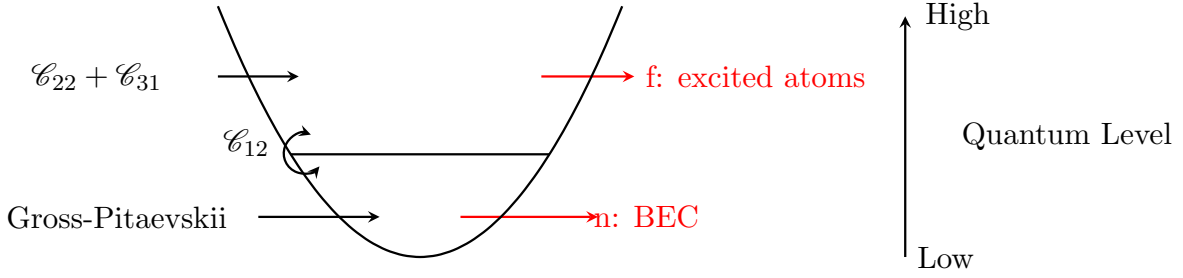
\begin{figure}
	\centering
	\resizebox{\textwidth}{!}{%
		\begin{tikzpicture}[>=stealth]
			\draw[thick, ->] (-2.5,2.25) -- ++(1,0) node[xshift=-2.5cm] {$\mathscr C_{22}+\mathscr C_{31}$};
			\draw[thick, ->, red] (1.5,2.25) -- ++(1.15,0) node[xshift=1.5cm] {f: excited atoms};
			\draw[thick, ->] (-2,.5) -- ++(1.5,0) node[xshift=-3.25cm] {Gross-Pitaevskii};
			\draw[thick, ->, red] (0.5,.5) -- ++(1.7,0) node[xshift=.6cm] {n: BEC};
			\draw[thick, ->] (6,0) node [xshift=.5cm]{Low} -- ++(0, 1.5)  node[xshift=2.0cm] {Quantum Level} -- ++(0,1.5) node[xshift=.65cm] {High};
			\draw[ thick,domain=-2.5:2.5,smooth,variable=\x,black] plot ({\x},{.5*\x*\x});
			\draw [thick, black] (-1.6, 1.28) -- (1.6, 1.28);
			
			\draw[thick, <->] (-1.35,1.525) arc (70:320:.25);
			\node at (-2.15, 1.425) {$\mathscr C_{12}$};
		\end{tikzpicture}
	}%
	\caption{The Bose--Einstein Condensate (BEC) and the excited atoms.}\label{fig2}
\end{figure}

Performing a common simplification by retaining only the higher-order terms and omitting the lower-order terms in  
\[
f(k_2) f(k_3) - f(k_1)(f(k_2) + f(k_3) + 1),
\]
\[
f(k_3) f(k_4) (f(k_1) + 1)(f(k_2) + 1) - f(k_1) f(k_2) (f(k_3) + 1)(f(k_4) + 1),
\]
\[
f(k_2) f(k_3) f(k_4)(f(k_1) + 1) - f(k_1)(f(k_2) + 1)(f(k_3) + 1)(f(k_4) + 1),
\]
we approximate them by
\begin{equation}\label{Simpl2}
f(k_2) f(k_3) - f(k_1)(f(k_2) + f(k_3)),
\end{equation}
\begin{equation}\label{Simpl3}
f(k_3) f(k_4)(f(k_1) + f(k_2)) - f(k_1) f(k_2)(f(k_3) + f(k_4)),
\end{equation}
\begin{equation}\label{Simpl4}
f(k_2) f(k_3) f(k_4) - \left( f(k_2) f(k_3) + f(k_2) f(k_4) + f(k_3) f(k_4) \right).
\end{equation}
Combining the simplifications \eqref{Simpl1}--\eqref{Simpl4}, and replacing the collision kernels $(K_{1,2,3}^{1,2})^2$, $(K_{1,2,3,4}^{2,2})^2$, and $(K_{1,2,3,4}^{3,1})^2$ with the much simpler forms $\mathcal{W}_{12}$, $\mathcal{W}_{22}$, and $\mathcal{W}_{31}$, respectively, we reduce \eqref{KineticFinal} to \eqref{4wave}.

The 4-wave collision operator \( C_{31} \) (and its full form \( \mathscr{C}_{31} \)) is highly sophisticated. Consequently, numerical simulations carried out by physicists have so far only been performed for \( C_{22} \) and \( C_{12} \), along with their full forms \( \mathscr{C}_{22} \) and \( \mathscr{C}_{12} \) (see \cite{bijlsma2000condensate,ZarembaNikuniGriffin:1999:DOT}). In contrast, the simulation of \( C_{31} \) and \( \mathscr{C}_{31} \) remains an entirely open problem. To ensure that our mathematical results apply to both the former model--comprising only $\mathscr C_{12}$ and $\mathscr C_{22}$--and the more complete model that includes all three collision operators $\mathscr C_{12}$, $\mathscr C_{22}$, and $\mathscr C_{31}$, we impose the following assumption:

\bigskip

\noindent\textbf{Assumption B.} \textit{We assume that} $\mathfrak c_{12}, \mathfrak c_{22} > 0$ \textit{while} $\mathfrak c_{31} \ge 0$.

\begin{remark}
	\textbf{Assumption B} implies that it is possible to set $\mathfrak c_{31} = 0$ in our mathematical results.
\end{remark}

\subsection{3-wave and 4-wave kinetic operators}

Wave turbulence theory, rooted in the works of Peierls \cite{Peierls:1993:BRK,Peierls:1960:QTS}, Brout and Prigogine \cite{brout1956statistical}, Zaslavskii and Sagdeev \cite{zaslavskii1967limits}, Hasselmann \cite{hasselmann1962non,hasselmann1974spectral}, Benney, Saffman, and Newell \cite{benney1969random,benney1966nonlinear}, and Zakharov \cite{zakharov2012kolmogorov}, describes the dynamics of nonlinear waves out of thermal equilibrium and has played a crucial role in both theoretical and applied physics.
In wave kinetic theory, there are two main types of kinetic collision operators: 3-wave and 4-wave interactions.
Following this definition, \( C_{12} \) is classified as a 3-wave collision operator, while \( C_{22} \) and \( C_{31} \) are defined as 4-wave  operators.

The dynamics of a dilute Bose gas at room temperatures can be described by the so-called Boltzmann-Nordheim equation~\cite{Nordheim}:
\begin{equation}
	\label{Nordheim}
	\partial_t F(t,k) = \mathscr{C}_{22}[F](t,k), \quad \text{with } \mu = 0,
\end{equation}
which simplifies to the following 4-wave kinetic equation:
\begin{equation}
	\label{4wavepre}
	\partial_t F(t,k) = C_{22}[F](t,k), \quad \text{with } \mu = 0.
\end{equation}
Note that $\mu$ is given in \eqref{W22}. 
It is well known in the physics literature~\cite{josserand2006self,PomeauBinh,Spohn:2010:KOT} that solutions to equations~\eqref{Nordheim} and~\eqref{4wavepre} can develop a singularity in finite time, which can be interpreted as the formation of a Bose-Einstein condensate in a certain sense. The formation of finite-time condensates for the 4-wave kinetic equation
in the special case $\omega(k) = |k|^2$ and $\mu = 0$, under the isotropic assumption $F(t,k) = F(t,\omega)$, has been studied in \cite{EscobedoVelazquez:2015:FTB,EscobedoVelazquez:2015:OTT}. These works rely on the fundamental convexity property:
\begin{equation}
	\label{4wavepre:convex1}
	\int_{\mathbb{R}_+} \mathrm{d}\omega\, C_{22}[F](\omega)\, \mathfrak{A}(\omega)\, \phi(\omega) \ge 0,
\end{equation}
for any convex test function $\phi(\omega)$. In particular, choosing
\[
\phi(\omega) = \left[1 - \frac{\omega}{\epsilon}\right]_+ = \max\left\{1 - \frac{\omega}{\epsilon},\, 0\right\},\ \ \ \epsilon>0,
\]
yields
\begin{equation}
	\label{4wavepre:convex2}
\partial_t \int_{\mathbb{R}_+} \mathrm{d}\omega\, F(t,\omega)\, \mathfrak{A}(\omega)\, \left[1 - \frac{\omega}{\epsilon}\right]_+ \ge 0,
\end{equation}
which indicates that the mass of \( F \) becomes increasingly concentrated near the origin as time evolves and $\epsilon$ tends to $0$.

As discussed in Subsection~\ref{Subsec:ConGrowth}, unlike the dynamics of dilute Bose gases at room temperature, which involve only $2 \leftrightarrow 2$ interactions, the dynamics of trapped Bose gases at finite temperature, governed by equation~\eqref{4wave}, must incorporate three types of interactions: $1 \leftrightarrow 2$, $2 \leftrightarrow 2$, and $3 \leftrightarrow 1$.
However, the expressions
\begin{equation}
	\label{4wavepre:convex3}
\int_{\mathbb{R}_+} \mathrm{d}\omega\, C_{12}[F](\omega)\, \mathfrak{A}(\omega)\, \phi(\omega)
\quad \text{and} \quad
\int_{\mathbb{R}_+} \mathrm{d}\omega\, C_{31}[F](\omega)\, \mathfrak{A}(\omega)\, \phi(\omega)
\end{equation}
do not have a definite sign. Even when choosing $\phi = 1$, the integrals exhibit bad signs:
\begin{equation}
	\label{4wavepre:convex4}
	\int_{\mathbb{R}_+} \mathrm{d}\omega\, C_{12}[F](\omega)\, \mathfrak{A}(\omega) \le 0,
	\quad \text{and} \quad
	\int_{\mathbb{R}_+} \mathrm{d}\omega\, C_{31}[F](\omega)\, \mathfrak{A}(\omega) \le 0.
\end{equation}

Moreover, the presence of the kernel \eqref{W22} also prevents the concentration of mass near the origin, as it regularizes the dynamics whenever any of the quantities $\omega, \omega_1, \omega_2, \omega_3$ approach zero.

 As a consequence, the strategy introduced in \cite{EscobedoVelazquez:2015:FTB, EscobedoVelazquez:2015:OTT} fails in the general case where all three types of collision operators, namely, $1 \leftrightarrow 2$, $2 \leftrightarrow 2$, and $1 \leftrightarrow 3$, and the new kernel \eqref{W22} are present.

Extensions of the results in \cite{EscobedoVelazquez:2015:FTB, EscobedoVelazquez:2015:OTT} to more general classes of dispersion relations and initial conditions have been recently developed in \cite{staffilani2024energy, staffilani2024condensation}. In these works, a novel strategy based on domain decomposition techniques \cite{halpern2009nonlinear, Lions:1989:OSA, toselli2004domain} is introduced. In the present work, we fully extend this methodology. The central idea is to partition the half-line $\mathbb{R}_+$ into small subdomains, enabling a divide-and-conquer approach. The mass concentration in each subdomain is then carefully compared, yielding refined estimates for accumulation near the origin. This framework further allows for a quantitative comparison of mass concentration under the combined influence of  collision operators $C_{12}$, $C_{31}$  and $C_{22}$, as well as the new kernel \eqref{W22}. The approach is  effective even when the rough estimate \eqref{4wavepre:convex2} fails to hold, due to the indefinite sign in \eqref{4wavepre:convex3}, and when the new kernel \eqref{W22} prevents mass concentration near the origin (see Remark \ref{RemarkMainTheo}). We note that some numerical simulations of the system concerning $C_{12}$ and $C_{22}$ have been done in \cite{das2025energy}.

In a companion paper~\cite{staffilani2025energyfinite}, we investigate the question of finite-time energy cascade for the kinetic equation~\eqref{4wave}.

We now briefly summarize what is generally known regarding the analysis of 4-wave kinetic equations of the type \eqref{4wavepre}, as well as 3-wave kinetic equations:

\begin{itemize}
	
	\item For 4-wave kinetic equations of the $2 \leftrightarrow 2$ type: convergence rates of discrete solutions and local well-posedness for the MMT model, a one-dimensional 4-wave kinetic equation, have been investigated in \cite{dolce2024convergence} and \cite{germain2023local}. Questions concerning near equilibrium stability, scattering theory, as well as the stability and cascade behavior of the Kolmogorov-Zakharov spectrum (a stationary solution), have been explored in \cite{menegaki20222, escobedo2024instability, ampatzoglou2024scattering} and \cite{collot2024stability}. Local well-posed and ill-posed results for $4$-wave kinetic equations and the hierarchy in polynomially weighted $L^\infty$ spaces for inhomogeneous 4-wave kinetic equations have also been studied in \cite{ampatzoglou2025ill,ampatzoglou2025optimal,GermainIonescuTran} and \cite{ampatzoglou2025inhomogeneous}.
	
	\item For 3-wave kinetic equations, we refer to the works \cite{GambaSmithBinh} on stratified flows in the ocean; \cite{cortes2020system, EPV, escobedo2023linearized1, escobedo2023linearized, escobedo2025local,nguyen2017quantum} on Bose-Einstein condensates; \cite{AlonsoGambaBinh, CraciunBinh, EscobedoBinh, GambaSmithBinh, tran2020reaction} on phonon interactions in anharmonic crystal lattices; \cite{das2024numerical, das2025energy,nguyen2017quantum, soffer2020energy, walton2022deep, walton2023numerical, walton2024numerical} on capillary waves; and \cite{rumpf2021wave} on beam waves. The formation of condensates for non-radial solutions to 3-wave kinetic equations has also been studied in \cite{staffilani2025formation} by the authors.
			\item The existence of classical solutions to the quantum kinetic equation, which includes the two collision operators $\mathscr{C}_{12}$ and $\mathscr{C}_{22}$, has been studied in \cite{soffer2018dynamics}.

	\item 4-wave kinetic equations on the torus have also been recently investigated in \cite{escobedo2024entropy, germain2024stability}. The 4-wave kinetic collision operator is closely connected to the Nordheim collision operator. A substantial body of work on the Nordheim equation has been developed in \cite{escobedo2003homogeneous,escobedo2007fundamental,Lu2014_RegularityCondensation,Lu2016_LongTimeBEC,Lu2018_LongTimeStrongBE}.

\end{itemize}

We now lay out the plan of the paper. In Section~\ref{Sec:Setting}, we provide the necessary definitions and state our main results. Sections~\ref{Sec:WeakFormulation} and~\ref{Sec:Estimates} are devoted to the weak formulations and the \emph{a priori} estimates. The global existence result is proved in Section~\ref{Sec:Global}. Section~\ref{Sec:Multiscale} describes the strategy for proving the condensate growth phenomenon and outlines the proofs in Sections~\ref{Sec:First}, \ref{Sec:Second}, \ref{Sec:Third}, and~\ref{Sec:CondensateGrowth}. The final section contains the proof of the main theorem.

\section{Main results}\label{Sec:Setting}

We denote by \( \mathscr{R}_+([0,\infty)) \) the set of all non-negative Radon measures \( f(\omega) \) on \( [0,\infty) \) such that
\begin{equation}\label{Radon}
	\|f\|_{\mathscr{R}_+} \ := \ \int_{[0,\infty)} \mathrm{d}\omega\, f(\omega) \ = \int_{[0,\infty)} f(\mathrm{d}\omega)  < \ \infty.
\end{equation}

We have the definition. 
\begin{definition}
	Under Assumptions A and B, we say that \( f(t,k) = f(t,\omega) \) is a \emph{global mild radial solution} of \eqref{4wave} with radial initial condition \( f_0(k) = f_0(|k|) \ge 0 \) if \( f(t,k) \ge 0 \), \( f(t,\omega)\,\mathfrak{A}(\omega) \in C^1([0,\infty), \mathscr{R}_+([0,\infty))) \), and for all \( \phi \in C^2([0,\infty)) \) such that the set \( \{ \omega \mid \phi(\omega) \ne 0 \} \) is a compact subset of \( [0,\infty) \) and $\phi'(0)=0$, we have
	\begin{equation}\label{4wavemild}
		\begin{aligned}
			\int_{[0,\infty)} \mathrm{d}\omega\, f(t,\omega)\, \mathfrak{A}(\omega)\, \phi(\omega) 
			= \ &\int_{[0,\infty)} \mathrm{d}\omega\, f(0,\omega)\, \mathfrak{A}(\omega)\, \phi(\omega) \\
			&+ \int_0^t\mathrm d s \int_{[0,\infty)} \mathrm{d}\omega\, \mathscr{Q}[f]\, \phi(\omega)\, \mathfrak{A}(\omega),
		\end{aligned}
	\end{equation}
	for all \( t \in \mathbb{R}_+ \).
\end{definition}

We have the main theorem, whose proof is given in Section \ref{Sec:Proof}.

\begin{theorem}
	\label{Theorem1} 
	We assume Assumption A and Assumption B. 
	
	Let $f_0(k) = f_0(|k|) \ge 0$ be an initial condition satisfying
	\begin{equation}
		\label{MassEnergy} 
		\int_{\mathbb{R}^3} \mathrm{d}k\, f_0(k) = \mathscr{M}, \quad 
		\int_{\mathbb{R}^3} \mathrm{d}k\, f_0(k)\, \omega(k) = \mathscr{E}, 
	\end{equation}
	for some constants $\mathscr{M},\mathscr{E}>0$.
	
	Then there exists at least one global mild radial solution $f(t,k)$ of \eqref{4wave} in the sense of \eqref{4wavemild} such that
	\begin{equation}
		\label{Theorem1:2} 
		\int_{\mathbb{R}^3} \mathrm{d}k\, f(t,k) \le \mathscr{M},
	\end{equation}
	for all $t \ge 0$. 
We define
\begin{equation}
	\label{FDefinition} 
	G(t,k) = f(t,k)\,|k|\,\Theta(k),
\end{equation}
for \( t \ge 0 \), and assume that
\begin{equation}
	\int_{\{\omega = 0\}} \mathrm{d}\omega\, G(0,\omega) = 0.
\end{equation}

	Moreover,  suppose further that there exist constants $C_{\mathrm{ini}} > 0$, $\theta> c_{\mathrm{ini}} \ge 0$, and $r_0 > 0$ such that
	
		\begin{equation}
		\label{Theorem1:4}
		\int_{0}^{r} \mathrm{d}\omega\, G(0,\omega) \ge C_{\mathrm{ini}}\, r^{c_{\mathrm{ini}}},
	\end{equation}
	for all $0 < r < r_0$.

	The following then hold:
	
	\begin{itemize}
		\item[(i)] (Immediate Condensation-Condensate Growth) When
		\begin{equation}
			\label{Theorem1:3}
			0 \le c_{\mathrm{ini}} < \min\left\{ \frac{2\delta - \frac{1}{2} - \varrho }{10(2 + \mu + \varrho)}\, , \theta \, , {2\delta - \varrho},\, \tfrac{2}{3} \delta,\, \tfrac{\frac{2\delta - \frac{1}{2} - \varrho }{5(2 + \mu + \varrho)} + \varrho}{2} \right\},
		\end{equation}
	 the quantity $T_0$ defined in \eqref{T0} is precisely zero. This confirms the onset of condensate growth.
		
		\item[(ii)] (Finite-time Condensation) If we assume the weaker assumption: \( 	  \theta> c_{\mathrm{ini}} \ge 0 \), then the quantity \( T_0 \) defined in \eqref{T0} is finite.
		
	\end{itemize}
\end{theorem}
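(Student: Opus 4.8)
\emph{Proof proposal.} The argument splits into two fairly independent parts: (a) constructing a global mild radial solution satisfying \eqref{Theorem1:2}, and (b) a quantitative analysis of the concentration of $G(t,\cdot)=f(t,\cdot)\,\mathfrak A$ near $\omega=0$, carried out through a domain–decomposition (multiscale) scheme that extends the method of \cite{staffilani2024energy,staffilani2024condensation}. \emph{Global existence and the mass bound.} I would first regularize \eqref{4wave} by truncating the momenta to a ball, mollifying $f_0$, and smoothing the kernels --- in particular cutting off the singular factor $|\,|k|-|k_1|-|k_2|-|k_3|\,|^{-1}$ in $\mathcal W_{31}$ and the behaviour of $\mathcal W_{22}$ as any frequency tends to $0$ --- so that the truncated equation is solved by a contraction/fixed–point argument with values in a weighted measure space; nonnegativity persists because each loss term is proportional to the occupation density of the "outgoing" mode. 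Testing \eqref{4wavemild} against cutoffs $\phi_R$ that equal $1$ on $[0,R]$ and letting $R\to\infty$, and using that $\int C_{22}[f]\,\mathfrak A\,\mathrm d\omega=0$ while $\int C_{12}[f]\,\mathfrak A\,\mathrm d\omega\le0$ and $\int C_{31}[f]\,\mathfrak A\,\mathrm d\omega\le0$ (cf. \eqref{4wavepre:convex4}), shows $t\mapsto\int_{\mathbb R^3} f(t,k)\,\mathrm dk$ is non‑increasing, which after removing the regularization yields \eqref{Theorem1:2}; an energy estimate via $\phi=\omega$ and Assumption~A keeps the measures tight, so weak-$*$ compactness in $\mathscr R_+([0,\infty))$ together with the $C^1$–in–time control from the equation produces a global mild radial solution.

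\emph{Localized weak formulations and the collision estimates.} For the concentration analysis, fix $r_0$ as in \eqref{Theorem1:4} and a geometrically decreasing partition $r_0>r_1>r_2>\cdots\downarrow 0$; for each scale choose $\phi_n\in C^2([0,\infty))$ with $\phi_n'(0)=0$, $\phi_n\equiv1$ on $[0,r_{n+1}]$, $\operatorname{supp}\phi_n\subset[0,r_n]$, $0\le\phi_{n+1}\le\phi_n\le1$, and track the localized masses $\mathcal Y_n(t):=\int_{[0,\infty)}G(t,\omega)\,\phi_n(\omega)\,\mathrm d\omega$. By \eqref{4wavemild}, $\dot{\mathcal Y}_n$ is driven by $\int C_{12}[f]\,\mathfrak A\,\phi_n$, $\int C_{22}[f]\,\mathfrak A\,\phi_n$ and $\int C_{31}[f]\,\mathfrak A\,\phi_n$. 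Sections~\ref{Sec:First}--\ref{Sec:Third} would estimate these scale by scale: for $C_{22}$ I would exploit the convexity structure underlying \eqref{4wavepre:convex1} to extract a positive transfer of mass toward the origin, paying a loss governed by $\mu$ (through $\mathcal W_{22}$) and by $\varrho$ (through $\Theta$); for the indefinite operators $C_{12}$ and $C_{31}$ I would \emph{not} seek a sign but bound their localized contributions by a controllably small multiple of the mass present at comparable and larger scales, using the explicit kernels \eqref{W21} and \eqref{W31} and the growth bounds for $\omega,\mathfrak A,\Theta$ from Assumption~A, with the singular factor of $\mathcal W_{31}$ rendered integrable by the momentum and energy $\delta$'s and by the monotonicity/superadditivity \eqref{CondDisper1} of $\omega$. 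The outcome is a closed system of differential inequalities for $(\mathcal Y_n)_n$ whose leading term pushes mass from scale $r_n$ into scale $r_{n+1}$, and which holds precisely when the exponents satisfy \eqref{Parameters}.

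\emph{Iteration and conclusion.} Suppose for contradiction that $T_0>0$; then on $[0,T_0)$ the solution carries no atom at the origin, so $\int_{\mathbb R^3}f(t,k)\,\mathrm dk=2\pi^2\int_{[0,\infty)}G(t,\omega)\,\mathrm d\omega\le\mathscr M$ and $G(t,\cdot)$ is nonatomic at $0$. Feeding the initial lower bound \eqref{Theorem1:4} into the differential inequalities from the previous step and iterating across the partition, I would show that for every $t>0$ the sequence $\mathcal Y_n(t)$ stays bounded below by a positive constant independent of $n$; since $\mathcal Y_n(t)\downarrow G(t,\{0\})=\tfrac1{2\pi^2}\int_{\{0\}}\mathrm dk\,f(t,k)$, this forces a Dirac mass at $0$ for every $t>0$, contradicting the definition of $T_0$. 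Tracking the rate of the iteration, one sees that this contradiction is already reached on arbitrarily short time intervals exactly when $c_{\mathrm{ini}}$ lies in the range \eqref{Theorem1:3}, which gives $T_0=0$ and proves part~(i); under the weaker hypothesis $c_{\mathrm{ini}}<\theta$ alone, the same mechanism (now relying mainly on the favourable convexity of $C_{22}$) still drives a Dirac mass after a finite waiting time, which is part~(ii).

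\emph{Main obstacle.} The delicate point is the previous-to-last step: unlike the pure $2\leftrightarrow2$ setting, the crude estimate \eqref{4wavepre:convex2} is unavailable because $C_{12}$ and $C_{31}$ carry no favourable sign and $\mathcal W_{22}$ actively opposes concentration near $\omega=0$, so the entire multiscale mechanism must be made robust against $O(1)$, non–sign-definite perturbations. Consequently the real work is the bookkeeping of exponents: verifying that the "gain" produced by $C_{22}$ at each scale strictly dominates, uniformly in $n$, the losses coming from the $\mathcal W_{22}$ kernel and from the indefinite operators --- which is exactly the content of the numerological constraints \eqref{Parameters} and \eqref{Theorem1:3}.
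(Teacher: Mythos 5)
Your global existence part (truncation, fixed point in a measure space, sign structure for the mass bound, compactness) is essentially the paper's Proposition \ref{Lemma:Global} and is fine. The gap is in the concentration argument. The engine of your scheme is a \emph{pointwise-in-time} transfer inequality: a closed system of differential inequalities for the nested localized masses $\mathcal Y_n(t)$ whose leading term, extracted from the ``convexity structure'' of $C_{22}$, pushes mass from scale $r_n$ to $r_{n+1}$ at a rate strong enough to iterate and obtain $\inf_n\mathcal Y_n(t)>0$ for every $t>0$. This is precisely what is \emph{not} available here, and it is the difficulty the paper is built to avoid: with a general dispersion, the weight $\mathfrak A$, and the kernel $\mathcal W_{22}$ in \eqref{W22}, the analogue of \eqref{4wavepre:convex2} fails (the $\mu$-kernel suppresses interactions when frequencies coalesce near the origin, and $C_{12}$, $C_{31}$ have no sign), and what the paper actually proves (Lemma \ref{lemma:Concave}) is only a \emph{time-integrated}, strongly degenerate coercivity estimate, with weights like $(\omega_{\mathrm{Mid}}-\omega_{\mathrm{Min}})^2/(2\omega_{\mathrm{Mid}}-\omega_{\mathrm{Min}})^{2-\alpha}$, $|k_{\mathrm{Min}}|$ and the $\mu$-factor, which vanish exactly in the concentrated configurations you would need for your cascade. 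Moreover the only genuinely positive $C_{22}$ gain the paper can exploit (Step 4 of Proposition \ref{Lemma:Growth2}) requires mass to be present \emph{simultaneously} near $\omega=0$ and in an outer overlapping subdomain at scale $R_{\mathfrak m}$, at the same times; there is no a priori reason such times are generic, and controlling their Lebesgue measure is the real content of Sections \ref{Sec:First}--\ref{Sec:Third}. Your sketch asserts that the ``gain strictly dominates the losses, uniformly in $n$,'' but no mechanism is given for rates that do not degenerate as $n\to\infty$, nor for why the waiting times across scales would not accumulate; without that, the claimed uniform-in-$n$ lower bound at every $t>0$, and hence the contradiction, does not follow.

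The paper's route is structurally different and worth contrasting: instead of propagating lower bounds through scales pointwise in time, it measures, for each dyadic scale $R_{\mathfrak m}$, the Lebesgue measure of the set of times $\mathcal A^T_{\mathfrak m}$ at which the mass of $G$ in $[0,R_{\mathfrak m})$ exceeds $\mathcal C_* R_{\mathfrak m}^{\rho}$, splits it via the overlapping-subdomain decomposition into the sets $\mathscr B^T_{\mathfrak m}$, $\mathscr C^T_{\mathfrak m}$, $\mathscr D^T_{\mathfrak m}$, bounds $|\mathscr B^T_{\mathfrak m}|$ by converting the degenerate time-integrated estimate into a trilinear lower bound on well-separated subdomains (Proposition \ref{Propo:Collision}), bounds $|\mathscr C^T_{\mathfrak m}|$ by a supersolution/exponential test-function argument whose unbounded Gronwall factor would violate $\mathscr M+\mathscr E$ (Proposition \ref{Lemma:Growth2}), and then pigeonholes over scales and times (Proposition \ref{Lemma:Growth3}) to get $|\mathcal A^T_{\mathfrak m}|\lesssim R_{\mathfrak m}^{c_{\mathcal A}}$ before the condensation time. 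The conclusion $T_0=0$ then comes from the opposite direction: the initial concentration \eqref{Theorem1:4} is propagated forward by the same exponential test function, showing $\mathcal A^T_{\mathfrak m-2}=[0,T]$ for all $T\lesssim R_{\mathfrak m}^{c_{\mathrm{ini}}-\theta}$, which is incompatible with $|\mathcal A^T_{\mathfrak m-2}|\lesssim R_{\mathfrak m}^{c_{\mathcal A}}$ unless $T_0\to 0$; the exponent list \eqref{Theorem1:3} records exactly the constraints \eqref{rho} plus $c_{\mathrm{ini}}<\rho$, $c_{\mathrm{ini}}<\theta$ needed for this chain, and part (ii) follows by running the argument at one fixed large scale $\mathcal N_0$ rather than along the whole sequence. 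In your proposal these quantitative inputs are replaced by the phrase ``bookkeeping of exponents,'' which is where the proof actually lives; as written, the argument has a genuine missing step rather than an alternative complete route.
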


We conclude this section by introducing notations to be used throughout the paper.

We define the non-condensation time set
\begin{equation}
	\label{NoCondensateTime}
	\Xi := \left\{ t \in [0,\infty) \,\middle|\, \int_{\{0\}} \mathrm{d}k\, f(t,k) = 0 \right\}.
\end{equation}

For $x, y, z \in \mathbb{R}$, we define
\begin{equation}
	\label{Mid}
	\mathrm{mid}\{x, y, z\} := t \in \{x, y, z\} \setminus \{\max\{x, y, z\}, \min\{x, y, z\}\}.
\end{equation}

For $\omega, \omega_1, \omega_2$, we define
\begin{equation}
	\label{Sec:DDM:6}
	\begin{aligned}
		\omega_{\mathrm{Max}}(\omega, \omega_1, \omega_2) &= \max\{\omega, \omega_1, \omega_2\}, \\
		\omega_{\mathrm{Min}}(\omega, \omega_1, \omega_2) &= \min\{\omega, \omega_1, \omega_2\}, \\
		\omega_{\mathrm{Mid}}(\omega, \omega_1, \omega_2) &= \mathrm{mid}\{\omega, \omega_1, \omega_2\}.
	\end{aligned}
\end{equation}

\begin{remark}\label{RemarkMainTheo}
	Part~(i) of the above theorem implies that, starting from a regular initial condition sufficiently concentrated near the origin, the density function of the thermal cloud immediately develops a Dirac mass at the origin. This Dirac mass signifies that additional particles are rapidly transferred into the existing BEC, causing the condensate to begin growing instantaneously. This behavior explains the initial growth observed at \( t = 0 \) in the curves shown in Figure~\ref{fig1}.
	
	Part~(ii) implies that, starting from a regular initial condition that is not sufficiently concentrated near the origin, the density function still develops a Dirac mass at the origin in finite time.

We note that, by Lemma \ref{lemma:C22}, the kernel \( \mathcal{W}_{22} \) can be replaced by  
\begin{equation*}
	\label{Remark:E1}
	\left[ \frac{\max\{|\omega - \omega_2|,\  |\omega_1 - \omega_2|\}}{2(\omega + \omega_1)} \right]^\mu.
\end{equation*}
Using the notations in \eqref{Sec:DDM:6}, we suppose \( \omega = \omega_1 = \omega_{\mathrm{Max}} = \omega_{\mathrm{Mid}} \) and \( \omega_2 = \omega_{\mathrm{Min}} \). Then,
\[
\max\{|\omega - \omega_2|,\  |\omega_1 - \omega_2|\}^\mu = |\omega_{\mathrm{Mid}} - \omega_{\mathrm{Min}}|^\mu.
\]
This implies that in the estimate \eqref{Lemma:Concave:1}, the degeneracy is of order \( |\omega_{\mathrm{Mid}} - \omega_{\mathrm{Min}}|^{\mu+2} \). However, no constraints are imposed on \( \mu \) other than \( \mu \ge 0 \).
The main reason is that in our estimates, the quantity \( \mathcal{W}_{22} \) can be bounded from below through careful choices of \( \omega, \omega_1, \omega_2 \) (see, for instance, \eqref{muestimate}). This demonstrates the flexibility of our framework in handling various challenging collisional kernels.

Moreover, the theorem above and its proof remain valid if we set \( \mathfrak{c}_{31} = \mathfrak{c}_{21} = 0 \).

\end{remark}

\begin{remark}\label{RemarkMainTheo:a}
A standard symmetry argument implies that  
\begin{equation}
	\label{RemarkMainTheo:a1}
	\partial_t \int_{\mathbb{R}^3} \mathrm{d}k\, f(t,k)\omega(k)
	= \int_{\mathbb{R}^3} \mathrm{d}k\, \mathscr{Q}[f](t,k)\omega(k)
	= 0.
\end{equation}
This formally expresses the conservation of energy.  
However, when the solution is taken in the space $\mathscr{R}_+([0,\infty))$ as defined in~\eqref{Radon},  
the integrals in $\int_{\mathbb{R}^3} \mathrm{d}k\, \mathscr{Q}[f](t,k)\omega(k)$ are not well-defined.  
Consequently, part of the energy escapes to infinity.  
This phenomenon, finite-time energy loss or cascade, will be the focus of the companion paper~\cite{staffilani2025energyfinite}.

\end{remark}
\section{Weak formulations}\label{Sec:WeakFormulation}
In this section, we introduce the weak formulations for the $3$ collision operators. The proofs of these formulas are standard (see~\cite{staffilani2024condensation, staffilani2024energy}); however, we include them here for the sake of completeness.

\begin{lemma}
	\label{lemma:C12} 	We assume Assumption A and Assumption B. 

	For any suitable test function $\phi(\omega)$, we have the following equation:
	\begin{equation}\label{Lemma:C12:1}
		\begin{aligned}
			& \int_{\mathbb{R}_+} \mathrm{d}\omega\, C_{12}[f](\omega)\, \phi(\omega)\, \mathfrak A(\omega) \\
			=\, &  c_{12} \iiint_{\mathbb{R}_+^3}  \mathrm{d}\omega_1 \mathrm{d}\omega_2 \mathrm{d}\omega \delta(\omega - \omega_1 - \omega_2)  \Theta\Theta_1\Theta_2 |k||k_1||k_2| \, [f_1 f_2 - f(f_1 + f_2)] \, [\phi(\omega) - \phi(\omega_1) - \phi(\omega_2)]
		\end{aligned}
	\end{equation}
	where $c_{12}$ is a constant independent of $f$ and $\phi$ and we have used the notations introduced in \eqref{Shorthand}.
	
\end{lemma}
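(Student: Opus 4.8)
The proof is a direct change-of-variables computation; here is the plan. First I would use the radial identity of Assumption~(A2) to convert
$\int_{\mathbb{R}_+}\mathrm d\omega\, C_{12}[f](\omega)\,\phi(\omega)\,\mathfrak A(\omega)$ into $\frac{1}{2\pi^2}\int_{\mathbb{R}^3}\mathrm dk\, C_{12}[f](k)\,\phi(\omega(k))$, so that the weight $\mathfrak A(\omega)$ is exactly absorbed and everything becomes an integral in the momentum variable $k\in\mathbb R^3$. Inserting the definition~\eqref{C12} then produces a triple integral over $(k,k_1,k_2)\in\mathbb R^9$, split into two pieces: a ``gain'' piece carrying $\delta(\omega-\omega_1-\omega_2)\,\delta(k-k_1-k_2)$ together with $[f_1f_2-(f_1+f_2)f]\,\phi(\omega)$, and a ``loss'' piece carrying $\delta(\omega_1-\omega-\omega_2)\,\delta(k_1-k-k_2)$ together with $-2[ff_2-(f+f_2)f_1]\,\phi(\omega)$.

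The key algebraic step is to symmetrize the loss piece. Relabelling $(k,k_1,k_2)\mapsto(k_1,k,k_2)$ turns its conservation deltas into $\delta(k-k_1-k_2)$ and $\delta(\omega-\omega_1-\omega_2)$, turns the bracket $[ff_2-(f+f_2)f_1]$ into $[f_1f_2-(f_1+f_2)f]$, and turns $\phi(\omega)$ into $\phi(\omega_1)$; since the kernel $\mathcal W_{12}=|k|\,|k_1|\,|k_2|$, the two deltas, and the bracket are all symmetric under $k_1\leftrightarrow k_2$, the factor $-2\phi(\omega_1)$ may be replaced by $-\phi(\omega_1)-\phi(\omega_2)$. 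Adding the two pieces, the whole integrand then carries the common weight $\mathcal W_{12}\,\delta(\omega-\omega_1-\omega_2)\,\delta(k-k_1-k_2)\,[f_1f_2-(f_1+f_2)f]$ multiplied by $[\phi(\omega)-\phi(\omega_1)-\phi(\omega_2)]$, all numerical prefactors collected into a single constant.

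It remains to carry out the angular integration against $\delta(k-k_1-k_2)$. Since the integrand depends on $k,k_1,k_2$ only through their magnitudes, the classical radial reduction formula gives
\[
\int_{\mathbb R^{9}}\mathrm dk\,\mathrm dk_1\,\mathrm dk_2\,\delta(k-k_1-k_2)\,H(|k|,|k_1|,|k_2|)=8\pi^2\iiint_{\triangle}\mathrm dr\,\mathrm dr_1\,\mathrm dr_2\; r\,r_1\,r_2\;H(r,r_1,r_2),
\]
where $\triangle=\{(r,r_1,r_2):\ |r-r_1|\le r_2\le r+r_1\}$ is the triangle region. On the support of $\delta(\omega-\omega_1-\omega_2)$ this constraint is automatic: from $\omega(r)=\omega(r_1)+\omega(r_2)\ge\omega(r_1),\omega(r_2)$ and monotonicity of $\omega$ one gets $r\ge r_1,r_2$, and the superadditivity~\eqref{CondDisper1} gives $\omega(r_1+r_2)\ge\omega(r_1)+\omega(r_2)=\omega(r)$ and $\omega(r-r_1)\le\omega(r)-\omega(r_1)=\omega(r_2)$, hence $r\le r_1+r_2$ and $r_2\ge r-r_1$; so one may extend the integral from $\triangle$ to all of $\mathbb R_+^3$. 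Finally, changing variables $r\mapsto\omega$ via $\mathrm dr=\mathrm d\omega/\omega'(|k|)$, i.e. $r\,\mathrm dr=\Theta(\omega)\,\mathrm d\omega$, supplies the Jacobian $\Theta\Theta_1\Theta_2/(r\,r_1\,r_2)$, which combined with the two factors $|k|\,|k_1|\,|k_2|$ produced so far (one from the kernel $\mathcal W_{12}$, one from the reduction formula) yields exactly the stated expression with an appropriate constant $c_{12}$.

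I expect the only genuinely delicate point to be the bookkeeping: getting the relabelling and the $k_1\leftrightarrow k_2$ symmetrization of the loss term exactly right, so that the brackets in the two pieces coincide and the test-function contribution collapses to $\phi(\omega)-\phi(\omega_1)-\phi(\omega_2)$, together with checking that the energy shell $\{\omega=\omega_1+\omega_2\}$ sits inside the triangle region. Everything else is routine, which is why the statement is flagged as standard.
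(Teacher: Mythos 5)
Your proposal is correct and follows essentially the same route as the paper: symmetrize the loss term by relabelling so the test-function contribution collapses to $\phi(\omega)-\phi(\omega_1)-\phi(\omega_2)$, reduce the angular integration against $\delta(k-k_1-k_2)$, check via the superadditivity \eqref{CondDisper1} that the resonance set lies in the triangle region, and change variables to $\omega$. The only cosmetic difference is that you quote the closed-form radial reduction formula directly, whereas the paper derives the same $8\pi^2\,r\,r_1\,r_2$ factor by writing $\delta(k-k_1-k_2)$ as a Fourier integral and evaluating $\int_0^\infty \sin(|k|s)\sin(|k_1|s)\sin(|k_2|s)\,s^{-1}\,\mathrm{d}s=\pi/4$ on the triangle — these are the same computation.
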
 

\begin{proof}
	Following a standard argument (see \cite{soffer2020energy}) by performing the change of variables $k \leftrightarrow k_1$ and $k \leftrightarrow k_2$, we find
	\begin{align*}
		\int_{\mathbb{R}^3} \mathrm{d}k\, C_{12}[f](k)\, \phi(\omega)
		=\, & \mathfrak{c}_{12} \iiint_{\mathbb{R}^3 \times \mathbb{R}^3 \times \mathbb{R}^3} \mathrm{d}k\, \mathrm{d}k_1\, \mathrm{d}k_2\, |k||k_1||k_2|\, \delta(k - k_1 - k_2)\, \delta(\omega - \omega_1 - \omega_2) \\
		& \times \Big[ f(k_1)f(k_2) - f(k_1)f(k) - f(k_2)f(k) \Big] \Big[ \phi(\omega) - \phi(\omega_1) - \phi(\omega_2) \Big].
	\end{align*}
	
Following \cite{staffilani2024condensation, staffilani2024energy}, we compute
\begin{equation}\label{Lemma:C12:E1}
	\begin{aligned}
	&	\int_{\mathbb{R}^3} \mathrm{d}k\, C_{12}[f](k)\, \phi(k)\\
		=\, & \mathfrak{c}_{12} \iiint_{\mathbb{R}_+^3} \mathrm{d}|k|\, \mathrm{d}|k_1|\, \mathrm{d}|k_2|\, |k|^3 |k_1|^3 |k_2|^3\, \delta(\omega - \omega_1 - \omega_2) \left[ f_1 f_2 - f_1 f - f_2 f \right] \\
		& \times \left[ \phi(\omega) - \phi(\omega_1) - \phi(\omega_2) \right] \iiint_{(\mathbb{S}^2)^3} \mathrm{d}\mathcal{V}_1\, \mathrm{d}\mathcal{V}_2\, \mathrm{d}\mathcal{V}\, \left[ \frac{1}{(2\pi)^3} \int_{\mathbb{R}^3} \mathrm{d}s\, e^{is \cdot (k - k_1 - k_2)} \right] \\
		=\, & \mathfrak{c}_{12} \iiint_{\mathbb{R}_+^3} \mathrm{d}|k|\, \mathrm{d}|k_1|\, \mathrm{d}|k_2|\, |k|^2 |k_1|^2 |k_2|^2\, \delta(\omega - \omega_1 - \omega_2) \left[ f_1 f_2 - f_1 f - f_2 f \right] \\
		& \times \left[ \phi(\omega) - \phi(\omega_1) - \phi(\omega_2) \right] \cdot 32\pi \int_0^\infty \mathrm{d}s\, \frac{\sin(|k_1| s)\, \sin(|k_2| s)\, \sin(|k| s)}{s}.
	\end{aligned}
\end{equation}

Since \( |k| < |k_1| + |k_2| \), we deduce
\begin{equation*}
	\int_0^\infty \mathrm{d}s\, \frac{\sin(|k_1| s)\, \sin(|k_2| s)\, \sin(|k| s)}{s} = \tfrac{\pi}{4},
\end{equation*}
which, when combined with \eqref{Lemma:C12:E1}, yields
\begin{equation}\label{Lemma:C12:E2}
	\begin{aligned}
	\int_{\mathbb{R}_+} \mathrm{d}\omega\, C_{12}[f](\omega)\, \phi(\omega)\, \mathfrak A(\omega)
		=\, & c_{12} \iiint_{\mathbb{R}_+^3} \mathrm{d}|k|\, \mathrm{d}|k_1|\, \mathrm{d}|k_2|\, |k|^2 |k_1|^2 |k_2|^2\, \delta(\omega - \omega_1 - \omega_2) \\
		& \times \left[ f_1 f_2 - f_1 f - f_2 f \right] \left[ \phi(\omega) - \phi(\omega_1) - \phi(\omega_2) \right],
	\end{aligned}
\end{equation}
for a universal constant \( c_{12} > 0 \). By changing the variables \( |k| \mapsto \omega \), \( |k_1| \mapsto \omega_1 \), and \( |k_2| \mapsto \omega_2 \), we deduce from \eqref{Lemma:C12:E2}
\begin{equation*} 
	\begin{aligned}
		\int_{\mathbb{R}_+} \mathrm{d}\omega\, C_{12}[f](\omega)\, \phi(\omega)\, \mathfrak A(\omega)
		=\, & c_{12} \iiint_{\mathbb{R}_+^3} \mathrm{d}\omega_1\, \mathrm{d}\omega_2\, \mathrm{d}\omega\, \delta(\omega - \omega_1 - \omega_2)\,\Theta\,\Theta_1\,\Theta_2\, |k|\, |k_1|\, |k_2| \\
		& \times \left[ f_1 f_2 - f(f_1 + f_2) \right] \left[ \phi(\omega) - \phi(\omega_1) - \phi(\omega_2) \right],
	\end{aligned}
\end{equation*}
which concludes our proof of the lemma.

\end{proof}

\begin{lemma}
	\label{lemma:C22} 	We assume Assumption A and Assumption B. 

For any suitable test function $\phi(\omega)$, we have the following equation:
\begin{equation}\label{Lemma:C22:1}
	\begin{aligned}
		& \int_{\mathbb{R}_+} \mathrm{d}\omega\, C_{22}[f](\omega)\, \phi(\omega)\, \mathfrak A(\omega) \\
		=\, & c_{22} \iiiint_{\mathbb{R}_+^4} \mathrm{d}\omega_1\, \mathrm{d}\omega_2\, \mathrm{d}\omega_3\, \mathrm{d}\omega\, \delta(\omega + \omega_1 - \omega_2 - \omega_3)  \Theta\,\Theta_1\,\Theta_2\,\Theta_3\, \min\{|k_1|, |k_2|, |k_3|, |k|\} 
		  \\
		& \times \left[ \frac{\max\{|\omega - \omega_2|,\  |\omega_1 - \omega_2|\}}{2(\omega + \omega_1)} \right]^\mu f_1 f_2 f \left[ -\phi(\omega) - \phi(\omega_1) + \phi(\omega_2) + \phi(\omega + \omega_1 - \omega_2) \right],
	\end{aligned}
\end{equation}
where $c_{22}$ is a constant independent of $f$ and $\phi$ and we have used the notations introduced in \eqref{Shorthand}.

	\end{lemma}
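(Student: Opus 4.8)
The plan is to follow the same change-of-variables strategy used in the proof of Lemma~\ref{lemma:C12}, adapted to the four-variable structure of $C_{22}$. First I would write out $\int_{\mathbb{R}^3}\mathrm{d}k\, C_{22}[f](k)\,\phi(\omega)$ and symmetrize: the test-function difference $\phi(\omega)+\phi(\omega_1)-\phi(\omega_2)-\phi(\omega_3)$ is obtained by performing the exchanges $k\leftrightarrow k_1$, $k\leftrightarrow k_2$, $k\leftrightarrow k_3$ in the collision integral, using that the Dirac constraints $\delta(k+k_1-k_2-k_3)$ and $\delta(\omega+\omega_1-\omega_2-\omega_3)$ and the kernel $\mathcal{W}_{22}$ are symmetric under $k\leftrightarrow k_1$ and under $k_2\leftrightarrow k_3$, while the bracket $[f_2f_3(f_1+f)-ff_1(f_2+f_3)]$ transforms so that after symmetrization the cubic term $f_1f_2f$ (equivalently $ff_1f_2f_3$-type terms, but here the collision bracket is cubic by the simplification~\eqref{Simpl3}) multiplies the symmetrized test-function combination. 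This reproduces the weak form with the gain/loss structure collapsed into $f_1f_2f\,[-\phi(\omega)-\phi(\omega_1)+\phi(\omega_2)+\phi(\omega+\omega_1-\omega_2)]$ once $\omega_3$ is eliminated by the energy delta.

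Next I would reduce the angular integrals. Writing each $\delta$ in Fourier form, $\delta(k+k_1-k_2-k_3)=\frac{1}{(2\pi)^3}\int_{\mathbb{R}^3}\mathrm{d}s\, e^{is\cdot(k+k_1-k_2-k_3)}$, and integrating over the four spheres $(\mathbb{S}^2)^4$ produces a product $\prod \frac{\sin(|k_j|s)}{|k_j|s}$-type factor (up to constants and powers of $|k_j|$ absorbed from the radial Jacobians $|k_j|^2\,\mathrm{d}|k_j|$). The key classical integral is $\int_0^\infty \mathrm{d}s\, s^{-2}\prod_{j}\sin(|k_j|s)$, which under the triangle-type constraint imposed by momentum conservation evaluates (up to a universal constant) to $\min\{|k|,|k_1|,|k_2|,|k_3|\}$ — this is precisely where the factor $\min\{|k_1|,|k_2|,|k_3|,|k|\}$ in~\eqref{Lemma:C22:1} comes from, and it is the analogue of the $\pi/4$ computation in Lemma~\ref{lemma:C12}. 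Then I would change variables $|k_j|\mapsto\omega_j$ for $j\in\{1,2,3\}$ and $|k|\mapsto\omega$, each introducing a Jacobian factor $\Theta_j=|k_j|/\omega'(|k_j|)$ (after one power of $|k_j|$ is used up), so the full measure becomes $\Theta\,\Theta_1\,\Theta_2\,\Theta_3\,\mathrm{d}\omega\,\mathrm{d}\omega_1\,\mathrm{d}\omega_2\,\mathrm{d}\omega_3$ against the integrand $\min\{\dots\}\,\mathcal{W}_{22}\,f_1f_2f\,[\dots]$. Finally, using the energy delta to replace $\omega_3=\omega+\omega_1-\omega_2$, one checks that $\mathcal{W}_{22}$ simplifies from the four-term $\max$ over $\{|\omega-\omega_2|,|\omega_1-\omega_2|,|\omega-\omega_3|,|\omega_1-\omega_3|\}$ divided by $\omega+\omega_1+\omega_2+\omega_3=2(\omega+\omega_1)$ down to $[\max\{|\omega-\omega_2|,|\omega_1-\omega_2|\}/(2(\omega+\omega_1))]^\mu$, since on the constraint surface $|\omega-\omega_3|=|\omega_1-\omega_2|$ and $|\omega_1-\omega_3|=|\omega-\omega_2|$, so the four-element $\max$ collapses to the two-element one.

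The main obstacle is the bookkeeping in the symmetrization step: one must be careful that the collision bracket $f_2f_3(f_1+f)-ff_1(f_2+f_3)$, under the three variable swaps, recombines cleanly so that only the single term $f_1f_2f$ (with $f_3$ already eliminated via the energy constraint, hence replaced by $f(\omega+\omega_1-\omega_2)$ which is folded into the product once $\phi$ is evaluated there) survives multiplying the symmetrized test function, with all the quadratic pieces cancelling against each other by the same swaps applied to the $\phi$-difference. This is the computation done in~\cite{staffilani2024condensation,staffilani2024energy}, and the claim is that the same argument goes through verbatim here because $\mathcal{W}_{22}$ (and $\min\{|k|,|k_1|,|k_2|,|k_3|\}$) is fully symmetric in $(|k|,|k_1|)$ and in $(|k_2|,|k_3|)$ — the only structural difference from the $\mu=0$ case treated there is carrying the extra factor $[\max\{\cdots\}/(2(\omega+\omega_1))]^\mu$ through unchanged, which it is since that factor is itself invariant under the relevant swaps. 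I would also note that positivity and measurability of the resulting expression (so that it is well-defined against the admissible test functions $\phi\in C^2$ of compact support with $\phi'(0)=0$) follow from Assumption A, in particular the boundedness $\mathfrak{A}(\omega)\le C_{\mathfrak A}\omega^\theta$ and the growth bounds on $\Theta$.
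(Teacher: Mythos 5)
Your proposal is correct and follows essentially the same route as the paper: Fourier representation of the momentum delta, angular integration giving $\int_0^\infty s^{-2}\prod_j\sin(|k_j|s)\,\mathrm{d}s=\tfrac{\pi}{4}\min\{|k|,|k_1|,|k_2|,|k_3|\}$ on the resonance manifold, the change of variables $|k_j|\mapsto\omega_j$ producing the Jacobians $\Theta\,\Theta_1\,\Theta_2\,\Theta_3$, the collapse of the four-term $\max$ to the two-term one via $\omega_3=\omega+\omega_1-\omega_2$ and $\omega+\omega_1+\omega_2+\omega_3=2(\omega+\omega_1)$, and the symmetrization $k\leftrightarrow k_1$, $k\leftrightarrow k_2$, $k\leftrightarrow k_3$ yielding $f_1f_2f\,[-\phi(\omega)-\phi(\omega_1)+\phi(\omega_2)+\phi(\omega+\omega_1-\omega_2)]$. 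The only difference is that you symmetrize first and reduce the angular integrals afterwards, whereas the paper performs the symmetrization as the last step; this ordering is immaterial.
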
 

\begin{proof}
Following \cite{staffilani2024condensation,staffilani2024energy}, we compute
\begin{equation}\label{Lemma:C22:E1}
	\begin{aligned}
		C_{22} \left[ f \right]  
		=\, & \mathfrak{c}_{22} \iiint_{\mathbb{R}^{3 \times 3}} \mathrm{d}k_1\, \mathrm{d}k_2\, \mathrm{d}k_3\, \delta(\omega + \omega_1 - \omega_2 - \omega_3)\, \delta(k + k_1 - k_2 - k_3) \\
		&\times [-f f_1(f_2 + f_3) + f_2 f_3(f_1 + f)] \left[ \frac{\max\{|\omega - \omega_2|,\  |\omega_1 - \omega_2|,\ |\omega - \omega_3|,\  |\omega_1 - \omega_3|\}}{\omega + \omega_1 + \omega_2 + \omega_3} \right]^\mu \\
		=\, & \mathfrak{c}_{22} \iiint_{\mathbb{R}_+^{3}} \mathrm{d}|k_1|\, \mathrm{d}|k_2|\, \mathrm{d}|k_3|\, |k_1|^2 |k_2|^2 |k_3|^2\, \delta(\omega + \omega_1 - \omega_2 - \omega_3) \\
		&\times \iiint_{(\mathbb{S}^2)^3} \mathrm{d}\mathcal{V}_1\, \mathrm{d}\mathcal{V}_2\, \mathrm{d}\mathcal{V}_3\, [-f f_1(f_2 + f_3) + f_2 f_3(f_1 + f)] \left[ \frac{1}{(2\pi)^3} \int_{\mathbb{R}^3} \mathrm{d}s\, e^{is \cdot (k + k_1 - k_2 - k_3)} \right] \\
		&\times \left[ \frac{\max\{|\omega - \omega_2|,\  |\omega_1 - \omega_2|,\ |\omega - \omega_3|,\  |\omega_1 - \omega_3|\}}{\omega + \omega_1 + \omega_2 + \omega_3} \right]^\mu \\
		=\, & \mathfrak{c}_{22} \iiint_{\mathbb{R}_+^{3}} \mathrm{d}|k_1|\, \mathrm{d}|k_2|\, \mathrm{d}|k_3|\, \frac{32\pi}{|k|} \int_0^\infty \mathrm{d}s\, \frac{\sin(|k_1|s)\sin(|k_2|s)\sin(|k_3|s)\sin(|k|s)}{s^2} \\
		&\times \delta(\omega + \omega_1 - \omega_2 - \omega_3)\, [-f f_1(f_2 + f_3) + f_2 f_3(f_1 + f)]\, |k_1||k_2||k_3|\\
		&\times \left[ \frac{\max\{|\omega - \omega_2|,\  |\omega_1 - \omega_2|,\ |\omega - \omega_3|,\  |\omega_1 - \omega_3|\}}{\omega + \omega_1 + \omega_2 + \omega_3} \right]^\mu .
	\end{aligned}
\end{equation}

Since \( \omega + \omega_1 = \omega_2 + \omega_3 \), we compute
\begin{equation*}
	\begin{aligned}
	&	\int_0^\infty \mathrm{d}s\, \frac{\sin(|k_1|s)\, \sin(|k_2|s)\, \sin(|k_3|s)\, \sin(|k|s)}{s^2}\\
		=  & \frac{\pi}{16} \sum_{x_1,x_2,x_3,x_4=1}^2 (-1)^{x_1 + x_2 + x_3 + x_4 + 1} \left| (-1)^{x_1} |k_1| + (-1)^{x_2} |k_2| + (-1)^{x_3} |k_3| + (-1)^{x_4} |k| \right| \\
		= &\tfrac{\pi}{4} \min\{ |k_1|, |k_2|, |k_3|, |k| \}.
	\end{aligned}
\end{equation*}

Combining this with \eqref{Lemma:C22:E1} yields:

\begin{equation}\label{Lemma:C22:E2}
	\begin{aligned}
	&	\int_{\mathbb{R}_+} \mathrm{d}\omega\, C_{22}[f](\omega)\, \phi(\omega)\, |k|\,\Theta
		=\,  c_{22} \iiiint_{\mathbb{R}_+^{4}} \mathrm{d}|k|\,\mathrm{d}|k_1|\, \mathrm{d}|k_2|\, \mathrm{d}|k_3|\, {|k||k_1||k_2||k_3|\, \min\{|k_1|, |k_2|, |k_3|, |k|\}} \\
		&\times \delta(\omega + \omega_1 - \omega_2 - \omega_3) \left[ \frac{\max\{|\omega - \omega_2|,\  |\omega_1 - \omega_2|\}}{2(\omega + \omega_1)} \right]^\mu [-f f_1(f_2 + f_3) + f_2 f_3(f_1 + f)]\phi,
	\end{aligned}
\end{equation}
for a universal constant \( c_{22} > 0 \).

After the change of variables \( |k| \mapsto \omega \), \( |k_1| \mapsto \omega_1 \), \( |k_2| \mapsto \omega_2 \), \( |k_3| \mapsto \omega_3 \), we obtain the following weak form of \eqref{Lemma:Concave:E3}:
\begin{equation}\label{Lemma:C22:E3}
	\begin{aligned}
	&	\int_{\mathbb{R}_+} \mathrm{d}\omega\, C_{22}[f](\omega)\, \phi(\omega)\, |k|\,\Theta 
		=\,  c_{22} \iiiint_{\mathbb{R}_+^4} \mathrm{d}\omega_1\, \mathrm{d}\omega_2\, \mathrm{d}\omega_3\, \mathrm{d}\omega\, \delta(\omega + \omega_1 - \omega_2 - \omega_3) \\
		&\times\Theta\,\Theta_1\,\Theta_2\,\Theta_3\, \min\{|k_1|, |k_2|, |k_3|, |k|\}   \left[ \frac{\max\{|\omega_1 - \omega_2|, |\omega_2 - \omega|\}}{2(\omega + \omega_1)} \right]^\mu \\
		&\times [-f f_1(f_2 + f_3) + f_2 f_3(f_1 + f)]\, \phi.
	\end{aligned}
\end{equation}
  Following a standard argument (see \cite{soffer2018dynamics}) by performing the change of  variables $k\leftrightarrow k_1$, $k\leftrightarrow k_2$ and $k\leftrightarrow k_3$, we find  \eqref{Lemma:C22:1}.

\end{proof}

\begin{lemma}
	\label{lemma:C31} 	We assume Assumption A and Assumption B. 

	For any suitable test function $\phi(\omega)$, we have the following equation:
	\begin{equation}\label{Lemma:C31:1}
		\begin{aligned}
			& \int_{\mathbb{R}_+} \mathrm{d}\omega\, C_{31}[f](\omega)\, \phi(\omega)\, \mathfrak A(\omega) \\
			=\, &  c_{31} \iiiint_{\mathbb{R}_+^4}\mathrm{d}\omega_1 \mathrm{d}\omega_2 \mathrm{d}\omega_3 \mathrm{d}\omega \delta(\omega - \omega_1 - \omega_2 - \omega_3) \\
			& \quad \times\Theta\Theta_1\Theta_2\Theta_3 |k||k_1||k_2||k_3| \, [f_1 f_2 f_3 - f(f_1 f_2 + f_2 f_3 + f_1 f_3)] \, [\phi(\omega) - \phi(\omega_1) - \phi(\omega_2) - \phi(\omega_3)] \, 
		\end{aligned}
	\end{equation}
	where $c_{31}$ is a constant independent of $f$ and $\phi$ and we have used the notations introduced in \eqref{Shorthand}.
	
\end{lemma}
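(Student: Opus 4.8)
The proof of Lemma~\ref{lemma:C31} follows the same pattern as the proofs of Lemmas~\ref{lemma:C12} and~\ref{lemma:C22}, so the plan is to mimic that argument, with the only genuinely new feature being the handling of the kernel $\mathcal W_{31}$ from~\eqref{W31}, which carries the extra singular factor $||k|-|k_1|-|k_2|-|k_3||^{-1}$. First, I would symmetrize: starting from $\int_{\mathbb{R}^3}\mathrm dk\,C_{31}[f](k)\,\phi(\omega)$ with $C_{31}$ as in~\eqref{C31}, perform the changes of variables $k\leftrightarrow k_1$, $k\leftrightarrow k_2$, $k\leftrightarrow k_3$ in the four-fold integral. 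As in~\cite{soffer2018dynamics} this collapses the ``gain minus loss'' structure into a single term with the antisymmetrized test-function combination $[\phi(\omega)-\phi(\omega_1)-\phi(\omega_2)-\phi(\omega_3)]$ multiplying $[f_1f_2f_3 - f(f_1f_2+f_2f_3+f_1f_3)]$, over $\mathbb{R}^3\times\mathbb{R}^3\times\mathbb{R}^3\times\mathbb{R}^3$ with the momentum delta $\delta(k-k_1-k_2-k_3)$ and energy delta $\delta(\omega-\omega_1-\omega_2-\omega_3)$.

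Next I would reduce to radial variables. Writing $k=|k|\mathcal V$ etc.\ and using $\mathrm dk=|k|^2\,\mathrm d|k|\,\mathrm d\mathcal V$, the kernel $\mathcal W_{31}=||k|-|k_1|-|k_2|-|k_3||^{-1}|k||k_1||k_2||k_3|$ and the integrand (which depends on the momenta only through their lengths, since $f,\omega,\phi$ are radial) pull out of the angular integrals, leaving the Fourier representation $\delta(k-k_1-k_2-k_3)=(2\pi)^{-3}\int_{\mathbb{R}^3}\mathrm ds\,e^{is\cdot(k-k_1-k_2-k_3)}$ to be integrated against $\iiiint_{(\mathbb S^2)^4}\mathrm d\mathcal V\,\mathrm d\mathcal V_1\,\mathrm d\mathcal V_2\,\mathrm d\mathcal V_3$. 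Using $\int_{\mathbb S^2}e^{is\cdot|k|\mathcal V}\,\mathrm d\mathcal V = \frac{4\pi\sin(|k|s)}{|k|s}$ for each of the four spheres, the angular part becomes a constant times $\frac{1}{|k||k_1||k_2||k_3|}\int_0^\infty \frac{\sin(|k|s)\sin(|k_1|s)\sin(|k_2|s)\sin(|k_3|s)}{s^4}\,\mathrm d s$ (up to the $s\to-s$ symmetry that turns $\mathbb{R}$ into $[0,\infty)$). The factor $\frac{1}{|k||k_1||k_2||k_3|}$ cancels against the $|k||k_1||k_2||k_3|$ in $\mathcal W_{31}$; crucially, the remaining $s$-integral is the standard one whose value is, up to a universal constant, exactly $||k|-|k_1|-|k_2|-|k_3||\,[\dots]$ — the piecewise-linear function of the four lengths obtained from the sum $\sum_{x_i}(-1)^{\sum x_i}|{\pm}|k|{\pm}|k_1|{\pm}|k_2|{\pm}|k_3||^3$ — and this is precisely what cancels the singular prefactor $||k|-|k_1|-|k_2|-|k_3||^{-1}$. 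After this cancellation one is left, modulo a universal constant $c_{31}>0$, with $\iiiint_{\mathbb{R}_+^4}\mathrm d|k|\,\mathrm d|k_1|\,\mathrm d|k_2|\,\mathrm d|k_3|\,|k|^2|k_1|^2|k_2|^2|k_3|^2\,\delta(\omega-\omega_1-\omega_2-\omega_3)[f_1f_2f_3-f(f_1f_2+f_2f_3+f_1f_3)][\phi(\omega)-\phi(\omega_1)-\phi(\omega_2)-\phi(\omega_3)]$.

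Finally I would change variables $|k|\mapsto\omega$, $|k_i|\mapsto\omega_i$ via the bijection from Assumption~(A1); each substitution contributes a Jacobian $\mathrm d|k|=\frac{\mathrm d\omega}{\omega'(|k|)}$, and combining $\frac{|k|^2}{\omega'(|k|)}$ pieces one can write the density as $\Theta\,\Theta_1\,\Theta_2\,\Theta_3\,|k||k_1||k_2||k_3|$ using $\Theta(\omega)=|k|/\omega'(|k|)$ from~\eqref{CondTheta} — noting $\frac{|k|^2}{\omega'}=|k|\cdot\Theta$ — which yields exactly~\eqref{Lemma:C31:1} with the stated $c_{31}$. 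The main obstacle is verifying the precise value of the oscillatory integral $\int_0^\infty s^{-4}\prod_{i}\sin(a_i s)\,\mathrm d s$ and checking that its piecewise-polynomial value, under the energy constraint $\omega=\omega_1+\omega_2+\omega_3$ (equivalently on the region where the four lengths obey the corresponding triangle-type inequalities), collapses to a single monomial that cancels the $||k|-|k_1|-|k_2|-|k_3||^{-1}$ factor cleanly; this is the analogue of the identities $\int_0^\infty s^{-1}\prod\sin = \pi/4$ and $\int_0^\infty s^{-2}\prod\sin=\tfrac{\pi}{4}\min\{\dots\}$ used in the previous two lemmas, but with one more sine factor and the extra singular weight, so one must be careful about which sign-configuration dominates on the support of the energy delta. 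Everything else is the same bookkeeping as in Lemmas~\ref{lemma:C12} and~\ref{lemma:C22}.
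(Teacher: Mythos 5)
Your overall strategy is exactly the paper's: symmetrize via $k\leftrightarrow k_1,k_2,k_3$, reduce to radial variables with the Fourier representation of $\delta(k-k_1-k_2-k_3)$, evaluate the resulting one-dimensional oscillatory integral so that it cancels the singular factor in $\mathcal W_{31}$, and then change variables $|k_i|\mapsto\omega_i$ using $\tfrac{|k|^2}{\omega'}=|k|\Theta$. However, the step you yourself flag as the ``main obstacle'' is broken as written, by a bookkeeping error in the power of $s$. After the four angular integrals produce $\prod_i\frac{4\pi\sin(|k_i|\,|s|)}{|k_i|\,|s|}$, the remaining integral over $s\in\mathbb{R}^3$ contributes a Jacobian $|s|^2\,\mathrm d|s|$, so the reduced integral is
\begin{equation*}
\int_0^\infty \mathrm{d}s\,\frac{\sin(|k|s)\sin(|k_1|s)\sin(|k_2|s)\sin(|k_3|s)}{s^{2}},
\end{equation*}
not $\int_0^\infty s^{-4}\prod_i\sin(|k_i|s)\,\mathrm d s$ (you treated $s$ as one-dimensional, hence the spurious $s^{-4}$ and the ``$s\to-s$'' remark). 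This is the same $s^{-2}$ integral already used for $C_{22}$, since $C_{31}$ also involves four wavevectors; it is piecewise \emph{linear}, equal to $\frac{\pi}{16}\sum_{x_1,\dots,x_4\in\{1,2\}}(-1)^{x_1+x_2+x_3+x_4+1}\big|(-1)^{x_1}|k_1|+(-1)^{x_2}|k_2|+(-1)^{x_3}|k_3|+(-1)^{x_4}|k|\big|$, and on the support of the energy delta, where $\omega=\omega_1+\omega_2+\omega_3$ together with \eqref{CondDisper1} and the monotonicity of $\omega$ force $|k_i|\le|k|\le|k_1|+|k_2|+|k_3|$, it evaluates to $\tfrac{\pi}{8}\,[\,|k_1|+|k_2|+|k_3|-|k|\,]$. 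That linear factor is precisely what cancels $\big|\,|k|-|k_1|-|k_2|-|k_3|\,\big|^{-1}$ and leaves the clean density $|k|^2|k_1|^2|k_2|^2|k_3|^2$.

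With your $s^{-4}$ version the value would be a piecewise \emph{cubic} (your own formula with $|\cdot|^3$, inconsistent with calling it piecewise-linear), which does not collapse to a single factor proportional to $|k_1|+|k_2|+|k_3|-|k|$, so the cancellation you need would fail and the lemma would not follow. The fix is purely the Jacobian correction above; once you use the $s^{-2}$ integral and justify the active sign-branch via \eqref{CondDisper1} (as the paper does), the rest of your argument, including the final change of variables to $\omega$ and the identification $\tfrac{|k|^2}{\omega'}=|k|\Theta$, goes through as in the paper.
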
 

\begin{proof}
Similar to Lemma~\ref{lemma:C12}, by performing the changes of variables \( k \leftrightarrow k_1 \), \( k \leftrightarrow k_2 \) and \( k \leftrightarrow k_3 \), we find
\begin{align*}
	\int_{\mathbb{R}^3} \mathrm{d}k\, C_{31}[f](k)\, \phi(k)
	&= \mathfrak{c}_{31} \iiiint_{\mathbb{R}^3 \times \mathbb{R}^3 \times \mathbb{R}^3 \times \mathbb{R}^3} \mathrm{d}k\, \mathrm{d}k_1\, \mathrm{d}k_2\, \mathrm{d}k_3\, \delta(k - k_1 - k_2 - k_3) \\
	&\quad \times |k| |k_1| |k_2|\,|k_3| \left||k_1| + |k_2| + |k_3| - |k|\right|^{-1} \left[ f_1 f_2 f_3 - f(f_1 f_2 + f_2 f_3 + f_3 f_1) \right] \\
	&\quad \times \delta(\omega - \omega_1 - \omega_2 - \omega_3) \left[ \phi(\omega) - \phi(\omega_1) - \phi(\omega_2) - \phi(\omega_3) \right].
\end{align*}

In analogy with \eqref{Lemma:C12:E1}, we compute
\begin{equation}\label{Lemma:C31:E1}
	\begin{aligned}
	&	\int_{\mathbb{R}^3} \mathrm{d}k\, C_{31}[f](k)\, \phi(k)
		= \mathfrak{c}_{31} \iiiint_{\mathbb{R}_+^4} \mathrm{d}|k|\, \mathrm{d}|k_1|\, \mathrm{d}|k_2|\, \mathrm{d}|k_3|\, |k|^3 |k_1|^3 |k_2|^3 |k_3|^3 \\
		&\quad \times \delta(k - k_1 - k_2 - k_3)\, \delta(\omega - \omega_1 - \omega_2 - \omega_3)\left||k_1| + |k_2| + |k_3| - |k|\right|^{-1} \\
		&\quad \times  \left[ f_1 f_2 f_3 - f(f_1 f_2 + f_2 f_3 + f_3 f_1) \right]  \left[ \phi(\omega) - \phi(\omega_1) - \phi(\omega_2) - \phi(\omega_3) \right] \\
		&\quad \times \iiint_{(\mathbb{S}^2)^3} \mathrm{d}\mathcal{V}_1\, \mathrm{d}\mathcal{V}_2\, \mathrm{d}\mathcal{V}_3\, \left[ \frac{1}{(2\pi)^3} \int_{\mathbb{R}^3} \mathrm{d}s\, e^{is \cdot (k - k_1 - k_2 - k_3)} \right] \\
		&= \mathfrak{c}_{31}' \iiiint_{\mathbb{R}_+^4} \mathrm{d}|k|\, \mathrm{d}|k_1|\, \mathrm{d}|k_2|\, \mathrm{d}|k_3|\, |k|^2 |k_1|^2 |k_2|^2 |k_3|^2 \left||k_1| + |k_2| + |k_3| - |k|\right|^{-1}\\
		&\quad \times \delta(\omega - \omega_1 - \omega_2 - \omega_3) \left[ f_1 f_2 f_3 - f(f_1 f_2 + f_2 f_3 + f_3 f_1) \right] \\
		&\quad \times \left[ \phi(\omega) - \phi(\omega_1) - \phi(\omega_2) - \phi(\omega_3) \right]   \int_0^\infty \mathrm{d}s\, \frac{\sin(|k_1|s)\, \sin(|k_2|s)\, \sin(|k_3|s)\, \sin(|k|s)}{s^2},
	\end{aligned}
\end{equation}
where \( \mathfrak{c}_{31}' > 0 \) is a positive constant.

Since
\[
\omega = \omega_1 + \omega_2 + \omega_3,
\]
by \eqref{CondDisper1}, we deduce that
\[
|k| < |k_1| + |k_2| + |k_3|.
\]
 We now compute
\begin{equation*}
	\begin{aligned}
		\int_0^\infty \mathrm{d}s\, \frac{\sin(|k_1|s)\, \sin(|k_2|s)\, \sin(|k_3|s)\, \sin(|k|s)}{s^2}
		&= \frac{\pi}{16} \sum_{x_1,x_2,x_3,x_4=1}^2 (-1)^{x_1 + x_2 + x_3 + x_4 + 1} \\
		&\quad \times \left| (-1)^{x_1} |k_1| + (-1)^{x_2} |k_2| + (-1)^{x_3} |k_3| + (-1)^{x_4} |k| \right| \\
		&= \tfrac{\pi}{8} \left[|k_1| + |k_2| + |k_3| - |k|\right].
	\end{aligned}
\end{equation*}

Combining this with \eqref{Lemma:C31:E1}, we obtain
\begin{equation}\label{Lemma:C31:E2}
	\begin{aligned}
		\int_{\mathbb{R}_+} \mathrm{d}\omega\, C_{31}[f](\omega)\, \phi(\omega)\, \mathfrak{A}(\omega)
		&= c_{31} \iiiint_{\mathbb{R}_+^4} \mathrm{d}|k|\, \mathrm{d}|k_1|\, \mathrm{d}|k_2|\, \mathrm{d}|k_3|\, |k|^2 |k_1|^2 |k_2|^2 |k_3|^2 \\
		&\quad \times \delta(\omega - \omega_1 - \omega_2 - \omega_3) \left[ f_1 f_2 f_3 - f(f_1 f_2 + f_2 f_3 + f_3 f_1) \right] \\
		&\quad \times \left[ \phi(\omega) - \phi(\omega_1) - \phi(\omega_2) - \phi(\omega_3) \right],
	\end{aligned}
\end{equation}
for a universal constant \( c_{31} > 0 \). By changing variables \( |k| \to \omega \), \( |k_1| \to \omega_1 \), and \( |k_2| \to \omega_2 \), we deduce \eqref{Lemma:C31:1} from \eqref{Lemma:C31:E2}.

\end{proof}

\section{A priori estimates using the initial mass and energy}\label{Sec:Estimates}
In this section, we provide some \emph{a priori} estimates for the solutions.

\begin{lemma}
	\label{lemma:Concave}  	We assume Assumption A and Assumption B. 

	Let $f$ be a radial solution in the sense of \eqref{4wavemild} of the wave kinetic equation \eqref{4wave}. Then, for all $T > 0$ and for all $\alpha \in [1/2,1)$, the following estimate holds:
	\begin{equation}\label{Lemma:Concave:1}
		\begin{aligned}
			\mathscr{M} + \mathscr{E} \ \ge\  
			& \mathcal C_1 \int_0^T \mathrm{d}t \iiint_{\mathbb{R}_+^3} \mathrm{d}\omega_1\, \mathrm{d}\omega_2\, \mathrm{d}\omega\,
			f_1 f_2 f\, \frac{(\omega_{\mathrm{Mid}} - \omega_{\mathrm{Min}})^2 \alpha(1 - \alpha)}{(2\omega_{\mathrm{Mid}} - \omega_{\mathrm{Min}})^{2 - \alpha}}\, |k_{\mathrm{Min}}| \chi_{(0,1)}(2\omega_{\mathrm{Mid}} - \omega_{\mathrm{Min}})\\
			& \quad \times\Theta(\omega_{\mathrm{Max}})\,\Theta(\omega_{\mathrm{Min}})\,\Theta(\omega_{\mathrm{Mid}})\,\Theta(\omega_{\mathrm{Max}} - \omega_{\mathrm{Min}} + \omega_{\mathrm{Mid}})  \left[ \frac{\max\{|\omega_1 - \omega_2|,\ |\omega_2 - \omega|\}}{2(\omega + \omega_1)} \right]^\mu ,
		\end{aligned}
	\end{equation}
	for some constant $\mathcal C_1>0$ independent of $f$.
	Here, $k_{\text{Min}}$ is associated to $\omega_{\text{Min}}$. The other notations are consistent with those introduced in \eqref{Sec:DDM:6}.
	
\end{lemma}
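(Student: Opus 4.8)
The plan is to test the weak formulation \eqref{4wavemild} against a concave, bounded test function behaving like $\omega^\alpha$ near the origin, and to read the asserted lower bound off the $C_{22}$ part of the collision operator, while the $C_{12}$ and $C_{31}$ parts, together with one unfavourable sub-piece of $C_{22}$, get absorbed into $\mathscr M+\mathscr E$. First I would fix $\alpha\in[1/2,1)$ and $\eta>0$ small, and take $\phi=\phi_{\alpha,\eta}\in C^2([0,\infty))$ nonnegative, nondecreasing, compactly supported, vanishing on $[0,\eta]$, equal to $(\omega-\eta)^\alpha$ on $[2\eta,1]$, concave on $[\eta,\infty)$, constant on $[2,\infty)$, with $\phi(0)=\phi'(0)=0$ and $\|\phi\|_{L^\infty}\le 1$; such a $\phi$ is admissible in \eqref{4wavemild}. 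The shift by $\eta$ is unavoidable because $\omega^\alpha$ has infinite slope at the origin while admissibility demands $\phi'(0)=0$; the loss it introduces will be recovered at the very end by letting $\eta\downarrow0$, with monotone convergence being legitimate since the integrand in \eqref{Lemma:Concave:1} vanishes on $\{\omega_{\mathrm{Min}}=0\}$.

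Next I would analyse the $C_{22}$ contribution. By Lemma~\ref{lemma:C22}, $-\int_0^T\!\int C_{22}[f]\phi\,\mathfrak A$ is $c_{22}$ times the space–time integral of $\Theta\Theta_1\Theta_2\Theta_3\,\min\{|k|,|k_1|,|k_2|,|k_3|\}\,[\,\cdots\,]^\mu\,f f_1 f_2$ (with $[\,\cdots\,]^\mu$ the reduced $\mathcal W_{22}$ kernel of \eqref{Lemma:C22:1}) against $\phi(\omega)+\phi(\omega_1)-\phi(\omega_2)-\phi(\omega+\omega_1-\omega_2)$, where $\omega_3=\omega+\omega_1-\omega_2$, and I would split $\mathbb R_+^3$ according to which of $\omega,\omega_1,\omega_2$ realizes $\omega_{\mathrm{Min}}(\omega,\omega_1,\omega_2)$. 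On $\{\omega_2=\omega_{\mathrm{Min}}\}$ one has $\{\omega,\omega_1\}=\{\omega_{\mathrm{Mid}},\omega_{\mathrm{Max}}\}$, $\ \omega+\omega_1-\omega_2=\omega_{\mathrm{Max}}-\omega_{\mathrm{Min}}+\omega_{\mathrm{Mid}}$, $\ \min\{|k|,|k_1|,|k_2|,|k_3|\}=|k_{\mathrm{Min}}|$, and, writing the second–order difference as a repeated integral of $-\phi''\ge0$,
\[
\phi(\omega)+\phi(\omega_1)-\phi(\omega_2)-\phi(\omega+\omega_1-\omega_2)
=\int_0^{\omega_{\mathrm{Mid}}-\omega_{\mathrm{Min}}}\!\!\int_{\omega_{\mathrm{Min}}+u}^{\omega_{\mathrm{Max}}+u}\bigl(-\phi''(r)\bigr)\,\mathrm dr\,\mathrm du\ \ge\ 0 .
\]
Keeping only $u\in[0,\tfrac12(\omega_{\mathrm{Mid}}-\omega_{\mathrm{Min}})]$ and $r\in[\omega_{\mathrm{Min}}+u,\omega_{\mathrm{Mid}}]$, and using on $\{\eta<\omega_{\mathrm{Min}}\}\cap\{2\omega_{\mathrm{Mid}}-\omega_{\mathrm{Min}}<1\}$ that $-\phi''(r)=\alpha(1-\alpha)(r-\eta)^{\alpha-2}\ge\alpha(1-\alpha)(2\omega_{\mathrm{Mid}}-\omega_{\mathrm{Min}})^{\alpha-2}$, this yields
\[
\phi(\omega)+\phi(\omega_1)-\phi(\omega_2)-\phi(\omega+\omega_1-\omega_2)\ \ge\ \tfrac14\,\frac{\alpha(1-\alpha)\,(\omega_{\mathrm{Mid}}-\omega_{\mathrm{Min}})^2}{(2\omega_{\mathrm{Mid}}-\omega_{\mathrm{Min}})^{2-\alpha}}\;\chi_{(0,1)}(2\omega_{\mathrm{Mid}}-\omega_{\mathrm{Min}}) ,
\]
which is exactly the weight in \eqref{Lemma:Concave:1}. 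The same computation gives a nonnegative bracket on $\{\omega_2=\omega_{\mathrm{Max}}\}$ (so that region only helps) and a nonpositive one on $\{\omega_2=\omega_{\mathrm{Mid}}\}$; hence $-\int_0^T\!\int C_{22}[f]\phi\,\mathfrak A$ will dominate $\tfrac{c_{22}}4$ times the right-hand side of \eqref{Lemma:Concave:1} restricted to $\{\omega_{\mathrm{Min}}>\eta\}$, minus the size of the $\{\omega_2=\omega_{\mathrm{Mid}}\}$ contribution.

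To close, I would insert $\phi$ into \eqref{4wavemild}: since $\int f(T)\mathfrak A\phi\ge0$ and $\int f_0\mathfrak A\phi\le\|\phi\|_{L^\infty}\int f_0\mathfrak A=\|\phi\|_{L^\infty}\mathscr M/(2\pi^2)$, rearranging gives
\[
-\int_0^T\!\!\int C_{22}[f]\phi\,\mathfrak A\ \le\ \frac{\mathscr M}{2\pi^2}+\int_0^T\!\!\int\bigl(C_{12}[f]+C_{31}[f]\bigr)\phi\,\mathfrak A ,
\]
and combining this with the previous step would give \eqref{Lemma:Concave:1} (first on $\{\omega_{\mathrm{Min}}>\eta\}$, then in full by $\eta\downarrow0$) once the $C_{12},C_{31}$ contributions and the $\{\omega_2=\omega_{\mathrm{Mid}}\}$ part of $C_{22}$ are bounded by $C(\mathscr M+\mathscr E)$. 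For $C_{12},C_{31}$ I would use subadditivity of $\phi$ to drop the (nonpositive) gain terms, expand the loss terms via Lemmas~\ref{lemma:C12}--\ref{lemma:C31}, and for the $\{\omega_2=\omega_{\mathrm{Mid}}\}$ part use $|\phi(\omega)+\phi(\omega_1)-\phi(\omega_2)-\phi(\omega+\omega_1-\omega_2)|\le 2\|\phi\|_{L^\infty}$; in every case, after $|k|\mapsto\omega$ the kernels \eqref{W21}--\eqref{W31} become products of powers of $\omega,\omega_1,\omega_2,\omega_3$, the energy $\delta$–function removes one variable, and the rest factors into moments $\int f(t,\omega)\,\omega^\beta\,\mathrm d\omega$, which are $\le C(\mathscr M+\mathscr E)$ because $\mathfrak A(\omega)\le C_\mathfrak{A}\,\omega^\theta$, $\Theta(\omega)\ge C_\Theta'\,\omega^\varrho$ and $|k|(\omega)\ge c\,\omega^{\delta'}$ force $\mathfrak A(\omega)\gtrsim\omega^{\delta'+\varrho}$ and hence $\omega^\beta\le C\,\mathfrak A(\omega)(1+\omega)$ for the $\beta$ that appear — and this exponent bookkeeping is where \eqref{Parameters} enters. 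The hard part is precisely this last step: the $C_{12},C_{31}$ loss terms and the $\{\omega_2=\omega_{\mathrm{Mid}}\}$ portion of $C_{22}$ carry no favourable sign and must be dominated using only the conserved mass and energy and the scaling relations of Assumption~A, with the compatibility of all exponents resting on \eqref{Parameters}; by contrast, the finite–difference coercivity of $C_{22}$ producing the weight $(\omega_{\mathrm{Mid}}-\omega_{\mathrm{Min}})^2/(2\omega_{\mathrm{Mid}}-\omega_{\mathrm{Min}})^{2-\alpha}$ is a routine consequence of the concavity of $\omega^\alpha$.
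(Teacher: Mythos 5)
Your treatment of the favourable configuration of $C_{22}$ (the $\{\omega_2=\omega_{\mathrm{Min}}\}$ piece, second-difference identity, restriction of the $(u,r)$-rectangle, and the lower bound $-\phi''\ge\alpha(1-\alpha)(2\omega_{\mathrm{Mid}}-\omega_{\mathrm{Min}})^{\alpha-2}$) is essentially the computation in \eqref{Lemma:Concave:E4}, and the regularization $\phi_{\alpha,\eta}$ with $\eta\downarrow 0$ is a reasonable way to handle admissibility. The gap is in your final step, where you propose to dominate the $C_{12}$ and $C_{31}$ contributions and the $\{\omega_2=\omega_{\mathrm{Mid}}\}$ part of $C_{22}$ by $C(\mathscr M+\mathscr E)$. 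First, these are time integrals over $[0,T]$ for arbitrary $T$ of sign-definite (unfavourable) functionals of $f(t)$; instantaneous mass--energy bounds can at best give $C\,T\,(\mathscr M+\mathscr E)^{k}$, whereas \eqref{Lemma:Concave:1} (and its later use, where the right-hand side must be independent of $T$) requires a $T$-uniform bound. Second, even at fixed time the crude $L^\infty$ bound on the test-function bracket does not close: for the $C_{12}$ loss term written as in \eqref{Lemma:Concave:E7a} one is left with an integrand containing $\mathfrak A(\omega_2)$ but no $f(\omega_2)$, and the bracket does not vanish for large $\omega_2$, so the $\omega_2$-integral is not controlled by any moment of $f$; for the $\{\omega_2=\omega_{\mathrm{Mid}}\}$ part of $C_{22}$ the weight $\Theta(\omega+\omega_1-\omega_2)\,|k_{\mathrm{Min}}|/(|k||k_1||k_2|)$ behaves like $(\omega\omega_1)^{-\delta'}(\omega+\omega_1)^{\varrho}$ near the origin (e.g.\ $(\omega\omega_1)^{-1/2}$ for $\omega=|k|^2$), which is not bounded by mass and energy. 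Moreover, bounding that $C_{22}$ piece ``by $(\mathscr M+\mathscr E)$'' is essentially circular: it is of the same form as the quantity the lemma is meant to control.

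The reason the paper's proof avoids all of this is that none of these terms is absorbed: they are shown to carry a favourable sign through exact pairings. For $C_{12}$ and $C_{31}$, the change of variables $(\omega_1+\omega_2,\omega_1)\to(\omega_1,\omega_2)$ (resp.\ $\omega_1+\omega_2+\omega_3\to\omega_1$) pairs gain and loss, and the second-difference representations \eqref{Lemma:Concave:E8b}, \eqref{Lemma:Concave:E13a} together with the monotonicity of $\mathfrak A$ (Assumption (A3)) show the \emph{total} contributions are nonpositive; for $C_{22}$, the unfavourable $\{\omega_2=\omega_{\mathrm{Mid}}\}$ bracket is paired with the $\{\omega_2=\omega_{\mathrm{Min}}\}$ bracket and the monotonicity of $\Theta$ (Assumption (A4)) gives the net negative bound \eqref{Lemma:Concave:E4}. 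This cancellation structure is the actual content of the lemma and is what makes the estimate uniform in $T$; it also explains why \eqref{Parameters} plays no role here (it enters only in the multiscale sections), contrary to your exponent-bookkeeping claim. To repair your argument you would have to replace the absorption step by these sign arguments, i.e.\ not discard the gain terms of $C_{12}$, $C_{31}$, and not separate the two $C_{22}$ configurations.
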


\begin{proof} 
Let \( \phi(\omega) = \omega^\alpha \chi_{[0,1]}(\omega) + \chi_{(1,\infty)}(\omega) \), with \( 0 < \alpha < 1 \).
 Then, by Lemma \ref{lemma:C12}, Lemma \ref{lemma:C22} and Lemma \ref{lemma:C31} 
	
	\begin{equation}\label{Lemma:Concave:E1}
		\begin{aligned}
			& \int_{\mathbb{R}_+} \mathrm{d}\omega  \partial_t f(\omega) \, \phi(\omega) |k|\Theta \, \\
			=\ &  \int_{\mathbb{R}_+} \mathrm{d}\omega C_{12}[f](\omega) \, \phi(\omega) |k|\Theta \, +  \int_{\mathbb{R}_+} \mathrm{d}\omega C_{22}[f](\omega) \, \phi(\omega) |k|\Theta \, 
			+ \int_{\mathbb{R}_+}  \mathrm{d}\omega C_{31}[f](\omega) \, \phi(\omega) |k|\Theta \, \\
			=\ & c_{12} \iiint_{\mathbb{R}_+^3}  \mathrm{d}\omega_1 \mathrm{d}\omega_2 \mathrm{d}\omega \delta(\omega - \omega_1 - \omega_2)  \Theta\Theta_1\Theta_2 |k||k_1||k_2| \, [f_1 f_2 - f(f_1 + f_2)] \, [\phi - \phi_1 - \phi_2] \,\\ 
			& + c_{22} \iiiint_{\mathbb{R}_+^4} \mathrm{d}\omega_1 \mathrm{d}\omega_2 \mathrm{d}\omega_3 \mathrm{d}\omega\delta(\omega + \omega_1 - \omega_2 - \omega_3)		\left[ \frac{\max\{|\omega_1 - \omega_2|,\ |\omega_2 - \omega|\}}{2(\omega + \omega_1)} \right]^\mu  \\
			& \quad \times\Theta\Theta_1\Theta_2\Theta_3 \min\{|k|, |k_1|, |k_2|, |k_3|\} \, f_1 f_2 f \, [-\phi - \phi_1 + \phi_2 + \phi_3] \,  \\
			& + c_{31} \iiiint_{\mathbb{R}_+^4}\mathrm{d}\omega_1 \mathrm{d}\omega_2 \mathrm{d}\omega_3 \mathrm{d}\omega \delta(\omega - \omega_1 - \omega_2 - \omega_3) \\
			& \quad \times\Theta\Theta_1\Theta_2\Theta_3 |k||k_1||k_2||k_3| \, [f_1 f_2 f_3 - f(f_1 f_2 + f_2 f_3 + f_1 f_3)] \, [\phi - \phi_1 - \phi_2 - \phi_3] \, .
		\end{aligned}
	\end{equation}
	We now divide the remainder of the proof into several steps to clarify the structure and facilitate the analysis.

	{\it Step 1:} 
	Recalling that $\mathfrak A(\omega)=\Theta|k|$, we rewrite the first term on the right-hand side of \eqref{Lemma:Concave:E1} as
	
	\begin{equation*}\label{Lemma:Concave:E6}
		\begin{aligned}
			& \int_{\mathbb{R}_+} \mathrm{d}\omega\, C_{12}[f]\, \phi(\omega)\, \mathfrak A(\omega) \\
			=\, & c_{12} \iint_{\mathbb{R}_+^2} \mathrm{d}\omega_1\, \mathrm{d}\omega_2\, \mathfrak{A}(\omega_1 + \omega_2)\, \mathfrak{A}(\omega_1)\, \mathfrak{A}(\omega_2) \\
			& \quad \times \left[ f(\omega_1) f(\omega_2) - f(\omega_1 + \omega_2)\big(f(\omega_1) + f(\omega_2)\big) \right]  \left[ \phi(\omega_1 + \omega_2) - \phi(\omega_1) - \phi(\omega_2) \right],
		\end{aligned}
	\end{equation*}
	which, after rearranging the terms, can be re-expressed as
	\begin{equation}\label{Lemma:Concave:E7a}
		\begin{aligned}
			& \int_{\mathbb{R}_+} \mathrm{d}\omega\, C_{12}[f]\, \phi(\omega)\, \mathfrak A(\omega) \\
			=\, & c_{12} \iint_{\mathbb{R}_+^2} \mathrm{d}\omega_1\, \mathrm{d}\omega_2\, \mathfrak{A}(\omega_1)\, \mathfrak{A}(\omega_2)\, f(\omega_1) f(\omega_2)\, \mathfrak{A}(\omega_1 + \omega_2) \left[ \phi(\omega_1 + \omega_2) - \phi(\omega_1) - \phi(\omega_2) \right] \\
			& - 2c_{12} \iint_{\mathbb{R}_+^2} \mathrm{d}\omega_1\, \mathrm{d}\omega_2\, \mathfrak{A}(\omega_1 + \omega_2)\, \mathfrak{A}(\omega_1)\, \mathfrak{A}(\omega_2)\, f(\omega_1 + \omega_2) f(\omega_1) \left[ \phi(\omega_1 + \omega_2) - \phi(\omega_1) - \phi(\omega_2) \right].
		\end{aligned}
	\end{equation}
	
	Next, by performing the change of variables $(\omega_1 + \omega_2,\omega_1) \to (\omega_1,\omega_2)$, we find
	\begin{equation*}\label{Lemma:Concave:E8}
		\begin{aligned}
			& \int_{\mathbb{R}_+} \mathrm{d}\omega\, C_{12}[f]\, \phi(\omega)\, \mathfrak{A}(\omega) \\
			=\, & c_{12} \iint_{\mathbb{R}_+^2} \mathrm{d}\omega_1\, \mathrm{d}\omega_2\, \mathfrak{A}(\omega_1)\, \mathfrak{A}(\omega_2)\, f(\omega_1) f(\omega_2)\, \mathfrak{A}(\omega_1 + \omega_2) \left[ \phi(\omega_1 + \omega_2) - \phi(\omega_1) - \phi(\omega_2) \right] \\
			& - 2c_{12} \iint_{\omega_1 \ge \omega_2} \mathrm{d}\omega_1\, \mathrm{d}\omega_2\, \mathfrak{A}(\omega_1)\, \mathfrak{A}(\omega_1 - \omega_2)\, \mathfrak{A}(\omega_2)\, f(\omega_1) f(\omega_2) \left[ \phi(\omega_1) - \phi(\omega_1 - \omega_2) - \phi(\omega_2) \right],
		\end{aligned}
	\end{equation*}
	which can again be rewritten as
	\begin{equation}\label{Lemma:Concave:E8a}
		\begin{aligned}
			& \int_{\mathbb{R}_+} \mathrm{d}\omega\, C_{12}[f]\, \phi(\omega)\, \mathfrak{A}(\omega) \\
			=\, & 2c_{12} \iint_{\omega_1 > \omega_2} \mathrm{d}\omega_1\, \mathrm{d}\omega_2\, \mathfrak{A}(\omega_1)\, \mathfrak{A}(\omega_2)\, f(\omega_1) f(\omega_2) \\
			& \quad \times \left[ \mathfrak{A}(\omega_1 + \omega_2) \left( \phi(\omega_1 + \omega_2) - \phi(\omega_1) - \phi(\omega_2) \right)   -\ \mathfrak{A}(\omega_1 - \omega_2) \left( \phi(\omega_1) - \phi(\omega_1 - \omega_2) - \phi(\omega_2) \right) \right] \\
			& + c_{12} \iint_{\omega_1 = \omega_2} \mathrm{d}\omega_1\, \mathrm{d}\omega_2\, \mathfrak{A}(\omega_1)\, \mathfrak{A}(\omega_2)\, f(\omega_1) f(\omega_2)\, \mathfrak{A}(2\omega_1) \left[ \phi(2\omega_1) - 2\phi(\omega_1) \right],
		\end{aligned}
	\end{equation}
	with the observation that $\mathfrak{A}(0) = 0$. 
	
	Next, we compute, using the fact that $\phi(0)=0$
	\[
	\mathfrak{A}(\omega_1 + \omega_2) \left( \phi(\omega_1 + \omega_2) - \phi(\omega_1) - \phi(\omega_2) \right) 
	= \mathfrak{A}(\omega_1 + \omega_2) \int_0^{\omega_2} \int_0^{\omega_1} \mathrm{d}s\, \mathrm{d}s_0\, \phi''(s + s_0),
	\]
	and
	\[
	\mathfrak{A}(\omega_1 - \omega_2) \left( \phi(\omega_1) - \phi(\omega_1 - \omega_2) - \phi(\omega_2) \right) 
	= \mathfrak{A}(\omega_1 - \omega_2) \int_0^{\omega_2} \int_0^{\omega_1 - \omega_2} \mathrm{d}s\, \mathrm{d}s_0\, \phi''(s + s_0),
	\]
	yielding
	\begin{equation}\label{Lemma:Concave:E8b}
		\begin{aligned}
			& \mathfrak{A}(\omega_1 + \omega_2) \left( \phi(\omega_1 + \omega_2) - \phi(\omega_1) - \phi(\omega_2) \right) \  -\ \mathfrak{A}(\omega_1 - \omega_2) \left( \phi(\omega_1) - \phi(\omega_1 - \omega_2) - \phi(\omega_2) \right) \\
			=\ & [\mathfrak{A}(\omega_1 + \omega_2) - \mathfrak{A}(\omega_1 - \omega_2)] \int_0^{\omega_2} \int_0^{\omega_1 - \omega_2} \mathrm{d}s\, \mathrm{d}s_0\, \phi''(s + s_0) \\
			& + \ \mathfrak{A}(\omega_1 + \omega_2) \int_0^{\omega_2} \int_{\omega_1 - \omega_2}^{\omega_1} \mathrm{d}s\, \mathrm{d}s_0\, \phi''(s + s_0).
		\end{aligned}
	\end{equation}
Note that the derivatives in the above equalities are taken in the weak sense.

	Therefore,
	\begin{equation}\label{Lemma:Concave:E9}
		\begin{aligned}
			& \int_{\mathbb{R}_+} \mathrm{d}\omega\, C_{12}[f]\, \phi(\omega)\, \mathfrak{A}(\omega) \\
			=\ & 2c_{12} \iint_{\omega_1 > \omega_2} \mathrm{d}\omega_1\, \mathrm{d}\omega_2\, \mathfrak{A}(\omega_1)\, \mathfrak{A}(\omega_2)\, f(\omega_1) f(\omega_2) \\
			& \quad \times \left\{ [\mathfrak{A}(\omega_1 + \omega_2) - \mathfrak{A}(\omega_1 - \omega_2)] \int_0^{\omega_2} \int_0^{\omega_1 - \omega_2} \mathrm{d}s\, \mathrm{d}s_0\, \phi''(s + s_0) \right. \\
			& \qquad \left. + \ \mathfrak{A}(\omega_1 + \omega_2) \int_0^{\omega_2} \int_{\omega_1 - \omega_2}^{\omega_1} \mathrm{d}s\, \mathrm{d}s_0\, \phi''(s + s_0) \right\} \\
			& + c_{12} \iint_{\omega_1 = \omega_2} \mathrm{d}\omega_1\, \mathrm{d}\omega_2\, \mathfrak{A}(\omega_1)\, \mathfrak{A}(\omega_2)\, f(\omega_1) f(\omega_2)\, \mathfrak{A}(2\omega_1) \int_0^{\omega_1} \int_0^{\omega_1} \mathrm{d}s\, \mathrm{d}s_0\, \phi''(s + s_0).
		\end{aligned}
	\end{equation}
	Since \(\phi''(s) = -\frac{\alpha(1-\alpha)}{s^{2-\alpha}}\) when \(s\in(0,1)\) and \(\phi''(s) = 0\) when \(s>1\), we deduce from \eqref{Lemma:Concave:E9} that
	\begin{equation}\label{Lemma:Concave:E10}
		\begin{aligned}
			& \int_{\mathbb{R}_+} \mathrm{d}\omega\, C_{12}[f]\, \phi(\omega)\, \mathfrak{A}(\omega) \\
			\le\ & -2c_{12} \iint_{\omega_1 > \omega_2} \mathrm{d}\omega_1\, \mathrm{d}\omega_2\, \mathfrak{A}_1\, \mathfrak{A}_2\, f_1 f_2 \\
			& \quad \times \left\{ [\mathfrak{A}(\omega_1 + \omega_2) - \mathfrak{A}(\omega_1 - \omega_2)] \cdot \frac{\omega_2(\omega_1 - \omega_2)\, \alpha(1-\alpha)}{\omega_1^{2-\alpha}} \chi_{(0,1)}(\omega_1)\right.\\
			& \quad\quad \left.
			+\ \mathfrak{A}(\omega_1 + \omega_2) \cdot \frac{\omega_2 \omega_1\, \alpha(1-\alpha)}{(\omega_1 + \omega_2)^{2 - \alpha}} \chi_{(0,1)}(\omega_1+\omega_2)\right\} \\
			& - c_{12} \iint_{\omega_1 = \omega_2} \mathrm{d}\omega_1\, \mathrm{d}\omega_2\, \mathfrak{A}_1\, \mathfrak{A}_2\, f_1 f_2\, \mathfrak{A}(2\omega_1) \cdot \frac{\omega_1^2\, \alpha(1-\alpha)}{(2\omega_1)^{2 - \alpha}}\chi_{(0,1)}(2\omega_1),
		\end{aligned}
	\end{equation}
	where \(\mathfrak{A}_i := \mathfrak{A}(\omega_i)\) and \(f_i := f(\omega_i)\).
	
	{\it Step 2:} 
	We rewrite the second term on the right-hand side of \eqref{Lemma:Concave:E1} as
	\begin{equation}\label{Lemma:Concave:E2}
		\begin{aligned}
			& \int_{\mathbb{R}_+} \mathrm{d}\omega\, C_{22}[f]\, \phi(\omega)\,\mathfrak A(\omega) \\
			=\, & c_{22} \iiint_{\mathbb{R}_+^3} \mathrm{d}\omega_1\,\mathrm{d}\omega_2\,\mathrm{d}\omega\, 
			\Theta(\omega)\Theta(\omega_1)\Theta(\omega_2)\Theta(\omega+\omega_1-\omega_2)		\left[ \frac{\max\{|\omega_1 - \omega_2|,\ |\omega_2 - \omega|\}}{2(\omega + \omega_1)} \right]^\mu  \\
			& \quad \times \min\big\{ |k|(\omega), |k|(\omega_1), |k|(\omega_2), |k|(\omega+\omega_1-\omega_2) \big\} 
			f_1 f_2 f \\
			& \quad \times \big[-\phi(\omega) - \phi(\omega_1) + \phi(\omega_2) + \phi(\omega+\omega_1-\omega_2)\big] \\
			=\, & c_{22} \iiint_{\mathbb{R}_+^3} \mathrm{d}\omega_1\,\mathrm{d}\omega_2\,\mathrm{d}\omega\, f_1 f_2 f\,  		\left[ \frac{\max\{|\omega_1 - \omega_2|,\ |\omega_2 - \omega|\}}{2(\omega + \omega_1)} \right]^\mu \\
			& \quad \times \Big\{ 
			[-\phi(\omega_{\text{Max}}) - \phi(\omega_{\text{Min}}) + \phi(\omega_{\text{Mid}}) + \phi(\omega_{\text{Max}} + \omega_{\text{Min}} - \omega_{\text{Mid}})] \\
			& \qquad \times\Theta(\omega_{\text{Max}})\Theta(\omega_{\text{Min}})\Theta(\omega_{\text{Mid}})\Theta(\omega_{\text{Max}} + \omega_{\text{Min}} - \omega_{\text{Mid}}) |k_{\text{Min}}| \\
			& \quad + [-\phi(\omega_{\text{Max}}) - \phi(\omega_{\text{Mid}}) + \phi(\omega_{\text{Min}}) + \phi(\omega_{\text{Max}} + \omega_{\text{Mid}} - \omega_{\text{Min}})] \\
			& \qquad \times\Theta(\omega_{\text{Max}})\Theta(\omega_{\text{Mid}})\Theta(\omega_{\text{Min}})\Theta(\omega_{\text{Max}} + \omega_{\text{Mid}} - \omega_{\text{Min}}) |k_{\text{Min}}| \\
			& \quad + [-\phi(\omega_{\text{Min}}) - \phi(\omega_{\text{Mid}}) + \phi(\omega_{\text{Max}}) + \phi(\omega_{\text{Min}} + \omega_{\text{Mid}} - \omega_{\text{Max}})] \\
			& \qquad \times\Theta(\omega_{\text{Max}})\Theta(\omega_{\text{Mid}})\Theta(\omega_{\text{Min}})\Theta(\omega_{\text{Min}} + \omega_{\text{Mid}} - \omega_{\text{Max}})\mathbf{1}_{\omega_{\text{Min}} + \omega_{\text{Mid}} - \omega_{\text{Max}}\ge0}\\
			& \qquad \times \min\big\{ |k|(\omega_{\text{Max}}), |k|(\omega_{\text{Min}}), |k|(\omega_{\text{Mid}}), |k|(\omega_{\text{Min}} + \omega_{\text{Mid}} - \omega_{\text{Max}}) \big\} 
			\Big\},
		\end{aligned}
	\end{equation}
	where we assume that $|k_{\text{Min}}|$ is associated to $\omega_{\text{Min}}$, $|k_{\text{Max}}|$ to $\omega_{\text{Max}}$, and $|k_{\text{Mid}}|$ to $\omega_{\text{Mid}}$.

	We observe that
	\begin{equation}\label{Lemma:Concave:E3}
		\begin{aligned}
			& \left[-\phi(\omega_{\text{Min}}) - \phi(\omega_{\text{Mid}}) + \phi(\omega_{\text{Max}}) + \phi(\omega_{\text{Min}} + \omega_{\text{Mid}} - \omega_{\text{Max}})\right] \\
			=\, & \int_{0}^{\omega_{\text{Max}} - \omega_{\text{Min}}} \mathrm{d}\xi_1 \int_{0}^{\omega_{\text{Max}} - \omega_{\text{Mid}}} \mathrm{d}\xi_2 \, \phi''(\xi_1 + \xi_2 + \omega_{\text{Min}}) \ 
			\le\,   0,
		\end{aligned}
	\end{equation}
	since $\phi'' \le 0$.

	We now compute
	\begin{equation}\label{Lemma:Concave:E4}
		\begin{aligned}
			& \left[-\phi(\omega_{\text{Max}}) - \phi(\omega_{\text{Min}}) + \phi(\omega_{\text{Mid}}) + \phi(\omega_{\text{Max}} + \omega_{\text{Min}} - \omega_{\text{Mid}})\right]\Theta(\omega_{\text{Max}} + \omega_{\text{Min}} - \omega_{\text{Mid}}) \\
			& + \left[-\phi(\omega_{\text{Max}}) - \phi(\omega_{\text{Mid}}) + \phi(\omega_{\text{Min}}) + \phi(\omega_{\text{Max}} + \omega_{\text{Mid}} - \omega_{\text{Min}})\right]\Theta(\omega_{\text{Max}} + \omega_{\text{Mid}} - \omega_{\text{Min}}) \\
			=\, & -\int_{0}^{\omega_{\text{Mid}} - \omega_{\text{Min}}} \mathrm{d}s \int_{0}^{\omega_{\text{Max}} - \omega_{\text{Mid}}} \mathrm{d}s_0\,\Theta(\omega_{\text{Max}} + \omega_{\text{Min}} - \omega_{\text{Mid}})\, \phi''(\omega_{\text{Min}} + s + s_0) \\
			& + \int_{0}^{\omega_{\text{Mid}} - \omega_{\text{Min}}} \mathrm{d}s \int_{0}^{\omega_{\text{Max}} - \omega_{\text{Min}}} \mathrm{d}s_0\,\Theta(\omega_{\text{Max}} - \omega_{\text{Min}} + \omega_{\text{Mid}})\, \phi''(\omega_{\text{Min}} + s + s_0) \\
			=\, & \int_{0}^{\omega_{\text{Mid}} - \omega_{\text{Min}}} \mathrm{d}s \int_{0}^{\omega_{\text{Max}} - \omega_{\text{Mid}}} \mathrm{d}s_0\, \phi''(\omega_{\text{Min}} + s + s_0) \\
			& \quad \times \left[\Theta(\omega_{\text{Max}} - \omega_{\text{Min}} + \omega_{\text{Mid}}) -\Theta(\omega_{\text{Max}} + \omega_{\text{Min}} - \omega_{\text{Mid}}) \right] \\
			& + \int_{0}^{\omega_{\text{Mid}} - \omega_{\text{Min}}} \mathrm{d}s \int_{0}^{\omega_{\text{Mid}} - \omega_{\text{Min}}} \mathrm{d}s_0\,\Theta(\omega_{\text{Max}} - \omega_{\text{Min}} + \omega_{\text{Mid}})\, \phi''(\omega_{\text{Min}} + s + s_0) \\
			\le\, & \int_{0}^{\omega_{\text{Mid}} - \omega_{\text{Min}}} \mathrm{d}s \int_{0}^{\omega_{\text{Mid}} - \omega_{\text{Min}}} \mathrm{d}s_0\,\Theta(\omega_{\text{Max}} - \omega_{\text{Min}} + \omega_{\text{Mid}})\, \phi''(\omega_{\text{Min}} + s + s_0) \\
			\le\, & -\int_{0}^{\omega_{\text{Mid}} - \omega_{\text{Min}}} \mathrm{d}s \int_{0}^{\omega_{\text{Mid}} - \omega_{\text{Min}}} \mathrm{d}s_0\,\Theta(\omega_{\text{Max}} - \omega_{\text{Min}} + \omega_{\text{Mid}})\, \frac{\alpha(1 - \alpha)}{(\omega_{\text{Min}} + s + s_0)^{2 - \alpha}}\chi_{(0,1)}(\omega_{\text{Min}} + s + s_0) \\
			\le\, & -\Theta(\omega_{\text{Max}} - \omega_{\text{Min}} + \omega_{\text{Mid}})\, \frac{(\omega_{\text{Mid}} - \omega_{\text{Min}})^2 \alpha(1 - \alpha)}{(2\omega_{\text{Mid}} - \omega_{\text{Min}})^{2 - \alpha}}\chi_{(0,1)}(2\omega_{\text{Mid}} - \omega_{\text{Min}}).
		\end{aligned}
	\end{equation}
	
	Combining \eqref{Lemma:Concave:E2}, \eqref{Lemma:Concave:E3}, and \eqref{Lemma:Concave:E4}, we obtain
	\begin{equation}\label{Lemma:Concave:E5}
		\begin{aligned}
			\int_{\mathbb{R}_+} \mathrm{d}\omega\, C_{22}[f]\, \phi\, \mathfrak A(\omega)
			\le\ & -c_{22} \iiint_{\mathbb{R}_+^{3}} \mathrm{d}\omega_1\, \mathrm{d}\omega_2\, \mathrm{d}\omega\, f_1 f_2 f\, \frac{(\omega_{\text{Mid}} - \omega_{\text{Min}})^2 \alpha(1 - \alpha)}{(2\omega_{\text{Mid}} - \omega_{\text{Min}})^{2 - \alpha}}\, |k_{\text{Min}}| \\
			& \times\Theta(\omega_{\text{Max}})\,\Theta(\omega_{\text{Min}})\,\Theta(\omega_{\text{Mid}})\,\Theta(\omega_{\text{Max}} - \omega_{\text{Min}} + \omega_{\text{Mid}}) \\
			& \times \left[ \frac{\max\{|\omega_1 - \omega_2|,\ |\omega_2 - \omega|\}}{2(\omega + \omega_1)} \right]^\mu\chi_{(0,1)}(2\omega_{\mathrm{Mid}} - \omega_{\mathrm{Min}}).
		\end{aligned}
	\end{equation}
	
	{\it Step 3:} 
	Finally, we rewrite the last term on the right-hand side of \eqref{Lemma:Concave:E1} as
	\begin{equation*}\label{Lemma:Concave:E11}
		\begin{aligned}
			& \int_{\mathbb{R}_+} \mathrm{d}\omega\, C_{31}[f]\, \phi(\omega)\, \mathfrak{A}(\omega) \\
			=\, & c_{31} \iiint_{\mathbb{R}_+^3} \mathrm{d}\omega_1\, \mathrm{d}\omega_2\, \mathrm{d}\omega_3\, \mathfrak{A}(\omega_1 + \omega_2 + \omega_3)\, \mathfrak{A}(\omega_1)\, \mathfrak{A}(\omega_2)\, \mathfrak{A}(\omega_3) \\
			& \quad \times \left[ f(\omega_1) f(\omega_2) f(\omega_3) 
			- f(\omega_1 + \omega_2 + \omega_3) \left( f(\omega_1) f(\omega_2) + f(\omega_2) f(\omega_3) + f(\omega_3) f(\omega_1) \right) \right] \\
			& \quad \times \left[ \phi(\omega_1 + \omega_2 + \omega_3) - \phi(\omega_1) - \phi(\omega_2) - \phi(\omega_3) \right].
		\end{aligned}
	\end{equation*}
	
	After rearranging the terms, this expression can be rewritten as
	\begin{equation*}\label{Lemma:Concave:E12}
		\begin{aligned}
			& \int_{\mathbb{R}_+} \mathrm{d}\omega\, C_{31}[f]\, \phi(\omega)\, \mathfrak{A}(\omega) \\
			=\, & c_{31}  \iiint_{\mathbb{R}_+^3} \mathrm{d}\omega_1\, \mathrm{d}\omega_2\, \mathrm{d}\omega_3\, \mathfrak{A}(\omega_1 + \omega_2 + \omega_3)\, \mathfrak{A}(\omega_1)\, \mathfrak{A}(\omega_2)\, \mathfrak{A}(\omega_3)\, f(\omega_1) f(\omega_2) f(\omega_3) \\
			& \quad \times\left[ \phi(\omega_1 + \omega_2 + \omega_3) - \phi(\omega_1) - \phi(\omega_2) - \phi(\omega_3) \right] \\
			& \quad - 3c_{31} \iiint_{\mathbb{R}_+^3} \mathrm{d}\omega_1\, \mathrm{d}\omega_2\, \mathrm{d}\omega_3\, \mathfrak{A}(\omega_1 + \omega_2 + \omega_3)\, \mathfrak{A}(\omega_1)\, \mathfrak{A}(\omega_2)\, \mathfrak{A}(\omega_3) \\
			& \qquad \times f(\omega_1 + \omega_2 + \omega_3) f(\omega_2) f(\omega_3)\, \left[ \phi(\omega_1 + \omega_2 + \omega_3) - \phi(\omega_1) - \phi(\omega_2) - \phi(\omega_3) \right].
		\end{aligned}
	\end{equation*}

	Next, by performing the change of variables $\omega_1 + \omega_2 + \omega_3 \to \omega_1$, we obtain
	\begin{equation}
		\begin{aligned}\label{Lemma:Concave:E12aa}
			& \int_{\mathbb{R}_+} \mathrm{d}\omega\, C_{31}[f]\, \phi(\omega)\, \mathfrak{A}(\omega) \\
			=\, & c_{31}  \iiint_{\mathbb{R}_+^3} \mathrm{d}\omega_1\, \mathrm{d}\omega_2\, \mathrm{d}\omega_3\, \mathfrak{A}(\omega_1 + \omega_2 + \omega_3)\, \mathfrak{A}(\omega_1)\, \mathfrak{A}(\omega_2)\, \mathfrak{A}(\omega_3)\, f(\omega_1) f(\omega_2) f(\omega_3) \\
			& \quad \times \left[ \phi(\omega_1 + \omega_2 + \omega_3) - \phi(\omega_1) - \phi(\omega_2) - \phi(\omega_3) \right] \\
			& \quad - 3c_{31} \iiint_{\omega_1 - \omega_2 - \omega_3 \ge 0} \mathrm{d}\omega_1\, \mathrm{d}\omega_2\, \mathrm{d}\omega_3\, \mathfrak{A}(\omega_1 - \omega_2 - \omega_3)\, \mathfrak{A}(\omega_1)\, \mathfrak{A}(\omega_2)\, \mathfrak{A}(\omega_3) \\
			& \qquad \times f(\omega_1) f(\omega_2) f(\omega_3)\, \left[ \phi(\omega_1) - \phi(\omega_1 - \omega_2 - \omega_3) - \phi(\omega_2) - \phi(\omega_3) \right],
		\end{aligned}
	\end{equation}
	which can be further rewritten as
	\begin{equation}\label{Lemma:Concave:E12a}
		\begin{aligned}
			& \int_{\mathbb{R}_+} \mathrm{d}\omega\, C_{31}[f]\, \phi(\omega)\, \mathfrak{A}(\omega) \\
			=\, & 3c_{31} \iiint_{\omega_1 > \omega_2 + \omega_3} \mathrm{d}\omega_1\, \mathrm{d}\omega_2\, \mathrm{d}\omega_3\, \mathfrak{A}(\omega_1)\, \mathfrak{A}(\omega_2)\, \mathfrak{A}(\omega_3)\, f(\omega_1) f(\omega_2) f(\omega_3) \\
			& \quad \times \left[ \mathfrak{A}(\omega_1 + \omega_2 + \omega_3) \left( \phi(\omega_1 + \omega_2 + \omega_3) - \phi(\omega_1) - \phi(\omega_2) - \phi(\omega_3) \right) \right. \\
			& \left. \qquad -\ \mathfrak{A}(\omega_1 - \omega_2 - \omega_3) \left( \phi(\omega_1) - \phi(\omega_1 - \omega_2 - \omega_3) - \phi(\omega_2) - \phi(\omega_3) \right) \right] \\
			& \quad + c_{31} \iiint_{\omega_1 = \omega_2 + \omega_3} \mathrm{d}\omega_1\, \mathrm{d}\omega_2\, \mathrm{d}\omega_3\,\mathfrak{A}(\omega_1)\, \mathfrak{A}(\omega_2)\, \mathfrak{A}(\omega_3)\, f(\omega_1) f(\omega_2) f(\omega_3)\\
						& \quad \times \mathfrak{A}(3\omega_1)\, \left[ \phi(2\omega_1) - \phi(\omega_1)- \phi(\omega_2)- \phi(\omega_3) \right] \\
			& \quad + c_{31} \iiint_{\mathbb{R}_+^3 \setminus \left( \{\omega_1 > \omega_2 + \omega_3\} \cup \{\omega_2 > \omega_1 + \omega_3\} \cup \{\omega_3 > \omega_1 + \omega_2\} \right)} \mathrm{d}\omega_1\, \mathrm{d}\omega_2\, \mathrm{d}\omega_3\, \mathfrak{A}(\omega_1 + \omega_2 + \omega_3) \\
			& \quad \times \mathfrak{A}(\omega_1)\, \mathfrak{A}(\omega_2)\, \mathfrak{A}(\omega_3)\, f(\omega_1) f(\omega_2) f(\omega_3)\, \left[ \phi(\omega_1 + \omega_2 + \omega_3) - \phi(\omega_1) - \phi(\omega_2) - \phi(\omega_3) \right],
		\end{aligned}
	\end{equation}
	with the observation that $\mathfrak{A}(0) = 0$.

	Next, we compute, using the fact that $\phi(0)=0$
	\begin{equation*}
		\begin{aligned}
			&	\mathfrak{A}(\omega_1 + \omega_2 + \omega_3) \left( \phi(\omega_1 + \omega_2 + \omega_3) - \phi(\omega_1) - \phi(\omega_2)  - \phi(\omega_3)\right) \\
			\ = \ & \mathfrak{A}(\omega_1 + \omega_2 + \omega_3) \left(  \int_0^{\omega_1} \int_0^{\omega_2+\omega_3}\mathrm{d}s\, \mathrm{d}s_0\, \phi''(s + s_0) +  \int_0^{\omega_2} \int_0^{\omega_3} \mathrm{d}s\, \mathrm{d}s_0\, \phi''(s + s_0) \right),
		\end{aligned}
	\end{equation*}
	
	and
	\begin{equation*}
		\begin{aligned}
			&	\mathfrak{A}(\omega_1 - \omega_2 - \omega_3) \left( \phi(\omega_1) - \phi(\omega_1 - \omega_2 - \omega_3) - \phi(\omega_2)  - \phi(\omega_3)\right) \\
			\ = \ & \mathfrak{A}(\omega_1- \omega_2 - \omega_3) \left( \int_0^{\omega_1- \omega_2 - \omega_3} \int_0^{\omega_2+\omega_3} \mathrm{d}s\, \mathrm{d}s_0\, \phi''(s + s_0) +  \int_0^{\omega_2} \int_0^{\omega_3} \mathrm{d}s\, \mathrm{d}s_0\, \phi''(s + s_0) \right),
		\end{aligned}
	\end{equation*}
	yielding
	\begin{equation}\label{Lemma:Concave:E13a}
		\begin{aligned}
			&	\mathfrak{A}(\omega_1 + \omega_2 + \omega_3) \left( \phi(\omega_1 + \omega_2 + \omega_3) - \phi(\omega_1) - \phi(\omega_2)  - \phi(\omega_3)\right) \\
			&	- \mathfrak{A}(\omega_1 - \omega_2 - \omega_3) \left( \phi(\omega_1) - \phi(\omega_1 - \omega_2 - \omega_3) - \phi(\omega_2)  - \phi(\omega_3)\right) \\
			=\ & [\mathfrak{A}(\omega_1 + \omega_2 + \omega_3) - \mathfrak{A}(\omega_1 - \omega_2 - \omega_3)] \\
			&\times \left( \int_0^{\omega_1- \omega_2 - \omega_3} \int_0^{\omega_2+\omega_3} \mathrm{d}s\, \mathrm{d}s_0\, \phi''(s + s_0) +  \int_0^{\omega_2} \int_0^{\omega_3} \mathrm{d}s\, \mathrm{d}s_0\, \phi''(s + s_0) \right) \\
			& + \  \mathfrak{A}(\omega_1 + \omega_2 + \omega_3) \left(  \int_{\omega_1- \omega_2 - \omega_3}^{\omega_1} \int_0^{\omega_2+\omega_3}\mathrm{d}s\, \mathrm{d}s_0\, \phi''(s + s_0) \right).
		\end{aligned}
	\end{equation}
	
	Hence,
	\begin{equation}\label{Lemma:Concave:E13}
		\begin{aligned}
			& \int_{\mathbb{R}_+} \mathrm{d}\omega\, C_{31}[f]\, \phi(\omega)\, \mathfrak{A}(\omega) \\
			=\, & 3c_{31} \iiint_{\omega_1 > \omega_2 + \omega_3} \mathrm{d}\omega_1\, \mathrm{d}\omega_2\, \mathrm{d}\omega_3\, \mathfrak{A}(\omega_1)\, \mathfrak{A}(\omega_2)\, \mathfrak{A}(\omega_3)\, f(\omega_1) f(\omega_2) f(\omega_3) \\
			& \quad \times \Big\{ \left[ \mathfrak{A}(\omega_1 + \omega_2 + \omega_3) - \mathfrak{A}(\omega_1 - \omega_2 - \omega_3) \right] \\
			& \quad \times \left( \int_0^{\omega_1 - \omega_2 - \omega_3} \int_0^{\omega_2 + \omega_3} \mathrm{d}s\, \mathrm{d}s_0\, \phi''(s + s_0) + \int_0^{\omega_2} \int_0^{\omega_3} \mathrm{d}s\, \mathrm{d}s_0\, \phi''(s + s_0) \right) \\
			& \quad +\ \mathfrak{A}(\omega_1 + \omega_2 + \omega_3) \int_{\omega_1 - \omega_2 - \omega_3}^{\omega_1} \int_0^{\omega_2 + \omega_3} \mathrm{d}s\, \mathrm{d}s_0\, \phi''(s + s_0) \Big\} \\
			& + c_{31} \iiint_{\omega_1 = \omega_2 + \omega_3} \mathrm{d}\omega_1\, \mathrm{d}\omega_2\, \mathrm{d}\omega_3\, \mathfrak{A}(\omega_1)\, \mathfrak{A}(\omega_2)\, \mathfrak{A}(\omega_3)\, f(\omega_1) f(\omega_2) f(\omega_3)\, \mathfrak{A}(2\omega_1) \\
			& \quad \times \left( \int_0^{\omega_1} \int_0^{\omega_2 + \omega_3} \mathrm{d}s\, \mathrm{d}s_0\, \phi''(s + s_0) + \int_0^{\omega_2} \int_0^{\omega_3} \mathrm{d}s\, \mathrm{d}s_0\, \phi''(s + s_0) \right)
			\\
			&  + c_{31} \iiint_{\mathbb{R}_+^3 \setminus \left( \{\omega_1 > \omega_2 + \omega_3\} \cup \{\omega_2 > \omega_1 + \omega_3\} \cup \{\omega_3 > \omega_1 + \omega_2\} \right)} \mathrm{d}\omega_1\, \mathrm{d}\omega_2\, \mathrm{d}\omega_3\, \mathfrak{A}(\omega_1 + \omega_2 + \omega_3) \\
			&\quad  \times \mathfrak{A}(\omega_1)\, \mathfrak{A}(\omega_2)\, \mathfrak{A}(\omega_3)\, f(\omega_1) f(\omega_2) f(\omega_3)\, \\
		&\quad  \times 	\left(  \int_0^{\omega_1} \int_0^{\omega_2+\omega_3}\mathrm{d}s\, \mathrm{d}s_0\, \phi''(s + s_0) +  \int_0^{\omega_2} \int_0^{\omega_3} \mathrm{d}s\, \mathrm{d}s_0\, \phi''(s + s_0) \right).
		\end{aligned}
	\end{equation}
Using the fact that
\[
\phi''(\omega) = \frac{\alpha(\alpha - 1)}{\omega^{2 - \alpha}} \text{ for } \omega<1,
 \text{ and } \phi''(\omega) = 0 \text{ for } \omega>1\]
and the following inequalities:
\begin{align*}
&	\int_0^{\omega_1 - \omega_2 - \omega_3} \int_0^{\omega_2 + \omega_3} \mathrm{d}s\, \mathrm{d}s_0\, \phi''(s + s_0) + \int_0^{\omega_2} \int_0^{\omega_3} \mathrm{d}s\, \mathrm{d}s_0\, \phi''(s + s_0)
\\
	\quad \le &  -\frac{(\omega_2 + \omega_3)(\omega_1 - \omega_2 - \omega_3)\alpha(1 - \alpha)}{\omega_1^{2 - \alpha}}\chi_{(0,1)}(\omega_1)  - \frac{\omega_2 \omega_3 \alpha(1 - \alpha)}{(\omega_2 + \omega_3)^{2 - \alpha}}\chi_{(0,1)}(\omega_2+\omega_3), \label{ineq:phi1}
\end{align*}
and
\begin{equation*} \label{ineq:phi2}
	\int_{\omega_1 - \omega_2 - \omega_3}^{\omega_1} \int_0^{\omega_2 + \omega_3} \mathrm{d}s\, \mathrm{d}s_0\, \phi''(s + s_0)
	\le -\frac{(\omega_2 + \omega_3)\omega_1 \alpha(1 - \alpha)}{(\omega_1 + \omega_2 + \omega_3)^{2 - \alpha}}\chi_{(0,1)}(\omega_1+\omega_2+\omega_3),
\end{equation*}
we bound
	\begin{equation}\label{Lemma:Concave:E14}
		\begin{aligned}
			& \int_{\mathbb{R}_+} \mathrm{d}\omega\, C_{31}[f]\, \phi(\omega)\, \mathfrak{A}(\omega) \\
			\le\ & -3c_{31}  \iiint_{\omega_1 > \omega_2 + \omega_3} \mathrm{d}\omega_1\, \mathrm{d}\omega_2\, \mathrm{d}\omega_3\, \mathfrak{A}_1\, \mathfrak{A}_2\, \mathfrak{A}_3\, f_1 f_2 f_3 \\
			& \quad \times \left\{ \left[ \mathfrak{A}(\omega_1 + \omega_2 + \omega_3) - \mathfrak{A}(\omega_1 - \omega_2 - \omega_3) \right] 
			\frac{(\omega_2+\omega_3)(\omega_1 - \omega_2 - \omega_3)\alpha(1-\alpha)}{\omega_1^{2-\alpha}} \chi_{(0,1)}(\omega_1)\right. \\
			& \qquad + \left[ \mathfrak{A}(\omega_1 + \omega_2 + \omega_3) - \mathfrak{A}(\omega_1 - \omega_2 - \omega_3) \right] 
			\frac{\omega_2\omega_3\alpha(1-\alpha)}{(\omega_2+\omega_3)^{2-\alpha}}\chi_{(0,1)}(\omega_2+\omega_3) \\
			& \qquad \left. +\ \mathfrak{A}(\omega_1 + \omega_2 + \omega_3) 
			\frac{(\omega_2+\omega_3)\omega_1 \alpha(1-\alpha)}{(\omega_1+\omega_2 +\omega_3)^{2-\alpha}}\chi_{(0,1)}(\omega_1+\omega_2+\omega_3) \right\} \\
			& - c_{31} \iiint_{\omega_1 = \omega_2 + \omega_3} \mathrm{d}\omega_1\, \mathrm{d}\omega_2\, \mathrm{d}\omega_3\, \mathfrak{A}_1\, \mathfrak{A}_2\, \mathfrak{A}_3\, f_1 f_2 f_3\, \mathfrak{A}(2\omega_1)\, 
			\frac{\omega_1^2 \alpha(1-\alpha)}{(2\omega_1)^{2-\alpha}}\chi_{(0,1)}(2\omega_1).
		\end{aligned}
	\end{equation}
	
	Combining \eqref{Lemma:Concave:E1}, \eqref{Lemma:Concave:E10}, \eqref{Lemma:Concave:E5} and \eqref{Lemma:Concave:E14}, we obtain
	\begin{equation}\label{Lemma:Concave:E15a}
		\begin{aligned}
			& \int_{\mathbb{R}^3} \mathrm{d}k\, f(T,\omega)\, \phi(\omega)
			- \int_{\mathbb{R}^3} \mathrm{d}k\, f(0,\omega)\, \phi(\omega) \\
			\le\ & -2c_{12} \int_0^T \mathrm{d}t \iint_{\omega_1 > \omega_2} \mathrm{d}\omega_1\, \mathrm{d}\omega_2\, 
			\mathfrak{A}_1\, \mathfrak{A}_2\, f_1 f_2 \\
			& \times \left\{ 
			[\mathfrak{A}(\omega_1 + \omega_2) - \mathfrak{A}(\omega_1 - \omega_2)] 
			\frac{\omega_2 (\omega_1 - \omega_2)\, \alpha(1-\alpha)}{\omega_1^{2-\alpha}} \chi_{(0,1)}(\omega_1)\right.\\
			&\left.\quad
			+ \mathfrak{A}(\omega_1 + \omega_2) 
			\frac{\omega_2 \omega_1\, \alpha(1-\alpha)}{(\omega_1 + \omega_2)^{2-\alpha}} 
			\chi_{(0,1)}(\omega_2+\omega_1)\right\} \\
			& - c_{12} \int_0^T \mathrm{d}t \iint_{\omega_1 = \omega_2} \mathrm{d}\omega_1\, \mathrm{d}\omega_2\,
			\mathfrak{A}_1\, \mathfrak{A}_2\, f_1 f_2\, \mathfrak{A}(2\omega_1)
			\frac{\omega_1^2\, \alpha(1-\alpha)}{(2\omega_1)^{2 - \alpha}}\chi_{(0,1)}(2\omega_1)\\
			& - c_{22} \int_0^T \mathrm{d}t \iiint_{\mathbb{R}_+^3} \mathrm{d}\omega_1\, \mathrm{d}\omega_2\, \mathrm{d}\omega\,
			f_1 f_2 f\, \frac{(\omega_{\text{Mid}} - \omega_{\text{Min}})^2\, \alpha(1 - \alpha)}{(2\omega_{\text{Mid}} - \omega_{\text{Min}})^{2 - \alpha}}\, |k_{\text{Min}}|\chi_{(0,1)}(2\omega_{\text{Mid}} - \omega_{\text{Min}}) \\
			& \quad \times\Theta(\omega_{\text{Max}})\,\Theta(\omega_{\text{Min}})\,\Theta(\omega_{\text{Mid}})\,\Theta(\omega_{\text{Max}} - \omega_{\text{Min}} + \omega_{\text{Mid}}) \left[ \frac{\max\{|\omega_1 - \omega_2|,\ |\omega_2 - \omega|\}}{2(\omega + \omega_1)} \right]^\mu \\
			& - 3c_{31} \int_0^T \mathrm{d}t \iiint_{\omega_1 > \omega_2 + \omega_3} \mathrm{d}\omega_1\, \mathrm{d}\omega_2\, \mathrm{d}\omega_3\,
			\mathfrak{A}_1\, \mathfrak{A}_2\, \mathfrak{A}_3\, f_1 f_2 f_3 \\
			& \quad \times \left\{ 
			[\mathfrak{A}(\omega_1 + \omega_2 + \omega_3) - \mathfrak{A}(\omega_1 - \omega_2 - \omega_3)] 
			\frac{(\omega_2+\omega_3)(\omega_1 - \omega_2 - \omega_3)\, \alpha(1-\alpha)}{\omega_1^{2-\alpha}} \chi_{(0,1)}(\omega_1)\right. \\
			& \qquad + [\mathfrak{A}(\omega_1 + \omega_2 + \omega_3) - \mathfrak{A}(\omega_1 - \omega_2 - \omega_3)] 
			\frac{\omega_2 \omega_3\, \alpha(1-\alpha)}{(\omega_2+\omega_3)^{2-\alpha}}\chi_{(0,1)}(\omega_2+\omega_3) \\
			& \qquad \left. + \mathfrak{A}(\omega_1 + \omega_2 + \omega_3)
			\frac{(\omega_2+\omega_3)\omega_1\, \alpha(1-\alpha)}{(\omega_1 + \omega_2 + \omega_3)^{2-\alpha}} 
		\chi_{(0,1)}(\omega_1+\omega_2+\omega_3)	\right\} \\
			& - c_{31} \int_0^T \mathrm{d}t \iiint_{\omega_1 = \omega_2 + \omega_3} \mathrm{d}\omega_1\, \mathrm{d}\omega_2\, \mathrm{d}\omega_3\,
			\mathfrak{A}_1\, \mathfrak{A}_2\, \mathfrak{A}_3\, f_1 f_2 f_3\, \mathfrak{A}(2\omega_1)\,
			\frac{\omega_1^2\, \alpha(1-\alpha)}{(2\omega_1)^{2-\alpha}}\chi_{(0,1)}(2\omega_1).
		\end{aligned}
	\end{equation}
	For $\alpha \in [1/2,1)$, $\omega \in [0,1]$, we have $1 + \omega \ge \frac{\omega^\alpha}{\alpha}$. Hence,
	\[
	\int_{\mathbb{R}^3} \mathrm{d}k\, f(0,\omega)\, \phi(\omega)
	\ \le\ \int_{\mathbb{R}^3} \mathrm{d}k\, f(0,\omega)\, [1 + \omega]
	\ \le\  \mathscr{M} + \mathscr{E},
	\]
	which, in combination with \eqref{Lemma:Concave:E15a}, yields \begin{equation}\label{Lemma:Concave:E15}
		\begin{aligned}
			\mathscr{M} + \mathscr{E} \ \ge\ & 2c_{12} \int_0^T \mathrm{d}t \iint_{\omega_1 > \omega_2} \mathrm{d}\omega_1\, \mathrm{d}\omega_2\, 
			\mathfrak{A}_1\, \mathfrak{A}_2\, f_1 f_2 \\
			& \times \left\{ 
			[\mathfrak{A}(\omega_1 + \omega_2) - \mathfrak{A}(\omega_1 - \omega_2)] 
			\frac{\omega_2 (\omega_1 - \omega_2)\alpha(1-\alpha)}{\omega_1^{2-\alpha}} \chi_{(0,1)}(\omega_1)
			\right.\\
			&\left.\quad+ \mathfrak{A}(\omega_1 + \omega_2) 
			\frac{\omega_2 \omega_1 \alpha(1-\alpha)}{(\omega_1 + \omega_2)^{2-\alpha}} \chi_{(0,1)}(\omega_1+\omega_2)
			\right\} \\
			& + c_{12} \int_0^T \mathrm{d}t \iint_{\omega_1 = \omega_2} \mathrm{d}\omega_1\, \mathrm{d}\omega_2\,
			\mathfrak{A}_1\, \mathfrak{A}_2\, f_1 f_2\, \mathfrak{A}(2\omega_1)
			\frac{\omega_1^2 \alpha(1-\alpha)}{(2\omega_1)^{2 - \alpha}}\chi_{(0,1)}(2\omega_1)\end{aligned}
		\end{equation} \begin{equation*}
				\begin{aligned}
			& + c_{22} \int_0^T \mathrm{d}t \iiint_{\mathbb{R}_+^3} \mathrm{d}\omega_1\, \mathrm{d}\omega_2\, \mathrm{d}\omega\,
			f_1 f_2 f\, \frac{(\omega_{\mathrm{Mid}} - \omega_{\text{Min}})^2\alpha (1 - \alpha)}{(2\omega_{\mathrm{Mid}} - \omega_{\text{Min}})^{2 - \alpha}}\, |k_{\mathrm{Min}}| \chi_{(0,1)}(2\omega_{\text{Mid}} - \omega_{\text{Min}})\\
			& \quad \times\Theta(\omega_{\mathrm{Max}})\,\Theta(\omega_{\mathrm{Min}})\,\Theta(\omega_{\mathrm{Mid}})\,\Theta(\omega_{\mathrm{Max}} - \omega_{\mathrm{Min}} + \omega_{\mathrm{Mid}})  \left[ \frac{\max\{|\omega_1 - \omega_2|,\ |\omega_2 - \omega|\}}{2(\omega + \omega_1)} \right]^\mu \\
			& + 3c_{31} \int_0^T \mathrm{d}t \iiint_{\omega_1 > \omega_2 + \omega_3} \mathrm{d}\omega_1\, \mathrm{d}\omega_2\, \mathrm{d}\omega_3\,
			\mathfrak{A}_1\, \mathfrak{A}_2\, \mathfrak{A}_3\, f_1 f_2 f_3 \\
			& \quad \times \left\{ 
			[\mathfrak{A}(\omega_1 + \omega_2 + \omega_3) - \mathfrak{A}(\omega_1 - \omega_2 - \omega_3)] 
			\frac{(\omega_2+\omega_3)(\omega_1 - \omega_2 - \omega_3)\alpha(1-\alpha)}{\omega_1^{2-\alpha}} \chi_{(0,1)}(\omega_1)\right. \\
			& \qquad + [\mathfrak{A}(\omega_1 + \omega_2 + \omega_3) - \mathfrak{A}(\omega_1 - \omega_2 - \omega_3)] 
			\frac{\omega_2 \omega_3\alpha (1-\alpha)}{(\omega_2+\omega_3)^{2-\alpha}} \chi_{(0,1)}(\omega_2+\omega_3)\\
			& \qquad \left. + \mathfrak{A}(\omega_1 + \omega_2 + \omega_3)
			\frac{(\omega_2+\omega_3)\omega_1\alpha (1-\alpha)}{(\omega_1 + \omega_2 + \omega_3)^{2-\alpha}} \chi_{(0,1)}(\omega_1+\omega_2+\omega_3)
			\right\} \\
			&+ c_{31} \int_0^T \mathrm{d}t \iiint_{\omega_1 = \omega_2 + \omega_3} \mathrm{d}\omega_1\, \mathrm{d}\omega_2\, \mathrm{d}\omega_3\,
			\mathfrak{A}_1\, \mathfrak{A}_2\, \mathfrak{A}_3\, f_1 f_2 f_3\, \mathfrak{A}(2\omega_1)\,
			\frac{\omega_1^2 \alpha(1-\alpha)}{(2\omega_1)^{2-\alpha}}\chi_{(0,1)}(2\omega_1).
		\end{aligned}
	\end{equation*}
	Inequality \eqref{Lemma:Concave:1} follows from \eqref{Lemma:Concave:E15}.

\end{proof}

\begin{lemma}
	\label{Lemma:MassEner} 	We assume Assumption A and Assumption B. 

		Let $f$ be a radial solution in the sense of \eqref{4wavemild} of the wave kinetic equation \eqref{4wave}. Then, for all $t > 0$, the following estimates hold:
		\begin{equation} \label{Lemma:MassEner:1}
			\int_{\mathbb{R}^3} \mathrm{d}k\, f(t,k) \le \int_{\mathbb{R}^3} \mathrm{d}k\, f(0,k) = \mathscr{M}.
		\end{equation}
		 
\end{lemma}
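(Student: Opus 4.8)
The plan is to reduce everything to the monotonicity of a single scalar quantity and then read off a sign from the weak formulations. By the change of variables in Assumption~(A2), $\int_{\mathbb{R}^3}f(t,k)\,\mathrm{d}k = 2\pi^2\int_{[0,\infty)}\mathfrak{A}(\omega)f(t,\omega)\,\mathrm{d}\omega$, so it suffices to show that $t\mapsto\int_{[0,\infty)}\mathfrak{A}(\omega)f(t,\omega)\,\mathrm{d}\omega$ is non-increasing. Since the constant test function $\phi\equiv1$ is not admissible in \eqref{4wavemild}, I would first fix $\eta\in C^2([0,\infty))$ with $\eta\equiv1$ on $[0,1]$, $\eta\equiv0$ on $[2,\infty)$, $0\le\eta\le1$, and test \eqref{4wavemild} against $\phi_R(\omega):=\eta(\omega/R)$ (which is $C^2$, compactly supported, and satisfies $\phi_R'(0)=0$), obtaining
\[
\int\mathfrak{A}f(t)\,\phi_R-\int\mathfrak{A}f(0)\,\phi_R=\int_0^t\mathrm{d}s\int\big(C_{12}[f]+C_{22}[f]+C_{31}[f]\big)\,\phi_R\,\mathfrak{A}.
\]

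The heart of the matter is the sign of the collision contribution for $\phi\equiv1$. Using $\mathfrak{A}(\omega)=\Theta(\omega)\,|k|(\omega)$, the weak forms in Lemmas~\ref{lemma:C12} and~\ref{lemma:C31} carry kernels $\mathfrak{A}\,\mathfrak{A}_1\mathfrak{A}_2$ and $\mathfrak{A}\,\mathfrak{A}_1\mathfrak{A}_2\mathfrak{A}_3$, and the symmetrizations performed in the proof of Lemma~\ref{lemma:Concave} (formulas \eqref{Lemma:Concave:E8a} and \eqref{Lemma:Concave:E12a}) hold for any test function; inserting $\phi\equiv1$ — so that $\phi(a+b)-\phi(a)-\phi(b)=-1$ and $\phi(a+b+c)-\phi(a)-\phi(b)-\phi(c)=-2$ play the role of the iterated integrals of $\phi''$ — I expect
\[
\int C_{12}[f]\,\mathfrak{A}= -2c_{12}\iint_{\omega_1>\omega_2}\mathfrak{A}_1\mathfrak{A}_2 f_1 f_2\big[\mathfrak{A}(\omega_1+\omega_2)-\mathfrak{A}(\omega_1-\omega_2)\big]-c_{12}\iint_{\omega_1=\omega_2}\mathfrak{A}_1\mathfrak{A}_2 f_1 f_2\,\mathfrak{A}(2\omega_1)\le0,
\]
since $\mathfrak{A}$ is non-decreasing by~(A3); the analogous computation for $C_{31}$ should give a sum of a term proportional to $-\iiint_{\omega_1>\omega_2+\omega_3}\mathfrak{A}_1\mathfrak{A}_2\mathfrak{A}_3 f_1 f_2 f_3[\mathfrak{A}(\omega_1+\omega_2+\omega_3)-\mathfrak{A}(\omega_1-\omega_2-\omega_3)]$, a triangle-region term proportional to $-\iiint_{\mathrm{triangle}}\mathfrak{A}(\omega_1+\omega_2+\omega_3)\mathfrak{A}_1\mathfrak{A}_2\mathfrak{A}_3 f_1 f_2 f_3$, and a diagonal term, each $\le0$ by the same monotonicity; finally the $C_{22}$ combination $-\phi(\omega)-\phi(\omega_1)+\phi(\omega_2)+\phi(\omega+\omega_1-\omega_2)$ vanishes identically for constant $\phi$, so $\int C_{22}[f]\,\mathfrak{A}=0$. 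Thus the collision contribution for $\phi\equiv1$ is $\le0$, reflecting that $C_{22}$ is mass-conserving while $C_{12}$ and $C_{31}$ drain mass into the condensate.

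To conclude I would let $R\to\infty$. The left-hand side tends to $\int\mathfrak{A}f(t)-\int\mathfrak{A}f(0)$ by monotone convergence. For the right-hand side, writing $\phi_R=1-\psi_R$ with $\psi_R\ge0$ supported in $[R,\infty)$ splits each $\int C_{ij}[f]\phi_R\mathfrak{A}$ into the (already computed, $\le0$) $\phi\equiv1$ part and a ``tail'' term carrying a factor $\psi_R(\omega_*)$ with $\omega_*\ge R$. On the tail one can check that at least one factor $f$ is evaluated at a frequency $\gtrsim R$ — for $C_{22}$ using $\omega+\omega_1=\omega_2+\omega_3$, and for $C_{31}$ using that $\omega_1+\omega_2+\omega_3>R$ forces $\max_i\omega_i>R/3$ — while the remaining factors and weights are dominated, via $\mathfrak{A}(\omega)\le C_\mathfrak{A}\omega^\theta$, $\Theta(\omega)\le C_\Theta\omega^\varrho$ with $\theta,\varrho\le1$ and $\mathcal{W}_{22}\le1$, by affine functions of the frequencies; hence the tail is $\lesssim\big(\int_{\{\omega\gtrsim R\}}(1+\omega)\mathfrak{A}f(s)\big)\big(\sup_{[0,t]}\int(1+\omega)\mathfrak{A}f\big)^2\to0$ as $R\to\infty$, and dominated convergence in $s\in[0,t]$ closes the argument, yielding $\int\mathfrak{A}f(t)\le\int\mathfrak{A}f(0)$, i.e.\ \eqref{Lemma:MassEner:1}.

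\textbf{Main obstacle.} The delicate step is precisely this tail estimate: verifying that every deviation of $\phi_R$ from $1$ in the cubic collision integrals forces a genuine high-frequency factor of $f$, and that the remaining polynomially growing weights — especially $\Theta(\omega+\omega_1-\omega_2)$ together with $\mathcal{W}_{22}$ in the $C_{22}$ term, and $\mathfrak{A}(\omega_1+\omega_2+\omega_3)$ in the $C_{31}$ term — are absorbed by the finite mass and finite (initial) energy of $f$. This needs the a priori boundedness of the weighted moment $\int(1+\omega)\mathfrak{A}(\omega)f(s,\omega)\,\mathrm{d}\omega$ on $[0,t]$; if that is not directly available, I would instead run the whole argument on the regularized approximating sequence of the global existence proof (Section~\ref{Sec:Global}), where all moments are finite, and pass to the limit using weak lower semicontinuity of $f\mapsto\int\mathfrak{A}f(t)$.
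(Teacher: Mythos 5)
Your proof takes essentially the same route as the paper: insert the constant test function $\phi\equiv 1$ into the symmetrized weak formulations \eqref{Lemma:Concave:E8a}, \eqref{Lemma:Concave:E2}, and \eqref{Lemma:Concave:E12a}, use the monotonicity of $\mathfrak{A}$ from Assumption (A3) to get $\int C_{12}[f]\,\mathfrak{A}\le 0$ and $\int C_{31}[f]\,\mathfrak{A}\le 0$ together with $\int C_{22}[f]\,\mathfrak{A}=0$, and conclude that the mass is non-increasing. The paper carries out this computation directly with $\phi=1$ (without the cutoff $\phi_R$, the tail estimate, or the fallback to the regularized approximations that you discuss), so your truncation argument is added rigor on top of the same core calculation rather than a different method.
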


\begin{proof}
Using \eqref{Lemma:Concave:E8a} with $\phi = 1$, we have:
\begin{equation}\label{Lemma:MassEner:E1}
	\begin{aligned}
		& \int_{\mathbb{R}_+} \mathrm{d}\omega\, C_{12}[f]\, \phi(\omega)\, \mathfrak{A}(\omega) \\
		=\, & -2 c_{12} \iint_{\omega_1 > \omega_2} \mathrm{d}\omega_1\, \mathrm{d}\omega_2\, \mathfrak{A}(\omega_1)\, \mathfrak{A}(\omega_2)\, f(\omega_1) f(\omega_2)  \left[ \mathfrak{A}(\omega_1 + \omega_2) - \mathfrak{A}(\omega_1 - \omega_2)\right] \\
		& - c_{12} \iint_{\omega_1 = \omega_2} \mathrm{d}\omega_1\, \mathrm{d}\omega_2\, \mathfrak{A}(\omega_1)\, \mathfrak{A}(\omega_2)\, f(\omega_1) f(\omega_2)\, \mathfrak{A}(2\omega_1) \ \le \ 0.
	\end{aligned}
\end{equation}

Using \eqref{Lemma:Concave:E2} with $\phi = 1$, we have:
\begin{equation}\label{Lemma:MassEner:E2}
	\int_{\mathbb{R}_+} \mathrm{d}\omega\, C_{22}[f]\, \mathfrak{A}(\omega) = 0.
\end{equation}

Using \eqref{Lemma:Concave:E12a} with $\phi = 1$, we obtain:
\begin{equation}\label{Lemma:MassEner:E3}
	\begin{aligned}
		& \int_{\mathbb{R}_+} \mathrm{d}\omega\, C_{31}[f]\, \mathfrak{A}(\omega) \\
		=\, & -6c_{31} \iiint_{\omega_1 > \omega_2 + \omega_3} \mathrm{d}\omega_1\, \mathrm{d}\omega_2\, \mathrm{d}\omega_3\, \mathfrak{A}(\omega_1)\, \mathfrak{A}(\omega_2)\, \mathfrak{A}(\omega_3)\, f(\omega_1) f(\omega_2) f(\omega_3) \\
		& \quad \times  \left[ \mathfrak{A}(\omega_1 + \omega_2 + \omega_3) - \mathfrak{A}(\omega_1 - \omega_2 - \omega_3) \right] \\
		& \quad - 2 c_{31} \iiint_{\omega_1 = \omega_2 + \omega_3} \mathrm{d}\omega_1\, \mathrm{d}\omega_2\,\mathrm{d}\omega_3\, \mathfrak{A}(\omega_1)\, \mathfrak{A}(\omega_2)\, \mathfrak{A}(\omega_3)\, f(\omega_1) f(\omega_2) f(\omega_3)\, \mathfrak{A}(3\omega_1) \\
		& \quad - 2 c_{31} \iiint_{\mathbb{R}_+^3 \setminus \left( \{\omega_1 > \omega_2 + \omega_3\} \cup \{\omega_2 > \omega_1 + \omega_3\} \cup \{\omega_3 > \omega_1 + \omega_2\} \right)} \mathrm{d}\omega_1\, \mathrm{d}\omega_2\, \mathrm{d}\omega_3 \\
		& \quad \times \mathfrak{A}(\omega_1 + \omega_2 + \omega_3)\, \mathfrak{A}(\omega_1)\, \mathfrak{A}(\omega_2)\, \mathfrak{A}(\omega_3)\, f(\omega_1) f(\omega_2) f(\omega_3) \ \le \ 0.
	\end{aligned}
\end{equation}

Combining \eqref{Lemma:MassEner:E1}--\eqref{Lemma:MassEner:E3}, we obtain
\begin{equation*}
	\partial_t \int_{\mathbb{R}^3} \mathrm{d}k\, f(t,k) 
	= \int_{\mathbb{R}_+} \mathrm{d}\omega\, C_{12}[f]\, \mathfrak{A}(\omega) 
	+ \int_{\mathbb{R}_+} \mathrm{d}\omega\, C_{22}[f]\, \mathfrak{A}(\omega) 
	+ \int_{\mathbb{R}_+} \mathrm{d}\omega\, C_{31}[f]\, \mathfrak{A}(\omega) 
	\le 0,
\end{equation*}
which yields
\begin{equation*}
	\int_{\mathbb{R}^3} \mathrm{d}k\, f(t,k) \le \int_{\mathbb{R}^3} \mathrm{d}k\, f(0,k) = \mathscr{M}.
\end{equation*}

\end{proof}
\section{Global existence result}\label{Sec:Global}
\begin{proposition}
	\label{Lemma:Global} Let \( f_0(k) = f_0(|k|) \geq 0 \) be a radial initial condition satisfying
	\begin{equation}
		\label{Propo:Global:1}
		\int_{\mathbb{R}^3} \mathrm{d}k  f_0(k)\, = \mathscr{M}, \ \ \ 
		\int_{\mathbb{R}^3} \mathrm{d}k  f_0(k)\,\omega(k)\, = \mathscr{E}.
	\end{equation}

	Then, there exists a global in time radial mild solution \( f(t,k) = f(t,|k|) \) to \eqref{4wave} in the sense of \eqref{4wavemild}, such that
	\begin{equation}
		\label{Propo:Global:2}
		\int_{\mathbb{R}^3} \mathrm{d}k  f(t,k)\, \leq \mathscr{M},
	\end{equation}
	for all \( t \geq 0 \).
\end{proposition}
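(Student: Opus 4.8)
The plan is a regularization--compactness scheme, carried out in the measure-valued setting of the weak formulation \eqref{4wavemild}. First I would fix cutoff parameters $0<\epsilon<R$ and replace $\mathcal{W}_{12},\mathcal{W}_{22},\mathcal{W}_{31}$ by bounded, Lipschitz kernels supported in $\{\epsilon\le |k|,|k_i|\le R\}$; in particular the singular factor $\big|\,|k|-|k_1|-|k_2|-|k_3|\,\big|^{-1}$ in $\mathcal{W}_{31}$ gets truncated. Simultaneously I would mollify $f_0$ into a bounded, compactly supported $f_0^{(n)}\ge 0$ with $\int_{\mathbb{R}^3} f_0^{(n)}\le\mathscr{M}$ and $\int_{\mathbb{R}^3} f_0^{(n)}\,\omega\le 2\mathscr{E}$. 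For the truncated operator the right-hand side of \eqref{4wave} is a locally Lipschitz vector field on $L^1([\epsilon,R])$, so the Cauchy--Lipschitz theorem in a Banach space gives a unique local solution $f^{(n)}$; decomposing $\mathscr{Q}^{(n)}=\mathscr{G}^{(n)}-f\,\mathscr{L}^{(n)}$ with $\mathscr{G}^{(n)}\ge 0$ and $\mathscr{L}^{(n)}$ bounded keeps $f^{(n)}\ge 0$ via a Duhamel--Gronwall argument, and the mass bound of Lemma~\ref{Lemma:MassEner} --- whose proof uses only the weak-form cancellations of Lemmas~\ref{lemma:C12}--\ref{lemma:C31}, which are preserved under truncation --- rules out blow-up, producing a global-in-time $f^{(n)}$.

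\textbf{Uniform bounds and extraction.} Set $g^{(n)}(t,\mathrm{d}\omega):=f^{(n)}(t,\omega)\,\mathfrak{A}(\omega)\,\mathrm{d}\omega$. The mass bound gives $\|g^{(n)}(t,\cdot)\|_{\mathscr{R}_+}\le\mathscr{M}/(2\pi^2)$ uniformly in $t$ and $n$, and since the truncated operator retains the symmetric structure of Remark~\ref{RemarkMainTheo:a} it conserves energy, so $\int\omega\,g^{(n)}(t,\mathrm{d}\omega)\le C\mathscr{E}$ uniformly, which supplies tightness at infinity. Equicontinuity in $t$ follows from \eqref{4wavemild}: using Lemmas~\ref{lemma:C12}--\ref{lemma:C31}, for any admissible test function $\phi$ the map $t\mapsto\int g^{(n)}(t,\mathrm{d}\omega)\,\phi(\omega)$ is Lipschitz with a constant depending only on $\|\phi\|_{C^2}$, $\mathscr{M}$ and $\mathscr{E}$, because the second-order vanishing of $\phi(\omega)-\phi(\omega_1)-\phi(\omega_2)$ and its analogues at the collisional diagonals absorbs the growth of the geometric kernels (and in the $C_{31}$ weak form the singular factor has already been cancelled by the angular integration, cf. \eqref{Lemma:C31:E2}). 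Since norm-bounded subsets of $\mathscr{R}_+([0,\infty))$ are metrizable in the weak-$*$ topology, an Arzela--Ascoli argument extracts a subsequence with $g^{(n)}\to g$ uniformly on compact time intervals in that topology; one then recovers $f$ from $g$ and $\mathfrak{A}$.

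\textbf{Passing to the limit, and the main obstacle.} The collision terms are quadratic and cubic in the unknown, whereas only weak-$*$ convergence is available --- this is the crux. The point is that, once rewritten through Lemmas~\ref{lemma:C12}--\ref{lemma:C31}, every collision term takes the form $\iint H(\omega_1,\omega_2)\,(g^{(n)}\otimes g^{(n)})(\mathrm{d}\omega_1\,\mathrm{d}\omega_2)$ or a triple-product analogue, where $H$ is \emph{bounded and continuous} --- boundedness again coming from the $\phi$-differences killing the kernel singularities against the weights $\Theta$, $|k|$ and $\min\{\cdots\}$ --- and, by the support of $\phi$ together with tightness, effectively compactly supported. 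For positive measures of uniformly bounded total mass, $g^{(n)}\rightharpoonup g$ forces $g^{(n)}\otimes g^{(n)}\rightharpoonup g\otimes g$ (verify it first on product kernels $H=H_1\otimes H_2$, then on general continuous $H$ by a Stone--Weierstrass approximation, with the tails controlled by tightness), and the same for triple products; uniform continuity in time then upgrades this to convergence of the time integrals in \eqref{4wavemild}. Finally I would remove $\epsilon$ and $R$ by a diagonal argument, using pointwise convergence of the truncated kernels and domination by bounds depending on $\mathscr{M}$, $\mathscr{E}$ and Assumption~A, so that the limit $f$ satisfies \eqref{4wavemild}; weak-$*$ lower semicontinuity of the mass gives \eqref{Propo:Global:2}. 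I expect the genuinely delicate steps to be (a) commuting the weak-$*$ limit with $\epsilon\to 0$ near $\omega=0$, where $\mathfrak{A}$ degenerates, and (b) checking that the $C_{31}$ contribution stays integrable as the truncation is lifted --- both to be controlled by the quantitative degeneracy estimates underlying Lemma~\ref{lemma:Concave}.
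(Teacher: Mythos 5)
Your proposal is correct in outline and follows the same compactness skeleton as the paper: approximate the equation, obtain uniform mass/energy bounds and Lipschitz-in-time equicontinuity of the weak form, extract a limit by Arzel\`a--Ascoli in the weak-$*$ topology of $\mathscr{R}_+([0,\infty))$, and pass to the limit in the multilinear terms. The execution differs in one substantive way: the paper never removes a neighbourhood of the origin. It rewrites the equation for the measure $G=f\,|k|\,\Theta$ as in \eqref{Lemma:Global:E6} and proves, \emph{before} any truncation, that each kernel paired with the test-function differences (for instance $\mathfrak{K}_{22}\left[-\varphi(\omega)-\varphi(\omega_1)+\varphi(\omega_2)+\varphi(\omega+\omega_1-\omega_2)\right]$) is continuous and uniformly bounded up to $\omega=0$, exploiting the second-order vanishing of the differences together with $\varphi'(0)=0$ and Assumption A; the only cutoff is at high frequencies ($\omega\le n$), and local existence comes from a fixed-point argument run directly on the positive, mass-bounded cone of $\mathscr{R}_+([0,\infty))$, where the truncated operator is Lipschitz, followed by iteration in time using the a priori mass bound. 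Your route --- excising $\{|k|<\epsilon\}$, mollifying the data, solving in $L^1([\epsilon,R])$ by Cauchy--Lipschitz --- is viable, but it introduces an extra limit $\epsilon\to0$ whose uniformity near the degenerate origin is precisely the estimate the paper front-loads; so your ``delicate step (a)'' is not a loose end to be controlled by the degeneracy estimates of Lemma~\ref{lemma:Concave} (those serve the condensation analysis, not existence) but the main technical content, and it should be carried out as the direct boundedness/continuity analysis of the $G$-form kernels against test functions with $\phi'(0)=0$, as in Step 1 of the paper's proof. Your remaining points --- the cancellation of the $\mathcal{W}_{31}$ singularity by the angular integration as in \eqref{Lemma:C31:E2}, convergence of tensor products of weakly convergent, tight, mass-bounded positive measures, and recovering \eqref{Propo:Global:2} by lower semicontinuity or by repeating the argument of Lemma~\ref{Lemma:MassEner} --- are correct and match what the paper uses, so with step (a) executed in the paper's manner your scheme closes.
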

\begin{proof} The proof of the Proposition follows the strategy proposed in \cite{staffilani2024energy}. We recall that \( G \) is defined in \eqref{Theorem1:4}.
Using  \eqref{Lemma:C22:1}, \eqref{Lemma:Concave:E7a}, and \eqref{Lemma:Concave:E12aa}, we deduce that
\begin{equation}\label{Lemma:Global:E1}
	\begin{aligned}
		\int_{\mathbb{R}_+} \mathrm{d}\omega  \partial_t G(t,\omega)\, \varphi(\omega)\, 
		=\ & \mathfrak{R}_1[\varphi] + \mathfrak{R}_2[\varphi] + \mathfrak{R}_3[\varphi],
	\end{aligned}
\end{equation}
in which $\varphi$ is a  test function,
\begin{equation}\label{Lemma:Global:E1a}
	\begin{aligned}
		\mathfrak{R}_1 :=\ 
		& c_{12} \iint_{\mathbb{R}_+^2} \mathrm{d}\omega_1\, \mathrm{d}\omega_2\, \mathfrak{A}(\omega_1)\, \mathfrak{A}(\omega_2)\, f(\omega_1) f(\omega_2)\, \mathfrak{A}(\omega_1 + \omega_2)  \left[ \varphi(\omega_1 + \omega_2) - \varphi(\omega_1) - \varphi(\omega_2) \right] \\
		& - 2c_{12} \iint_{\omega_1 \ge \omega_2} \mathrm{d}\omega_1\, \mathrm{d}\omega_2\, \mathfrak{A}(\omega_1)\, \mathfrak{A}(\omega_1 - \omega_2)\, \mathfrak{A}(\omega_2)\, f(\omega_1) f(\omega_2)   \left[ \varphi(\omega_1) - \varphi(\omega_1 - \omega_2) - \varphi(\omega_2) \right],
	\end{aligned}
\end{equation}
\begin{equation}\label{Lemma:Global:E1b}
	\begin{aligned}
		\mathfrak{R}_2 :=\ 
		& c_{22} \iiint_{\mathbb{R}_+^3} \mathrm{d}\omega_1\, \mathrm{d}\omega_2\, \mathrm{d}\omega\, \delta(\omega + \omega_1 - \omega_2 - \omega_3)\, \Theta(\omega)\, \Theta(\omega_1)\, \Theta(\omega_2)\, \Theta(\omega+\omega_1-\omega_2) \\
		& \quad \times \left[ \frac{\max\{|\omega_1 - \omega_2|,\ |\omega_2 - \omega|\}}{2(\omega + \omega_1)} \right]^\mu \min\left\{ |k|(\omega), |k|(\omega_1), |k|(\omega_2), |k|(\omega+\omega_1-\omega_2) \right\} \\
		& \quad \times f_1 f_2 f \left[-\varphi(\omega) - \varphi(\omega_1) + \varphi(\omega_2) + \varphi(\omega+\omega_1-\omega_2)\right],
	\end{aligned}
\end{equation}
\begin{equation}\label{Lemma:Global:E1c}
	\begin{aligned}
		\mathfrak{R}_3 :=\ 
		& c_{31} \iiint_{\mathbb{R}_+^3} \mathrm{d}\omega_1\, \mathrm{d}\omega_2\, \mathrm{d}\omega_3\, \mathfrak{A}(\omega_1 + \omega_2 + \omega_3)\, \mathfrak{A}(\omega_1)\, \mathfrak{A}(\omega_2)\, \mathfrak{A}(\omega_3) \\
		& \quad \times f(\omega_1) f(\omega_2) f(\omega_3) \left[ \varphi(\omega_1 + \omega_2 + \omega_3) - \varphi(\omega_1) - \varphi(\omega_2) - \varphi(\omega_3) \right] \\
		& - 3c_{31} \iiint_{\omega_1 \ge \omega_2 + \omega_3} \mathrm{d}\omega_1\, \mathrm{d}\omega_2\, \mathrm{d}\omega_3\, \mathfrak{A}(\omega_1 - \omega_2 - \omega_3)\, \mathfrak{A}(\omega_1)\, \mathfrak{A}(\omega_2)\, \mathfrak{A}(\omega_3) \\
		& \quad \times f(\omega_1) f(\omega_2) f(\omega_3)\, \left[ \varphi(\omega_1) - \varphi(\omega_1 - \omega_2 - \omega_3) - \varphi(\omega_2) - \varphi(\omega_3) \right].
	\end{aligned}
\end{equation}

Those terms can be rewritten as
\begin{equation}\label{Lemma:Global:E2a}
	\begin{aligned}
		\mathfrak{R}_1 =\ 
		& c_{12} \iint_{\mathbb{R}_+^2} \mathrm{d}\omega_1\, \mathrm{d}\omega_2\, G_1 G_2\, \mathfrak{A}(\omega_1 + \omega_2) \left[ \varphi(\omega_1 + \omega_2) - \varphi(\omega_1) - \varphi(\omega_2) \right] \\
		& - 2c_{12} \iint_{\omega_1 \ge \omega_2} \mathrm{d}\omega_1\, \mathrm{d}\omega_2\, \mathfrak{A}(\omega_1 - \omega_2)\, G_1 G_2 \left[ \varphi(\omega_1) - \varphi(\omega_1 - \omega_2) - \varphi(\omega_2) \right],
	\end{aligned}
\end{equation}
\begin{equation}\label{Lemma:Global:E2b}
	\begin{aligned}
		\mathfrak{R}_2 =\ 
		& c_{22} \iiint_{\mathbb{R}_+^3} \mathrm{d}\omega_1\, \mathrm{d}\omega_2\, \mathrm{d}\omega\,   \Theta(\omega+\omega_1-\omega_2)\, \mathbf{1}_{\omega+\omega_1 \ge \omega_2} \\
		& \quad \times \left[ \frac{\max\{|\omega_1 - \omega_2|,\ |\omega_2 - \omega|\}}{2(\omega + \omega_1)} \right]^\mu G_1 G_2 G   \frac{\min\left\{ |k|, |k_1|, |k_2|, |k|(\omega + \omega_1 - \omega_2)\right\} }{|k|\,|k_1|\,|k_2|}\,\\
		& \quad \times \left[-\varphi(\omega) - \varphi(\omega_1) + \varphi(\omega_2) + \varphi(\omega+\omega_1-\omega_2)\right],
	\end{aligned}
\end{equation}
\begin{equation}\label{Lemma:Global:E2c}
	\begin{aligned}
		\mathfrak{R}_3 =\ 
		& c_{31} \iiint_{\mathbb{R}_+^3} \mathrm{d}\omega_1\, \mathrm{d}\omega_2\, \mathrm{d}\omega_3\, \mathfrak{A}(\omega_1 + \omega_2 + \omega_3)\, G_1 G_2 G_3   \left[ \varphi(\omega_1 + \omega_2 + \omega_3) - \varphi(\omega_1) - \varphi(\omega_2) - \varphi(\omega_3) \right] \\
		& - 3c_{31} \iiint_{\omega_1 \ge \omega_2 + \omega_3} \mathrm{d}\omega_1\, \mathrm{d}\omega_2\, \mathrm{d}\omega_3\, \mathfrak{A}(\omega_1 - \omega_2 - \omega_3)\,  \\
		& \quad \times G_1 G_2 G_3 \left[ \varphi(\omega_1) - \varphi(\omega_1 - \omega_2 - \omega_3) - \varphi(\omega_2) - \varphi(\omega_3) \right].
	\end{aligned}
\end{equation}

We define the  kernels as
\begin{equation}\label{Lemma:Global:E3}
	\begin{aligned}
		\mathfrak{K}^+_{12} &= \mathfrak{A}(\omega_1 + \omega_2), \ \ \ \ \ \ \ \ 
		\mathfrak{K}^-_{12} = \mathfrak{A}(\omega_1 - \omega_2),
	\end{aligned}
\end{equation}

\begin{equation}\label{Lemma:GlobalExist:E4}
	\begin{aligned}
		\mathfrak{K}_{22} =\ & \Theta(\omega+\omega_1-\omega_2)\left[ \frac{\max\{|\omega_1 - \omega_2|,\ |\omega_2 - \omega|\}}{2(\omega + \omega_1)} \right]^\mu \\
		& \times \frac{\min\left\{ |k|, |k_1|, |k_2|, |k|(\omega + \omega_1 - \omega_2)\right\} }{|k|\,|k_1|\,|k_2|}\, \mathbf{1}_{\omega+\omega_1 \ge \omega_2},
	\end{aligned}
\end{equation}

\begin{equation}\label{Lemma:Global:E5}
	\begin{aligned}
		\mathfrak{K}^+_{31} &= \mathfrak{A}(\omega_1 + \omega_2 + \omega_3),  \ \ \ \ \ \ \ \ \ \ 
		\mathfrak{K}^-_{31} = \mathfrak{A}(\omega_1 - \omega_2 - \omega_3)
	\end{aligned}
\end{equation}
We can bound
\begin{equation}\label{Lemma:Global:E5a}
	\begin{aligned}
		|\mathfrak{K}^+_{12}| &\le \omega_1+\omega_2, \ \ \   
		|\mathfrak{K}^-_{12}| \le  \omega_1+\omega_2, \ \ \ \Theta(\omega+\omega_1-\omega_2) \le \omega+\omega_1+\omega_2,\\
		|\mathfrak{K}^+_{31}| &\le \omega_1 + \omega_2 + \omega_3, \ \ \ \ \ 	|\mathfrak{K}^-_{31}|  \le \omega_1 + \omega_2 + \omega_3.
	\end{aligned}
\end{equation}

We then rewrite the equation as:
\begin{equation}\label{Lemma:Global:E6}
	\int_{\mathbb{R}_+} \, \mathrm{d}\omega\, \partial_t G(t,\omega)\, \varphi(\omega) = \left\langle\mathfrak{T}[G],\varphi\right\rangle \ := \  \mathfrak{R}_1[\varphi] + \mathfrak{R}_2[\varphi] + \mathfrak{R}_3[\varphi],
\end{equation}
where
\begin{equation}\label{Lemma:Global:E6a}
	\begin{aligned}
		\mathfrak{R}_1 =\ 
		& c_{12} \iint_{\mathbb{R}_+^2} \mathrm{d}\omega_1\, \mathrm{d}\omega_2\, G_1 G_2 \, \mathfrak{K}^+_{12} \left[ \varphi(\omega_1 + \omega_2) - \varphi(\omega_1) - \varphi(\omega_2) \right] \\
		& - 2c_{12} \iint_{\omega_1 \ge \omega_2} \mathrm{d}\omega_1\, \mathrm{d}\omega_2\, \mathfrak{K}^-_{12}\, G_1G_2\left[ \varphi(\omega_1) - \varphi(\omega_1 - \omega_2) - \varphi(\omega_2) \right],
	\end{aligned}
\end{equation}
\begin{equation}\label{Lemma:Global:E6b}
	\begin{aligned}
		\mathfrak{R}_2 =\ 
		& c_{22} \iiint_{\mathbb{R}_+^3} \mathrm{d}\omega_1\, \mathrm{d}\omega_2\, \mathrm{d}\omega\, \delta(\omega + \omega_1 - \omega_2 - \omega_3)\, \mathfrak{K}_{22} \\
		& \quad \times G G_1 G_2 \left[ -\varphi(\omega) - \varphi(\omega_1) + \varphi(\omega_2) + \varphi(\omega+\omega_1-\omega_2) \right],
	\end{aligned}
\end{equation}
\begin{equation}\label{Lemma:Global:E6c}
	\begin{aligned}
		\mathfrak{R}_3 =\ 
		& c_{31} \iiint_{\mathbb{R}_+^3} \mathrm{d}\omega_1\, \mathrm{d}\omega_2\, \mathrm{d}\omega_3\, \mathfrak{K}^+_{31}\, G_1 G_2 G_3 \left[ \varphi(\omega_1 + \omega_2 + \omega_3) - \varphi(\omega_1) - \varphi(\omega_2) - \varphi(\omega_3) \right] \\
		& - 3c_{31} \iiint_{\omega_1 \ge \omega_2 + \omega_3} \mathrm{d}\omega_1\, \mathrm{d}\omega_2\, \mathrm{d}\omega_3\, \mathfrak{K}^-_{31}G_1G_2 G_3 \left[ \varphi(\omega_1) - \varphi(\omega_1 - \omega_2 - \omega_3) - \varphi(\omega_2) - \varphi(\omega_3) \right],
	\end{aligned}
\end{equation}
with initial condition
\[
G(0,\omega) = \Theta|k| {f_0(\omega)}.
\]
Similar to \eqref{Lemma:MassEner:1}, we have 
\begin{equation}\label{Lemma:Global:E7}
	\int_{\mathbb{R}_+} \mathrm{d}\omega\, G(t) \, (1+\omega)
	= \int_{\mathbb{R}_+} \mathrm{d}\omega\, G(0)\, (1+\omega)
	= \int_{\mathbb{R}_+} \mathrm{d}\omega\, \Theta\, f_0(\omega) \, {|k|} \, (1+\omega)
	= \mathscr{M} + \mathscr{E}.
\end{equation}

{\it Step 1:} It is clear that for any \( \varphi \in C^2(\mathbb{R}_+) \) such that the set \( \{ \omega \mid \varphi(\omega) \ne 0 \} \) is a compact subset of \( [0,\infty) \), 
 the expressions  
\[
\mathfrak{K}^+_{12} \left[ \varphi(\omega_1 + \omega_2) - \varphi(\omega_1) - \varphi(\omega_2) \right], \quad
\mathfrak{K}^-_{12} \left[ \varphi(\omega_1) - \varphi(\omega_1 - \omega_2) - \varphi(\omega_2) \right],
\]
\[
\mathfrak{K}^+_{31} \left[ \varphi(\omega_1 + \omega_2 + \omega_3) - \varphi(\omega_1) - \varphi(\omega_2) - \varphi(\omega_3) \right], \quad
\mathfrak{K}^-_{31} \left[ \varphi(\omega_1) - \varphi(\omega_1 - \omega_2 - \omega_3) - \varphi(\omega_2) - \varphi(\omega_3) \right]
\]
are continuous and uniformly bounded using the support of the test function.

We will now show that the quantity  
\[
\mathfrak{K} := \mathfrak{K}_{22} \left[ -\varphi(\omega) - \varphi(\omega_1) + \varphi(\omega_2) + \varphi(\omega + \omega_1 - \omega_2) \right]
\]
is also continuous and uniformly bounded on the support of \(  \left[ -\varphi(\omega) - \varphi(\omega_1) + \varphi(\omega_2) + \varphi(\omega + \omega_1 - \omega_2) \right]
 \).

We need to show that \( \mathfrak{K} \) remains uniformly bounded near the  singularities at \( |k| = 0 \), \( |k_1| = 0 \), and \( |k_2| = 0 \).

Firstly,  consider the case where exactly one of the quantities \( |k|, |k_1|, |k_2| \) approaches zero, while the other two remain bounded below by a fixed constant \( C_0 > 0 \).
Suppose \( |k| \) is small while \( |k_1|, |k_2| \ge C_0 \). Then we have
\begin{equation*}
	\begin{aligned}
		\mathfrak{K}_{22} 
		&\le \frac{1}{|k_1||k_2|} \, \Theta(\omega + \omega_1 - \omega_2) \,  \left[ \frac{\max\{|\omega_1 - \omega_2|,\ |\omega_2 - \omega|\}}{2(\omega + \omega_1)} \right]^\mu \\
		&\lesssim \frac{1}{C_0^{2+\mu}}\,  \left[ \frac{\max\{|\omega_1 - \omega_2|,\ |\omega_2 - \omega|\}}{2} \right]^\mu\, \Theta(\omega + \omega_1 - \omega_2),
	\end{aligned}
\end{equation*}
which is uniformly bounded on the support of \(  \left[ -\varphi(\omega) - \varphi(\omega_1) + \varphi(\omega_2) + \varphi(\omega + \omega_1 - \omega_2) \right]
 \).

Secondly, consider the case in which two of the quantities \( |k|, |k_1|, |k_2| \) approach zero, while the remaining one is bounded below by a constant \( C_0 > 0 \). Due to the restriction \( \omega + \omega_1 \ge \omega_2 \) on the domain of integration, the scenario where both \( |k| \) and \( |k_1| \) are small while \( |k_2| \ge C_0 \) cannot occur. Owing to the symmetry between \( k \) and \( k_1 \), it suffices to consider the case where \( |k| \) and \( |k_2| \) are small, while \( |k_1| \ge C_0 \). We write

\begin{equation*}
	\begin{aligned}
		\mathfrak{K} 
		=\, & \mathfrak{K}_{22} \big[-\varphi(\omega) - \varphi(\omega_1) + \varphi(\omega_2) + \varphi(\omega + \omega_1 - \omega_2)\big] \\
		=\, & \mathfrak{K}_{22} \Big[- \int_0^{\omega - \omega_2}\mathrm d\xi \big(\varphi'(\omega_2 + \xi) - \varphi'(0)\big) \,  - \int_0^{\omega - \omega_2} \mathrm d\xi\varphi'(0) \,  \\
		& \quad + \int_0^{\omega - \omega_2} \, \mathrm d\xi \big(\varphi'(\omega_1 + \xi) - \varphi'(\omega_1)\big) + \int_0^{\omega - \omega_2}\, \mathrm d\xi  \varphi'(\omega_1) \Big] \\
		=\, & \mathfrak{K}_{22} \Big[- \int_0^{\omega - \omega_2} \int_0^{\omega_2 + \xi} \, \mathrm d\xi'\mathrm d\xi\varphi''(\xi')  - \varphi'(0)(\omega - \omega_2) \\
		& \quad + \int_0^{\omega - \omega_2} \int_0^{\xi}  \, \mathrm d\xi' \mathrm d\xi \varphi''(\omega_1 + \xi') + \varphi'(\omega_1)(\omega - \omega_2) \Big].
	\end{aligned}
\end{equation*}
This expression can be bounded, using the compact support of $\varphi$, as

\begin{equation*}
	\begin{aligned}
		|\mathfrak{K}| 
		\lesssim\, & \frac{\min\{|k_1|, |k_2|, |k_3|, |k|\}}{|k||k_1||k_2|} \mathbf{1}_{\omega + \omega_1 \geq \omega_2}  \left[ \frac{1}{2(\omega + \omega_1)} \right]^\mu \\
		& \times \left( \left| \varphi'(\omega_1)(\omega - \omega_2) \right| + \left| \varphi'(0)(\omega - \omega_2) \right| + C \left[ \omega^2 + \omega_2^2 \right] \right) \\
		\lesssim\, & \frac{\min\{|k_1|, |k_2|, |k_3|, |k|\}}{|k||k_1||k_2|} \mathbf{1}_{\omega + \omega_1 \geq \omega_2}  \left[ \frac{1}{2 C_{\mathrm{disper}} C_0^{1/\delta}} \right]^\mu   \left( |\varphi'(\omega_1)|\, |\omega - \omega_2| + C \left[ \omega^2 + \omega_2^2 \right] \right),
	\end{aligned}
\end{equation*}
where \( C > 0 \) depends only on \( \varphi, \varphi', \varphi'' \).
We can further bound the quantity \( |\mathfrak{K}| \) as
\[
|\mathfrak{K}| \lesssim \frac{\min\{|k_1|, |k_2|, |k_3|, |k|\}}{|k||k_1||k_2|} \mathbf{1}_{\omega + \omega_1 \ge \omega_2} \left( | \omega - \omega_2 |\, |\varphi'(\omega_1)| + \omega^2 + \omega_2^2 \right).
\]

Since \(\min\{|k_1|, |k_2|, |k_3|, |k|\} \le \min\{|k|, |k_2|\}\), it follows that
\begin{equation}\label{Lemma:Global:E8}
	|\mathfrak{K}| \lesssim \frac{1}{\max\{|k|, |k_2|\} |k_1|} \mathbf{1}_{\omega + \omega_1 \ge \omega_2} \left( | \omega - \omega_2 |\, |\varphi'(\omega_1)| + \big(\max\{|k|, |k_2|\}\big)^{\frac{2}{\delta'}} \right),
\end{equation}
where we have used Assumption A.
Next, we estimate the second term on the right-hand side of \eqref{Lemma:Global:E8}:
\begin{equation}\label{Lemma:Global:E9}
	\frac{\left(\max\{|k|, |k_2|\}\right)^{\frac{2}{\delta'}}}{\max\{|k|, |k_2|\} |k_1|} \le \frac{\max\{|k|, |k_2|\}^{\frac{2}{\delta'}-1}}{C_o} \le \frac{1}{C_o}.
\end{equation}

Now we bound the first term on the right-hand side:
\begin{equation}\label{Lemma:Global:E10}
	\begin{aligned}
		\frac{1}{\max\{|k|, |k_2|\} |k_1|} | \omega - \omega_2 |\, |\varphi'(\omega_1)| 
		& \le \frac{\max\{\omega, \omega_2\}}{\max\{|k|, |k_2|\}} \cdot \frac{\omega_1}{|k_1|} \left| \frac{\varphi'(\omega_1)}{\omega_1} \right| \\
		& \le \frac{\omega \big( \max\{|k|, |k_2|\}\big)}{\max\{|k|, |k_2|\}} \cdot \frac{\omega_1}{|k_1|} \left| \frac{\varphi'(\omega_1)}{\omega_1} \right|.
	\end{aligned}
\end{equation}

We note that \(\frac{\omega}{|k|}\) is bounded for \(0 \le |k| \le 1\) when \(\omega \in \mathrm{supp}(\varphi)\). Furthermore, since
\[
\lim_{\omega_1 \to 0} \frac{\varphi'(\omega_1)}{\omega_1} = \varphi''(0),
\]
the quantity \(\frac{\varphi'(\omega_1)}{\omega_1}\) is also uniformly bounded. Hence, the expression in \eqref{Lemma:Global:E10} can be bounded by a constant.

Combining this with \eqref{Lemma:Global:E8} and \eqref{Lemma:Global:E9}, we conclude that \( |\mathfrak{K}| \) is uniformly bounded.

	Lastly, we consider the case when all \( |k|, |k_1|, |k_2| \) are close to zero. We develop the expression
	\begin{equation*}		\begin{aligned}
		\mathfrak{K}
		= &  \Theta(\omega + \omega_1 - \omega_2) \left[ \frac{\max\{|\omega_1 - \omega_2|,\ |\omega_2 - \omega|\}}{2(\omega + \omega_1)} \right]^\mu \\
		& \times \frac{\min\{|k_1|, |k_2|, |k|\}}{|k|\, |k_1|\, |k_2|} \mathbf{1}_{\omega + \omega_1 \ge \omega_2} \left[ -\int_0^{\omega - \omega_2} \mathrm{d}\xi \int_0^{\omega_2 - \omega_1} \mathrm{d}\xi'\, \varphi''(\omega_1 + \xi + \xi') \right],		\end{aligned}
	\end{equation*}
	which can be estimated on the support of $\varphi$ as
	\begin{equation}\label{Lemma:Global:E11}
		\begin{aligned}
			|\mathfrak{K}| \lesssim\ 
			&  \left[ \frac{\max\{|\omega_1 - \omega_2|,\ |\omega_2 - \omega|\}}{2(\omega + \omega_1)} \right]^\mu \\
			& \times \frac{\min\{|k_1|, |k_2|, |k|\}}{|k|\, |k_1|\, |k_2|} \mathbf{1}_{\omega + \omega_1 \ge \omega_2} |\omega - \omega_2|\, |\omega_2 - \omega_1|\, \|\varphi''\|_{L^\infty}.
		\end{aligned}
	\end{equation}
	
	We now consider $3$ cases.
	
\textit{Case 1: \( |k_2| = \min\{|k_1|, |k_2|, |k|\} \).} Since \( |k_2| = \min\{|k_1|, |k_2|, |k|\} \), it follows that \( \omega - \omega_2 \geq 0 \) and \( \omega_1 - \omega_2 \geq 0 \). We estimate \eqref{Lemma:Global:E11} as
\begin{equation*}
	\begin{aligned}
		|\mathfrak{K}| 
		\lesssim\ 
		& \left[ \frac{\max\{|\omega_1 - \omega_2|,\ |\omega_2 - \omega|\}}{2(\omega + \omega_1)} \right]^\mu 
		\frac{1}{|k|\,|k_1|} \mathbf{1}_{\omega + \omega_1 \geq \omega_2} |\omega - \omega_2|\, |\omega_2 - \omega_1|\, \|\varphi''\|_{L^\infty} \\
		\lesssim\ 
		&  
		\frac{\omega - \omega_2}{|k|} \cdot \frac{\omega_1 - \omega_2}{|k_1|} \|\varphi''\|_{L^\infty} \mathbf{1}_{\omega + \omega_1 \geq \omega_2}\\
		\lesssim\ 
		& 
		\frac{\omega - \omega_2}{|k|} \cdot \frac{\omega_1 - \omega_2}{|k_1|} \|\varphi''\|_{L^\infty} \mathbf{1}_{\omega + \omega_1 \geq \omega_2} \\
		\lesssim\ 
		& \frac{\omega}{|k|} \cdot \frac{\omega_1}{|k_1|} \|\varphi''\|_{L^\infty},
	\end{aligned}
\end{equation*}
which implies the boundedness of \( |\mathfrak{K}| \).

\textit{Case 2: \( |k_1| = \min\{|k_1|, |k_2|, |k|\} \).} We estimate \eqref{Lemma:Global:E11} as
\begin{equation*}
	\begin{aligned}
		|\mathfrak{K}|
		\lesssim\ 
		& \left[ \frac{\max\{|\omega_1 - \omega_2|,\ |\omega_2 - \omega|\}}{2(\omega + \omega_1)} \right]^\mu 
		\frac{1}{|k|\,|k_2|} \mathbf{1}_{\omega + \omega_1 \geq \omega_2} |\omega - \omega_2|\, |\omega_2 - \omega_1|\, \|\varphi''\|_{L^\infty} \\
		\lesssim\ 
		&  
		\frac{|\omega - \omega_2|}{|k|} \cdot \frac{\omega_2 - \omega_1}{|k_2|} \|\varphi''\|_{L^\infty} \mathbf{1}_{\omega + \omega_1 \geq \omega_2} \\
		\lesssim\ 
		& \frac{|\omega - \omega_2|}{|k|} \cdot \frac{\omega_2}{|k_2|} \|\varphi''\|_{L^\infty} \mathbf{1}_{\omega + \omega_1 \geq \omega_2}.
	\end{aligned}
\end{equation*}
In the case \( \omega \geq \omega_2 \), we bound \( \frac{|\omega - \omega_2|}{|k|} \leq \frac{\omega}{|k|} \), which implies the boundedness of \( |\mathfrak{K}| \). 

Suppose instead that \( \omega < \omega_2 \); then we estimate
\[
\frac{|\omega - \omega_2|}{|k|} = \frac{\omega_2 - \omega}{|k|} \leq \frac{\omega_1}{|k|} \leq \frac{\omega}{|k|},
\]
which also leads to the boundedness of \( |\mathfrak{K}| \).

\textit{Case 3: \( |k| = \min\{|k_1|, |k_2|, |k|\} \).} We estimate \eqref{Lemma:Global:E11} as
\begin{equation*}
	\begin{aligned}
		|\mathfrak{K}|
		\lesssim\ 
		& \left[ \frac{\max\{|\omega_1 - \omega_2|,\ |\omega_2 - \omega|\}}{2(\omega + \omega_1)} \right]^\mu 
		\frac{1}{|k_1|\,|k_2|} \mathbf{1}_{\omega + \omega_1 \geq \omega_2} |\omega - \omega_2|\, |\omega_2 - \omega_1|\, \|\varphi''\|_{L^\infty} \\
		\lesssim\ 
		&   
		\frac{\omega_2 - \omega}{|k_2|} \cdot \frac{|\omega_2 - \omega_1|}{|k_1|} \|\varphi''\|_{L^\infty} \mathbf{1}_{\omega + \omega_1 \geq \omega_2} \\
		\lesssim\ 
		& \frac{|\omega_2 - \omega_1|}{|k_1|} \cdot \frac{\omega_2}{|k_2|} \|\varphi''\|_{L^\infty} \mathbf{1}_{\omega + \omega_1 \geq \omega_2}.
	\end{aligned}
\end{equation*}

If \( \omega_1 \geq \omega_2 \), then since \(\omega + \omega_1 \geq \omega_2\), we have
\[
\frac{|\omega_2 - \omega_1|}{|k_1|} \leq \frac{\omega_1}{|k_1|},
\]
which implies that \( |\mathfrak{K}| \) is bounded.

If instead \( \omega_1 < \omega_2 \), we estimate
\[
\frac{|\omega_2 - \omega_1|}{|k_1|} = \frac{\omega_2 - \omega_1}{|k_1|} \leq \frac{\omega}{|k_1|} \leq \frac{\omega_1}{|k_1|},
\]
which again implies that \( |\mathfrak{K}| \) is bounded.

\textit{Step 2: Global existence for the cut-off models of \eqref{Lemma:Global:E6}.}

We define the cut-off kernels, for \( n > 0 \), as
\begin{equation}\label{Lemma:Global:E3n}
	\begin{aligned}
		\mathfrak{K}^+_{12n} &=  \mathfrak{A}(\omega_1 + \omega_2)\,\chi_{\{\omega_1 \le n\}}\,\chi_{\{\omega_2 \le n\}}, \qquad
		\mathfrak{K}^-_{12n} =   \mathfrak{A}(\omega_1 - \omega_2)_2)\,\chi_{\{\omega_1 \le n\}}\,\chi_{\{\omega_2 \le n\}},
	\end{aligned}
\end{equation}

\begin{equation}\label{Lemma:GlobalExist:E4n}
	\begin{aligned}
		\mathfrak{K}_{22n} =\ & \Theta(\omega + \omega_1 - \omega_2)\left[ \frac{\min\{\max\{|\omega_1 - \omega_2|,\ |\omega_2 - \omega|\}, n\}}{2(\omega + \omega_1)} \right]^\mu \\
		& \times \frac{ |k|, |k_1|, |k_2|, |k|(\omega + \omega_1 - \omega_2)}{|k|\,|k_1|\,|k_2|} \, \mathbf{1}_{\omega + \omega_1 \ge \omega_2}\,\chi_{\{\omega_1 \le n\}}\,\chi_{\{\omega_2 \le n\}}\,\chi_{\{\omega\le n\}},
	\end{aligned}
\end{equation}

\begin{equation}\label{Lemma:Global:E5n}
	\begin{aligned}
		\mathfrak{K}^+_{31n} &= \mathfrak{A}(\omega_1 + \omega_2 + \omega_3)\,\chi_{\{\omega_1 \le n\}}\,\chi_{\{\omega_2 \le n\}}\,\chi_{\{\omega_3\le n\}}, \ \qquad \\
		\mathfrak{K}^-_{31n} &=  \mathfrak{A}(\omega_1 - \omega_2 - \omega_3)\,\chi_{\{\omega_1 \le n\}}\,\chi_{\{\omega_2 \le n\}}\,\chi_{\{\omega_3\le n\}}.
	\end{aligned}
\end{equation}

We consider the approximated equations
\begin{equation}\label{Lemma:Global:E6n}
	\int_{\mathbb{R}_+} \mathrm{d}\omega\, \partial_t G^n(t,\omega)\, \varphi(\omega) 
	= \left\langle \mathfrak{T}^n[G^n], \varphi \right\rangle 
	:= \mathfrak{R}_1^n[\varphi] + \mathfrak{R}_2^n[\varphi] + \mathfrak{R}_3^n[\varphi],
\end{equation}
where
\begin{equation}\label{Lemma:Global:E6an}
	\begin{aligned}
		\mathfrak{R}_1^n[\varphi] =\ 
		& c_{12} \iint_{\mathbb{R}_+^2} \mathrm{d}\omega_1\, \mathrm{d}\omega_2\, G_1^n G_2^n \, \mathfrak{K}^+_{12n} \left[ \varphi(\omega_1 + \omega_2) - \varphi(\omega_1) - \varphi(\omega_2) \right] \\
		& - 2c_{12} \iint_{\omega_1 \ge \omega_2} \mathrm{d}\omega_1\, \mathrm{d}\omega_2\, \mathfrak{K}^-_{12n}\, G_1^n G_2^n \left[ \varphi(\omega_1) - \varphi(\omega_1 - \omega_2) - \varphi(\omega_2) \right],
	\end{aligned}
\end{equation}

\begin{equation}\label{Lemma:Global:E6bn}
	\begin{aligned}
		\mathfrak{R}_2^n[\varphi] =\ 
		& c_{22} \iiint_{\mathbb{R}_+^3} \mathrm{d}\omega\, \mathrm{d}\omega_1\, \mathrm{d}\omega_2\, \delta(\omega + \omega_1 - \omega_2 - \omega_3)\, \mathfrak{K}_{22n} \\
		& \quad \times G^n G_1^n G_2^n \left[ -\varphi(\omega) - \varphi(\omega_1) + \varphi(\omega_2) + \varphi(\omega + \omega_1 - \omega_2) \right],
	\end{aligned}
\end{equation}

\begin{equation}\label{Lemma:Global:E6cn}
	\begin{aligned}
		\mathfrak{R}_3^n[\varphi] =\ 
		& c_{31} \iiint_{\mathbb{R}_+^3} \mathrm{d}\omega_1\, \mathrm{d}\omega_2\, \mathrm{d}\omega_3\, \mathfrak{K}^+_{31n}\, G_1^n G_2^n G_3^n \left[ \varphi(\omega_1 + \omega_2 + \omega_3) - \varphi(\omega_1) - \varphi(\omega_2) - \varphi(\omega_3) \right] \\
		& - 3c_{31} \iiint_{\omega_1 \ge \omega_2 + \omega_3} \mathrm{d}\omega_1\, \mathrm{d}\omega_2\, \mathrm{d}\omega_3\, \mathfrak{K}^-_{31n}\, G_1^n G_2^n G_3^n \left[ \varphi(\omega_1) - \varphi(\omega_1 - \omega_2 - \omega_3) - \varphi(\omega_2) - \varphi(\omega_3) \right].
	\end{aligned}
\end{equation}

with initial condition
\[
G^n(0,\omega) = \Theta\, |k|\, f_0(\omega)\chi_{\{\omega\le n\}}.
\]

Similar to \eqref{Lemma:Global:E7}, we estimate
\begin{equation}\label{Lemma:Global:E12n}
	\int_{\mathbb{R}_+} \mathrm{d}\omega\, G^n(t) 
	\le \int_{\mathbb{R}_+} \mathrm{d}\omega\, G^n(0)
	\le \int_{\mathbb{R}_+} \mathrm{d}\omega\, \Theta\, f_0(\omega)\, |k| 
	= \mathscr{M},
\end{equation}
and
\begin{equation}\label{Lemma:Global:E12n}
	\int_{\mathbb{R}_+} \mathrm{d}\omega\, G^n(t) \,\omega
	\le \int_{\mathbb{R}_+} \mathrm{d}\omega\, G^n(0)\,\omega
	\le \int_{\mathbb{R}_+} \mathrm{d}\omega\, \Theta\, f_0(\omega)\, |k| \,\omega
	= \mathscr{E}.
\end{equation}
Using \eqref{Lemma:Global:E3n}-\eqref{Lemma:Global:E5n} estimate
\begin{equation*}\label{Lemma:Global:E12}
	\begin{aligned}
		\sup_{\|\varphi\|_{L^\infty} = 1} \left| \left\langle \mathfrak{T}^n[G^n], \varphi \right\rangle \right|
		\lesssim\ 
		&  \iiint_{\mathbb{R}_+^3} \mathrm{d}\omega_1\, \mathrm{d}\omega_2\, \mathrm{d}\omega\, |G_1^n G_2^n G^n| 
		+  \iint_{\mathbb{R}_+^2} \mathrm{d}\omega_1\, \mathrm{d}\omega\, |G_1^nG^n|  \\
		\lesssim\ 
		& \left( \int_{\mathbb{R}_+} \mathrm{d}\omega\, |G^n| \right)^3
		+ \left( \int_{\mathbb{R}_+} \mathrm{d}\omega\, |G^n| \right)^2.
	\end{aligned}
\end{equation*}

Next, we recall the definition \eqref{Radon} and observe that \( \mathfrak{T}^n \) is Lipschitz continuous on the set
\[
S_{\mathfrak{T}} := \left\{ G ~\middle|~ G \geq 0,\ \|G\|_{\mathscr{R}_+} \leq \mathscr{M} \right\}.
\]
Indeed, for any \( G, \bar G \in S_{\mathfrak{T}} \), we estimate
\begin{equation*}
	\begin{aligned}
		& \sup_{\|\varphi\|_{C^2} = 1} \left| \left\langle \mathfrak{T}^n[G] - \mathfrak{T}^n[\bar G], \varphi \right\rangle \right| \\
		\lesssim\ & \left| c_{12} \iint_{\mathbb{R}_+^2} \mathrm{d}\omega_1\, \mathrm{d}\omega_2\, (G_1 G_2 - \bar G_1  \bar G_2)\, \mathfrak{K}^+_{12n} \left[ \varphi(\omega_1 + \omega_2) - \varphi(\omega_1) - \varphi(\omega_2) \right] \right| \\
		& + \left| 2c_{12} \iint_{\omega_1 \ge \omega_2} \mathrm{d}\omega_1\, \mathrm{d}\omega_2\, \mathfrak{K}^-_{12n}\, (G_1 G_2 - \bar G_1\bar G_2) \left[ \varphi(\omega_1) - \varphi(\omega_1 - \omega_2) - \varphi(\omega_2) \right] \right| \\
		& + \left| c_{22} \iiint_{\mathbb{R}_+^3} \mathrm{d}\omega\, \mathrm{d}\omega_1\, \mathrm{d}\omega_2\, \delta(\omega + \omega_1 - \omega_2 - \omega_3)\, \mathfrak{K}_{22n} \right. \\
		& \qquad \left. \times (G G_1 G_2 - \bar G\bar G_1\bar G_2) \left[ -\varphi(\omega) - \varphi(\omega_1) + \varphi(\omega_2) + \varphi(\omega + \omega_1 - \omega_2) \right] \right| \\
		& + c_{31} \left| \iiint_{\mathbb{R}_+^3} \mathrm{d}\omega_1\, \mathrm{d}\omega_2\, \mathrm{d}\omega_3\, \mathfrak{K}^+_{31n}\, (G G_1 G_2 - \bar G\bar G_1\bar G_2)  \left[ \varphi(\omega_1 + \omega_2 + \omega_3) - \varphi(\omega_1) - \varphi(\omega_2) - \varphi(\omega_3) \right] \right| \\
		& + 3c_{31} \left| \iiint_{\omega_1 \ge \omega_2 + \omega_3} \mathrm{d}\omega_1\, \mathrm{d}\omega_2\, \mathrm{d}\omega_3\, \mathfrak{K}^-_{31n}\, (G G_1 G_2 - \bar G\bar G_1\bar G_2) \right. \\
		& \qquad \left. \times \left[ \varphi(\omega_1) - \varphi(\omega_1 - \omega_2 - \omega_3) - \varphi(\omega_2) - \varphi(\omega_3) \right] \right|.
	\end{aligned}
\end{equation*}

This yields the bound
\begin{equation*}
	\begin{aligned}
		\sup_{\|\varphi\|_{L^\infty} = 1} \left| \left\langle \mathfrak{T}^n[G] - \mathfrak{T}^n[\bar G], \varphi \right\rangle \right|
		\lesssim\ 
		& \iiint_{\mathbb{R}_+^3} \mathrm{d}\omega_1\, \mathrm{d}\omega_2\, \mathrm{d}\omega\, \left| G_1 G_2 G - \bar G_1\bar G_2\bar G \right|  + \iint_{\mathbb{R}_+^2} \mathrm{d}\omega_1\, \mathrm{d}\omega_2\, \left| G_1 G_2 - \bar G_1\bar G_2 \right| \\
		\lesssim\ 
		& \left( \mathscr{M}  + \mathscr{M}^2 \right) \| G - \bar G \|_{\mathscr{R}_+}.
	\end{aligned}
\end{equation*}

The local existence of a solution \( G^n \in C^1([0, T], {\mathscr{R}_+}([0,\infty))) \) to \eqref{Lemma:Global:E6n} then follows from a standard fixed point argument for small time \( T > 0 \) depending on \( \mathscr{M} \). 

Using the a priori bound \eqref{Lemma:Global:E12n}, we can iterate this argument from \( [0, T] \) to \( [T, 2T] \), then to \( [2T, 3T] \), and so on. This yields global existence of a solution \( G^n \in C^1([0, \infty), {\mathscr{R}_+}([0,\infty))) \).

\textit{Step 3: Global existence for \eqref{Lemma:Global:E6}.}

Let us consider the sequence of solutions obtained in Step 2, \( \{G^m\} \subset C^1([0, \infty), {\mathscr{R}_+}([0,\infty))) \). We bound, using \eqref{Lemma:Global:E5a} and \eqref{Lemma:Global:E12n}
\[
\left| \int_0^\infty\, \mathrm{d}\omega  G^m(t_2, \omega)\, \varphi(\omega)- \int_0^\infty\, \mathrm{d}\omega G^m(t_1, \omega)\, \varphi(\omega) \right| \leq C |t_2 - t_1|,
\]
where \( C > 0 \) is independent of \( m \). 

By the Arzela-Ascoli theorem (see \cite{giri2011continuous,stewart1989global}), there exists a subsequence \( \{G^{i_m}\} \) and a function 
\[
G^\infty \in C^1([0, \infty), {\mathscr{R}_+}([0,\infty)))
\]
such that \( G^{i_m} \rightharpoonup G^\infty \) weakly in \( {\mathscr{R}_+}([0,\infty)) \) and uniformly in $t$.

The function \( G^\infty \) is a mild solution to \eqref{Lemma:Global:E6} with \( n = \infty \). The inequalities \eqref{Propo:Global:2} follow from the same argument used to prove Lemma \ref{Lemma:MassEner}.

\end{proof}

\section{Multiscale setting}\label{Sec:Multiscale}

In this section, we fully extend the approach previously introduced in \cite{staffilani2024condensation}, which is inspired by the Domain Decomposition Method (DDM) commonly employed in parallel computing \cite{halpern2009nonlinear, Lions:1989:OSA, toselli2004domain}. This technique is based on the classical ``divide and conquer'' principle, wherein a large  domain is partitioned into smaller subdomains to facilitate both efficient computation and scale separation.

We first choose \( \mathfrak{m} \in \mathbb{N} \), with \( \mathfrak{m} > 10^6 \), and set
\[
R_{\mathfrak{m}} = 2^{-\mathfrak{m}}, \quad 
h_\mathfrak{m} = \frac{2^{-\mathfrak{m}}}{2^{\mathfrak{N}_\mathfrak{m}}}, \quad 
\mathfrak{N}_\mathfrak{m} = \left\lfloor \frac{\mathfrak{m}}{4(2 + \mu + \varrho)} \left( 2\delta - \frac{1}{2} - \varrho  \right) \right\rfloor,
\]
where we have used the inequality \( 2\delta  - \tfrac{1}{2} > \varrho \) in \eqref{Parameters}.

	We define:
	
	\begin{equation}\label{Sec:Growthlemmas:1}
		\begin{aligned}
			\text{(I) Number of Subdomains:} \quad & \mathscr{M}_{h_\mathfrak{m}, R_{\mathfrak{m}}} = 2^{\mathfrak{N}_\mathfrak{m}}, \\
			\text{(II) Nonoverlapping Subdomains:} \quad & \Omega_{i}^{h_\mathfrak{m}, R_\mathfrak{m}} = \left[ i h_\mathfrak{m}, (i+1) h_\mathfrak{m} \right), \quad i = 0, \ldots, 2^{\mathfrak{N}_\mathfrak{m}} - 2, \\
			& \Omega_{2^{\mathfrak{N}_\mathfrak{m}} - 1}^{h_\mathfrak{m}, R_\mathfrak{m}} = \left[ (2^{\mathfrak{N}_\mathfrak{m}} - 1) h_\mathfrak{m}, R_\mathfrak{m} \right), \\
			\text{(III) Overlapping Subdomains:} \quad & \mathscr{O}_{i}^{h_\mathfrak{m}, R_\mathfrak{m}} = \left[ (i-1) h_\mathfrak{m}, (i+2) h_\mathfrak{m} \right), \quad i = 1, \ldots, 2^{\mathfrak{N}_\mathfrak{m}} - 3, \\
			& \mathscr{O}_{0}^{h_\mathfrak{m}, R_\mathfrak{m}} = \left[ 0, 2 h_\mathfrak{m} \right), \quad \mathscr{O}_{1}^{h_\mathfrak{m}, R_\mathfrak{m}} = \left[ 0, 3 h_\mathfrak{m} \right), \\
			& \mathscr{O}_{2^{\mathfrak{N}_\mathfrak{m}} - 2}^{h_\mathfrak{m}, R_\mathfrak{m}} = \left[ (2^{\mathfrak{N}_\mathfrak{m}} - 3) h_\mathfrak{m}, R_\mathfrak{m} \right), \\
			& \mathscr{O}_{2^{\mathfrak{N}_\mathfrak{m}} - 1}^{h_\mathfrak{m}, R_\mathfrak{m}} = \left[ (2^{\mathfrak{N}_\mathfrak{m}} - 2) h_\mathfrak{m}, R_\mathfrak{m} \right).
		\end{aligned}
	\end{equation}
	
Using the parameters in Assumption A, we set
\begin{equation}\label{rho}\begin{aligned}
&	0< \rho < \min\left\{ \frac{2\delta - \frac{1}{2} - \varrho }{10(2 + \mu + \varrho)},\ {2\delta - \varrho},\ \tfrac{2}{3}\delta,\  \theta, \, \tfrac{\frac{2\delta - \frac{1}{2} - \varrho }{5(2 + \mu + \varrho)} + \varrho}{2} \right\}, 
	\quad \\
	&
	\max\left\{ 0,\ 2\rho - \varrho \right\} < \varepsilon < \min\left\{ 2\delta - \varrho -\rho,\ \frac{2\delta - \frac{1}{2} - \varrho }{5(2 + \mu + \varrho)} \right\},\end{aligned}
\end{equation}
	and define, for a fixed constant \( \mathcal{C}_* >0 \)
	\begin{equation}\label{Sec:Growthlemmas:2}
		\mathcal{A}_\mathfrak{m}^T := \left\{ t \in \left[0, T\right] : 
		\int_{\left[0, R_\mathfrak{m} \right)} \mathrm{d}\omega\, G(t, \omega) 
		\ge \mathcal{C}_* R_\mathfrak{m}^\rho \right\},
	\end{equation}
	\begin{equation}\label{Sec:Growthlemmas:3}
		\mathcal{A}_{\mathfrak{m},i}^T := \left\{ t \in \left[0, T\right] :
		\int_{\mathscr{O}_{i}^{h_\mathfrak{m}, R_\mathfrak{m}}} \mathrm{d}\omega\, G(t, \omega)
		\ge \mathcal{C}_* R_{\mathfrak{m}+1}^\rho \right\}, 
	\end{equation}
	for \( i = 0, \ldots, 2^{\mathfrak{N}_\mathfrak{m}} - 1 \), where $G$ is defined in \eqref{FDefinition}.

 We then define:
 \begin{equation}\label{Sec:Growthlemmas:4}
 	\begin{aligned}
 		\mathscr{B}_{\mathfrak{m}}^T &:= \mathcal{A}_{\mathfrak{m}}^T \setminus 
 		\bigcup_{i=0}^{2^{\mathfrak{N}_\mathfrak{m}} - 1} \mathcal{A}_{\mathfrak{m},i}^T, \ \ \ 
 		\mathscr{C}_{\mathfrak{m}}^T  := \bigcup_{i=2^{\mathfrak{N}_\mathfrak{m} - 1} - 1}^{2^{\mathfrak{N}_\mathfrak{m}} - 1}
 		\mathcal{A}_{\mathfrak{m},i}^T, \ \ \ 
 		\mathscr{D}_{\mathfrak{m}}^T := \bigcup_{i=0}^{2^{\mathfrak{N}_\mathfrak{m} - 1} - 2}
 		\mathcal{A}_{\mathfrak{m},i}^T.
 	\end{aligned}
 \end{equation}
 
Now, we present a plan for the multiscale estimates given in Sections~\ref{Sec:First}, \ref{Sec:Second}, \ref{Sec:Third}, and \ref{Sec:CondensateGrowth}, based on this multiscale setting.

\begin{itemize}
	\item In Section~\ref{Sec:First}, we estimate the size of the set $\mathscr{C}_{\mathfrak{m}}^T$, as stated in~\eqref{Lemma:Growth2:1} below.
	
	\item In Section~\ref{Sec:Second}, we estimate the size of the set $\mathscr{B}_{\mathfrak{m}}^T$, as given in~\eqref{Lemma:Growth1:1}. The proof of this estimate is based on~\eqref{Propo:Collision:1}.
	
	\item In Section~\ref{Sec:Third}, using the estimates for the sizes of the sets $\mathscr{B}_{\mathfrak{m}}^T$ and $\mathscr{C}_{\mathfrak{m}}^T$, we provide an estimate for the size of $\mathcal{A}_{\mathfrak{m}}^T$, given in~\eqref{Lemma:Growth3:1}.
	
	\item From~\eqref{Lemma:Growth3:1}, we deduce that the size of the set $\mathcal{A}_{\mathfrak{m}}^T$ tends to zero as $\mathfrak{m} \to 0$. In Section~\ref{Sec:CondensateGrowth}, Proposition~\ref{Lemma:Growth4} shows that this implies immediate condensation. Furthermore, Proposition~\ref{Lemma:Growth5} demonstrates that a weaker assumption on the initial condition leads to condensation in finite time.
\end{itemize}

\section{The first multiscale estimates}\label{Sec:First}
\begin{lemma}\label{Lemma:Super}
Let $\mathcal{K}(t,z) \in L^\infty([0,\infty), {\mathscr{R}_+}([0,\infty)))$ and assume that $\mathcal{K}(t,z) \ge 0$ for almost every $(t,z) \in [0,\infty)^2$. We consider the equation
\begin{equation}
	\label{Lemma:Super:1}
	\partial_t\phi(t,x) + \int_{0}^\infty \mathrm{d}z\, \mathcal{K}(t,z)\left[\phi(t,x+z) - \phi(t,x)\right] = 0.
\end{equation}

Let $\mathcal{P} : [0,\infty) \to [0,\infty)$ be a function satisfying
\begin{equation}
	\label{Lemma:Super:2}
	\mathcal{P}'(z_1+z_2) \ge \mathcal{P}'(z_1) \quad \text{for a.e. } z_1, z_2 \in [0,\infty).
\end{equation}

For a fixed time \( T_1 > 0 \), we define
\begin{equation}
	\label{Lemma:Super:3}
	\begin{aligned}
		\mathbb{X}_{\mathcal{K}}(t) &= \int_{t}^{T_1} \mathrm{d}s \int_{0}^\infty \mathrm{d}z\, \mathcal{K}(s,z)z, \\
		\mathbb{Y}_{\mathcal{K}}(t) &= \int_{0}^{t} \mathrm{d}s \int_{0}^\infty \mathrm{d}z\, \mathcal{K}(s,z)z.
	\end{aligned}
\end{equation}

Then the function
\begin{equation}
	\label{Lemma:Super:4}
	\psi(t,x) := e^{\mathbb{Y}_{\mathcal{K}}(t)} \mathcal{P}\left(\mathbb{X}_{\mathcal{K}}(t) + x\right)
\end{equation}
is a supersolution of \eqref{Lemma:Super:1}, in the sense that
\begin{equation}
	\label{Lemma:Super:5}
	\partial_t\psi(t,x) + \int_{0}^\infty \mathrm{d}z\, \mathcal{K}(t,z)\left[\psi(t,x+z) - \psi(t,x)\right] \ge 0,
\end{equation}
for almost every $(t,x) \in [0,\infty)^2$.

\end{lemma}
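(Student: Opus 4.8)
The plan is a direct verification: differentiate $\psi$ in $t$, substitute into the left-hand side of \eqref{Lemma:Super:5}, and control the sign using the superadditivity hypothesis \eqref{Lemma:Super:2}. First I would introduce the shorthand $\mathfrak{m}(t):=\int_0^\infty\mathrm{d}z\,\mathcal{K}(t,z)\,z$ for the first moment of the kernel (finite for a.e.\ $t$, so that $\mathbb{X}_{\mathcal{K}},\mathbb{Y}_{\mathcal{K}}$ in \eqref{Lemma:Super:3} are well defined). Then $\mathbb{X}_{\mathcal{K}}'(t)=-\mathfrak{m}(t)$ and $\mathbb{Y}_{\mathcal{K}}'(t)=\mathfrak{m}(t)$, so differentiating \eqref{Lemma:Super:4} gives
\[
\partial_t\psi(t,x)=\mathfrak{m}(t)\,e^{\mathbb{Y}_{\mathcal{K}}(t)}\Big[\mathcal{P}\big(\mathbb{X}_{\mathcal{K}}(t)+x\big)-\mathcal{P}'\big(\mathbb{X}_{\mathcal{K}}(t)+x\big)\Big].
\]
Since $e^{\mathbb{Y}_{\mathcal{K}}(t)}>0$ and, writing $y:=\mathbb{X}_{\mathcal{K}}(t)+x\ge0$ (note $\mathbb{X}_{\mathcal{K}}(t)\ge0$ for $t\le T_1$), the nonlocal term is $e^{\mathbb{Y}_{\mathcal{K}}(t)}\int_0^\infty\mathrm{d}z\,\mathcal{K}(t,z)\big[\mathcal{P}(y+z)-\mathcal{P}(y)\big]$, the inequality \eqref{Lemma:Super:5} reduces after dividing by $e^{\mathbb{Y}_{\mathcal{K}}(t)}$ to
\[
\mathfrak{m}(t)\big[\mathcal{P}(y)-\mathcal{P}'(y)\big]+\int_0^\infty\mathrm{d}z\,\mathcal{K}(t,z)\big[\mathcal{P}(y+z)-\mathcal{P}(y)\big]\ \ge\ 0 .
\]

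Next I would estimate the nonlocal term from below. Writing $\mathcal{P}(y+z)-\mathcal{P}(y)=\int_0^z\mathrm{d}s\,\mathcal{P}'(y+s)$ and applying hypothesis \eqref{Lemma:Super:2} with $z_1=y$, $z_2=s$ yields $\mathcal{P}'(y+s)\ge\mathcal{P}'(y)$ for a.e.\ $s$, hence $\mathcal{P}(y+z)-\mathcal{P}(y)\ge z\,\mathcal{P}'(y)$. Multiplying by $\mathcal{K}(t,z)\ge0$ and integrating in $z$ gives $\int_0^\infty\mathrm{d}z\,\mathcal{K}(t,z)\big[\mathcal{P}(y+z)-\mathcal{P}(y)\big]\ge\mathfrak{m}(t)\,\mathcal{P}'(y)$. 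Substituting this into the displayed inequality, the two contributions $\pm\mathfrak{m}(t)\mathcal{P}'(y)$ cancel and what remains is $\mathfrak{m}(t)\,\mathcal{P}(y)\ge0$, which holds since $\mathcal{K}\ge0$ forces $\mathfrak{m}(t)\ge0$ and $\mathcal{P}$ takes values in $[0,\infty)$. This establishes \eqref{Lemma:Super:5}.

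There is no deep obstacle here; the whole point of the lemma is the particular choice of the weights $\mathbb{X}_{\mathcal{K}}$ and $\mathbb{Y}_{\mathcal{K}}$, engineered so that $\partial_t\psi$ produces exactly the terms $\pm\mathfrak{m}(t)\mathcal{P}'(y)$ needed to absorb the first-order defect coming from the superadditivity bound. The only care needed is bookkeeping: the $t$-derivatives of $\psi$ and the identity $\mathcal{P}(y+z)-\mathcal{P}(y)=\int_0^z\mathcal{P}'$ should be read in the weak (absolutely continuous) sense, consistently with \eqref{Lemma:Super:2} being an a.e.\ statement, and one should record that $\mathbb{X}_{\mathcal{K}}(t),\mathbb{Y}_{\mathcal{K}}(t),\mathfrak{m}(t)$ are finite under the standing assumption on $\mathcal{K}$ together with the boundedness of the time/momentum ranges in the application, so that all the interchanges of integration are justified.
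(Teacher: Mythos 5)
Your proposal is correct and follows essentially the same route as the paper: differentiate $\psi$, use $\mathbb{X}_{\mathcal{K}}'=-\mathbb{Y}_{\mathcal{K}}'=-\int_0^\infty\mathcal{K}(t,z)z\,\mathrm{d}z$, write $\mathcal{P}(y+z)-\mathcal{P}(y)=\int_0^z\mathcal{P}'(y+s)\,\mathrm{d}s$ and invoke \eqref{Lemma:Super:2} so that the $\pm\,\mathfrak{m}(t)\mathcal{P}'(y)$ terms cancel, leaving the nonnegative term $\mathfrak{m}(t)\,\mathcal{P}(y)\ge 0$. This is exactly the cancellation exploited in the paper's proof, so no further comment is needed.
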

\begin{proof}

Plugging the expression~\eqref{Lemma:Super:4} into~\eqref{Lemma:Super:1}, we obtain
\begin{equation*}
	\begin{aligned}
		& \partial_t\psi(t,x) + \int_{0}^\infty \mathrm{d}z\, \mathcal{K}(t,z)\big[\psi(t,x+z) - \psi(t,z)\big] \\
		=\; & \mathbb{Y}_{\mathcal{K}}'(t)\, e^{\mathbb{Y}_{\mathcal{K}}(t)} \mathcal{P}\big(\mathbb{X}_{\mathcal{K}}(t)+x\big) 
		+ \mathbb{X}_{\mathcal{K}}'(t)\, e^{\mathbb{Y}_{\mathcal{K}}(t)} \mathcal{P}'\big(\mathbb{X}_{\mathcal{K}}(t)+x\big) \\
		& + \int_{0}^\infty \mathrm{d}z\, \mathcal{K}(t,z)\left[e^{\mathbb{Y}_{\mathcal{K}}(t)} \mathcal{P}\big(\mathbb{X}_{\mathcal{K}}(t)+x+z\big) - e^{\mathbb{Y}_{\mathcal{K}}(t)} \mathcal{P}\big(\mathbb{X}_{\mathcal{K}}(t)+x\big)\right] \\
		=\; & \mathbb{Y}_{\mathcal{K}}'(t)\, e^{\mathbb{Y}_{\mathcal{K}}(t)} \mathcal{P}\big(\mathbb{X}_{\mathcal{K}}(t)+x\big) 
		+ \mathbb{X}_{\mathcal{K}}'(t)\, e^{\mathbb{Y}_{\mathcal{K}}(t)} \mathcal{P}'\big(\mathbb{X}_{\mathcal{K}}(t)+x\big) \\
		& + \int_{0}^\infty \mathrm{d}z\, \mathcal{K}(t,z)\, e^{\mathbb{Y}_{\mathcal{K}}(t)} \int_{0}^z \mathrm{d}z'\,\mathcal{P}'\big(\mathbb{X}_{\mathcal{K}}(t)+x+z'\big),
	\end{aligned}
\end{equation*}
which, together with~\eqref{Lemma:Super:2} and~\eqref{Lemma:Super:3}, yields
\begin{equation*} \label{Lemma:Supersol:E3}
	\begin{aligned}
		& \partial_t\psi(t,x) + \int_{0}^\infty \mathrm{d}z\, \mathcal{K}(t,z)\big[\psi(t,x+z) - \psi(t,z)\big] \\
		=\; & \int_{0}^\infty \mathrm{d}z\, \mathcal{K}(t,z)\, e^{\mathbb{Y}_{\mathcal{K}}(t)} \mathcal{P}\big(\mathbb{X}_{\mathcal{K}}(t)+x\big) z
		- \int_{0}^\infty \mathrm{d}z\, \mathcal{K}(t,z)\, e^{\mathbb{Y}_{\mathcal{K}}(t)} \mathcal{P}'\big(\mathbb{X}_{\mathcal{K}}(t)+x\big)z \\
		& + \int_{0}^\infty \mathrm{d}z\, \mathcal{K}(t,z)\, e^{\mathbb{Y}_{\mathcal{K}}(t)} \int_{0}^z \mathrm{d}z'\,\mathcal{P}'\big(\mathbb{X}_{\mathcal{K}}(t)+x+z'\big) \\
		=\; & \int_{0}^\infty \mathrm{d}z\, \mathcal{K}(t,z)\, e^{\mathbb{Y}_{\mathcal{K}}(t)} \mathcal{P}\big(\mathbb{X}_{\mathcal{K}}(t)+x\big) z\\
		& + \int_{0}^\infty \mathrm{d}z\, \mathcal{K}(t,z)\, e^{\mathbb{Y}_{\mathcal{K}}(t)} \int_{0}^z \mathrm{d}z'\,\Big[\mathcal{P}'\big(\mathbb{X}_{\mathcal{K}}(t)+x+z'\big) - \mathcal{P}'\big(\mathbb{X}_{\mathcal{K}}(t)+x\big)\Big] \\
		\ge\; & \int_{0}^\infty \mathrm{d}z\, \mathcal{K}(t,z)\, e^{\mathbb{Y}_{\mathcal{K}}(t)} \mathcal{P}\big(\mathbb{X}_{\mathcal{K}}(t)+x\big)z \ \ge \ 0.
	\end{aligned}
\end{equation*}

This concludes the proof of the lemma.

\end{proof}

\begin{proposition}[First multiscale estimates] 	We assume Assumption A and Assumption B. 

	\label{Lemma:Growth2} Let \( T_0 \) be as defined in~\eqref{T0}. We choose $0\le T<T_0$ if $T_0>0$ and $ T=0$ if $T_0=0$. By the definitions in~\eqref{Sec:Growthlemmas:2},~\eqref{Sec:Growthlemmas:3}, and~\eqref{Sec:Growthlemmas:4}, there exists a constant \( \mathfrak{M}_* \in \mathbb{N} \) such that for all \( \mathfrak{m} > \mathfrak{M}_* \) and \( \mathfrak{m} \in \mathbb{N} \),
	\begin{equation} \label{Lemma:Growth2:1}
		\left| \mathscr{C}_{\mathfrak{m}}^T \right| \le {C}_\mathscr{C} R_{\mathfrak{m}}^{2\delta - \varepsilon - \varrho},
	\end{equation}
	for some universal constant \( C_\mathscr{C} > 0 \), independent of \( \delta, \varepsilon, \varrho, T, T_0 \).

\end{proposition}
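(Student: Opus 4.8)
The plan is to show that $t\in\mathscr C_\mathfrak m^T$ forces a non‑negligible amount of mass of $G(t,\cdot)$ to sit inside a thin frequency band staying uniformly away from the origin, and that such a band can be occupied only for a controlled amount of time because mass there is continually transported toward $0$; this transport is quantified by comparison with an explicit super‑solution built from Lemma \ref{Lemma:Super}. First I would unwind the definitions: by \eqref{Sec:Growthlemmas:1}, \eqref{Sec:Growthlemmas:3} and \eqref{Sec:Growthlemmas:4}, $t\in\mathscr C_\mathfrak m^T$ means $\int_{\mathscr O_i^{h_\mathfrak m,R_\mathfrak m}}\mathrm d\omega\,G(t,\omega)\ge\mathcal C_* R_{\mathfrak m+1}^{\rho}$ for some $i$ with $2^{\mathfrak N_\mathfrak m-1}-1\le i\le 2^{\mathfrak N_\mathfrak m}-1$. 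All these overlapping subdomains sit inside the annulus $\mathscr A_\mathfrak m:=\big[(2^{\mathfrak N_\mathfrak m-1}-2)h_\mathfrak m,\,R_\mathfrak m\big)$, whose left endpoint is comparable to $R_{\mathfrak m+1}=\tfrac12R_\mathfrak m$, so in particular $\int_{\mathscr A_\mathfrak m}\mathrm d\omega\,G(t,\omega)\ge\mathcal C_* R_{\mathfrak m+1}^{\rho}$ for every $t\in\mathscr C_\mathfrak m^T$, with $\mathscr A_\mathfrak m$ of width $\lesssim R_\mathfrak m$ and separated from $\{0\}$.

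\textbf{The comparison.} I would test the $G$–equation \eqref{Lemma:Global:E6} against the time‑dependent weight $\psi(t,\omega)=e^{\mathbb Y_{\mathcal K}(t)}\mathcal P(\mathbb X_{\mathcal K}(t)+\omega)$ from Lemma \ref{Lemma:Super}. Here one isolates from $\mathfrak R_1,\mathfrak R_3$ (written through the kernels \eqref{Lemma:Global:E3}, \eqref{Lemma:GlobalExist:E4}, \eqref{Lemma:Global:E5}) the coagulation part — the only piece that moves mass to higher frequencies — dominates its transport rate by a nonnegative kernel $\mathcal K(t,z)$ whose first moment $\int z\,\mathcal K(t,\mathrm dz)$ is controlled by the a priori energy bound $\int G(t,\omega)\,\omega\,\mathrm d\omega\le\mathscr E$ (so $\mathbb X_{\mathcal K},\mathbb Y_{\mathcal K}$ of \eqref{Lemma:Super:3} are finite and $\lesssim\mathscr E T$ on $[0,T]$), and chooses $\mathcal P$ convex with $\mathcal P(x)\lesssim x^{2}$ near $0$ and $\mathcal P\gtrsim1$ on $\mathscr A_\mathfrak m$. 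Because $\psi(t,\cdot)$ is convex in $\omega$, the remaining fragmentation and transfer‑to‑the‑condensate contributions of $\mathfrak R_1,\mathfrak R_2,\mathfrak R_3$ have the favourable sign — the same mechanism as in Lemma \ref{lemma:Concave}, with the convexity reversed — so that, via the supersolution property \eqref{Lemma:Super:5}, the quantity $\int_{\mathbb R_+}G(t,\omega)\psi(t,\omega)\,\mathrm d\omega$ is controlled in terms of its value at $t=0$ plus a nonnegative contribution proportional to the $\mathscr A_\mathfrak m$–occupation $\int_0^T\!\int_{\mathscr A_\mathfrak m}G$. Since $\int_{\{0\}}G(0)=0$ and \eqref{Theorem1:4} make $\int G(0,\omega)\,\mathcal P(\mathbb X_{\mathcal K}(0)+\omega)\,\mathrm d\omega$ small, while on $\mathscr C_\mathfrak m^T$ one has $\int_{\mathscr A_\mathfrak m}G(t)\gtrsim R_\mathfrak m^\rho$, this yields $|\mathscr C_\mathfrak m^T|$ times a power of $R_\mathfrak m$ bounded by a constant depending only on $\mathscr M,\mathscr E,\mu$ and the collision constants, hence $|\mathscr C_\mathfrak m^T|\lesssim R_\mathfrak m^{\,s}$ for an explicit $s$.

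\textbf{Exponent bookkeeping and the main obstacle.} It remains to verify $s\ge 2\delta-\varepsilon-\varrho$ with a constant independent of $\delta,\varepsilon,\varrho,T,T_0$, and this — rather than any single inequality — is the crux. The one‑sided lower bounds on $\mathfrak A$, $\Theta$ and $\mathcal W_{22}$ near the origin coming from \eqref{CondDisper}, \eqref{CondA}, \eqref{CondTheta} all degenerate as powers of $R_\mathfrak m$, and they are balanced only by the width $h_\mathfrak m=R_\mathfrak m\,2^{-\mathfrak N_\mathfrak m}$ of the band together with the choice $\mathfrak N_\mathfrak m=\big\lfloor\frac{\mathfrak m}{4(2+\mu+\varrho)}\big(2\delta-\tfrac12-\varrho\big)\big\rfloor$ — legitimate thanks to \eqref{Parameters} — and the admissible ranges of $\rho,\varepsilon$ in \eqref{rho}; these are calibrated precisely so that the gain beats the degeneracy while still leaving $s\ge 2\delta-\varepsilon-\varrho$. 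The constant $\mathfrak M_*$ is then fixed so that, for $\mathfrak m>\mathfrak M_*$, one has $2R_\mathfrak m<1$ and $h_\mathfrak m\ll R_\mathfrak m$, which is all that is needed for the one‑sided estimates above to hold. The genuinely delicate feature is that the argument must be indifferent to how the mass of $G$ is distributed inside $\mathscr A_\mathfrak m$ — it might even concentrate into an atom away from the origin — which is exactly why the convex comparison of Lemma \ref{Lemma:Super}, rather than a direct application of the dissipation identity \eqref{Lemma:Concave:1}, is the right tool here.
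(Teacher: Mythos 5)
Your weight is pointed the wrong way, and with it the sign structure you rely on disappears. You choose a convex $\mathcal P$ vanishing to second order at the origin and of size one on the band $\mathscr A_{\mathfrak m}$, and you claim the remaining contributions of $\mathfrak R_1,\mathfrak R_2,\mathfrak R_3$ then ``have the favourable sign''. For $C_{22}$ the convexity property \eqref{4wavepre:convex1} gives exactly the opposite of what you need: tested against a convex weight, the $C_{22}$ contribution is nonnegative, so it can never produce the negative, quantitative band-occupation term whose time integral you intend to bound by $\mathscr M+\mathscr E$; and for $C_{12}$, $C_{31}$ the paper itself stresses (see \eqref{4wavepre:convex3}--\eqref{4wavepre:convex4}) that no sign is available even for $\phi\equiv1$, so ``favourable sign'' is unjustified there as well. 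More fundamentally, the downward $C_{22}$ flux out of the band is cubic: two band particles must collide with a low-frequency partner, so its size is proportional to the mass of $G(t,\cdot)$ near the origin, a quantity for which you have no lower bound at positive times. Without such a lower bound the band can hold its mass indefinitely as far as your inequality is concerned, and no exponent bookkeeping (which you name as the crux but do not carry out) can yield $\left|\mathscr C_{\mathfrak m}^T\right|\lesssim R_{\mathfrak m}^{2\delta-\varepsilon-\varrho}$ with a constant independent of $T,T_0$ and of $c_{\mathrm{ini}}$. You also misread \eqref{Theorem1:4}: it is a \emph{lower} bound on the initial mass near zero and cannot make $\int G(0,\omega)\mathcal P(\mathbb X_{\mathcal K}(0)+\omega)\,\mathrm d\omega$ small; in the paper it plays the opposite role, as a seed.

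The paper's proof supplies precisely the two ingredients you are missing, and in a different logical shape. It argues by contradiction against the squared, time-integrated band occupation \eqref{Lemma:Growth2:E0a}, whose negation immediately yields \eqref{Lemma:Growth2:1} from the definition \eqref{Sec:Growthlemmas:3}. It takes a \emph{decreasing} convex $\mathcal P$ concentrated on $[0,R_{\mathfrak m}/20]$, so that the supersolution $\Psi_t$ of Lemma \ref{Lemma:Super} (with kernel $\mathcal K$ built from the band-restricted $G$) protects the low-frequency weighted mass from being scattered upward by the band; the surviving $C_{22}$ piece then gives a positive contribution of size $R_{\mathfrak m}^{-2\delta}\bigl[\int G\Psi_t\bigr]\bigl[\int_{\mathscr O_{\sigma(t)}^{h_{\mathfrak m},R_{\mathfrak m}}}G\bigr]^2$, see \eqref{Lemma:Growth2:E24}, i.e.\ a Gronwall term in $\int G\Psi_t$ itself, seeded by \eqref{Theorem1:4} through \eqref{Lemma:Growth2:E26} and using $c_{\mathrm{ini}}<\theta$ to dominate the $R_{\mathfrak m}^{\theta}$ error terms coming from $C_{12}$ and $C_{31}$. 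The contradiction hypothesis then forces exponential growth of $\int G\Psi_t$ beyond the conserved bound $\mathscr M+\mathscr E$, since $2\rho-\varrho-\varepsilon<0$ by \eqref{rho}. This exponential amplification is what makes the final exponent $2\delta-\varepsilon-\varrho$ independent of $C_{\mathrm{ini}},c_{\mathrm{ini}}$; a linear-in-time dissipation accounting of the kind you sketch cannot achieve that.
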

\begin{proof}
We first prove, by contradiction, that for a constant $\mathfrak{C}_1 > 0$ independent of $\mathfrak{m}$ and $G$, there does not exist an infinite set 
$\mathfrak{S}_* = \{\mathfrak{m}_1, \mathfrak{m}_2, \dots\}$ such that for all $\mathfrak{m}_j \in \mathfrak{S}_*$, there exists a subset $\mathscr{C}_{\mathfrak{m}_j}'^{T}\subset \mathscr{C}_{\mathfrak{m}_j}^{T}$ and for each $t\in\mathscr{C}_{\mathfrak{m}_j}'^{T}$, there exists 
 $\sigma(t) \in \{2^{\mathfrak{N}_{\mathfrak{m}_j } - 1} - 1,\,\cdots, 2^{\mathfrak{N}_{\mathfrak{m}_j }} - 1\}$ satisfying
\begin{equation} \label{Lemma:Growth2:E0a}
	\int_{0}^{T} \mathrm{d}t \left[\int_{\mathbb{R}_+} \mathrm{d}\omega \,
	G(t, \omega)\chi_{\omega \in \mathscr O_{\sigma(t)}^{h_{\mathfrak{m}_j}, R_{{\mathfrak{m}_j }}}} \chi_{\mathscr{C}_{\mathfrak{m}_j}'^{T}}(t)\right]^2  
	> \frac{\mathfrak{C}_1 R_{{\mathfrak{m}_j }}^{2\delta + 2\rho - \varepsilon}}{\Theta(R_{\mathrm{m}_j}/2)}.
\end{equation}

If the above claim does not hold, then there exists a constant \(\mathfrak{M}_* \in \mathbb{N}\) such that for all \(\mathfrak{m} > \mathfrak{M}_*\) 
and for all $\sigma \in \{2^{\mathfrak{N}_{\mathfrak{m} } - 1} - 1,\,\cdots, 2^{\mathfrak{N}_{\mathfrak{m} }} - 1\}$
\begin{equation} \label{Lemma:Growth2:E0:1}
	\int_{0}^{T} \mathrm{d}t \left[\int_{\mathbb{R}_+} \mathrm{d}\omega \,
	G(t, \omega)\, \chi_{\left\{ \omega \in \mathscr{O}_{\sigma}^{h_{\mathfrak{m}}, R_{\mathfrak{m}}} \right\}}\, \chi_{\mathscr{C}_{\mathfrak{m}}^{T}}(t)\right]^2  
	\le \frac{\mathfrak{C}_1 R_{\mathfrak{m}}^{2\delta + 2\rho - \varepsilon}}{\Theta(R_{\mathfrak{m}}/2)},
\end{equation}
which, in combination with~\eqref{Sec:Growthlemmas:3} and Assumption A, implies 	
\begin{equation} \label{Lemma:Growth2:E0:2}
	\left| \mathscr{C}_{\mathfrak{m}}^T \right|\, R_{\mathfrak{m} +1}^{2\rho} 
	\le\ C' R_{\mathfrak{m}}^{2\delta + 2\rho - \varepsilon - \varrho},
\end{equation}
for some constant \(C' > 0\), yielding
\begin{equation} \label{Lemma:Growth2:E0:3}
	\left| \mathscr{C}_{\mathfrak{m}}^T \right| \le\ C' R_{\mathfrak{m}}^{2\delta - \varepsilon - \varrho}.
\end{equation}
We then obtain the desired estimate~\eqref{Lemma:Growth2:1}.

Suppose, for contradiction, that \eqref{Lemma:Growth2:E0a} holds. Then there must exist a time ${\mathscr{T}}_{{\mathfrak{m}_j }} \in [0, T]$ such that
\begin{equation} \label{Lemma:Growth2:E10b}
	\int_{0}^{\mathscr{T}_{{\mathfrak{m}_j }}} \mathrm{d}t \left[\int_{\mathbb{R}_+} \mathrm{d}\omega \,
	G(t, \omega)\chi_{\omega \in \mathscr O_{\sigma(t)}^{h_{\mathfrak{m}_j}, R_{{\mathfrak{m}_j }}}} \chi_{\mathscr{C}_{\mathfrak{m}_j}'^{{\mathscr{T}}_{{\mathfrak{m}_j }}}}(t)\right]^2  
	= \frac{\mathfrak{C}_1 R_{{\mathfrak{m}_j }}^{2\delta + 2\rho - \varepsilon}}{ \Theta(R_{{\mathfrak{m}_j }}/2)}.
\end{equation}

We define
\[
\mathfrak{G}_{h_{{\mathfrak{m}_j }}, R_{{\mathfrak{m}_j }}}(t, \omega) := G(t, \omega) \, \chi_{\omega \in \mathscr{O}_{\sigma(t)}^{h_{\mathfrak{m}_j}, R_{{\mathfrak{m}_j }}}} \, \chi_{\mathscr{C}_{\mathfrak{m}_j}'^{\mathscr{T}_{{\mathfrak{m}_j }}}}(t),
\]
and
\[
\mathcal{K}(t, z) :=  \frac{c_{22}C_{\text{disper}}^{2\delta}\Theta(R_{{\mathfrak{m}_j }}/2)}{ R_{{\mathfrak{m}_j }}^{2\delta}} 
\int_{\mathbb{R}_+} \mathrm{d}\omega \,
\mathfrak{G}_{h_{{\mathfrak{m}_j }}, R_{{\mathfrak{m}_j }}}(t, \omega) \,
\mathfrak{G}_{h_{{\mathfrak{m}_j }}, R_{{\mathfrak{m}_j }}}(t, \omega + z),
\]
for $(t, z) \in [0, \infty)\times[0,3h_\mathfrak m]$.  
We now divide the remainder of the proof into several smaller steps.

\subsubsection*{Step 1. \textit{The test function}}

We apply Lemma \ref{Lemma:Super} by defining the function $\mathcal{P}(z): [0,\infty) \to [0,\infty)$, using the same notation as in Lemma \ref{Lemma:Super}, as follows:
\begin{equation}\label{Lemma:ConcreteSuper:E2}
	\mathcal{P}(z) := \exp\left[\frac{1}{R_{{\mathfrak{m}_j }}}\left(\frac{R_{\mathfrak{m}_j}}{20} - z\right)_+\right] - 1 = \exp\left[\frac{1}{R_{\mathfrak{m}_j}} \max\left\{\frac{R_{\mathfrak{m}_j}}{20} - z, 0\right\}\right] - 1.
\end{equation}

Next, we verify that $\mathcal{P}$ satisfies condition \eqref{Lemma:Super:2}:
\begin{equation*}
	\mathcal{P}'(z) = -\frac{1}{R_{{\mathfrak{m}_j}}} \exp\left[\frac{1}{R_{\mathfrak{m}_j}} \max\left\{\frac{R_{\mathfrak{m}_j}}{20} - z, 0\right\}\right],
\end{equation*}
for almost every \( z \in [0, \infty) \). Note that the derivative is understood in the weak sense.

We now define the auxiliary functions:
\begin{equation}
	\begin{aligned}
		\label{Lemma:Supersolu:E3}
		\mathbb{X}_{\mathcal{K}}(t) &= \int_{t}^{\mathscr{T}_{h_{{\mathfrak{m}_j}}, R_{{\mathfrak{m}_j}}}} \mathrm{d}s \int_{0}^{3h_{\mathfrak{m}_j}} \mathrm{d}z\, \mathcal{K}(s,z)z, \\
		\mathbb{Y}_{\mathcal{K}}(t) &= \int_{0}^{t} \mathrm{d}s \int_{0}^{3h_{\mathfrak{m}_j}} \mathrm{d}z\, \mathcal{K}(s,z).
	\end{aligned}
\end{equation}

Then, by Lemma~\ref{Lemma:Super}, the function
\begin{equation}
	\label{Lemma:Supersolu:E4}
	\Psi(t,x) \ = \ \Psi_t(x) \  := \ e^{\mathbb{Y}_{\mathcal{K}}(t)}\, 	\mathcal{P}\left(\mathbb{X}_{\mathcal{K}}(t) + x\right)
\end{equation}
satisfies~\eqref{Lemma:Super:5}, and thus is a supersolution:
\begin{equation}
	\label{Lemma:Supersolu:E4a}
	\partial_t\Psi_t(x) + \int_{0}^\infty \mathrm{d}z\, \mathcal{K}\left[\Psi_t(x+z) - \Psi_t(x)\right] \ge 0,
\end{equation}
for almost every $(t,x) \in [0,\infty)^2$.

We observe that
\begin{equation}
	\label{DerivativePsi}
	\partial_x \Psi_t(x) = \mathcal{P}'\left(\mathbb{X}_{\mathcal{K}}(t) + x\right) \le 0, \qquad 
	\partial_{xx} \Psi_t(x) = \mathcal{P}''\left(\mathbb{X}_{\mathcal{K}}(t) + x\right) \ge 0
\end{equation}
for  almost every $(t,x) \in [0,\infty)^2$.

Next, we estimate
\begin{equation}
	\begin{aligned}
		\label{Lemma:Supersolu:E5}
		\mathbb{X}_{\mathcal{K}}(t) &= \int_{t}^{\mathscr{T}_{h_{\mathfrak{m}_j}, R_{\mathfrak{m}_j}}} \mathrm{d}s \int_{0}^{3h_{\mathfrak{m}_j}} \mathrm{d}z\, \mathcal{K}(s,z) z
		\ \le\ \int_{0}^{\mathscr{T}_{h_{\mathfrak{m}_j}, R_{\mathfrak{m}_j}}} \mathrm{d}s \int_{0}^{3h_{\mathfrak{m}_j}} \mathrm{d}z\, \mathcal{K}(s,z)z \\
		&\le\ \frac{ c_{22} C_{\text{disper}}^{2\delta} \Theta(R_{\mathfrak{m}_j}/2)}{R_{\mathfrak{m}_j}^{2\delta}} \cdot 3 h_{\mathfrak{m}_j} 
		\int_{0}^{\mathscr T_{{h}_{\mathfrak{m}_j}, {R}_{\mathfrak{m}_j}}} \mathrm{d}s \left[ \int_{\mathbb{R}_+} \mathrm{d}\omega\, \mathfrak{G}_{{h}_{\mathfrak{m}_j}, R_{\mathfrak{m}_j}}(s,\omega) \right]^2 \\
		&\le\ \frac{ c_{22} C_{\text{disper}}^{2\delta} \Theta(R_{\mathfrak{m}_j}/2)}{R_{\mathfrak{m}_j}^{2\delta}} \cdot 3 h_{\mathfrak{m}_j} 
		\cdot \frac{\mathfrak{C}_1 R_{\mathfrak{m}_j}^{2\delta + 2\rho - \varepsilon}}{\Theta(R_{\mathfrak{m}_j}/2)} \\
		&\le\  c_{22} C_{\text{disper}}^{2\delta} \cdot 3 \mathfrak{C}_1 h_{\mathfrak{m}_j} R_{\mathfrak{m}_j}^{2\rho - \varepsilon} \\
		&\le\  c_{22} C_{\text{disper}}^{2\delta} \cdot 3 \mathfrak{C}_1 \cdot \frac{R_{\mathfrak{m}_j}^{1 + 2\rho - \varepsilon}}{2^{\left\lfloor \frac{\mathfrak{m}_j}{4(2 + \mu + \varrho)} \left(2\delta - \tfrac{1}{2} - \varrho  \right) \right\rfloor}} \\
		&\le\  c_{22} C_{\text{disper}}^{2\delta} \cdot 3 \mathfrak{C}_1 \cdot \frac{R_{\mathfrak{m}_j}2^{\varepsilon\mathfrak{m}_j}}{2^{\left\lfloor \frac{\mathfrak{m}_j}{4(2 + \mu + \varrho)} \left(2\delta - \tfrac{1}{2} - \varrho  \right) \right\rfloor}} \  \le\ \frac{R_{\mathfrak{m}_j}}{100},
	\end{aligned}
\end{equation}
for $\mathfrak{m}_j$ sufficiently large, where we have used the definitions of ${h}_{\mathfrak{m}_j}$ and $R_{\mathfrak{m}_j}$ and the fact that 
$$\varepsilon  \le \frac{2\delta - \frac{1}{2} - \varrho }{5(2 + \mu + \varrho)}.$$

Similarly, we bound
\begin{equation}
	\begin{aligned}
		\label{Lemma:Supersolu:E6}    
		\mathbb{Y}_{\mathcal{K}}(t) 
		&= \int_{0}^{t} \mathrm{d}s \int_{0}^{3h_{\mathfrak{m}_j}} \mathrm{d}z\, \mathcal{K}(s,z) z
		\ \le\ \int_{0}^{\mathscr{T}_{h_{\mathfrak{m}_j}, R_{\mathfrak{m}_j}}} \mathrm{d}s \int_{0}^{3h_{\mathfrak{m}_j}} \mathrm{d}z\, \mathcal{K}(s,z)z \\
		&\le\ c_{22} C_{\text{disper}}^{2\delta} \cdot  3\mathfrak{C}_1 h_{\mathfrak{m}_j}  R_{\mathfrak{m}_j}^{2\rho - \varepsilon} \ \le\ \frac{1}{10^6},
	\end{aligned}
\end{equation}
for $j$ sufficiently large.

Combining \eqref{Lemma:Supersolu:E5} and \eqref{Lemma:Supersolu:E6}, we estimate
\begin{equation}
	\begin{aligned}
		\label{Lemma:Supersolu:E7} 
		\Psi(t,x) &= e^{\mathbb{Y}_{\mathcal{K}}(t)}\, \mathcal{P}\left(\mathbb{X}_{\mathcal{K}}(t) + x\right) \\
		&\le\ e^{\frac{1}{10^6}} \left[ \exp\left( \frac{1}{R_{\mathfrak{m}_j}} \max\left\{ \frac{R_{\mathfrak{m}_j}}{20} - x - \mathbb{X}_{\mathcal{K}}(t), 0 \right\} \right) - 1 \right] \\
		&\le\ e^{\frac{1}{10^6}} \left( e^{1/20} - 1 \right) \ \le\ 0.06.
	\end{aligned}
\end{equation}

When $x \in \left[0, \frac{R_{\mathfrak{m}_j}}{50}\right]$, we find
\begin{equation}
	\begin{aligned}
		\label{Lemma:Supersolu:E8} 
		\Psi(t,x) &= e^{\mathbb{Y}_{\mathcal{K}}(t)}\, \mathcal{P}\left(\mathbb{X}_{\mathcal{K}}(t) + x\right) \\
		&\ge\ \exp\left( \frac{1}{R_{\mathfrak{m}_j}} \max\left\{ \frac{R_{\mathfrak{m}_j}}{20} - x - \mathbb{X}_{\mathcal{K}}(t), 0 \right\} \right) - 1 \ 
		\ge\ e^{1/50} - 1 \ \ge\ 0.02.
	\end{aligned}
\end{equation}

Using $\Psi_t$ as a test function, similar to \eqref{Lemma:Concave:E8a}, \eqref{Lemma:Concave:E2}, and \eqref{Lemma:Concave:E12a}, we find
\begin{equation} \label{Lemma:Growth2:E1}
	\partial_t\left( \int_{\mathbb{R}^{+}} \mathrm{d}\omega\, G \Psi_t \right)
	= \int_{\mathbb{R}^{+}} \mathrm{d}\omega\, G \, \partial_t \Psi_t + \mathfrak{F}_1 + \mathfrak{F}_2 + \mathfrak{F}_3,
\end{equation}
where
\begin{equation} \label{Lemma:Growth2:E2}
	\begin{aligned}
		\mathfrak{F}_1 :=\; & 2c_{12} \iint_{\omega_1 > \omega_2} \mathrm{d}\omega_1 \mathrm{d}\omega_2\, \mathfrak{A}(\omega_1) \mathfrak{A}(\omega_2)\, f(\omega_1) f(\omega_2) \\
		& \quad \times \Big[ \mathfrak{A}(\omega_1 + \omega_2) \big( \Psi_t(\omega_1 + \omega_2) - \Psi_t(\omega_1) - \Psi_t(\omega_2) \big) \\
		& \qquad\quad - \mathfrak{A}(\omega_1 - \omega_2) \big( \Psi_t(\omega_1) - \Psi_t(\omega_1 - \omega_2) - \Psi_t(\omega_2) \big) \Big] \\
		& + c_{12} \iint_{\omega_1 = \omega_2} \mathrm{d}\omega_1 \mathrm{d}\omega_2\, \mathfrak{A}(\omega_1)^2 f(\omega_1)^2\, \mathfrak{A}(2\omega_1)\, \big[ \Psi_t(2\omega_1) - 2\Psi_t(\omega_1) \big],
	\end{aligned}
\end{equation}

\begin{equation} \label{Lemma:Growth2:E3}
	\begin{aligned}
		\mathfrak{F}_2 :=\; & c_{22} \iiint_{\mathbb{R}_+^3} \mathrm{d}\omega_1\,\mathrm{d}\omega_2\,\mathrm{d}\omega\, f_1 f_2 f\, \mathbf{1}_{\omega+\omega_1-\omega_2 \ge 0} 
		\left[ \frac{\max\{|\omega_1 - \omega_2|,\ |\omega_2 - \omega|\}}{2(\omega + \omega_1)} \right]^\mu \\
		& \times \Big\{ 
		[-\Psi_t(\omega_{\text{Max}}) - \Psi_t(\omega_{\text{Min}}) + \Psi_t(\omega_{\text{Mid}}) + \Psi_t(\omega_{\text{Max}} + \omega_{\text{Min}} - \omega_{\text{Mid}})] \\
		& \qquad \times \prod_{x \in \{\omega_{\text{Max}}, \omega_{\text{Min}}, \omega_{\text{Mid}}, \omega_{\text{Max}} + \omega_{\text{Min}} - \omega_{\text{Mid}}\}}\Theta(x)\, |k_{\text{Min}}| \\
		& + [-\Psi_t(\omega_{\text{Max}}) - \Psi_t(\omega_{\text{Mid}}) + \Psi_t(\omega_{\text{Min}}) + \Psi_t(\omega_{\text{Max}} + \omega_{\text{Mid}} - \omega_{\text{Min}})] \\
		& \qquad \times \prod_{x \in \{\omega_{\text{Max}}, \omega_{\text{Mid}}, \omega_{\text{Min}}, \omega_{\text{Max}} + \omega_{\text{Mid}} - \omega_{\text{Min}}\}}\Theta(x)\, |k_{\text{Min}}| \\
		& + [-\Psi_t(\omega_{\text{Min}}) - \Psi_t(\omega_{\text{Mid}}) + \Psi_t(\omega_{\text{Max}}) + \Psi_t(\omega_{\text{Min}} + \omega_{\text{Mid}} - \omega_{\text{Max}})] \\
		& \qquad \times \mathbf{1}_{\omega_{\text{Min}} + \omega_{\text{Mid}} - \omega_{\text{Max}} \ge 0} \prod_{x \in \{\omega_{\text{Max}}, \omega_{\text{Mid}}, \omega_{\text{Min}}, \omega_{\text{Min}} + \omega_{\text{Mid}} - \omega_{\text{Max}}\}}\Theta(x) \\
		& \qquad \times \min \left\{ |k|(\omega_{\text{Max}}), |k|(\omega_{\text{Min}}), |k|(\omega_{\text{Mid}}), |k|(\omega_{\text{Min}} + \omega_{\text{Mid}} - \omega_{\text{Max}}) \right\}
		\Big\}
	\end{aligned}
\end{equation}

and
\begin{equation} \label{Lemma:Growth2:E4}
	\begin{aligned}
		\mathfrak{F}_3 :=\; & 3c_{31} \iiint_{\omega_1 > \omega_2 + \omega_3} \mathrm{d}\omega_1\, \mathrm{d}\omega_2\, \mathrm{d}\omega_3\, \mathfrak{A}(\omega_1) \mathfrak{A}(\omega_2) \mathfrak{A}(\omega_3)\, f(\omega_1) f(\omega_2) f(\omega_3) \\
		& \times \Big[ \mathfrak{A}(\omega_1 + \omega_2 + \omega_3) \big( \Psi_t(\omega_1 + \omega_2 + \omega_3) - \Psi_t(\omega_1) - \Psi_t(\omega_2) - \Psi_t(\omega_3) \big) \\
		& \quad - \mathfrak{A}(\omega_1 - \omega_2 - \omega_3) \big( \Psi_t(\omega_1) - \Psi_t(\omega_1 - \omega_2 - \omega_3) - \Psi_t(\omega_2) - \Psi_t(\omega_3) \big) \Big] \\
		& + c_{31} \iint_{\omega_1 = \omega_2 + \omega_3} \mathrm{d}\omega_1\, \mathrm{d}\omega_2\, \mathfrak{A}(\omega_1) \mathfrak{A}(\omega_2) \mathfrak{A}(\omega_3)\, f(\omega_1) f(\omega_2) f(\omega_3) \\
		& \quad \times \mathfrak{A}(3\omega_1) \big[ \Psi_t(3\omega_1) - 3\Psi_t(\omega_1) \big] \\
		& + c_{31} \iiint_{\mathbb{R}_+^3 \setminus \left( \{\omega_1 > \omega_2 + \omega_3\} \cup \{\omega_2 > \omega_1 + \omega_3\} \cup \{\omega_3 > \omega_1 + \omega_2\} \right)} \mathrm{d}\omega_1\, \mathrm{d}\omega_2\, \mathrm{d}\omega_3 \\
		& \quad \times \mathfrak{A}(\omega_1 + \omega_2 + \omega_3)\, \mathfrak{A}(\omega_1) \mathfrak{A}(\omega_2) \mathfrak{A}(\omega_3)\, f(\omega_1) f(\omega_2) f(\omega_3) \\
		& \quad \times \big[ \Psi_t(\omega_1 + \omega_2 + \omega_3) - \Psi_t(\omega_1) - \Psi_t(\omega_2) - \Psi_t(\omega_3) \big].
	\end{aligned}
\end{equation}

\subsubsection*{Step 2. \textit{Controlling} \(\mathfrak{F}_1\)}
Similarly to \eqref{Lemma:Concave:E8b}, we compute
\begin{equation*}
	\begin{aligned}
		& \mathfrak{A}(\omega_1 + \omega_2) \left( \Psi_t(\omega_1 + \omega_2) - \Psi_t(\omega_1) - \Psi_t(\omega_2) \right) \  -\ \mathfrak{A}(\omega_1 - \omega_2) \left( \Psi_t(\omega_1) - \Psi_t(\omega_1 - \omega_2) - \Psi_t(\omega_2) \right) \\
		=\ & \left[\mathfrak{A}(\omega_1 + \omega_2) - \mathfrak{A}(\omega_1 - \omega_2)\right] 
		\int_0^{\omega_2} \int_0^{\omega_1 - \omega_2} \mathrm{d}\zeta\, \mathrm{d}\zeta_0\, \partial_{\omega}^2\Psi_t(\zeta + \zeta_0) \\
		& +\ \mathfrak{A}(\omega_1 + \omega_2) 
		\int_0^{\omega_2} \int_{\omega_1 - \omega_2}^{\omega_1} \mathrm{d}\zeta\, \mathrm{d}\zeta_0\, \partial_{\omega}^2\Psi_t(\zeta + \zeta_0) \  -\ \left(\mathfrak{A}(\omega_1 + \omega_2)-\mathfrak{A}(\omega_1 - \omega_2)\right)\Psi_t(0).
	\end{aligned}
\end{equation*}
Since $\partial_{\omega}^2\Psi_t(\zeta + \zeta_0) \ge 0$, we can estimate, taking into account the fact that $\mathfrak{A}(\omega_1 + \omega_2) - \mathfrak{A}(\omega_1 - \omega_2)\ge0$,
\begin{equation}\label{Lemma:Growth2:E5}
	\begin{aligned}
		& \mathfrak{A}(\omega_1 + \omega_2) \left( \Psi_t(\omega_1 + \omega_2) - \Psi_t(\omega_1) - \Psi_t(\omega_2) \right) \\
		&\quad -\ \mathfrak{A}(\omega_1 - \omega_2) \left( \Psi_t(\omega_1) - \Psi_t(\omega_1 - \omega_2) - \Psi_t(\omega_2) \right) \\
		\ge\ & -\left( \mathfrak{A}(\omega_1 + \omega_2) - \mathfrak{A}(\omega_1 - \omega_2) \right) \Psi_t(0).
	\end{aligned}
\end{equation}
Similarly, we also have
\begin{equation}\label{Lemma:Growth2:E6}
	\mathfrak{A}(2\omega_1)\, \big[ \Psi_t(2\omega_1) - 2\Psi_t(\omega_1) \big] \ \ge\ -\mathfrak{A}(2\omega_1)\, \Psi_t(0).
\end{equation}

Combining \eqref{Lemma:Growth2:E5} and \eqref{Lemma:Growth2:E6}, we obtain the estimate
\begin{equation}\label{Lemma:Growth2:E7}
	\begin{aligned}
		\mathfrak{F}_1 \ \ge\ & -2c_{12}\Psi_t(0) \iint_{R_{\mathfrak{m}_j} \ge \omega_1 > \omega_2\ge 0} \mathrm{d}\omega_1\, \mathrm{d}\omega_2\, \mathfrak{A}(\omega_1)\, \mathfrak{A}(\omega_2)\, f(\omega_1) f(\omega_2)\left( \mathfrak{A}(\omega_1 + \omega_2) - \mathfrak{A}(\omega_1 - \omega_2) \right)  \\
		& - c_{12}\Psi_t(0) \iint_{R_{\mathfrak{m}_j} \ge \omega_1 = \omega_2\ge 0} \mathrm{d}\omega_1\, \mathrm{d}\omega_2\, \mathfrak{A}(\omega_1)\, \mathfrak{A}(\omega_2)\, f(\omega_1) f(\omega_2)\, \mathfrak{A}(2\omega_1),
	\end{aligned}
\end{equation}
where we have used the fact that the supports of both $\mathfrak{A}(\omega_1 + \omega_2) \left( \Psi_t(\omega_1 + \omega_2) - \Psi_t(\omega_1) - \Psi_t(\omega_2) \right) - \mathfrak{A}(\omega_1 - \omega_2) \left( \Psi_t(\omega_1) - \Psi_t(\omega_1 - \omega_2) - \Psi_t(\omega_2) \right)$ and $\Psi_t(2\omega_1) - 2\Psi_t(\omega_1)$ are contained within the interval $[0,R_{\mathfrak{m}_j}/2]$. Using Assumption A, we deduce from \eqref{Lemma:Growth2:E7} that 
\begin{equation}\label{Lemma:Growth2:E8}
	\begin{aligned}
		\mathfrak{F}_1 \ \ge\ & -C_0\Psi_t(0)R_{\mathfrak{m}_j}^{\theta} \iint_{ \omega_1, \omega_2 \in [0,R_{\mathfrak{m}_j}]} \mathrm{d}\omega_1\, \mathrm{d}\omega_2\, \mathfrak{A}(\omega_1)\, \mathfrak{A}(\omega_2)\, f(\omega_1) f(\omega_2) \\
		\ge\ & -C_0\Psi_t(0)R_{\mathfrak{m}_j}^{\theta} \iint_{ \omega_1, \omega_2 \in [0,R_{\mathfrak{m}_j}]} \mathrm{d}\omega_1\, \mathrm{d}\omega_2\, G(\omega_1) G(\omega_2) \ 
		\ge\   -C_0R_{\mathfrak{m}_j}^{\theta}\Psi_t(0) \left[ \int_{ \omega \in [0,R_{\mathfrak{m}_j}]} \mathrm{d}\omega\,  G(\omega) \right]^2,
	\end{aligned}
\end{equation}
for some constant $C_0$ that is independent of $f$, \(\mathfrak m_j\) and may vary from line to line.

\subsubsection*{Step 3. \textit{Controlling} \(\mathfrak{F}_3\)}	
Similar to \eqref{Lemma:Concave:E13a}, we compute
\begin{equation*}
	\begin{aligned}
		& \mathfrak{A}(\omega_1 + \omega_2 + \omega_3) \left( \Psi_t(\omega_1 + \omega_2 + \omega_3) - \Psi_t(\omega_1) - \Psi_t(\omega_2)  - \Psi_t(\omega_3)\right) \\
		& \quad - \mathfrak{A}(\omega_1 - \omega_2 - \omega_3) \left( \Psi_t(\omega_1) - \Psi_t(\omega_1 - \omega_2 - \omega_3) - \Psi_t(\omega_2)  - \Psi_t(\omega_3)\right) \\
		=\ & [\mathfrak{A}(\omega_1 + \omega_2 + \omega_3) - \mathfrak{A}(\omega_1 - \omega_2 - \omega_3)] \\
		& \quad \times \left( \int_0^{\omega_1 - \omega_2 - \omega_3} \int_0^{\omega_2 + \omega_3} \mathrm{d}\zeta\, \mathrm{d}\zeta_0\, \partial_{\omega}^2\Psi_t(\zeta + \zeta_0) + \int_0^{\omega_2} \int_0^{\omega_3} \mathrm{d}\zeta\, \mathrm{d}\zeta_0\, \partial_{\omega}^2\Psi_t(\zeta + \zeta_0) \right) \\
		& \quad + \mathfrak{A}(\omega_1 + \omega_2 + \omega_3) \left( \int_{\omega_1 - \omega_2 - \omega_3}^{\omega_1} \int_0^{\omega_2 + \omega_3} \mathrm{d}\zeta\, \mathrm{d}\zeta_0\, \partial_{\omega}^2\Psi_t(\zeta + \zeta_0) \right) \\
		& \quad - 2\left(\mathfrak{A}(\omega_1 + \omega_2 + \omega_3) - \mathfrak{A}(\omega_1 - \omega_2 - \omega_3)\right)\Psi_t(0).
	\end{aligned}
\end{equation*}

Since \(\partial_{\omega}^2\Psi_t(\zeta + \zeta_0) \ge 0\) by \eqref{DerivativePsi}, and noting that \(\mathfrak{A}(\omega_1 + \omega_2 + \omega_3) - \mathfrak{A}(\omega_1 - \omega_2 - \omega_3) \ge 0\), we obtain the estimate
\begin{equation}\label{Lemma:Growth2:E9}
	\begin{aligned}
		& \mathfrak{A}(\omega_1 + \omega_2 + \omega_3) \left( \Psi_t(\omega_1 + \omega_2 + \omega_3) - \Psi_t(\omega_1) - \Psi_t(\omega_2) - \Psi_t(\omega_3) \right) \\
		& \quad - \mathfrak{A}(\omega_1 - \omega_2 - \omega_3) \left( \Psi_t(\omega_1) - \Psi_t(\omega_1 - \omega_2 - \omega_3) - \Psi_t(\omega_2) - \Psi_t(\omega_3) \right) \\
		\ge\ & -2\left(\mathfrak{A}(\omega_1 + \omega_2 + \omega_3) - \mathfrak{A}(\omega_1 - \omega_2 - \omega_3)\right)\Psi_t(0).
	\end{aligned}
\end{equation}

Similarly, we can also bound
\begin{equation}\label{Lemma:Growth2:E10}
	\begin{aligned}
		& \mathfrak{A}(\omega_1 + \omega_2 + \omega_3) \left( \Psi_t(\omega_1 + \omega_2 + \omega_3) - \Psi_t(\omega_1) - \Psi_t(\omega_2) - \Psi_t(\omega_3)\right) \\
		\ge\ & -2\Psi_t(0) \mathfrak{A}(\omega_1 + \omega_2 + \omega_3),
	\end{aligned}
\end{equation}
and
\begin{equation}\label{Lemma:Growth2:E11}
	\begin{aligned}
		\mathfrak{A}(3\omega_1) \left[ \Psi_t(3\omega_1) - 3\Psi_t(\omega_1) \right] 
		\ge\ -2\Psi_t(0)\mathfrak{A}(3\omega_1).
	\end{aligned}
\end{equation}

Combining \eqref{Lemma:Growth2:E9}, \eqref{Lemma:Growth2:E10}, and \eqref{Lemma:Growth2:E11}, we bound
\begin{equation} \label{Lemma:Growth2:E12}
	\begin{aligned}
		\mathfrak{F}_3 \ \ge \; & -6c_{31}\Psi_t(0) \iiint_{R_{\mathfrak{m}_j}\ge \omega_1 > \omega_2 + \omega_3\ge 0} \mathrm{d}\omega_1\, \mathrm{d}\omega_2\, \mathrm{d}\omega_3\, \mathfrak{A}(\omega_1) \mathfrak{A}(\omega_2) \mathfrak{A}(\omega_3)\, f(\omega_1) f(\omega_2) f(\omega_3) \\
		& \quad \times \left(\mathfrak{A}(\omega_1 + \omega_2 + \omega_3) - \mathfrak{A}(\omega_1 - \omega_2 - \omega_3)\right) \\
		& -2c_{31} \Psi_t(0)\iint_{R_{\mathfrak{m}_j}\ge \omega_1 = \omega_2 + \omega_3\ge0 } \mathrm{d}\omega_1\, \mathrm{d}\omega_2\, \mathfrak{A}(\omega_1) \mathfrak{A}(\omega_2) \mathfrak{A}(\omega_3)\, f(\omega_1) f(\omega_2) f(\omega_3) \mathfrak{A}(3\omega_1) \\
		& - 2c_{31} \Psi_t(0) \iiint_{[0,R_{\mathfrak{m}_j}]^3 \setminus \left( \{\omega_1 > \omega_2 + \omega_3\} \cup \{\omega_2 > \omega_1 + \omega_3\} \cup \{\omega_3 > \omega_1 + \omega_2\} \right)} \mathrm{d}\omega_1\, \mathrm{d}\omega_2\, \mathrm{d}\omega_3 \\
		& \quad \times \mathfrak{A}(\omega_1 + \omega_2 + \omega_3)\, \mathfrak{A}(\omega_1) \mathfrak{A}(\omega_2) \mathfrak{A}(\omega_3)\, f(\omega_1) f(\omega_2) f(\omega_3),
	\end{aligned}
\end{equation}
where we have used the fact that the supports of  $ \Psi_t(\omega_1 + \omega_2 + \omega_3) - \Psi_t(\omega_1) - \Psi_t(\omega_2) - \Psi_t(\omega_3)$, $ \Psi_t(\omega_1) - \Psi_t(\omega_1 - \omega_2 - \omega_3) - \Psi_t(\omega_2) - \Psi_t(\omega_3) $ and $\Psi_t(3\omega_1) - 3\Psi_t(\omega_1)$ are contained within the interval $[0,R_{\mathfrak{m}_j}/4]$. Using  Assumption A,  we deduce from \eqref{Lemma:Growth2:E12} that
\begin{equation}\label{Lemma:Growth2:E13}
	\begin{aligned}
		\mathfrak{F}_3 \ \ge\ & -C_0 R_{\mathfrak{m}_j}^{\theta} \Psi_t(0) \left[ \int_{\omega \in [0,R_{\mathfrak{m}_j}]} \mathrm{d}\omega\, G(\omega) \right]^3,
	\end{aligned}
\end{equation}
for some constant \(C_0\) that is independent of \(f\), \(\mathfrak m_j\) and may vary from line to line.

\subsubsection*{Step 4. \textit{Controlling} the leftover quantities}	

Combining \eqref{Lemma:Growth2:E1}, \eqref{Lemma:Growth2:E8} and \eqref{Lemma:Growth2:E13}, we bound
\begin{equation}	\label{Lemma:Growth2:E14}
	\begin{aligned}
		\partial_t\left( \int_{\mathbb{R}^{+}} \mathrm{d}\omega\, G \Psi_t \right)
		= \ &\int_{\mathbb{R}^{+}} \mathrm{d}\omega\, G \, \partial_t \Psi_t + \mathfrak{F}_2\ -\ C_0 R_{\mathfrak{m}_j}^{\theta} \Psi_t(0) \left[ \int_{\omega \in [0,R_{\mathfrak{m}_j}]} \mathrm{d}\omega\, G(\omega) \right]^2\\
		&\ -\ C_0 R_{\mathfrak{m}_j}^{\theta} \Psi_t(0) \left[ \int_{\omega \in [0,R_{\mathfrak{m}_j}]} \mathrm{d}\omega\, G(\omega) \right]^3.
\end{aligned}\end{equation}
Now, we bound the terms inside $\mathfrak F_2$. Similar to \eqref{Lemma:Concave:E3}, we bound
\begin{equation}\label{Lemma:Growth2:E15}
	\begin{aligned}
		& \left[-\Psi_t(\omega_{\text{Min}}) - \Psi_t(\omega_{\text{Mid}}) + \Psi_t(\omega_{\text{Max}}) + \Psi_t(\omega_{\text{Min}} + \omega_{\text{Mid}} - \omega_{\text{Max}})\right] \\
		=\, & \int_{0}^{\omega_{\text{Max}} - \omega_{\text{Min}}} \mathrm{d}\xi_1 \int_{0}^{\omega_{\text{Max}} - \omega_{\text{Mid}}} \mathrm{d}\xi_2 \, \partial_{\omega}^2\Psi_t(\xi_1 + \xi_2 + \omega_{\text{Min}}) \ 
		\ge\, 0,
	\end{aligned}
\end{equation}
since \(\partial_{\omega}^2\Psi_t \ge 0\).

Similar to \eqref{Lemma:Concave:E4}, we  estimate
\begin{equation}\label{Lemma:Growth2:E16}
	\begin{aligned}
		& \left[-\Psi_t(\omega_{\text{Max}}) - \Psi_t(\omega_{\text{Min}}) + \Psi_t(\omega_{\text{Mid}}) + \Psi_t(\omega_{\text{Max}} + \omega_{\text{Min}} - \omega_{\text{Mid}})\right]\Theta(\omega_{\text{Max}} + \omega_{\text{Min}} - \omega_{\text{Mid}}) \\
		& + \left[-\Psi_t(\omega_{\text{Max}}) - \Psi_t(\omega_{\text{Mid}}) + \Psi_t(\omega_{\text{Min}}) + \Psi_t(\omega_{\text{Max}} + \omega_{\text{Mid}} - \omega_{\text{Min}})\right]\Theta(\omega_{\text{Max}} + \omega_{\text{Mid}} - \omega_{\text{Min}}) \\
		=\, & \int_{0}^{\omega_{\text{Mid}} - \omega_{\text{Min}}} \mathrm{d}s \int_{0}^{\omega_{\text{Max}} - \omega_{\text{Mid}}} \mathrm{d}s_0\, \partial_{\omega}^2\Psi_t(\omega_{\text{Min}} + s + s_0) \\
		& \quad \times \left[\Theta(\omega_{\text{Max}} - \omega_{\text{Min}} + \omega_{\text{Mid}}) -\Theta(\omega_{\text{Max}} + \omega_{\text{Min}} - \omega_{\text{Mid}})\right] \\
		& + \int_{0}^{\omega_{\text{Mid}} - \omega_{\text{Min}}} \mathrm{d}s \int_{0}^{\omega_{\text{Mid}} - \omega_{\text{Min}}} \mathrm{d}s_0\,\Theta(\omega_{\text{Max}} - \omega_{\text{Min}} + \omega_{\text{Mid}})\, \partial_{\omega}^2\Psi_t(\omega_{\text{Min}} + s + s_0) \\
		\ge\, & \int_{0}^{\omega_{\text{Mid}} - \omega_{\text{Min}}} \mathrm{d}s \int_{0}^{\omega_{\text{Mid}} - \omega_{\text{Min}}} \mathrm{d}s_0\,\Theta(\omega_{\text{Max}} - \omega_{\text{Min}} + \omega_{\text{Mid}})\, \partial_{\omega}^2\Psi_t(\omega_{\text{Min}} + s + s_0) \\
		\ge\ & 0.
	\end{aligned}
\end{equation}

Using \eqref{Lemma:Growth2:E15} and \eqref{Lemma:Growth2:E16}, we can restrict the domain of integration of \( \mathfrak{F}_2 \) to   \( \omega_2 \le \omega_1 \le \omega \). We deduce from \eqref{Lemma:Growth2:E14} that
\begin{equation} \label{Lemma:Growth2:E17}
	\begin{aligned}
		\partial_t\left( \int_{\mathbb{R}^{+}} \mathrm{d}\omega\, G \Psi_t \right)
		\ge\ & \int_{\mathbb{R}^{+}} \mathrm{d}\omega\, G\, \partial_t \Psi_t 
		- C_0 R_{\mathfrak{m}_j}^{\theta} \Psi_t(0) \left[ \int_0^{R_{\mathfrak{m}_j}} \mathrm{d}\omega\, G(\omega) \right]^2
		- C_0 R_{\mathfrak{m}_j}^{\theta} \Psi_t(0) \left[ \int_0^{R_{\mathfrak{m}_j}} \mathrm{d}\omega\, G(\omega) \right]^3 \\
		& + c_{22} \iiint_{\mathbb{R}_+^3} \mathrm{d}\omega_1\, \mathrm{d}\omega_2\, \mathrm{d}\omega\,
		f_1 f_2 f\, \mathbf{1}_{\omega_2 \le \omega_1 \le \omega} 
		\left[ \frac{\max\{|\omega_1 - \omega_2|,\ |\omega_2 - \omega|\}}{2(\omega + \omega_1)} \right]^\mu \\
		& \times \Big\{ [-\Psi_t(\omega_2) - \Psi_t(\omega_1) + \Psi_t(\omega) + \Psi_t(\omega_2 + \omega_1 - \omega)] \\
		& \qquad \times\Theta(\omega)\Theta(\omega_2)\Theta(\omega_1)\Theta(\omega_2 + \omega_1 - \omega)\min\{|k_3|, |k_1|\} \mathbf{1}_{\omega_2 + \omega_1 - \omega\ge0}\\
		& + [-\Psi_t(\omega) - \Psi_t(\omega_2) + \Psi_t(\omega_1) + \Psi_t(\omega + \omega_2 - \omega_1)] \\
		& \qquad \times\Theta(\omega)\Theta(\omega_2)\Theta(\omega_1)\Theta(\omega + \omega_2 - \omega_1)\, |k_2| \\
		& + [-\Psi_t(\omega) - \Psi_t(\omega_1) + \Psi_t(\omega_2) + \Psi_t(\omega + \omega_1 - \omega_2)] \\
		& \qquad \times\Theta(\omega)\Theta(\omega_2)\Theta(\omega_1)\Theta(\omega + \omega_1 - \omega_2)\, |k_2| \Big\},
	\end{aligned}
\end{equation}
where \( |k_3| \) is associated to \( \omega_2 + \omega_1 - \omega \).

We can now bound \eqref{Lemma:Growth2:E17} from below as follows:
\begin{equation} \label{Lemma:Growth2:E18}
	\begin{aligned}
		\partial_t\left( \int_{\mathbb{R}^{+}} \mathrm{d}\omega\, G \Psi_t \right)
		\ge\ & \int_{\mathbb{R}^{+}} \mathrm{d}\omega\, G\, \partial_t \Psi_t 
		- C_0 R_{\mathfrak{m}_j}^{\theta} \Psi_t(0) \left[ \int_0^{R_{\mathfrak{m}_j}} \mathrm{d}\omega\, G(\omega) \right]^2
		- C_0 R_{\mathfrak{m}_j}^{\theta} \Psi_t(0) \left[ \int_0^{R_{\mathfrak{m}_j}} \mathrm{d}\omega\, G(\omega) \right]^3 \\
		& + c_{22} \iiint_{\mathbb{R}_+^3} \mathrm{d}\omega_1\, \mathrm{d}\omega_2\, \mathrm{d}\omega\,
		\frac{G(\omega)}{|k|\Theta(\omega)} \frac{G(\omega_2)}{|k_2|\Theta(\omega_2)} \frac{G(\omega_1)}{|k_1|\Theta(\omega_1)}\,
		\mathbf{1}_{\omega_2 \le \omega_1 \le \omega}  \\
		& \times \left[ \frac{\max\{|\omega_1 - \omega_2|,\ |\omega_2 - \omega|\}}{2(\omega + \omega_1)} \right]^\mu \Big\{ [-\Psi_t(\omega_2) - \Psi_t(\omega_1) + \Psi_t(\omega) + \Psi_t(\omega_2 + \omega_1 - \omega)] \\
		& \qquad \times\Theta(\omega)\Theta(\omega_2)\Theta(\omega_1)\Theta(\omega_2 + \omega_1 - \omega)\min\{|k_3|, |k_2|\}\mathbf{1}_{\omega_2 + \omega_1 - \omega\ge0} \\
		& + [-\Psi_t(\omega) - \Psi_t(\omega_2) + \Psi_t(\omega_1) + \Psi_t(\omega + \omega_2 - \omega_1)] \\
		& \qquad \times\Theta(\omega)\Theta(\omega_2)\Theta(\omega_1)\Theta(\omega + \omega_2 - \omega_1)\, |k_2| \\
		& + [-\Psi_t(\omega) - \Psi_t(\omega_1) + \Psi_t(\omega_2) + \Psi_t(\omega + \omega_1 - \omega_2)] \\
		& \qquad \times\Theta(\omega)\Theta(\omega_2)\Theta(\omega_1)\Theta(\omega + \omega_1 - \omega_2)\, |k_2| \Big\}.
	\end{aligned}
\end{equation}

We can simplify \eqref{Lemma:Growth2:E18} as
\begin{equation} \label{Lemma:Growth2:E19}
	\begin{aligned}
		\partial_t\left( \int_{\mathbb{R}^{+}} \mathrm{d}\omega\, G \Psi_t \right)
		\ge\ & \int_{\mathbb{R}^{+}} \mathrm{d}\omega\, G\, \partial_t \Psi_t 
		-  C_0 R_{\mathfrak{m}_j}^{\theta} \Psi_t(0)\left[ \int_0^{R_{\mathfrak{m}_j}} \mathrm{d}\omega\, G(\omega) \right]^2
		- C_0 R_{\mathfrak{m}_j}^{\theta} \Psi_t(0) \left[ \int_0^{R_{\mathfrak{m}_j}} \mathrm{d}\omega\, G(\omega) \right]^3 \\
		& + c_{22} \iiint_{\mathbb{R}_+^3} \mathrm{d}\omega_1\, \mathrm{d}\omega_2\, \mathrm{d}\omega\,
		\frac{G G_1 G_2}{|k||k_1|} \mathbf{1}_{\omega_2 \le \omega_1 \le \omega} \left[ \frac{\max\{|\omega_1 - \omega_2|,\ |\omega_2 - \omega|\}}{2(\omega + \omega_1)} \right]^\mu 
		\\
		& \times \Big\{
		[-\Psi_t(\omega_2) - \Psi_t(\omega_1) + \Psi_t(\omega) + \Psi_t(\omega_2 + \omega_1 - \omega)] \\
		& \quad \times\Theta(\omega_2 + \omega_1 - \omega)\, \frac{\min\{|k_3|, |k_1|\}}{|k_2|}\mathbf{1}_{\omega_2 + \omega_1 - \omega\ge0} \\
		& + [-\Psi_t(\omega) - \Psi_t(\omega_2) + \Psi_t(\omega_1) + \Psi_t(\omega + \omega_2 - \omega_1)] \,
		\Theta(\omega + \omega_2 - \omega_1) \\
		& + [-\Psi_t(\omega) - \Psi_t(\omega_1) + \Psi_t(\omega_2) + \Psi_t(\omega + \omega_1 - \omega_2)] \,
		\Theta(\omega + \omega_1 - \omega_2)
		\Big\},
	\end{aligned}
\end{equation}
where \( G = G(\omega), \ G_1 = G(\omega_1), \ G_2 = G(\omega_2) \).

Since $\omega(k) \ge C_{\text{disper}} |k|^{1/\delta}$, or equivalently 
\[
\frac{1}{|k|} \ge \frac{C_{\text{disper}}^\delta}{\omega(k)^\delta},
\]
we bound
\begin{equation} \label{Lemma:Growth2:E20}
	\begin{aligned}
		\partial_t\left( \int_{\mathbb{R}^{+}} \mathrm{d}\omega\, G \Psi_t \right)
		\ge\ & \int_{\mathbb{R}^{+}} \mathrm{d}\omega\, G\, \partial_t \Psi_t 
		- C_0 R_{\mathfrak{m}_j}^{\theta} \Psi_t(0) \left[ \int_0^{R_{\mathfrak{m}_j}} \mathrm{d}\omega\, G(\omega) \right]^2
		- C_0 R_{\mathfrak{m}_j}^{\theta} \Psi_t(0) \left[ \int_0^{R_{\mathfrak{m}_j}} \mathrm{d}\omega\, G(\omega) \right]^3 \\
		& + c_{22} C_{\text{disper}}^{2\delta} \iiint_{\mathbb{R}_+^3}
		\mathrm{d}\omega_1\, \mathrm{d}\omega_2\, \mathrm{d}\omega\,
		\frac{G G_1 G_2}{R_{\mathfrak{m}_j}^{2\delta}}\, 
		\left[ \frac{\max\{|\omega_1 - \omega_2|,\ |\omega_2 - \omega|\}}{2(\omega + \omega_1)} \right]^\mu \\
		& \quad \times \chi_{\left\{ \omega_2 \leq \frac{R_{\mathfrak{m}_j}}{20} \right\}}
		\chi_{\left\{ \omega_1 \le \omega; \ \omega_1, \omega \in \mathscr{O}_{\sigma(t)}^{h_{\mathfrak{m}_j}, R_{\mathfrak{m}_j}} \right\}} 
		\chi_{\mathscr{C}_{\mathfrak{m}_j}'^{\mathscr{T}_{\mathfrak{m}_j}}}(t) \\
		& \quad \times \Big\{ 
		[-\Psi_t(\omega) - \Psi_t(\omega_2) + \Psi_t(\omega_1) + \Psi_t(\omega + \omega_2 - \omega_1)]\,\Theta(\omega + \omega_2 - \omega_1) \\
		& \qquad + 
		[-\Psi_t(\omega) - \Psi_t(\omega_1) + \Psi_t(\omega_2) + \Psi_t(\omega + \omega_1 - \omega_2)]\,\Theta(\omega + \omega_1 - \omega_2)
		\Big\}.
	\end{aligned}
\end{equation}

The above inequality means that, when $t\notin \mathscr{C}_{\mathfrak{m}_j}'^{\mathscr{T}_{\mathfrak{m}_j}}$, we simply bound

\begin{equation*} 
	\begin{aligned}
		\partial_t\left( \int_{\mathbb{R}^{+}} \mathrm{d}\omega\, G \Psi_t \right)
		\ge\ & \int_{\mathbb{R}^{+}} \mathrm{d}\omega\, G\, \partial_t \Psi_t 
		- C_0 R_{\mathfrak{m}_j}^{\theta} \Psi_t(0) \left[ \int_0^{R_{\mathfrak{m}_j}} \mathrm{d}\omega\, G(\omega) \right]^2
		- C_0 R_{\mathfrak{m}_j}^{\theta} \Psi_t(0) \left[ \int_0^{R_{\mathfrak{m}_j}} \mathrm{d}\omega\, G(\omega) \right]^3.
	\end{aligned}
\end{equation*}

Since \(\omega_1, \omega \in \mathscr{O}_{\sigma(t)}^{h_{\mathfrak{m}_j}, R_{\mathfrak{m}_j}}\), it follows that \(\omega_1, \omega > \frac{R_{\mathfrak{m}_j}}{20}\). Consequently, we have \(\Psi_t(\omega) = \Psi_t(\omega_1) = \Psi_t(\omega + \omega_1 - \omega_2) = 0\). Substituting into \eqref{Lemma:Growth2:E20}, we obtain:
\begin{equation} \label{Lemma:Growth2:E21}
	\begin{aligned}
		\partial_t\left( \int_{\mathbb{R}^{+}} \mathrm{d}\omega\, G \Psi_t \right)
		\ge\ & \int_{\mathbb{R}^{+}} \mathrm{d}\omega\, G\, \partial_t \Psi_t 
		- C_0 R_{\mathfrak{m}_j}^{\theta} \Psi_t(0) \left[ \int_0^{R_{\mathfrak{m}_j}} \mathrm{d}\omega\, G(\omega) \right]^2
		- C_0 R_{\mathfrak{m}_j}^{\theta} \Psi_t(0) \left[ \int_0^{R_{\mathfrak{m}_j}} \mathrm{d}\omega\, G(\omega) \right]^3 \\
		& + c_{22} C_{\text{disper}}^{2\delta} \iiint_{\mathbb{R}_+^3}
		\mathrm{d}\omega_1\, \mathrm{d}\omega_2\, \mathrm{d}\omega\,
		\frac{G G_1 G_2}{R_{\mathfrak{m}_j}^{2\delta}}\, 
		\left[ \frac{\max\{|\omega_1 - \omega_2|,\ |\omega_2 - \omega|\}}{2(\omega + \omega_1)} \right]^\mu \\
		& \quad \times \chi_{\left\{ \omega_2 \leq \frac{R_{\mathfrak{m}_j}}{20} \right\}}
		\chi_{\left\{ \omega_1 \le \omega; \ \omega_1, \omega \in \mathscr{O}_{\sigma(t)}^{h_{\mathfrak{m}_j}, R_{\mathfrak{m}_j}} \right\}} 
		\chi_{\mathscr{C}_{\mathfrak{m}_j}'^{\mathscr{T}_{\mathfrak{m}_j}}}(t) \\
		& \quad \times \Big\{ 
		[-\Psi_t(\omega_2) + \Psi_t(\omega + \omega_2 - \omega_1)]\, \Theta(\omega + \omega_2 - \omega_1) 
		+ \Psi_t(\omega_2)\,\Theta(\omega + \omega_1 - \omega_2)
		\Big\}.
	\end{aligned}
\end{equation}

Next, since  
\[
\omega + \omega_1 - \omega_2 \ge \frac{R_{\mathfrak{m}_j}}{2} \quad \text{and} \quad 0 \le \omega + \omega_2 - \omega_1 \le \omega_2 + 3 h_{\mathfrak{m}_j} \le \frac{R_{\mathfrak{m}_j}}{20} + 3 h_{\mathfrak{m}_j} \le \frac{R_{\mathfrak{m}_j}}{2},
\]  
we estimate
\begin{equation*}
	\begin{aligned}
		& \left[ -\Psi_t(\omega_2) + \Psi_t(\omega + \omega_2 - \omega_1) \right] \Theta(\omega + \omega_2 - \omega_1) 
		+ \Psi_t(\omega_2) \Theta(\omega + \omega_1 - \omega_2) \\
		\ge\ & \left[ -\Psi_t(\omega_2) + \Psi_t(\omega + \omega_2 - \omega_1) \right] \Theta\left( \frac{R_{\mathfrak{m}_j}}{2} \right) 
		+ \Psi_t(\omega_2) \Theta\left( \frac{R_{\mathfrak{m}_j}}{2} \right),
	\end{aligned}
\end{equation*}
and  \begin{equation}\label{muestimate}
	1\ \ge \ \left( \frac{\max\{|\omega_1 - \omega|, |\omega_2 - \omega|\}}{2(\omega + \omega_1)} \right)^\mu \ \ge  \ \left( \frac{\frac{R_{\mathfrak{m}_j}}{4}}{4R_{\mathfrak{m}_j}}  \right)^\mu  \  = \ 2^{-4\mu} \end{equation}
which, in combination with \eqref{Lemma:Growth2:E21}, yields
\begin{equation} \label{Lemma:Growth2:E22}
	\begin{aligned}
		\partial_t\left( \int_{\mathbb{R}^{+}} \mathrm{d}\omega\, G \Psi_t \right)
		\ge\ & \int_{\mathbb{R}^{+}} \mathrm{d}\omega\, G\, \partial_t \Psi_t 
		- C_0 R_{\mathfrak{m}_j}^{\theta} \Psi_t(0) \left( \int_0^{R_{\mathfrak{m}_j}} \mathrm{d}\omega\, G(\omega) \right)^2
		- C_0 R_{\mathfrak{m}_j}^{\theta} \Psi_t(0) \left( \int_0^{R_{\mathfrak{m}_j}} \mathrm{d}\omega\, G(\omega) \right)^3 \\
		& + c_{22} C_{\text{disper}}^{2\delta} \iiint_{\mathbb{R}_+^3}
		\mathrm{d}\omega_1\, \mathrm{d}\omega_2\, \mathrm{d}\omega\,
		\frac{G G_1 G_2}{R_{\mathfrak{m}_j}^{2\delta}} \left( \frac{\max\{|\omega_1 - \omega_2|,\ |\omega_2 - \omega|\}}{2(\omega + \omega_1)} \right)^\mu
		\\
		& \quad \times \chi_{\left\{ \omega_2 \leq \frac{R_{\mathfrak{m}_j}}{20} \right\}}
		\chi_{\left\{ \omega_1 \le \omega;\ \omega_1, \omega \in \mathscr{O}_{\sigma(t)}^{h_{\mathfrak{m}_j}, R_{\mathfrak{m}_j}} \right\}} 
		\chi_{\mathscr{C}_{\mathfrak{m}_j}'^{\mathscr{T}_{\mathfrak{m}_j}}}(t) \\
		& \quad \times \left\{ \left[ -\Psi_t(\omega_2) + \Psi_t(\omega + \omega_2 - \omega_1) \right] \Theta\left( \frac{R_{\mathfrak{m}_j}}{2} \right) 
		 \right\}\\
		& + c_{22} C_{\text{disper}}^{2\delta} \iiint_{\mathbb{R}_+^3}
		\mathrm{d}\omega_1\, \mathrm{d}\omega_2\, \mathrm{d}\omega\,
		\frac{G G_1 G_2}{R_{\mathfrak{m}_j}^{2\delta}} \left( \frac{\max\{|\omega_1 - \omega_2|,\ |\omega_2 - \omega|\}}{2(\omega + \omega_1)} \right)^\mu
		\\
		& \quad \times \chi_{\left\{ \omega_2 \leq \frac{R_{\mathfrak{m}_j}}{20} \right\}}
		\chi_{\left\{ \omega_1 \le \omega;\ \omega_1, \omega \in \mathscr{O}_{\sigma(t)}^{h_{\mathfrak{m}_j}, R_{\mathfrak{m}_j}} \right\}} 
		\chi_{\mathscr{C}_{\mathfrak{m}_j}'^{\mathscr{T}_{\mathfrak{m}_j}}}(t)  \Psi_t(\omega_2) \Theta\left( \frac{R_{\mathfrak{m}_j}}{2} \right).
	\end{aligned}
\end{equation}

From \eqref{Lemma:Growth2:E22}, we deduce
\begin{equation} \label{Lemma:Growth2:E23}
	\begin{aligned}
		\partial_t\left( \int_{\mathbb{R}^{+}} \mathrm{d}\omega\, G \Psi_t \right)
		\ge\ & \int_{\mathbb{R}^{+}} \mathrm{d}\omega\, G\, \partial_t \Psi_t 
		- C_0 R_{\mathfrak{m}_j}^{\theta} \Psi_t(0) \left( \int_0^{R_{\mathfrak{m}_j}} \mathrm{d}\omega\, G(\omega) \right)^2
		- C_0 R_{\mathfrak{m}_j}^{\theta} \Psi_t(0) \left( \int_0^{R_{\mathfrak{m}_j}} \mathrm{d}\omega\, G(\omega) \right)^3 \\
		& + c_{22} C_{\text{disper}}^{2\delta} \iiint_{\mathbb{R}_+^3} \mathrm{d}\omega_1\, \mathrm{d}\omega_2\, \mathrm{d}\omega\,
		\frac{G G_1 G_2 \Theta\left( \frac{R_{\mathfrak{m}_j}}{2} \right)}{R_{\mathfrak{m}_j}^{2\delta}}\chi_{\left\{ \omega_2 \le \frac{R_{\mathfrak{m}_j}}{20} \right\}} \\
		& \quad \times 
		\chi_{\left\{ \omega_1 \le \omega;\ \omega_1, \omega \in \mathscr{O}_{\sigma(t)}^{h_{\mathfrak{m}_j}, R_{\mathfrak{m}_j}} \right\}}
		\chi_{\mathscr{C}_{\mathfrak{m}_j}'^{\mathscr{T}_{\mathfrak{m}_j}}}(t) 	 \left[ -\Psi_t(\omega_2) + \Psi_t(\omega + \omega_2 - \omega_1) \right] \\
		& + 2^{-4\mu}c_{22} C_{\text{disper}}^{2\delta} \iiint_{\mathbb{R}_+^3} \mathrm{d}\omega_1\, \mathrm{d}\omega_2\, \mathrm{d}\omega\,
		\frac{G G_1 G_2}{R_{\mathfrak{m}_j}^{2\delta}}\chi_{\left\{ \omega_2 \le \frac{R_{\mathfrak{m}_j}}{20} \right\}} 	 \\
		& \quad \times 
		\chi_{\left\{ \omega_1 \le \omega;\ \omega_1, \omega \in \mathscr{O}_{\sigma(t)}^{h_{\mathfrak{m}_j}, R_{\mathfrak{m}_j}} \right\}} 
		\chi_{\mathscr{C}_{\mathfrak{m}_j}'^{\mathscr{T}_{\mathfrak{m}_j}}}(t) 
		\Psi_t(\omega_2) \Theta\left( \frac{R_{\mathfrak{m}_j}}{2} \right).
	\end{aligned}
\end{equation}

Using \eqref{Lemma:Supersolu:E4a}, we obtain from \eqref{Lemma:Growth2:E23}:
\begin{equation*}
	\begin{aligned}
		\partial_t\left( \int_{\mathbb{R}^{+}} \mathrm{d}\omega\, G \Psi_t \right)
		\ge\ &
		- C_0 R_{\mathfrak{m}_j}^{\theta} \Psi_t(0) \left( \int_0^{R_{\mathfrak{m}_j}} \mathrm{d}\omega\, G(\omega) \right)^2
		- C_0 R_{\mathfrak{m}_j}^{\theta} \Psi_t(0) \left( \int_0^{R_{\mathfrak{m}_j}} \mathrm{d}\omega\, G(\omega) \right)^3 \\
		&+ 2^{-4\mu} c_{22} C_{\text{disper}}^{2\delta}
		\iiint_{\mathbb{R}_+^3} \mathrm{d}\omega\, \mathrm{d}\omega_1\, \mathrm{d}\omega_2\,
		\frac{G(\omega) G(\omega_1) G(\omega_2)}{R_{\mathfrak{m}_j}^{2\delta}} \\
		&\qquad \times
		\chi_{\left\{ \omega_2 \le \frac{R_{\mathfrak{m}_j}}{20} \right\}}
		\chi_{\left\{ \omega_1 \le \omega;\ \omega_1, \omega \in \mathscr{O}_{\sigma(t)}^{h_{\mathfrak{m}_j}, R_{\mathfrak{m}_j}} \right\}}
		\chi_{\mathscr{C}_{\mathfrak{m}_j}'^{\mathscr{T}_{\mathfrak{m}_j}}}(t)
		\Psi_t(\omega_2) \Theta\left( \frac{R_{\mathfrak{m}_j}}{2} \right),
	\end{aligned}
\end{equation*}
for a universal constant ${C}_0 > 0$ that may vary between lines. Therefore,
\begin{equation}
	\label{Lemma:Growth2:E24}
	\begin{aligned}
		\partial_t\left( \int_{\mathbb{R}^{+}} \mathrm{d}\omega\, G \Psi_t \right)
		\ge\ &
		- C_0 R_{\mathfrak{m}_j}^{\theta} \Psi_t(0) \mathscr M^2
		- C_0 R_{\mathfrak{m}_j}^{\theta} \Psi_t(0) \mathscr M^3 \\
		&+ 2^{-4\mu} c_{22} C_{\text{disper}}^{2\delta}R_{\mathfrak{m}_j}^{-2\delta}
		\left[ \int_{\mathbb{R}_+} \mathrm{d}\omega\, G(\omega) \Psi_t(\omega) \right]
		\left[ \int_{\mathbb{R}_+} \mathrm{d}\omega\, G(\omega)
		\chi_{\left\{ \omega \in \mathscr{O}_{\sigma(t)}^{h_{\mathfrak{m}_j}, R_{\mathfrak{m}_j}} \right\}}
		\chi_{\mathscr{C}_{\mathfrak{m}_j}'^{\mathscr{T}_{\mathfrak{m}_j}}}(t) \right]^2.
	\end{aligned}
\end{equation}

Solving~\eqref{Lemma:Growth2:E24}, we obtain
\begin{equation*}
	\begin{aligned}
		&		\int_{\mathbb{R}^{+}} \mathrm{d}\omega\, G(t,\omega) \Psi_t\\ 
		\ge\ & \int_{\mathbb{R}^{+}} \mathrm{d}\omega\, G(0,\omega) \Psi_0 \exp\left( \int_0^t \mathrm{d}s\, 2^{-4\mu} c_{22} C_{\text{disper}}^{2\delta}R_{\mathfrak{m}_j}^{-2\delta}
		\left[ \int_{\mathbb{R}_+} \mathrm{d}\omega\, G(s,\omega)
		\chi_{\left\{ \omega \in \mathscr{O}_{\sigma(s)}^{h_{\mathfrak{m}_j}, R_{\mathfrak{m}_j}} \right\}}
		\chi_{\mathscr{C}_{\mathfrak{m}_j}'^{\mathscr{T}_{\mathfrak{m}_j}}}(s) \right]^2 \right) \\
		& - \left[ C_0 R_{\mathfrak{m}_j}^{\theta} \Psi_t(0) \mathscr{M}^2
		+ C_0 R_{\mathfrak{m}_j}^{\theta} \Psi_t(0) \mathscr{M}^3 \right] \\
		& \times \left[ \int_0^t \mathrm{d}s\, \exp\left( \int_s^t \mathrm{d}s'\, 2^{-4\mu} c_{22} C_{\text{disper}}^{2\delta}R_{\mathfrak{m}_j}^{-2\delta}
		\left[ \int_{\mathbb{R}_+} \mathrm{d}\omega\, G(s',\omega)
		\chi_{\left\{ \omega \in \mathscr{O}_{\sigma(s')}^{h_{\mathfrak{m}_j}, R_{\mathfrak{m}_j}} \right\}}
		\chi_{\mathscr{C}_{\mathfrak{m}_j}'^{\mathscr{T}_{\mathfrak{m}_j}}}(s') \right]^2 \right) \right],
	\end{aligned}
\end{equation*}
which immediately leads to
\begin{equation}
	\label{Lemma:Growth2:E25}
	\begin{aligned}
		&	\int_{\mathbb{R}^{+}} \mathrm{d}\omega\, G(t,\omega) \Psi_t \\
		\ge\ & \left\{ \int_{\mathbb{R}^{+}} \mathrm{d}\omega\, G(0,\omega) \Psi_0 
		- T\left[ C_0 R_{\mathfrak{m}_j}^{\theta} \Psi_t(0) \mathscr{M}^2
		+ C_0 R_{\mathfrak{m}_j}^{\theta} \Psi_t(0) \mathscr{M}^3 \right] \right\} \\
		& \times \exp\left( \int_0^t \mathrm{d}s\, 2^{-4\mu} c_{22} C_{\text{disper}}^{2\delta}R_{\mathfrak{m}_j}^{-2\delta}
		\left[ \int_{\mathbb{R}_+} \mathrm{d}\omega\, G(s,\omega)
		\chi_{\left\{ \omega \in \mathscr{O}_{\sigma(s)}^{h_{\mathfrak{m}_j}, R_{\mathfrak{m}_j}} \right\}}
		\chi_{\mathscr{C}_{\mathfrak{m}_j}'^{\mathscr{T}_{\mathfrak{m}_j}}}(s) \right]^2 \right).
	\end{aligned}
\end{equation}

Using~\eqref{Lemma:Supersolu:E8} and \eqref{Theorem1:4}, we obtain the bound
\begin{equation}
	\label{Lemma:Growth2:E26}
	\begin{aligned}
		\int_{\mathbb{R}_+} \mathrm{d}\omega\, \Psi_0(\omega)\, G(0,\omega)   
		\geq 0.02\, C_{\text{ini}} \left( \frac{R_{\mathfrak{m}_j}}{50} \right)^{c_{\text{ini}}}.
	\end{aligned}
\end{equation}

Substituting~\eqref{Lemma:Growth2:E26} into~\eqref{Lemma:Growth2:E25}, we obtain
\begin{equation}
	\label{Lemma:Growth2:E27}
	\begin{aligned}
		\int_{\mathbb{R}^{+}} \mathrm{d}\omega\, G(t,\omega)\, \Psi_t 
		\ge\ & \left\{ 0.02\, C_{\text{ini}} \left(  \frac{R_{\mathfrak{m}_j}}{50} \right)^{c_{\text{ini}}}
		- T\left[ C_0 R_{\mathfrak{m}_j}^{\theta} \Psi_t(0)\, \mathscr{M}^2
		+ C_0 R_{\mathfrak{m}_j}^{\theta} \Psi_t(0)\, \mathscr{M}^3 \right] \right\} \\
		& \times \exp\left( \int_0^t \mathrm{d}s\, 2^{-4\mu} c_{22} C_{\text{disper}}^{2\delta}R_{\mathfrak{m}_j}^{-2\delta}
		\left[ \int_{\mathbb{R}_+} \mathrm{d}\omega\, G(s,\omega)\,
		\chi_{\left\{ \omega \in \mathscr{O}_{\sigma(s)}^{h_{\mathfrak{m}_j}, R_{\mathfrak{m}_j}} \right\}}\,
		\chi_{\mathscr{C}_{\mathfrak{m}_j}'^{\mathscr{T}_{\mathfrak{m}_j}}}(s) \right]^2 \right).
	\end{aligned}
\end{equation}

Combining~\eqref{Lemma:Growth2:E10b} and~\eqref{Lemma:Growth2:E27}, we obtain 
\begin{equation}
	\label{Lemma:Growth2:E28}
	\begin{aligned}
		&	\int_{\mathbb{R}^{+}} \mathrm{d}\omega\, G(\mathscr{T}_{\mathfrak{m}_j}, \omega)\, \Psi_{\mathscr{T}_{\mathfrak{m}_j}} \\
		\ge\ & \left\{ 0.02\, C_{\text{ini}} \left( \frac{R_{\mathfrak{m}_j}}{50} \right)^{c_{\text{ini}}}
		- T\left[ C_0 R_{\mathfrak{m}_j}^{\theta} \Psi_{\mathscr{T}_{\mathfrak{m}_j}}(0)\, \mathscr{M}^2
		+ C_0 R_{\mathfrak{m}_j}^{\theta} \Psi_{\mathscr{T}_{\mathfrak{m}_j}}(0)\, \mathscr{M}^3 \right] \right\} \\
		& \times \exp\left( \int_0^{\mathscr{T}_{\mathfrak{m}_j}} \mathrm{d}s\, 2^{-4\mu} c_{22} C_{\text{disper}}^{2\delta} R_{\mathfrak{m}_j}^{-2\delta}
		\left[ \int_{\mathbb{R}_+} \mathrm{d}\omega\, G(s,\omega)\,
		\chi_{\left\{ \omega \in \mathscr{O}_{\sigma(s)}^{h_{\mathfrak{m}_j}, R_{\mathfrak{m}_j}} \right\}}\,
		\chi_{\mathscr{C}_{\mathfrak{m}_j}'^{\mathscr{T}_{\mathfrak{m}_j}}}(s) \right]^2 \right) \\
		\ge\ & 0.01\, C_{\text{ini}} \left( \frac{R_{\mathfrak{m}_j}}{50} \right)^{c_{\text{ini}}}
		\exp\left( 2^{-4\mu} c_{22} C_{\text{disper}}^{2\delta}
		\frac{\mathfrak{C}_1 R_{\mathfrak{m}_j}^{2\rho - \varepsilon}}{\Theta(R_{\mathfrak{m}_j}/2)} \right),
	\end{aligned}
\end{equation}
where we have used the fact that $c_{\text{ini}} < \theta$  from the conditions of Theorem \ref{Theorem1}  and assumed that $j$ is sufficiently large. 	 We further estimate~\eqref{Lemma:Growth2:E28} using Assumption A as
\begin{equation}
	\label{Lemma:Growth2:E29}
	\begin{aligned}
		\int_{\mathbb{R}^{+}} \mathrm{d}\omega\, G(\mathscr{T}_{\mathfrak{m}_j}, \omega)\, \Psi_{\mathscr{T}_{\mathfrak{m}_j}} 
		\ge\ & 0.01\, C_{\text{ini}} \left( \frac{R_{\mathfrak{m}_j}}{50} \right)^{c_{\text{ini}}} \ 
		\exp\left( 2^{-4\mu} c_{22} C_{\text{disper}}^{2\delta}
		\frac{\mathfrak{C}_1 R_{\mathfrak{m}_j}^{2\rho - \varepsilon }}{C_\Theta(R_{\mathfrak{m}_j}/2)^\varrho} \right).
	\end{aligned}
\end{equation}

This yields
\begin{equation}
	\label{Lemma:Growth2:E30}
	\begin{aligned}
		\mathscr{M} + \mathscr{E}  
		\ge\ & 0.01\, C_{\text{ini}} \left( \frac{R_{\mathfrak{m}_j}}{50} \right)^{c_{\text{ini}}} \\
		& \times \exp\left( 2^{-4\mu} c_{22} C_{\text{disper}}^{2\delta}
		\frac{\mathfrak{C}_1 R_{\mathfrak{m}_j}^{2\rho - \varrho - \varepsilon}}{C_\Theta(1/2)^\varrho} \right)
		\to \infty \quad \text{as } j \to \infty,
	\end{aligned}
\end{equation}
since $\max\{0,2\rho-\varrho\}  <\varepsilon$ by \eqref{rho}.

We obtain a contradiction, and hence the assumption leading to~\eqref{Lemma:Growth2:E0a} cannot hold true. Therefore, by the computations in~\eqref{Lemma:Growth2:E0:1}-\eqref{Lemma:Growth2:E0:3}, we conclude~\eqref{Lemma:Growth2:1}.

\end{proof}

\section{The second multiscale estimates}\label{Sec:Second}

\subsection{The main second estimate}\label{Sec:DDM}In this subsection, we provide an estimate that will serve as the basis for the second multiscale estimate.

%
%
%

We now define
\begin{equation}\label{Sec:DDM:2}
	\begin{aligned}
		\Omega_{i,j,l}^{h_{\mathfrak m},R_{\mathfrak m}} = \Omega_{i}^{h_{\mathfrak m},R_{\mathfrak m}} \times \Omega_{j}^{h_{\mathfrak m},R_{\mathfrak m}} \times \Omega_{l}^{h_{\mathfrak m},R_{\mathfrak m}}, \quad
		i, j, l = 0, \dots, \mathscr{M}_{h_{\mathfrak m},R_{\mathfrak m}} - 1,
	\end{aligned}
\end{equation}
as well as the set of indices
\begin{equation}\label{Sec:DDM:3}
	\begin{aligned}
		I_{i}^{h_{\mathfrak m},R_{\mathfrak m}} &= \{ i-1, i, i+1 \}, \quad && i = 1, \dots, \mathscr{M}_{h_{\mathfrak m},R_{\mathfrak m}} - 2, \\
		I_{0}^{h_{\mathfrak m},R_{\mathfrak m}} &= \{ 0, 1 \}, \quad
		&& I_{\mathscr{M}_{h,R_{\mathfrak m}}-1}^{h_{\mathfrak m},R_{\mathfrak m}} = \{ \mathscr{M}_{h_{\mathfrak m},R_{\mathfrak m}} - 2, \mathscr{M}_{h_{\mathfrak m},R_{\mathfrak m}} - 1 \}.
	\end{aligned}
\end{equation}

We also define the sets
\begin{equation}\label{Sec:DDM:4}
	\mathcal{E}_{R_{\mathfrak m},h_{\mathfrak m}} = \left\{ (\omega, \omega_{1}, \omega_{2}) \in [0, R_{\mathfrak m})^3 : 
	|\omega_{\mathrm{Mid}} - \omega_{\mathrm{Min}}| \ge 2h_{\mathfrak m} \right\},
\end{equation}
and
\begin{equation}\label{Sec:DDM:5}
	\mathcal{E}_{R_{\mathfrak m},h_{\mathfrak m}}' = \left\{ (\omega, \omega_{1}, \omega_{2}) \in [0, R_{\mathfrak m})^3 : 
	|\omega_{\mathrm{Mid}} - \omega_{\mathrm{Min}}| \ge h_{\mathfrak m} \right\},
\end{equation}
where we have used the notations of \eqref{Sec:DDM:6}.

For a constant $\nu > 0$, we recall \eqref{FDefinition} , \eqref{NoCondensateTime} and define
\begin{equation}\label{TimePump}
	\begin{aligned}
		\Xi_{\nu,R_{\mathfrak m}}^2 := \Big\{ t \in \Xi \ \Big| \ \forall i \in \{0, \dots, \mathscr{M}_{h_{\mathfrak m},R_{\mathfrak m}} - 1\},\ 
		\int_{\mathscr{O}_{i}^{h_{\mathfrak m},R_{\mathfrak m}}} \mathrm{d}\omega\, G(t) 
		< (1 - \nu) \int_{[0, R_{\mathfrak m})} \mathrm{d}\omega\, G(t) \Big\}, \\
		\Xi_{\nu,R_{\mathfrak m}}^1 := \Xi \setminus \Xi_{\nu,R_{\mathfrak m}}^2.
	\end{aligned}
\end{equation}

\begin{proposition}
	\label{Propo:Collision} 	We assume Assumption A and Assumption B. 

	There exists a universal constant \( \mathcal{C}_1 > 0 \) such that the following estimate holds true for \( 0 < h_{\mathfrak m} < \frac{1}{10} \) and \( 0 < \nu < \frac{1}{3} \):
	\begin{equation}\label{Propo:Collision:1}
		\begin{aligned}
			\frac{\mathcal{C}_1 (\mathscr{M} + \mathscr{E})}{\Theta(h_{\mathfrak m}) \nu^4} \cdot \frac{R_{\mathfrak m}^{2\delta + \frac{3}{2} + \mu}}{h_{\mathfrak m}^{2 + \mu}} \ge 
			& \int_{\Xi_{\nu,R_{\mathfrak m}}^2} \mathrm{d}t \left( \int_{[0,R_{\mathfrak m})} \mathrm{d}\omega\, G(t) \right)^3.
		\end{aligned}
	\end{equation}
\end{proposition}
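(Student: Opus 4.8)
Everything is deduced from the \emph{a priori} estimate of Lemma~\ref{lemma:Concave}. I take $\alpha=\tfrac12$ there and let $T\to\infty$ (legitimate, since the integrand in \eqref{Lemma:Concave:1} is non-negative), then keep only the displayed $C_{22}$-type term and shrink its domain of integration to $[0,R_{\mathfrak{m}})^3\cap\mathcal{E}_{R_{\mathfrak{m}},h_{\mathfrak{m}}}'$, discarding the non-negative contributions of $C_{12}$ and $C_{31}$. On this restricted domain every degeneracy factor is bounded below by an explicit monomial in $h_{\mathfrak{m}}$ and $R_{\mathfrak{m}}$; collecting them produces the prefactor $R_{\mathfrak{m}}^{2\delta+3/2+\mu}/(\Theta(h_{\mathfrak{m}})h_{\mathfrak{m}}^{2+\mu})$ of \eqref{Propo:Collision:1}. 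The remaining ingredient is a measure-theoretic ``spreading'' lemma, valid precisely when $t\in\Xi_{\nu,R_{\mathfrak{m}}}^2$, which converts $\iiint_{\mathcal{E}'}G\,G_1\,G_2$ into a constant multiple of $\nu^{3}\big(\int_{[0,R_{\mathfrak{m}})}G\big)^{3}$.

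\textbf{Reduction of the a priori bound.} Because $R_{\mathfrak{m}}=2^{-\mathfrak{m}}$ with $\mathfrak{m}>10^{6}$, on $[0,R_{\mathfrak{m}})^3$ one has $0<2\omega_{\mathrm{Mid}}-\omega_{\mathrm{Min}}\le 2R_{\mathfrak{m}}<1$, so $\chi_{(0,1)}(2\omega_{\mathrm{Mid}}-\omega_{\mathrm{Min}})=1$ and $(2\omega_{\mathrm{Mid}}-\omega_{\mathrm{Min}})^{-(2-\alpha)}\ge(2R_{\mathfrak{m}})^{-3/2}$; on $\mathcal{E}_{R_{\mathfrak{m}},h_{\mathfrak{m}}}'$ one has $\omega_{\mathrm{Mid}}-\omega_{\mathrm{Min}}\ge h_{\mathfrak{m}}$, hence $(\omega_{\mathrm{Mid}}-\omega_{\mathrm{Min}})^2\ge h_{\mathfrak{m}}^2$, $\Theta(\omega_{\mathrm{Max}}-\omega_{\mathrm{Min}}+\omega_{\mathrm{Mid}})\ge\Theta(h_{\mathfrak{m}})$ by monotonicity of $\Theta$, and the kernel factor $[\max\{|\omega_1-\omega_2|,|\omega_2-\omega|\}/(2(\omega+\omega_1))]^\mu$ is at least $4^{-\mu}(h_{\mathfrak{m}}/R_{\mathfrak{m}})^\mu$, since $\max\{|\omega_1-\omega_2|,|\omega_2-\omega|\}\ge\omega_{\mathrm{Mid}}-\omega_{\mathrm{Min}}\ge h_{\mathfrak{m}}$ (one of the two differences emanating from $\omega_2$ is always at least $\omega_{\mathrm{Mid}}-\omega_{\mathrm{Min}}$) and $\omega+\omega_1<2R_{\mathfrak{m}}$. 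Since $\{\omega_{\mathrm{Max}},\omega_{\mathrm{Mid}},\omega_{\mathrm{Min}}\}=\{\omega,\omega_1,\omega_2\}$ and $G=f\,|k|\,\Theta$, one has $f_1f_2f\;\Theta(\omega_{\mathrm{Max}})\Theta(\omega_{\mathrm{Mid}})\Theta(\omega_{\mathrm{Min}})\,|k_{\mathrm{Min}}|=G_1G_2G/(|k_{\mathrm{Max}}||k_{\mathrm{Mid}}|)$, and by Assumption~A the bound $\omega(k)\ge C_{\mathrm{disper}}|k|^{1/\delta}$ forces $|k|\le(R_{\mathfrak{m}}/C_{\mathrm{disper}})^{\delta}$ whenever $\omega<R_{\mathfrak{m}}$, so $1/(|k_{\mathrm{Max}}||k_{\mathrm{Mid}}|)\ge C_{\mathrm{disper}}^{2\delta}R_{\mathfrak{m}}^{-2\delta}$. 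Putting everything together,
\begin{equation}\label{Propo:Collision:plan:1}
\mathscr{M}+\mathscr{E}\ \gtrsim\ \frac{\Theta(h_{\mathfrak{m}})\,h_{\mathfrak{m}}^{2+\mu}}{R_{\mathfrak{m}}^{2\delta+\frac32+\mu}}\int_{0}^{\infty}\mathrm{d}t\iiint_{[0,R_{\mathfrak{m}})^3\cap\mathcal{E}_{R_{\mathfrak{m}},h_{\mathfrak{m}}}'}\mathrm{d}\omega_1\,\mathrm{d}\omega_2\,\mathrm{d}\omega\;G(t,\omega_1)\,G(t,\omega_2)\,G(t,\omega).
\end{equation}

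\textbf{The spreading lemma.} Fix $t\in\Xi_{\nu,R_{\mathfrak{m}}}^2$, let $\mu_t$ denote the restriction of $G(t,\cdot)$ to $[0,R_{\mathfrak{m}})$ and $m=\mu_t([0,R_{\mathfrak{m}}))$. Since every subinterval of $[0,R_{\mathfrak{m}})$ of length $\le h_{\mathfrak{m}}$ lies inside some overlapping subdomain $\mathscr{O}_{i}^{h_{\mathfrak{m}},R_{\mathfrak{m}}}$ (whose length is $\le 3h_{\mathfrak{m}}$), the definition of $\Xi_{\nu,R_{\mathfrak{m}}}^2$ gives $\mu_t(I)\le(1-\nu)m$ for every such $I$. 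Let $H(x):=\mu_t([x+h_{\mathfrak{m}},R_{\mathfrak{m}}))$; it is non-increasing and left-continuous, with $H(0)=m-\mu_t([0,h_{\mathfrak{m}}))\ge\nu m$. Put $x_1:=\sup\{x:H(x)\ge\nu m/2\}$, so $x_1<R_{\mathfrak{m}}-h_{\mathfrak{m}}$ and $H(x_1)\ge\nu m/2$; moreover for $x>x_1$ one has $\mu_t([0,x+h_{\mathfrak{m}}))>(1-\nu/2)m$, whence $\mu_t([0,x_1+h_{\mathfrak{m}}])\ge(1-\nu/2)m$, and subtracting $\mu_t((x_1,x_1+h_{\mathfrak{m}}])\le(1-\nu)m$ yields $\mu_t([0,x_1])\ge\nu m/2$. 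Now if $x\in[0,x_1]$ and $y,z\in[x_1+h_{\mathfrak{m}},R_{\mathfrak{m}})$, then $\omega_{\mathrm{Min}}=x$ and $\omega_{\mathrm{Mid}}\ge x_1+h_{\mathfrak{m}}\ge x+h_{\mathfrak{m}}$, so $(x,y,z)\in[0,R_{\mathfrak{m}})^3\cap\mathcal{E}_{R_{\mathfrak{m}},h_{\mathfrak{m}}}'$; therefore
\begin{equation}\label{Propo:Collision:plan:2}
\iiint_{[0,R_{\mathfrak{m}})^3\cap\mathcal{E}_{R_{\mathfrak{m}},h_{\mathfrak{m}}}'}\mathrm{d}\omega_1\,\mathrm{d}\omega_2\,\mathrm{d}\omega\;G(t,\omega_1)G(t,\omega_2)G(t,\omega)\ \ge\ \mu_t([0,x_1])\,H(x_1)^2\ \ge\ \tfrac18\nu^{3}m^{3}\ \ge\ \tfrac18\nu^{4}m^{3},
\end{equation}
the last step using $\nu<1$. (Atoms of $G(t,\cdot)$ and the boundary subdomains near $0$ and $R_{\mathfrak{m}}$ require only minor adjustments.)

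\textbf{Conclusion and main difficulty.} Using $G\ge0$ we restrict the time integral in \eqref{Propo:Collision:plan:1} to $\Xi_{\nu,R_{\mathfrak{m}}}^2$ and insert \eqref{Propo:Collision:plan:2}, valid for every $t\in\Xi_{\nu,R_{\mathfrak{m}}}^2$; this gives $\mathscr{M}+\mathscr{E}\gtrsim\nu^{4}\,\Theta(h_{\mathfrak{m}})\,h_{\mathfrak{m}}^{2+\mu}R_{\mathfrak{m}}^{-2\delta-3/2-\mu}\int_{\Xi_{\nu,R_{\mathfrak{m}}}^2}\mathrm{d}t\big(\int_{[0,R_{\mathfrak{m}})}\mathrm{d}\omega\,G(t)\big)^{3}$, which is \eqref{Propo:Collision:1} after rearranging and renaming the constant $\mathcal{C}_1$. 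The power-counting in the reduction step is routine; the genuine obstacle is the spreading lemma — in particular passing rigorously from the hypothesis phrased in terms of the overlapping subdomains $\mathscr{O}_i^{h_{\mathfrak{m}},R_{\mathfrak{m}}}$ to the clean statement ``no $h_{\mathfrak{m}}$-interval carries more than $(1-\nu)m$'', correctly handling atoms of $G(t,\cdot)$ and the special boundary subdomains, and extracting the right power of $\nu$.
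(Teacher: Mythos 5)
Your proposal is correct, and while the analytic reduction is the same as the paper's, the heart of your argument replaces the paper's construction with a genuinely simpler one. Both proofs start from Lemma~\ref{lemma:Concave} with $\alpha=\tfrac12$, restrict to $[0,R_{\mathfrak m})^3$ with $\omega_{\mathrm{Mid}}-\omega_{\mathrm{Min}}\ge h_{\mathfrak m}$, and extract the same monomial $\Theta(h_{\mathfrak m})h_{\mathfrak m}^{2+\mu}R_{\mathfrak m}^{-2\delta-\frac32-\mu}$ from the weights (your observations that $\max\{|\omega_1-\omega_2|,|\omega_2-\omega|\}\ge\omega_{\mathrm{Mid}}-\omega_{\mathrm{Min}}$, that $\chi_{(0,1)}(2\omega_{\mathrm{Mid}}-\omega_{\mathrm{Min}})\equiv1$ there, and that $f_1f_2f\,\Theta\Theta_1\Theta_2|k_{\mathrm{Min}}|=G_1G_2G/(|k_{\mathrm{Max}}||k_{\mathrm{Mid}}|)\ge C_{\mathrm{disper}}^{2\delta}R_{\mathfrak m}^{-2\delta}G_1G_2G$ all check out, with the constants $C_{\mathrm{disper}}^{2\delta}$, $4^{-\mu}$ absorbed into $\mathcal C_1$ exactly as the paper does). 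Where you diverge is the lower bound on $\iiint G_1G_2G$ for $t\in\Xi^2_{\nu,R_{\mathfrak m}}$: the paper runs a domain-decomposition construction (its Step~2, Cases~(I)--(II)), greedily building three index families $\mathscr Y,\mathscr Z^1,\mathscr Z^2$ whose subdomains each carry mass $\gtrsim\nu\int_{[0,R_{\mathfrak m})}G$ and whose index separation places the product set inside $\mathcal E''\subset\mathcal E'$; you instead observe that membership in $\Xi^2_{\nu,R_{\mathfrak m}}$ forces $\mu_t(I)\le(1-\nu)m$ for every length-$\le h_{\mathfrak m}$ interval $I\subset[0,R_{\mathfrak m})$ (every such interval sits in some $\mathscr O_i^{h_{\mathfrak m},R_{\mathfrak m}}$), and then a one-dimensional threshold argument on $H(x)=\mu_t([x+h_{\mathfrak m},R_{\mathfrak m}))$ produces a single cut $x_1$ with $\mu_t([0,x_1])\ge\nu m/2$ and $\mu_t([x_1+h_{\mathfrak m},R_{\mathfrak m}))\ge\nu m/2$, so the product of one "low" cluster with two copies of the "high" cluster lies in $\mathcal E'$ and already gives $\tfrac18\nu^3m^3\ge\tfrac18\nu^4m^3$. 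Your two-cluster argument is shorter and cleaner (it exploits that $\mathcal E'$ only separates $\omega_{\mathrm{Mid}}$ from $\omega_{\mathrm{Min}}$, so two separated mass clusters suffice, whereas the paper builds three non-overlapping ones), it sidesteps the indicator $\mathbf 1_{\omega+\omega_1-\omega_2\ge0}$ entirely since \eqref{Lemma:Concave:1} carries no such constraint, and it even yields the slightly stronger power $\nu^3$; the paper's more elaborate construction, on the other hand, is designed to be recycled almost verbatim (with $1-\nu$ replaced by other thresholds) in Proposition~\ref{Lemma:Growth1}, which your streamlined lemma would also support with the obvious modification. The measure-theoretic details you flag (left-continuity of $H$, the half-open/closed interval bookkeeping, atoms of $G(t,\cdot)$) are in fact already handled correctly by your choice of sets $[0,x_1]$, $(x_1,x_1+h_{\mathfrak m}]$, $[x_1+h_{\mathfrak m},R_{\mathfrak m})$, so no further adjustment is needed.
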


\begin{proof}
	
	We divide the proof into smaller steps.
	
	{\it Step 1: Subdomain inclusions.} 
	
	Setting
	\begin{equation}\label{Propo:Collision:2}
		\mathcal{E}''_{R_{\mathfrak m},h_{\mathfrak m}} =
		\left[
		\bigcup_{\substack{
				\mathscr{M}_{R_{\mathfrak m},h_{\mathfrak m}} > \max\{i, j, l\} \\
				\ge \mathrm{mid}\{i, j, l\} > \min\{i, j, l\} + 1
		}}
		\Omega_{i,j,l}^{h_{\mathfrak m},R_{\mathfrak m}}
		\right],
	\end{equation}
	we will first show that
	\begin{equation}\label{Propo:Collision:3}
		\mathcal{E}_{R_{\mathfrak m},h_{\mathfrak m}} \subset 
		\mathcal{E}''_{R_{\mathfrak m},h_{\mathfrak m}}
		\subset \mathcal{E}_{R_{\mathfrak m},h_{\mathfrak m}}'.
	\end{equation}
	
	To this end, we consider a point 
	$\left( \omega_2, \omega_{1}, \omega \right) \in \mathcal{E}_{R_{\mathfrak m},h_{\mathfrak m}}$. 
	Without loss of generality, we may assume that 
	\[
	\omega = \omega_{\mathrm{Min}} \left( \omega_2, \omega_{1}, \omega \right)
	< \omega_{1} = \omega_{\mathrm{Mid}} \left( \omega_2, \omega_{1}, \omega \right)
	\le \omega_{2} = \omega_{\mathrm{Max}}\left( \omega_2, \omega_{1}, \omega \right).
	\]
	From the definition of $\mathcal{E}_{R_{\mathfrak m},h_{\mathfrak m}}$, there exist indices $i$, $j$, and $l$ such that
	\[
	\omega_{2} \in \Omega_{i}^{h_{\mathfrak m},R_{\mathfrak m}}, \quad
	\omega_{1} \in \Omega_{j}^{h_{\mathfrak m},R_{\mathfrak m}}, \quad
	\omega \in \Omega_{l}^{h_{\mathfrak m},R_{\mathfrak m}}, \quad
	\text{with } l \leq j \leq i.
	\]
	Since 
	\[
	|\omega_{\mathrm{Mid}} - \omega_{\mathrm{Min}}| \ge 2 h_{\mathfrak m},  
	\]
	we conclude that  \( j > l + 1 \). This leads to
	$$
	\mathcal{E}_{R_{\mathfrak m},h_{\mathfrak m}} \subset 
	\mathcal{E}''_{R_{\mathfrak m},h_{\mathfrak m}}.
	$$

	Next, we choose $\left(\omega_2, \omega_1, \omega\right) \in \mathcal{E}''_{R_{\mathfrak m},h_{\mathfrak m}}$. Suppose  
	$\omega_2 \in \Omega_i^{h_{\mathfrak m},R_{\mathfrak m}},\ \omega_1 \in \Omega_j^{h_{\mathfrak m},R_{\mathfrak m}},\ \omega \in \Omega_l^{h_{\mathfrak m},R_{\mathfrak m}}$,  
	with $\mathscr{M}_{R_{\mathfrak m},h_{\mathfrak m}} > i \ge j   > l + 1$. Therefore,  
	$\left| \omega_{\text{Mid}} - \omega_{\text{Min}} \right| \ge h_{\mathfrak m}$. 
	We then have  
	\[
	\mathcal{E}''_{R_{\mathfrak m},h_{\mathfrak m}} \subset \mathcal{E}'_{R_{\mathfrak m},h_{\mathfrak m}},
	\]  
	which completes the proof of \eqref{Propo:Collision:3}.

	{\it Step 2: Constructing the  special  subdomains.}

	Let \(t\) be in \(\Xi_{\nu,R_{\mathfrak m}}^2\) and \( i_t^o \)  be an index in \( \{0, \dots, \mathscr{M}_{R_{\mathfrak m},h_{\mathfrak m}} - 1\} \) such that
	\begin{equation}\label{Propo:Collision:6}
		\int_{\Omega_{i_t^o}^{h_{\mathfrak m},R_{\mathfrak m}}} \mathrm{d}\omega\, G(t) = \max_{i \in \{0, \dots, \mathscr{M}_{R_{\mathfrak m},h_{\mathfrak m}} - 1\}} \left\{ \int_{\Omega_i^{h_{\mathfrak m},R_{\mathfrak m}}} \mathrm{d}\omega\, G(t) \right\}.
	\end{equation}

We consider two cases.

\textit{Case (I):} In this case, we assume that
\begin{equation}\label{Propo:Collision:18}
	\int_{\Omega_{i_t^o}^{h_{\mathfrak m}, R_{\mathfrak m}}} \mathrm{d}\omega\, G(t) 
	< \frac{\nu}{1000} \int_{[0, R_{\mathfrak m})} \mathrm{d}\omega\, G(t).
\end{equation}

We split the treatment of this case into two steps.

\textit{Case (I) - Step (i)}: Creating the first subdomain.

	Let \( \mathfrak{X}_t \) be the set of partitions of \( \{0, \dots, \mathscr{M}_{R_{\mathfrak m},h_{\mathfrak m}} - 1\} \) satisfying the following conditions:
	\begin{itemize}
		\item[(A)] If \( \mathcal{I}_t \in \mathfrak{X}_t \), then \( i_t^o \in \mathcal{I}_t \).
		
		\item[(B)] Let \( \mathcal{I}_t = \{i_1, \dots, i_n\} \in \mathfrak{X}_t \). Then, for all \( i_j \in \mathcal{I}_t \),
		\begin{equation}\label{Propo:Collision:6a}
			\Omega_{i_j}^{h_{\mathfrak m},R_{\mathfrak m}} \cap \bigcup_{i \in \mathcal{I}_t \setminus \{i_j\}} \mathscr{O}_i^{h_{\mathfrak m},R_{\mathfrak m}} = \emptyset.
		\end{equation}
		
		\item[(C)] There exists a constant \( 10>c_* \geq 3 \), independent of \( t \), such that for all \( \mathcal{I}_t \in \mathfrak{X}_t \),
		\begin{equation}\label{Propo:Collision:7}
			c_* \int_{\cup_{i \in \mathcal{I}_t} \Omega_i^{h_{\mathfrak m},R_{\mathfrak m}}} \mathrm{d}\omega\, G(t) 
			\geq 
			\int_{\cup_{i \in \mathcal{I}_t} \mathscr{O}_i^{h_{\mathfrak m},R_{\mathfrak m}}} \mathrm{d}\omega\, G(t).
		\end{equation}
		
		\item[(D)] For all \( \mathcal{I}_t \in \mathfrak{X}_t \),
		\begin{equation}\label{Propo:Collision:8}
			\int_{\cup_{i \in \mathcal{I}_t}\mathscr{O}_i^{h_{\mathfrak m},R_{\mathfrak m}}} \mathrm{d}\omega\, G(t) 
			< 
	(1-\nu) \int_{[0, R_{\mathfrak m})} \mathrm{d}\omega\, G(t).
		\end{equation}
	\end{itemize}
	
	It follows from \eqref{Propo:Collision:6} that
	\begin{equation}\label{Propo:Collision:9}
		3 \int_{\Omega_{i_t^o}^{h_{\mathfrak m},R_{\mathfrak m}}} \mathrm{d}\omega\, G(t) 
		\ge 
		\int_{\mathscr{O}_{i_t^o}^{h_{\mathfrak m},R_{\mathfrak m}}} \mathrm{d}\omega\, G(t),
	\end{equation}
	and thus the set of partitions \( \mathfrak{X}_t \) is non-empty, since at least \( \{i_t^o\} \in \mathfrak{X}_t \).
	
	Next, we define \( \mathscr{Z}_t = \{i_1, \dots, i_n\} \in \mathfrak{X}_t \) to be the partition satisfying the following conditions:
	\begin{itemize}
		\item The first index is defined as \( i_1 = i_t^o \).
		
		\item For \( j = 2, \dots, n \),
		\begin{equation}\label{Propo:Collision:10:1}
			\int_{\Omega_{i_j}^{h_{\mathfrak m},R_{\mathfrak m}}} \mathrm{d}\omega\, G(t)
			= 
			\max_{i \in \{0, \dots, \mathscr{M}_{R_{\mathfrak m},h_{\mathfrak m}} - 1\} \setminus \left( I_{i_1}^{h_{\mathfrak m},R_{\mathfrak m}} \cup \cdots \cup I_{i_{j-1}}^{h_{\mathfrak m},R_{\mathfrak m}} \right)} 
			\left\{ \int_{\Omega_i^{h_{\mathfrak m},R_{\mathfrak m}}} \mathrm{d}\omega\, G(t) \right\},
		\end{equation}
		where we have used the notation introduced in \eqref{Sec:DDM:3}.
		
		\item In addition, for all \( j \in \{0, \dots, \mathscr{M}_{R_{\mathfrak m},h_{\mathfrak m}} - 1\} \setminus \left( I_{i_1}^{h_{\mathfrak m},R_{\mathfrak m}} \cup \cdots \cup I_{i_n}^{h_{\mathfrak m},R_{\mathfrak m}} \right) \), we have
		\begin{equation}\label{Propo:Collision:10}
			\int_{\bigcup_{i \in \mathscr{Z}_t \cup \{j\}} \mathscr{O}_i^{h_{\mathfrak m},R_{\mathfrak m}}} \mathrm{d}\omega\, G(t)
			\ge 
			(1 - \nu) \int_{[0, R_{\mathfrak m})} \mathrm{d}\omega\, G(t).
		\end{equation}
	\end{itemize}

	We now  define the domains $\mathfrak{D}^Y$, $\mathfrak{D}^Z$ and the set $\mathscr{Y}_t$ as follows
	\[
	\mathfrak{D}^Y = [0, R_{\mathfrak m}) \setminus \bigcup_{i \in \mathscr{Z}_t} \mathscr{O}_i^{h_{\mathfrak m},R_{\mathfrak m}} = \bigcup_{i \in \mathscr{Y}_t} \Omega_i^{h_{\mathfrak m},R_{\mathfrak m}}, \quad
	\mathfrak{D}^Z = \bigcup_{i \in \mathscr{Z}_t} \Omega_i^{h_{\mathfrak m},R_{\mathfrak m}}.
	\]
	We then have, by \eqref{Propo:Collision:8},
	\begin{equation}\label{Propo:Collision:11}
		\int_{\mathfrak{D}^Y} \mathrm{d}\omega\, G(t) 
		\ge 
		\nu \int_{[0, R_{\mathfrak m})} \mathrm{d}\omega\, G(t).
	\end{equation}

$\mathfrak{D}^Y$ is our first subdomain. Next, we create the second and third subdomains.

\textit{Case (I) - Step (ii):} Creating the second and third subdomains.

	Let \( i_t^{*} \) be an index in \( \mathscr{Y}_t \) such that \( \Omega_{i_t^{*}}^{h_{\mathfrak m},R_{\mathfrak m}} \cap \bigcup_{i \in \mathscr{Z}_t} \mathscr{O}_i^{h_{\mathfrak m},R_{\mathfrak m}} = \emptyset \), and
	
	\begin{equation}\label{Propo:Collision:12}
		\int_{\Omega_{i_t^{*}}^{h_{\mathfrak m},R_{\mathfrak m}}} \mathrm{d}\omega\, G(t) 
		= 
		\max_{i \in \mathscr{Y}_t} \left\{ \int_{\Omega_i^{h_{\mathfrak m},R_{\mathfrak m}}} \mathrm{d}\omega\, G(t) \right\}.
	\end{equation}

	We define two disjoint sets \( \mathcal{S}_a \) and \( \mathcal{S}_b \) such that
	\[
	\mathcal{S}_a \cup \mathcal{S}_b = \mathscr{O}_{i_t^{*}}^{h_{\mathfrak m},R_{\mathfrak m}}, \quad
	\mathcal{S}_a \cap \mathcal{S}_b  = \emptyset, \quad
	\mathcal{S}_a \subset \bigcup_{i \in \mathscr{Z}_t} \mathscr{O}_i^{h_{\mathfrak m},R_{\mathfrak m}}, \quad
	\mathcal{S}_b \cap \bigcup_{i \in \mathscr{Z}_t} \mathscr{O}_i^{h_{\mathfrak m},R_{\mathfrak m}} = \emptyset.
	\]
	
	From the construction of the set $\mathscr{Z}_t$, we find
	\begin{equation}\label{Propo:Collision:13}	 	c_*\int_{\cup_{i\in\mathscr{Z}_t }\Omega_{i}^{h_{\mathfrak m},R_{\mathfrak m}} }\mathrm{d}\omega G(t) \ \ge \ \int_{\cup_{i\in\mathscr{Z}_t}\mathscr{O}_{i}^{h_{\mathfrak m},R_{\mathfrak m}} }\mathrm{d}\omega G(t),
	\end{equation}
	and from the construction of  $i_t^{*}$, we obtain
	\begin{equation}\label{Propo:Collision:14}	 	c_*\int_{ \Omega_{i_t^{*}}^{h_{\mathfrak m},R_{\mathfrak m}} }\mathrm{d}\omega G(t) \ \ge \ 3\int_{ \Omega_{i_t^{*}}^{h_{\mathfrak m},R_{\mathfrak m}} }\mathrm{d}\omega G(t) \ \ge \ \int_{\mathcal{S}_b }\mathrm{d}\omega G(t).
	\end{equation}
	Using \eqref{Propo:Collision:13}, \eqref{Propo:Collision:14}, and \eqref{Propo:Collision:10}, we find
	\begin{equation}\label{Propo:Collision:15}
		\begin{aligned}
			c_* \int_{\bigcup_{i \in \mathscr{Z}_t \cup \{i_t^{*}\}} \Omega_i^{h_{\mathfrak m},R_{\mathfrak m}}} \mathrm{d}\omega\, G(t) 
			&\ge \int_{\bigcup_{i \in \mathscr{Z}_t} \mathscr{O}_i^{h_{\mathfrak m},R_{\mathfrak m}}} \mathrm{d}\omega\, G(t) 
			+ \int_{\mathcal{S}_b} \mathrm{d}\omega\, G(t) \\
			&\ge \int_{\bigcup_{i \in \mathscr{Z}_t \cup \{i_t^{*}\}} \mathscr{O}_i^{h_{\mathfrak m},R_{\mathfrak m}}} \mathrm{d}\omega\, G(t) \ 
			\ge\ (1 - \nu) \int_{[0, R_{\mathfrak m})} \mathrm{d}\omega\, G(t).
		\end{aligned}
	\end{equation}

	Inequality \eqref{Propo:Collision:18}, in combination with \eqref{Propo:Collision:15}, gives the estimate
	\begin{equation}\label{Propo:Collision:19}
		c_* \int_{\bigcup_{i \in \mathscr{Z}_t} \Omega_i^{h_{\mathfrak m},R_{\mathfrak m}}} \mathrm{d}\omega\, G(t)  
		+ \frac{\nu c_*}{1000} \int_{[0, R_{\mathfrak m})} \mathrm{d}\omega\, G(t)  
		\ge (1 - \nu) \int_{[0, R_{\mathfrak m})} \mathrm{d}\omega\, G(t),
	\end{equation}
	which leads to
	\begin{equation}\label{Propo:Collision:20}
		c_* \int_{\bigcup_{i \in \mathscr{Z}_t} \Omega_i^{h_{\mathfrak m},R_{\mathfrak m}}} \mathrm{d}\omega\, G(t) 
		\ge \left(1 - \nu - \frac{\nu c_*}{1000} \right) \int_{[0, R_{\mathfrak m})} \mathrm{d}\omega\, G(t).
	\end{equation}
	
	Therefore, we obtain
	\begin{equation}\label{Propo:Collision:21}	
		\int_{\mathfrak{D}^Z} \mathrm{d}\omega G(t) 
		\ge \frac{1-2\nu}{c_*}\int_{[0,R_{\mathfrak m})}\mathrm{d}\omega G(t) \ge \frac{\nu}{10c_*}\int_{[0,R_{\mathfrak m})}\mathrm{d}\omega G(t).
	\end{equation}
	
	As a result, we have
	\begin{equation}\label{Propo:Collision:22}	
		\int_{\mathfrak{D}^Z} \mathrm{d}\omega G(t) 
		\ge \frac{\nu}{10c_*}\int_{[0,R_{\mathfrak m})}\mathrm{d}\omega G(t).
	\end{equation}
	
We proceed by further splitting the domain $\mathfrak{D}^Z$ into two subdomains. We repeat \eqref{Propo:Collision:10:1}--\eqref{Propo:Collision:10}, but with the constant \( 1 - \nu \) replaced by \( \frac{\nu}{20c_*} \). Let \( \mathscr{Z}^1_t = \{j_1, \dots, j_p\} \subset \mathscr{Z}_t \) denote the subset satisfying the following conditions:
\begin{itemize}
	
	\item The first index is defined as \( j_1 = i_t^o \).
	
	\item For \( l = 2, \dots, p \),
	\begin{equation}\label{Propo:Collision:22:1}
		\int_{\Omega_{j_l}^{h_{\mathfrak{m}}, R_{\mathfrak{m}}}} \mathrm{d}\omega\, G(t)
		= 
		\max_{i \in \{0, \dots, \mathscr{M}_{R_{\mathfrak{m}}, h_{\mathfrak{m}}} - 1\} \setminus \left( I_{j_1}^{h_{\mathfrak{m}}, R_{\mathfrak{m}}} \cup \cdots \cup I_{j_{l-1}}^{h_{\mathfrak{m}}, R_{\mathfrak{m}}} \right)} 
		\left\{ \int_{\Omega_i^{h_{\mathfrak{m}}, R_{\mathfrak{m}}}} \mathrm{d}\omega\, G(t) \right\},
	\end{equation}
	where the notation is as introduced in \eqref{Sec:DDM:3}.
	
	\item Furthermore, for all \( j \in \{0, \dots, \mathscr{M}_{R_{\mathfrak{m}}, h_{\mathfrak{m}}} - 1\} \setminus \left( I_{j_1}^{h_{\mathfrak{m}}, R_{\mathfrak{m}}} \cup \cdots \cup I_{j_p}^{h_{\mathfrak{m}}, R_{\mathfrak{m}}} \right) \), we have
	\begin{equation}\label{Propo:Collision:22:2}
		\int_{\bigcup_{i \in \mathscr{Z}^1_t \cup \{j\}} \mathscr{O}_i^{h_{\mathfrak{m}}, R_{\mathfrak{m}}}} \mathrm{d}\omega\, G(t)
		\ge 
		\frac{\nu}{20c_*} \int_{[0, R_{\mathfrak{m}})} \mathrm{d}\omega\, G(t).
	\end{equation}
	Moreover,
	\begin{equation}\label{Propo:Collision:22:3}
		\int_{\bigcup_{i \in \mathscr{Z}^1_t} \Omega_i^{h_{\mathfrak{m}}, R_{\mathfrak{m}}}} \mathrm{d}\omega\, G(t) 
		< 
		\frac{\nu}{20c_*} \int_{[0, R_{\mathfrak{m}})} \mathrm{d}\omega\, G(t).
	\end{equation}
	
\end{itemize}

It follows from \eqref{Propo:Collision:6} and \eqref{Propo:Collision:18} that this set is nonempty.

We now define 
\[
\mathscr{Z}^2_t := \mathscr{Z}_t \setminus \mathscr{Z}^1_t, \quad 
\mathfrak{D}^{Z^1} := \bigcup_{i \in \mathscr{Z}^1_t} \Omega_i^{h_{\mathfrak{m}}, R_{\mathfrak{m}}}, \quad 
\mathfrak{D}^{Z^2} := \bigcup_{i \in \mathscr{Z}^2_t} \Omega_i^{h_{\mathfrak{m}}, R_{\mathfrak{m}}}.
\]

It follows from \eqref{Propo:Collision:22} and \eqref{Propo:Collision:22:3} that
\begin{equation}\label{Propo:Collision:22:4}
	\int_{\mathfrak{D}^{Z^2}} \mathrm{d}\omega\, G(t) 
	\ge 
	\frac{\nu}{20c_*} \int_{[0, R_{\mathfrak{m}})} \mathrm{d}\omega\, G(t).
\end{equation}

Repeating the same argument as in 
\eqref{Propo:Collision:13}--\eqref{Propo:Collision:22}, with the constant \( 1 - \nu \) replaced by \( \frac{\nu}{20c_*} \), we find
\begin{equation}\label{Propo:Collision:22:5}	
	\int_{\mathfrak{D}^{Z^1}} \mathrm{d}\omega\, G(t) 
	\ge \frac{1}{c_*} \left( 1 - \nu - \frac{\nu c_*}{1000} \right) \int_{[0, R_{\mathfrak{m}})} \mathrm{d}\omega\, G(t)
	\ge 
	\frac{\nu}{20c_*} \int_{[0, R_{\mathfrak{m}})} \mathrm{d}\omega\, G(t).
\end{equation}

Finally, at the end of this case, we obtain three non-overlapping subdomains: \( \mathfrak{D}^{Y} \), \( \mathfrak{D}^{Z^1} \), and \( \mathfrak{D}^{Z^2} \), together with the three key inequalities \eqref{Propo:Collision:11}, \eqref{Propo:Collision:22:4}, and \eqref{Propo:Collision:22:5}.

We consider \( i \in \mathscr{Y} \), \( j \in \mathscr{Z}^1 \), and \( l \in \mathscr{Z}^2 \), with all indices distinct. Then it is clear that
\begin{equation}\label{Propo:Collision:22:5a}
	\min\{i, j, l\} + 2 < \min\big( \{i, j, l\} \setminus \{\min\{i, j, l\}\} \big) +1 < \max\{i, j, l\}.
\end{equation}

\textit{Case (II):} In this case, we assume that
\begin{equation}\label{Propo:Collision:22:6}
	\int_{\Omega_{i_t^o}^{h_{\mathfrak{m}}, R_{\mathfrak{m}}}} \mathrm{d}\omega\, G(t) 
	\ge \frac{\nu}{1000} \int_{[0, R_{\mathfrak{m}})} \mathrm{d}\omega\, G(t).
\end{equation}

We set \( \mathfrak{D}^{Y} := \Omega_{i_t^o}^{h_{\mathfrak{m}}, R_{\mathfrak{m}}} \) and \( \mathscr{Y} := \{i_t^o\} \), so that
\begin{equation}\label{Propo:Collision:22:7}
	\int_{\mathfrak{D}^{Y}} \mathrm{d}\omega\, G(t) 
	\ge \frac{\nu}{1000} \int_{[0, R_{\mathfrak{m}})} \mathrm{d}\omega\, G(t).
\end{equation}

By our assumption,
\begin{equation*}
	\int_{\mathscr{O}_{i_t^o}^{h_{\mathfrak{m}}, R_{\mathfrak{m}}}} \mathrm{d}\omega\, G(t) 
	< (1 - \nu) \int_{[0, R_{\mathfrak{m}})} \mathrm{d}\omega\, G(t),
\end{equation*}
which implies
\begin{equation}\label{Propo:Collision:22:8}
	\nu \int_{[0, R_{\mathfrak{m}})} \mathrm{d}\omega\, G(t) 
	\le \int_{[0, R_{\mathfrak{m}}) \setminus \mathscr{O}_{i_t^o}^{h_{\mathfrak{m}}, R_{\mathfrak{m}}}} \mathrm{d}\omega\, G(t).
\end{equation}

We define
\[
[0, R_{\mathfrak{m}}) \setminus \mathscr{O}_{i_t^o}^{h_{\mathfrak{m}}, R_{\mathfrak{m}}} 
= \bigcup_{i \in \mathscr{Z}_t} \Omega_i^{h_{\mathfrak{m}}, R_{\mathfrak{m}}},
\]
and set
\[
\mathscr{Z}_t' := \{i \in \mathscr{Z}_t \mid i > i_t^o\}, \quad 
\mathscr{Z}_t'' := \{i \in \mathscr{Z}_t \mid i < i_t^o\}.
\]

Since by \eqref{Propo:Collision:22:8},
\begin{equation*}
	\nu \int_{[0, R_{\mathfrak{m}})} \mathrm{d}\omega\, G(t) 
	\le \int_{\bigcup_{i \in \mathscr{Z}_t'}} \Omega_i^{h_{\mathfrak{m}}, R_{\mathfrak{m}}} \mathrm{d}\omega\, G(t)
	+ \int_{\bigcup_{i \in \mathscr{Z}_t''}} \Omega_i^{h_{\mathfrak{m}}, R_{\mathfrak{m}}} \mathrm{d}\omega\, G(t),
\end{equation*}
then either
\begin{equation}\label{Propo:Collision:22:9}
	\frac{\nu}{2} \int_{[0, R_{\mathfrak{m}})} \mathrm{d}\omega\, G(t) 
	\le \int_{\bigcup_{i \in \mathscr{Z}_t'}} \Omega_i^{h_{\mathfrak{m}}, R_{\mathfrak{m}}} \mathrm{d}\omega\, G(t),
\end{equation}
or
\begin{equation}\label{Propo:Collision:22:10}
	\frac{\nu}{2} \int_{[0, R_{\mathfrak{m}})} \mathrm{d}\omega\, G(t) 
	\le \int_{\bigcup_{i \in \mathscr{Z}_t''}} \Omega_i^{h_{\mathfrak{m}}, R_{\mathfrak{m}}} \mathrm{d}\omega\, G(t).
\end{equation}

If \eqref{Propo:Collision:22:9} holds, we set 
\[
\{i \in \mathscr{Z}_t \mid i > i_t^o\} = \mathscr{Z}_t' = \mathscr{Z}_t^1 = \mathscr{Z}_t^2,
\]
and define
\[
\bigcup_{i \in \mathscr{Z}_t^1} \Omega_i^{h_{\mathfrak{m}}, R_{\mathfrak{m}}} 
= \bigcup_{i \in \mathscr{Z}_t^2} \Omega_i^{h_{\mathfrak{m}}, R_{\mathfrak{m}}} 
=: \mathfrak{D}^{Z^1} = \mathfrak{D}^{Z^2}.
\]

If \eqref{Propo:Collision:22:10} holds, we set 
\[
\{i \in \mathscr{Z}_t \mid i < i_t^o\} = \mathscr{Z}_t'' = \mathscr{Z}_t^1, \quad \mathscr{Y} = \mathscr{Z}_t^2,
\]
and define
\[
\bigcup_{i \in \mathscr{Z}_t^1} \Omega_i^{h_{\mathfrak{m}}, R_{\mathfrak{m}}} =: \mathfrak{D}^{Z^1}, \quad
\bigcup_{i \in \mathscr{Z}_t^2} \Omega_i^{h_{\mathfrak{m}}, R_{\mathfrak{m}}} =: \mathfrak{D}^{Z^2}.
\]

In both cases, we have
\begin{equation}\label{Propo:Collision:22:11}
	C_0 \nu \int_{[0, R_{\mathfrak{m}})} \mathrm{d}\omega\, G(t) 
	\le \int_{\mathfrak{D}^{Z^1}} \mathrm{d}\omega\, G(t), \quad 
	C_0 \nu \int_{[0, R_{\mathfrak{m}})} \mathrm{d}\omega\, G(t) 
	\le \int_{\mathfrak{D}^{Z^2}} \mathrm{d}\omega\, G(t),
\end{equation}
where \( C_0 > 0 \) is a universal constant that may vary across estimates.

Finally, at the end of this case, we obtain three subdomains: 
\( \mathfrak{D}^{Y} \), \( \mathfrak{D}^{Z^1} \), and \( \mathfrak{D}^{Z^2} \), 
together with the key inequalities \eqref{Propo:Collision:22:7} and \eqref{Propo:Collision:22:11}. 

We consider \( i \in \mathscr{Y} \), \( j \in \mathscr{Z}^1 \), \( j \in \mathscr{Z}^2 \). Then it is clear that
\begin{equation}\label{Propo:Collision:22:12}
	\min\{i, j, l\} + 1 < \min\big( \{i, j, l\} \setminus \{\min\{i, j, l\}\} \big).
\end{equation}

	{\it Step 3: The main estimate.}

We set 
\[
\mathfrak{G} := \left\{(i, j, l) \in \mathbb{Z}^3 \,\middle|\, i, j, l \ge 0,\ \mathscr{M}_{R_{\mathfrak{m}}, h_{\mathfrak{m}}} > i \ge j > l + 1 \right\}.
\]

We now consider the two cases analyzed in Step 2.

\textit{Case (I):} 
We make use of the three non-overlapping subdomains 
\( \mathfrak{D}^{Y} \), \( \mathfrak{D}^{Z^1} \), and \( \mathfrak{D}^{Z^2} \),
as well as the index condition \eqref{Propo:Collision:22:5a}, to obtain the bound
\begin{equation*}
	\begin{aligned}
		& \sum_{(i, j, l) \in \mathfrak{G}} 
		\iiint_{\Omega_{i,j,l}^{h_{\mathfrak{m}}, R_{\mathfrak{m}}}} 
		\mathrm{d}\omega_1\, \mathrm{d}\omega_2\, \mathrm{d}\omega\,
		G_1 G_2 G \cdot \mathbf{1}_{\omega + \omega_1 - \omega_2 \ge 0} \cdot 
		\frac{1}{\Theta(h_{\mathfrak{m}})} \cdot 
		\frac{(2R_{\mathfrak{m}})^{2\delta + \frac{3}{2}}}{h_{\mathfrak{m}}^2} 
		\left( \frac{4R_{\mathfrak{m}}}{h_{\mathfrak{m}}} \right)^\mu \\[0.5em]
		& \quad \times 
		\left[ \frac{\max\{|\omega_1 - \omega_2|,\ |\omega_2 - \omega|\}}{2(\omega + \omega_1)} \right]^\mu 
		\cdot \frac{(\omega_{\mathrm{Mid}} - \omega_{\mathrm{Min}})^2}{(2\omega_{\mathrm{Mid}} - \omega_{\mathrm{Min}})^{\frac{3}{2}}}	
		\cdot \frac{\Theta(\omega_{\mathrm{Max}} - \omega_{\mathrm{Min}} + \omega_{\mathrm{Mid}})}{|\omega_{\mathrm{Mid}}|^\delta |\omega_{\mathrm{Max}}|^\delta} \\[0.75em]
		\ge\; &  
		\sum_{\substack{(i,j,l) \in \mathbb{Z}^3 \\ i,j,l \ge 0,\ \mathscr{M}_{R_{\mathfrak{m}}, h_{\mathfrak{m}}} > i \ge j > l + 1,\ i \ge j + 2}} 
		\iiint_{\Omega_{i,j,l}^{h_{\mathfrak{m}}, R_{\mathfrak{m}}}} 
		\mathrm{d}\omega_1\, \mathrm{d}\omega_2\, \mathrm{d}\omega\,
		G_1 G_2 G \cdot \mathbf{1}_{\omega + \omega_1 - \omega_2 \ge 0} \\[0.75em]
		\ge\; & 
		\int_{\mathfrak{D}^{Y}} \mathrm{d}\omega \, G
		\int_{\mathfrak{D}^{Z^1}} \mathrm{d}\omega \, G 
		\int_{\mathfrak{D}^{Z^2}} \mathrm{d}\omega \, G,
	\end{aligned}
\end{equation*}
which, together with the key inequalities 
\eqref{Propo:Collision:11}, 
\eqref{Propo:Collision:22:4}, and 
\eqref{Propo:Collision:22:5}, yields
\begin{equation}\label{Propo:Collision:34}
	\begin{aligned}
		\left( \int_{[0, R_{\mathfrak{m}})} \mathrm{d}\omega\, G(t) \right)^3 
		\ \le\ &    
		\sum_{(i,j,l)\in\mathfrak{G}} 
		\iiint_{\Omega_{i,j,l}^{h_{\mathfrak{m}}, R_{\mathfrak{m}}}} 
		\mathrm{d}\omega_1\,\mathrm{d}\omega_2\,\mathrm{d}\omega\ 
		G_1 G_2 G\, \mathbf{1}_{\omega + \omega_1 - \omega_2 \ge 0} \\[0.5em]
		& \times \frac{C_0}{\Theta(h_{\mathfrak{m}})\nu^4} 
		\cdot \frac{(2R_{\mathfrak{m}})^{2\delta+\frac{3}{2}}}{h_{\mathfrak{m}}^2} 
		\left( \frac{4R_{\mathfrak{m}}}{h_{\mathfrak{m}}} \right)^\mu 
		\left[ \frac{\max\{|\omega_1 - \omega_2|,\ |\omega_2 - \omega|\}}{2(\omega + \omega_1)} \right]^\mu \\[0.5em]
		& \times 
		\frac{(\omega_{\mathrm{Mid}} - \omega_{\mathrm{Min}})^2}{(2\omega_{\mathrm{Mid}} - \omega_{\mathrm{Min}})^{\frac{3}{2}}} 
		\cdot \frac{\Theta(\omega_{\mathrm{Max}} - \omega_{\mathrm{Min}} + \omega_{\mathrm{Mid}})}{|\omega_{\mathrm{Mid}}|^\delta |\omega_{\mathrm{Max}}|^\delta},
	\end{aligned}
\end{equation}
for some universal constant \( C_0 > 0 \), which may vary from estimate to estimate.

\textit{Case (II):} 
We make use of the  subdomains 
\( \mathfrak{D}^{Y} \), \( \mathfrak{D}^{Z^1} \), and \( \mathfrak{D}^{Z^2} \), 
as well as the index inequality \eqref{Propo:Collision:22:12}, to obtain the bound:
\begin{equation*}
	\begin{aligned}
		& \sum_{(i,j,l) \in \mathfrak{G}} 
		\iiint_{\Omega_{i,j,l}^{h_{\mathfrak{m}}, R_{\mathfrak{m}}}} 
		\mathrm{d}\omega_1\, \mathrm{d}\omega_2\, \mathrm{d}\omega\,
		G_1 G_2 G \cdot \mathbf{1}_{\omega + \omega_1 - \omega_2 \ge 0} \cdot 
		\frac{1}{\Theta(h_{\mathfrak{m}})} \cdot 
		\frac{(2R_{\mathfrak{m}})^{2\delta + \frac{3}{2}}}{h_{\mathfrak{m}}^2} 
		\left( \frac{4R_{\mathfrak{m}}}{h_{\mathfrak{m}}} \right)^\mu \\[0.5em]
		& \quad \times 
		\left[ \frac{\max\{|\omega_1 - \omega_2|,\ |\omega_2 - \omega|\}}{2(\omega + \omega_1)} \right]^\mu 
		\cdot \frac{(\omega_{\mathrm{Mid}} - \omega_{\mathrm{Min}})^2}{(2\omega_{\mathrm{Mid}} - \omega_{\mathrm{Min}})^{\frac{3}{2}}}	
		\cdot \frac{\Theta(\omega_{\mathrm{Max}} - \omega_{\mathrm{Min}} + \omega_{\mathrm{Mid}})}{|\omega_{\mathrm{Mid}}|^\delta |\omega_{\mathrm{Max}}|^\delta} \\[0.75em]
		\ge\; &  
		\sum_{\substack{(i,j,l) \in \mathbb{Z}^3 \\ i,j,l \ge 0,\ \mathscr{M}_{R_{\mathfrak{m}}, h_{\mathfrak{m}}} > i \ge j > l + 1,\ i \ge j + 2}} 
		\iiint_{\Omega_{i,j,l}^{h_{\mathfrak{m}}, R_{\mathfrak{m}}}} 
		\mathrm{d}\omega_1\, \mathrm{d}\omega_2\, \mathrm{d}\omega\,
		G_1 G_2 G \cdot \mathbf{1}_{\omega + \omega_1 - \omega_2 \ge 0} \\[0.75em]
		\ge\; & 
		\int_{\mathfrak{D}^{Y}} \mathrm{d}\omega \, G
		\int_{\mathfrak{D}^{Z^1}} \mathrm{d}\omega \, G 
		\int_{\mathfrak{D}^{Z^2}} \mathrm{d}\omega \, G.
	\end{aligned}
\end{equation*}

Combining this with the key inequalities 
\eqref{Propo:Collision:22:7} and 
\eqref{Propo:Collision:22:11}, we conclude:
\begin{equation}\label{Propo:Collision:34a}
	\begin{aligned}
		\left( \int_{[0, R_{\mathfrak{m}})} \mathrm{d}\omega\, G(t) \right)^3 
		\ \le\ &    
		\sum_{(i,j,l) \in \mathfrak{G}} 
		\iiint_{\Omega_{i,j,l}^{h_{\mathfrak{m}}, R_{\mathfrak{m}}}} 
		\mathrm{d}\omega_1\, \mathrm{d}\omega_2\, \mathrm{d}\omega\,
		G_1 G_2 G \cdot \mathbf{1}_{\omega + \omega_1 - \omega_2 \ge 0} \\[0.5em]
		& \times \frac{C_0}{\Theta(h_{\mathfrak{m}}) \nu^4} 
		\cdot \frac{(2R_{\mathfrak{m}})^{2\delta + \frac{3}{2}}}{h_{\mathfrak{m}}^2} 
		\left( \frac{4R_{\mathfrak{m}}}{h_{\mathfrak{m}}} \right)^\mu 
		\left[ \frac{\max\{|\omega_1 - \omega_2|,\ |\omega_2 - \omega|\}}{2(\omega + \omega_1)} \right]^\mu \\[0.5em]
		& \times 
		\frac{(\omega_{\mathrm{Mid}} - \omega_{\mathrm{Min}})^2}{(2\omega_{\mathrm{Mid}} - \omega_{\mathrm{Min}})^{\frac{3}{2}}} 
		\cdot \frac{\Theta(\omega_{\mathrm{Max}} - \omega_{\mathrm{Min}} + \omega_{\mathrm{Mid}})}{|\omega_{\mathrm{Mid}}|^\delta |\omega_{\mathrm{Max}}|^\delta},
	\end{aligned}
\end{equation}
for some universal constant \( C_0 > 0 \), which may vary from estimate to estimate.

By \eqref{Lemma:Concave:1}, we obtain the following bound for \( \alpha = \frac{1}{2} \):
\begin{equation}\label{Propo:Collision:35}
	\begin{aligned}
		\mathscr{M} + \mathscr{E} \ \ge\ 
		& \, \mathcal{C}_1 \int_0^T \mathrm{d}t \iiint_{\mathbb{R}_+^3} \mathrm{d}\omega_1\, \mathrm{d}\omega_2\, \mathrm{d}\omega\,
		G_1 G_2 G\, \frac{\frac{1}{4}(\omega_{\mathrm{Mid}} - \omega_{\mathrm{Min}})^2}{(2\omega_{\mathrm{Mid}} - \omega_{\mathrm{Min}})^{\frac{3}{2}}} \\[0.5em]
		& \quad \times \frac{\Theta(\omega_{\mathrm{Max}} - \omega_{\mathrm{Min}} + \omega_{\mathrm{Mid}})}{|\omega_{\mathrm{Mid}}||\omega_{\mathrm{Max}}|}    
		\left[ \frac{\max\{|\omega_1 - \omega_2|,\ |\omega_2 - \omega|\}}{2(\omega + \omega_1)} \right]^\mu \\[0.75em]
		\gtrsim\ 
		& \int_0^T \mathrm{d}t \iiint_{\mathbb{R}_+^3} \mathrm{d}\omega_1\, \mathrm{d}\omega_2\, \mathrm{d}\omega\,
		G_1 G_2 G\, \frac{(\omega_{\mathrm{Mid}} - \omega_{\mathrm{Min}})^2}{(2\omega_{\mathrm{Mid}} - \omega_{\mathrm{Min}})^{\frac{3}{2}}} \\[0.5em]
		& \quad \times \frac{\Theta(\omega_{\mathrm{Max}} - \omega_{\mathrm{Min}} + \omega_{\mathrm{Mid}})}{|\omega_{\mathrm{Mid}}|^\delta |\omega_{\mathrm{Max}}|^\delta}    
		\left[ \frac{\max\{|\omega_1 - \omega_2|,\ |\omega_2 - \omega|\}}{2(\omega + \omega_1)} \right]^\mu.
	\end{aligned}
\end{equation}

Combining the bounds \eqref{Propo:Collision:34}, \eqref{Propo:Collision:34a}, and \eqref{Propo:Collision:35}, we conclude that
\begin{equation}\label{Propo:Collision:36}
	\frac{\mathcal{C}_1(\mathscr{M} + \mathscr{E})}{\Theta(h_{\mathfrak{m}})\nu^4} \cdot \frac{R_{\mathfrak{m}}^{2\delta + \frac{3}{2} + \mu}}{h_{\mathfrak{m}}^{2 + \mu}} 
	\ \ge\ 
	\int_{\Xi_{\nu, R_{\mathfrak{m}}}^2} \mathrm{d}t \left( \int_{[0, R_{\mathfrak{m}})} \mathrm{d}\omega\, G(t) \right)^3,
\end{equation}
for some universal constant \( \mathcal{C}_1 > 0 \), which completes the proof of the proposition.

\end{proof}

\subsection{The  second multiscale estimates}

\begin{proposition}[Second multiscale estimates] 	We assume Assumption A and Assumption B. 

	\label{Lemma:Growth1} Let \( T_0 \) be as in \eqref{T0}.  We choose $0\le T<T_0$ if $T_0>0$ and $ T=0$ if $T_0=0$. By the definitions in 
	\eqref{Sec:Growthlemmas:2}, \eqref{Sec:Growthlemmas:3}, and \eqref{Sec:Growthlemmas:4}, 
	there exists a universal constant \( \mathfrak{C}_1 > 0 \)  independent of $\delta, \mu, \varrho, \varepsilon,  T, T_0$ such that
	\begin{equation}\label{Lemma:Growth1:1}
		\mathfrak{C}_1(\mathscr{M} + \mathscr{E}) \, R_\mathfrak{m}^{\frac{1}{4}(2\delta  - \frac{1}{2} - \varrho)}	
		\ge \left| \mathscr{B}_{\mathfrak{m}}^T \right|,
	\end{equation}
	for all \( \mathfrak{m} \in \mathbb{N} \), \( \mathfrak{m} > 10^6 \), and \( T \in [0, T_0) \).
	
	We define 
	\[
	c_\mathcal{A} = \min\left\{ \tfrac{1}{4}(2\delta  - \tfrac{1}{2} - \varrho),\, 2\delta - \varepsilon - \varrho \right\}.
	\]
	Then, there exists $\mathfrak{M}_*>0$ such that for all $\mathfrak{m} > \mathfrak{M}_*$, $\mathfrak{m} \in \mathbb{N}$, and for all $T \in [0, T_0)$, we have
	\begin{equation} \label{Lemma:Growth1:2}
		\left\vert \bigcup_{i=\mathfrak{m}}^\infty \left( \mathcal{A}_i^T \setminus \mathscr{D}_i^T \right) \right\vert 
		\le {C}_\mathcal{A} \left( R_\mathfrak{m} \right)^{c_\mathcal{A}},
	\end{equation}
	for a universal constant ${C}_\mathcal{A} > 0$, independent of $\delta, \mu, \varrho, \varepsilon, T, T_0$.

\end{proposition}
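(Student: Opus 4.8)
The strategy is to feed the collision estimate \eqref{Propo:Collision:1} into an upper bound for $|\mathscr{B}_{\mathfrak m}^T|$, combine it with the first multiscale estimate \eqref{Lemma:Growth2:1} for $|\mathscr{C}_{\mathfrak m}^T|$, and then obtain \eqref{Lemma:Growth1:2} by a short inclusion argument and a geometric summation. First I fix once and for all a constant $\nu=\nu(\rho)\in(0,\tfrac13)$ with $2^{-\rho}<1-\nu$, for instance $\nu:=\min\{\tfrac12(1-2^{-\rho}),\tfrac14\}$, which is possible since $\rho>0$. Because $0\le T<T_0$, the definition \eqref{T0} gives $[0,T]\subset\Xi$. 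I claim that, once $\mathfrak m$ is large enough that $h_{\mathfrak m}<\tfrac1{10}$, one has $\mathscr{B}_{\mathfrak m}^T\subset\Xi^2_{\nu,R_{\mathfrak m}}$: indeed, for $t\in\mathscr{B}_{\mathfrak m}^T$, the definitions \eqref{Sec:Growthlemmas:3}--\eqref{Sec:Growthlemmas:4} force, for every index $i$, $\int_{\mathscr{O}_{i}^{h_{\mathfrak m},R_{\mathfrak m}}}\mathrm{d}\omega\,G(t,\omega)<\mathcal{C}_*R_{\mathfrak{m}+1}^\rho=2^{-\rho}\mathcal{C}_*R_{\mathfrak m}^\rho\le 2^{-\rho}\int_{[0,R_{\mathfrak m})}\mathrm{d}\omega\,G(t,\omega)<(1-\nu)\int_{[0,R_{\mathfrak m})}\mathrm{d}\omega\,G(t,\omega)$, the middle step using $t\in\mathcal{A}_{\mathfrak m}^T$; this is precisely the membership condition of \eqref{TimePump}. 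Since in addition $\int_{[0,R_{\mathfrak m})}\mathrm{d}\omega\,G(t)\ge\mathcal{C}_*R_{\mathfrak m}^\rho$ on $\mathscr{B}_{\mathfrak m}^T\subset\mathcal{A}_{\mathfrak m}^T$, integrating the cube of this quantity over $\mathscr{B}_{\mathfrak m}^T$ and invoking \eqref{Propo:Collision:1} with the above $\nu$ (legitimate as $0<h_{\mathfrak m}<\tfrac1{10}$, $0<\nu<\tfrac13$, using $G\ge0$ and $\mathscr{B}_{\mathfrak m}^T\subset\Xi^2_{\nu,R_{\mathfrak m}}$) gives
\[
\mathcal{C}_*^3 R_{\mathfrak m}^{3\rho}\,\bigl|\mathscr{B}_{\mathfrak m}^T\bigr|\ \le\ \frac{\mathcal{C}_1(\mathscr{M}+\mathscr{E})}{\Theta(h_{\mathfrak m})\,\nu^4}\cdot\frac{R_{\mathfrak m}^{2\delta+\frac32+\mu}}{h_{\mathfrak m}^{2+\mu}}.
\]

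\emph{Exponent bookkeeping.} Next I use $\Theta(h_{\mathfrak m})\ge C_\Theta' h_{\mathfrak m}^\varrho$ from \eqref{CondTheta}, the identity $h_{\mathfrak m}=R_{\mathfrak m}2^{-\mathfrak{N}_{\mathfrak m}}$, and the bound $\mathfrak{N}_{\mathfrak m}(2+\mu+\varrho)\le\tfrac{\mathfrak m}{4}(2\delta-\tfrac12-\varrho)$, which follows directly from the definition of $\mathfrak{N}_{\mathfrak m}$ together with \eqref{Parameters}. Collecting the powers of $R_{\mathfrak m}=2^{-\mathfrak m}$ in the displayed inequality yields $|\mathscr{B}_{\mathfrak m}^T|\le C(\mathscr{M}+\mathscr{E})R_{\mathfrak m}^{\frac34(2\delta-\frac12-\varrho)-3\rho}$. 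Finally, \eqref{rho} forces $\rho<\tfrac{2\delta-\frac12-\varrho}{10(2+\mu+\varrho)}\le\tfrac1{20}(2\delta-\tfrac12-\varrho)$, so that $\tfrac34(2\delta-\tfrac12-\varrho)-3\rho\ge\tfrac14(2\delta-\tfrac12-\varrho)$; since $R_{\mathfrak m}<1$, this gives \eqref{Lemma:Growth1:1}, with $\mathfrak{C}_1$ absorbing $\mathcal{C}_1,\mathcal{C}_*,C_\Theta',\nu$.

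\emph{Assembly of \eqref{Lemma:Growth1:2}.} By \eqref{Sec:Growthlemmas:4}, the index ranges defining $\mathscr{D}_i^T$ and $\mathscr{C}_i^T$ together exhaust $\{0,\dots,2^{\mathfrak{N}_i}-1\}$, hence $\mathscr{B}_i^T=\mathcal{A}_i^T\setminus(\mathscr{C}_i^T\cup\mathscr{D}_i^T)$ and therefore $\mathcal{A}_i^T\setminus\mathscr{D}_i^T\subset\mathscr{B}_i^T\cup\mathscr{C}_i^T$. Consequently $\bigl|\bigcup_{i=\mathfrak m}^\infty(\mathcal{A}_i^T\setminus\mathscr{D}_i^T)\bigr|\le\sum_{i\ge\mathfrak m}|\mathscr{B}_i^T|+\sum_{i\ge\mathfrak m}|\mathscr{C}_i^T|$. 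For $\mathfrak m$ larger than the threshold $\mathfrak{M}_*$ of Proposition~\ref{Lemma:Growth2} (and $>10^6$, and with $h_{\mathfrak m}<\tfrac1{10}$) I bound the first sum using \eqref{Lemma:Growth1:1} and the second using \eqref{Lemma:Growth2:1}; since $R_i=2^{-i}$ and the exponents $a_1:=\tfrac14(2\delta-\tfrac12-\varrho)>0$ and $a_2:=2\delta-\varepsilon-\varrho>\rho>0$ (the last inequality from \eqref{rho}) are positive, each geometric series converges with $\sum_{i\ge\mathfrak m}R_i^{a}=R_{\mathfrak m}^{a}/(1-2^{-a})$, and $R_{\mathfrak m}<1$ with $a_1,a_2\ge c_{\mathcal A}=\min\{a_1,a_2\}$ yields the claimed bound $\le C_{\mathcal A}R_{\mathfrak m}^{c_{\mathcal A}}$.

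\emph{Main obstacle.} The argument is essentially an assembly of \eqref{Propo:Collision:1} and \eqref{Lemma:Growth2:1}; the two delicate points are the inclusion $\mathscr{B}_{\mathfrak m}^T\subset\Xi^2_{\nu,R_{\mathfrak m}}$, which is exactly the reason the thresholds in \eqref{Sec:Growthlemmas:3} are taken at level $R_{\mathfrak{m}+1}^\rho$ rather than $R_{\mathfrak m}^\rho$ and which fixes the $\rho$-dependence of $\nu$, and the power count above, where the loss $h_{\mathfrak m}^{-(2+\mu+\varrho)}$ created by \eqref{Propo:Collision:1} and \eqref{CondTheta} must be compensated precisely by the gain $2^{\mathfrak{N}_{\mathfrak m}(2+\mu+\varrho)}$ afforded by the definition of $\mathfrak{N}_{\mathfrak m}$, leaving a net positive power of $R_{\mathfrak m}$ with enough margin to absorb the $-3\rho$; the smallness of $\rho$ and $\varepsilon$ imposed in \eqref{rho} is used exactly at these two places.
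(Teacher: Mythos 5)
Your proposal is correct and follows essentially the same route as the paper: you establish $\mathscr{B}_{\mathfrak m}^T\subset\Xi^2_{\nu,R_{\mathfrak m}}$ and feed the collision estimate \eqref{Propo:Collision:1} into the lower bound $\int_{[0,R_{\mathfrak m})}G\ge\mathcal{C}_*R_{\mathfrak m}^{\rho}$ on $\mathcal{A}_{\mathfrak m}^T$, do the same exponent bookkeeping via $\Theta(h_{\mathfrak m})\gtrsim h_{\mathfrak m}^{\varrho}$, $h_{\mathfrak m}=R_{\mathfrak m}2^{-\mathfrak{N}_{\mathfrak m}}$ and the smallness of $\rho$ from \eqref{rho}, and then deduce \eqref{Lemma:Growth1:2} from $\mathcal{A}_i^T\setminus\mathscr{D}_i^T\subset\mathscr{B}_i^T\cup\mathscr{C}_i^T$ and a geometric summation, exactly as in the paper. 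The only cosmetic difference is that you invoke Proposition \ref{Propo:Collision} as a black box with an explicit $\nu=\nu(\rho)<1-2^{-\rho}$, whereas the paper re-runs its proof with $\nu=1-2^{-\rho}$; both choices satisfy $0<\nu<1/3$ and yield the same estimate.
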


\begin{proof}
	
	We will first prove \eqref{Lemma:Growth1:1}.	
	
	Note that the same argument used in the proof of Proposition \ref{Propo:Collision} can be reiterated with 
	\[
	\nu = 1 - \frac{1}{2^{\rho}},
	\]
	where \( \rho \) is defined in \eqref{rho}. We then obtain the same estimate as in \eqref{Propo:Collision:36}:
	\begin{equation*}
		\begin{aligned}
			\frac{\mathcal{C}_1(\mathscr{M} + \mathscr{E})}{\Theta(h_\mathfrak{m})\nu^4} \cdot 
			\frac{R_\mathfrak{m}^{2\delta + \frac{3}{2} + \mu}}{h_\mathfrak{m}^{2 + \mu}}		
			\ \ge\ &\int_{0}^{T} \mathrm{d}t \, \chi_{\mathscr{B}_{\mathfrak{m}}^T}(t) 
			\left( \int_{[0, R_\mathfrak{m})} \mathrm{d}\omega \, G(t) \right)^3,
		\end{aligned}
	\end{equation*}
	which, in combination with \eqref{Sec:Growthlemmas:2}, implies
	\begin{equation}\label{Lemma:Growth1:E1}
		\begin{aligned}
			\frac{{C}_0(\mathscr{M} + \mathscr{E})}{ \nu^4} \cdot 
			\frac{R_\mathfrak{m}^{2\delta + \frac{3}{2} + \mu}}{h_\mathfrak{m}^{2 + \mu + \varrho}}	
			\ \ge\ &\mathcal{C}_* \int_{0}^{T} \mathrm{d}t \, \chi_{\mathscr{B}_{\mathfrak{m}}^T}(t) R_\mathfrak{m}^{3\rho}
			\ = \ \mathcal{C}_* \left| \mathscr{B}_{\mathfrak{m}}^T \right| R_\mathfrak{m}^{3\rho},
		\end{aligned}
	\end{equation}
	for a universal constant \( {C}_0 > 0 \) that may vary from line to line.
	
	Since \( R_\mathfrak{m} = 2^{\mathfrak{N}_\mathfrak{m}} h_\mathfrak{m} \), we deduce from \eqref{Lemma:Growth1:E1}
	\begin{equation}\label{Lemma:Growth1:E2}
		\begin{aligned}
			\frac{{C}_0(\mathscr{M} + \mathscr{E})}{ \nu^4} \cdot 
			{R_\mathfrak{m}^{2\delta - \frac{1}{2}  - 3\rho - \varrho}} \cdot 2^{\mathfrak{N}_\mathfrak{m}(2 + \mu + \varrho)}	
			\ \ge\ & \mathcal{C}_* \left| \mathscr{B}_{\mathfrak{m}}^T \right|.
		\end{aligned}
	\end{equation}
By \eqref{rho}, we have
\[
R_\mathfrak{m}^{-3\rho} = 2^{3\rho\mathfrak{m}} < 2^{\tfrac{\mathfrak{m}}{2} \left( 2\delta - \tfrac{1}{2} - \varrho  \right)} = R_\mathfrak{m}^{-\tfrac{1}{2} \left( 2\delta - \tfrac{1}{2} - \varrho  \right)},
\]
and hence we deduce
\begin{equation*}
	\begin{aligned}
		R_\mathfrak{m}^{2\delta - \frac{1}{2}  - 3\rho - \varrho} \cdot 2^{\mathfrak{N}_\mathfrak{m}(2 + \mu + \varrho)} 
		&\le 2^{-\mathfrak{m} \left( 2\delta  - \frac{1}{2} - \varrho \right)/2} \cdot 2^{\left\lfloor \frac{\mathfrak{m}}{4(2 + \mu + \varrho)} \left( 2\delta - \frac{1}{2} - \varrho  \right) \right\rfloor (2 + \mu + \varrho)} \\
		&\le 2^{-\frac{\mathfrak{m}}{4} \left( 2\delta  - \frac{1}{2} - \varrho \right)},
	\end{aligned}
\end{equation*}

	which in combination with \eqref{Lemma:Growth1:E2}, yields
	\begin{equation}\label{Lemma:Growth1:E3}
		\begin{aligned}
			\mathfrak{C}_1(\mathscr{M} + \mathscr{E}) \, R_\mathfrak{m}^{\frac{1}{4}\left(2\delta  - \frac{1}{2} - \varrho\right)}	
			\ \ge\ & \left| \mathscr{B}_{\mathfrak{m}}^T \right|,
		\end{aligned}
	\end{equation}
	for some universal constant \( \mathfrak{C}_1 > 0 \),  
	leading to the proof of \eqref{Lemma:Growth1:1}.

	To continue, we will prove~\eqref{Lemma:Growth1:2}.
	
	Combining~\eqref{Lemma:Growth1:1} and~\eqref{Lemma:Growth2:1}, we obtain
	\begin{equation*}
		\begin{aligned}
			\left| \mathscr{B}_\mathfrak{m}^T \cup \mathscr{C}_{\mathfrak{m}}^T \right|  
			\le\ & \mathfrak{C}_1(\mathscr{M} + \mathscr{E}) 
			R_\mathfrak{m}^{\frac{1}{4}(2\delta - \frac{1}{2} - \varrho)} 
			+ C_\mathscr{C} R_{\mathfrak{m}}^{2\delta - \varepsilon - \varrho} 
			\le C_0 R_{\mathfrak{m}}^{c_\mathcal{A}},
		\end{aligned}
	\end{equation*}
	where $C_0 > 0$ is a universal constant that may vary from line to line. 
	
	Next, we estimate
	\begin{equation*}
		\left\vert\bigcup_{i=\mathfrak{m}}^\infty \left( \mathcal{A}_i^T \setminus \mathscr{D}_i^T \right) \right\vert 
		\le \left\vert \bigcup_{i=\mathfrak{m}}^\infty \left( \mathscr{B}_i^T \cup \mathscr{C}_i^T \right) \right\vert
		\le \sum_{i=\mathfrak{m}}^{\infty} \left| \mathscr{B}_i^T \cup \mathscr{C}_i^T \right| 
		\le {C}_0 \sum_{i=\mathfrak{m}}^{\infty} R_i^{c_\mathcal{A}} 
		= \frac{{C}_0}{1 - 2^{-c_\mathcal{A}}} R_{\mathfrak{m}}^{c_\mathcal{A}}.
	\end{equation*}
	
	Hence,~\eqref{Lemma:Growth1:2} is proved.

\end{proof}
 \section{The final multiscale estimates}\label{Sec:Third} 
 \begin{proposition}[Third multiscale estimates] 
 	\label{Lemma:Growth3}  	We assume Assumption A and Assumption B. 
 Let \( T_0 \) be as in \eqref{T0}.  We choose $0\le T<T_0$ if $T_0>0$ and $ T=0$ if $T_0=0$. By the definitions in 
 	\eqref{Sec:Growthlemmas:2}, \eqref{Sec:Growthlemmas:3}, and \eqref{Sec:Growthlemmas:4}, 
 	for all  \( \rho > 0 \) satisfying \eqref{rho}, there exists \( \mathfrak{M}^*(\rho) > 0 \) such that for \( \mathfrak{m} > \mathfrak{M}^*(\rho) \), we have
 	\begin{equation} \label{Lemma:Growth3:1}
 		\left| \mathcal{A}_\mathfrak{m}^T \right| \leq C_\mathcal{A} \left( R_\mathfrak{m} \right)^{c_\mathcal{A}},
 	\end{equation}
 	where we have used the same notations as in \eqref{Lemma:Growth1:2}.
 	
 \end{proposition}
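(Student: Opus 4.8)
The plan is to bridge consecutive dyadic scales and iterate, feeding the ``good'' part at each scale into Proposition~\ref{Lemma:Growth1}. The elementary but decisive observation is the inclusion $\mathscr{D}_{\mathfrak m}^T\subseteq\mathcal{A}_{\mathfrak m+1}^T$. Indeed, by \eqref{Sec:Growthlemmas:1} every overlapping subdomain $\mathscr{O}_i^{h_{\mathfrak m},R_{\mathfrak m}}$ with $i\le 2^{\mathfrak N_{\mathfrak m}-1}-2$ is contained in $[0,(i+2)h_{\mathfrak m})\subseteq[0,2^{\mathfrak N_{\mathfrak m}-1}h_{\mathfrak m})=[0,R_{\mathfrak m}/2)=[0,R_{\mathfrak m+1})$, and the threshold in the definition \eqref{Sec:Growthlemmas:3} of $\mathcal{A}_{\mathfrak m,i}^T$ is exactly $\mathcal{C}_*R_{\mathfrak m+1}^{\rho}$; hence any $t\in\mathscr{D}_{\mathfrak m}^T$ satisfies $\int_{[0,R_{\mathfrak m+1})}\mathrm d\omega\,G(t)\ge\mathcal{C}_*R_{\mathfrak m+1}^{\rho}$, i.e.\ $t\in\mathcal{A}_{\mathfrak m+1}^T$ by \eqref{Sec:Growthlemmas:2}.

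Granting this, I would write $\mathcal{A}_{\mathfrak m}^T\subseteq(\mathcal{A}_{\mathfrak m}^T\setminus\mathscr{D}_{\mathfrak m}^T)\cup\mathscr{D}_{\mathfrak m}^T\subseteq(\mathcal{A}_{\mathfrak m}^T\setminus\mathscr{D}_{\mathfrak m}^T)\cup\mathcal{A}_{\mathfrak m+1}^T$ and iterate this from level $\mathfrak m$ up to an arbitrary level $M>\mathfrak m$, obtaining $\mathcal{A}_{\mathfrak m}^T\subseteq\bigcup_{i=\mathfrak m}^{M}(\mathcal{A}_i^T\setminus\mathscr{D}_i^T)\cup\mathcal{A}_{M+1}^T$; letting $M\to\infty$ gives
\[
\mathcal{A}_{\mathfrak m}^T\ \subseteq\ \bigcup_{i=\mathfrak m}^{\infty}\bigl(\mathcal{A}_i^T\setminus\mathscr{D}_i^T\bigr)\ \cup\ \bigcap_{M>\mathfrak m}\mathcal{A}_M^T .
\]
By \eqref{Lemma:Growth1:2} of Proposition~\ref{Lemma:Growth1} the first union has measure at most $C_\mathcal{A} R_{\mathfrak m}^{c_\mathcal{A}}$ once $\mathfrak m$ exceeds the $\mathfrak M_*$ provided there, so the whole statement reduces to showing that the residual set $\bigcap_{M>\mathfrak m}\mathcal{A}_M^T$ is Lebesgue-null.

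For the residual set I would first record that for $t\in[0,T]\subseteq[0,T_0)$ one has $\int_{\{0\}}\mathrm d\omega\,f(t,\cdot)=0$ by \eqref{T0}, hence $G(t,\{0\})=0$ and $\int_{[0,R_M)}\mathrm d\omega\,G(t)\downarrow 0$, so that $\int_0^T\!\mathrm dt\int_{[0,R_M)}\mathrm d\omega\,G(t)\to 0$ by monotone convergence. Then, fixing $\nu\in(0,\tfrac13)$, I would split $\bigcap_{M>\mathfrak m}\mathcal{A}_M^T$ at each level $M$ according to whether $t\in\Xi^2_{\nu,R_M}$ or $t\in\Xi^1_{\nu,R_M}$ in the sense of \eqref{TimePump}. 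On the $\Xi^2_{\nu,R_M}$-part one has $\bigl(\int_{[0,R_M)}G(t)\bigr)^3\ge\mathcal{C}_*^3 R_M^{3\rho}$, so Proposition~\ref{Propo:Collision}, together with $\Theta(h_M)\gtrsim h_M^{\varrho}$ from Assumption~A and the choice of $\mathfrak N_M$, gives (exactly as in \eqref{Lemma:Growth1:E1}--\eqref{Lemma:Growth1:E3}) a bound of order $R_M^{\frac34(2\delta-\frac12-\varrho)-3\rho}$, which tends to $0$ since $3\rho<\tfrac34(2\delta-\tfrac12-\varrho)$ by \eqref{rho}. On the $\Xi^1_{\nu,R_M}$-part some overlapping subdomain of $[0,R_M)$ carries a $(1-\nu)$-fraction of the local mass; invoking the inclusion of the first paragraph at the appropriate finer scale reabsorbs these times into $\mathcal{A}_{M'}^T$ for larger $M'$, and any time surviving all scales would be forced to carry an atom of $G(t)$ at the origin, contradicting $t<T_0$. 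This yields $\bigl|\bigcap_{M>\mathfrak m}\mathcal{A}_M^T\bigr|=0$, and therefore $|\mathcal{A}_{\mathfrak m}^T|\le C_\mathcal{A} R_{\mathfrak m}^{c_\mathcal{A}}$ for $\mathfrak m>\mathfrak M^*(\rho)$.

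The hard part is this last step. The conservation laws are blind to a time at which $G(t)$ merely obeys $\int_{[0,r)}\mathrm d\omega\,G(t)\sim r^{\rho}$ for all small $r$, so excluding a positive-measure set of such persistently concentrated times genuinely requires extracting a gain from the collision kernel: the degeneracy weight $\Theta(h_{\mathfrak m})$ in Proposition~\ref{Propo:Collision} and the calibrated growth of $\mathfrak N_{\mathfrak m}$ are precisely what make the $\Xi^2$-contribution summable over scales, while the $\Xi^1$-contribution is controlled by the self-improving geometry of the subdomain decomposition together with the standing hypothesis $T<T_0$.
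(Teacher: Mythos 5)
Your reduction is sound and is essentially the paper's own: the inclusion $\mathscr{D}_{\mathfrak m}^T\subseteq\mathcal{A}_{\mathfrak m+1}^T$ is correct (it is exactly what makes $\mathcal{A}_i^T\setminus\mathcal{A}_{i+1}^T\subset\mathcal{A}_i^T\setminus\mathscr{D}_i^T$ work in the paper's proof), and it reduces \eqref{Lemma:Growth3:1}, via \eqref{Lemma:Growth1:2}, to showing that the residual set $\mathcal{R}:=\bigcap_{M>\mathfrak m}\mathcal{A}_M^T$ is negligible; your treatment of the $\Xi^2_{\nu,R_M}$ alternative also correctly reproduces the computation of \eqref{Lemma:Growth1:E1}--\eqref{Lemma:Growth1:E3}. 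The genuine gap is the $\Xi^1_{\nu,R_M}$ alternative, which is where all the difficulty sits. First, ``reabsorbing these times into $\mathcal{A}_{M'}^T$ for larger $M'$'' yields nothing: every $t\in\mathcal{R}$ already lies in $\mathcal{A}_{M'}^T$ for all $M'$, so no contradiction or measure gain is produced. Second, the claim that a time lying in $\Xi^1_{\nu,R_M}$ for all large $M$ ``would be forced to carry an atom of $G(t)$ at the origin'' is false as a statement about measures: take $G=\sum_j m_j\delta_{x_j}$ with $x_j=2^{-jL}$ and $m_j\simeq 2\mathcal{C}_*\,2^{-\rho(j-1)L}$, with $L\ge\rho^{-1}\log_2(1/\nu)$; then $\int_{[0,R_M)}\mathrm d\omega\,G\ge \mathcal{C}_*R_M^{\rho}$ for every $M$ (so such a time would belong to every $\mathcal{A}_M^T$), at each scale a single atom, hence a single overlapping subdomain $\mathscr{O}_i^{h_M,R_M}$, carries more than a $(1-\nu)$ fraction of the mass of $[0,R_M)$ (so the time lies in $\Xi^1_{\nu,R_M}$ for every $M$), and yet $G(\{0\})=0$. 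Consequently no static, fixed-time argument can dispose of $\mathcal{R}$, and your scale-by-scale bound does not close either, because the set of times belonging to $\Xi^1_{\nu,R_M}$ for infinitely many $M$ escapes the $\Xi^2$ estimate entirely.

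What the paper does at this point is dynamical, and this is what your proposal is missing. For a residual time $t(\rho)$ one has concentration at all scales (Step 1); the supersolution of Lemma \ref{Lemma:Super} with the exponential barrier shows that this concentration persists for all later times, see \eqref{Lemma:Growth3:E4} (Step 2); then, precisely because no time before $T_0$ carries an atom at the origin, every later time must place mass $\gtrsim R_j^{\rho}$ in infinitely many dyadic shells $[R_{j+1},R_j)$ — the unbounded set $\mathscr{S}(t)$ of \eqref{Lemma:Growth3:E17} (Step 3); finally a pigeonhole in time and in frequency, \eqref{Lemma:Growth3:E18}--\eqref{Lemma:Growth3:E20}, produces on a set of times of measure $\gtrsim 2^{-\mathfrak n\varkappa}$ a lower bound of the type \eqref{Lemma:Growth2:E0a}, and the same supersolution mechanism as in Proposition \ref{Lemma:Growth2} then gives a contradiction as $\mathfrak n\to\infty$. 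Without the persistence-in-time step and the shell-mass consequence of $t<T_0$, the residual set cannot be controlled, so the proof as you have written it does not go through.
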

 
\begin{proof}

For each $\rho$ satisfying \eqref{rho}, we will prove the existence of $\rho> 0$ and $\mathfrak{M}^*(\rho) > 0$ such that for all $\mathfrak{m} > \mathfrak{M}^*(\rho)$,
\begin{equation} \label{Lemma:Growth3:E1}
	\mathcal{A}_{\mathfrak{m}}^T \subset \bigcup_{i=\mathfrak{m}}^\infty \left( \mathcal{A}_i^T \setminus \mathscr{D}_i^T \right).
\end{equation}
By definition, we observe that
\[
\bigcup_{i=\mathfrak{m}}^\infty \left( \mathcal{A}_i^T \setminus \mathcal{A}_{i+1}^T \right) \subset \bigcup_{i=\mathfrak{m}}^\infty \left( \mathcal{A}_i^T \setminus \mathscr{D}_i^T \right).
\]
Thus, to prove \eqref{Lemma:Growth3:E1}, it suffices to show the existence of  $\mathfrak{M}^*(\rho) > 0$ such that for all $\mathfrak{m} > \mathfrak{M}^*(\rho)$,
\begin{equation} \label{Lemma:Growth3:E2}
	\mathcal{A}_{\mathfrak{m}}^T \setminus \left( \bigcup_{i=\mathfrak{m}}^\infty \left( \mathcal{A}_i^T \setminus \mathcal{A}_{i+1}^T \right) \right) = \emptyset.
\end{equation}

We argue by contradiction. Suppose that there exists $\rho > 0$ satisfying \eqref{rho} and a sequence
\[
\mathfrak{m}(\rho) = \mathfrak{m}_0(\rho) < \mathfrak{m}_1(\rho) < \cdots
\]
such that for each $j \in \mathbb{N}$,
\[
\mathcal{A}_{\mathfrak{m}_j(\rho)}^T \setminus \left( \bigcup_{i=\mathfrak{m}_j(\rho)}^\infty \left( \mathcal{A}_i^T \setminus \mathcal{A}_{i+1}^T \right) \right) \ne \emptyset.
\]

That is, there exists a sequence of time points at which elements of $\mathcal{A}_{\mathfrak{m}_j(\rho)}^T$ do not fall within the nested differences $\mathcal{A}_i^T \setminus \mathcal{A}_{i+1}^T$. This contradiction will yield the desired conclusion.

We divide the proof into several steps.

\medskip
\noindent
\textit{Step 1:} 
Let $t(\rho) \in \mathcal{A}_{\mathfrak{m}(\rho)}^T \setminus \left( \bigcup_{i = \mathfrak{m}(\rho)}^\infty \left( \mathcal{A}_i^T \setminus \mathcal{A}_{i+1}^T \right) \right) \subset [0, T]$. 
We will prove that
\begin{equation} \label{Lemma:Growth3:E3}
	\int_{[0, R_i)} \mathrm{d}\omega \, G\left( t(\rho), \omega \right) \geq \mathcal{C}_* R_i^\rho, 
\end{equation}
for all $i \ge \mathfrak{m}(\rho)$.

To establish this, suppose by contradiction that there exists $i_0 \ge \mathfrak{m}(\rho)$ such that \eqref{Lemma:Growth3:E3} fails to hold. Note, however, that by assumption $t(\rho) \in \mathcal{A}_{\mathfrak{m}(\rho)}^T$, so \eqref{Lemma:Growth3:E3} must hold for $i = \mathfrak{m}(\rho)$. 

Therefore, we may assume that \eqref{Lemma:Growth3:E3} holds for all $i = \mathfrak{m}(\rho), \mathfrak{m}(\rho)+1, \dots, i_0 - 1$, and fails for $i = i_0$. This would imply that
\[
t(\rho) \in \mathcal{A}_{i_0 - 1}^T \setminus \mathcal{A}_{i_0}^T,
\]
contradicting the fact that $t(\rho)$ does not belong to any of the sets $\mathcal{A}_i^T \setminus \mathcal{A}_{i+1}^T$ for $i \ge \mathfrak{m}(\rho)$.

Hence, we conclude that \eqref{Lemma:Growth3:E3} must hold for all $i \ge \mathfrak{m}(\rho)$.

\medskip
\noindent
\textit{Step 2:} 
Replacing
\[
\mathfrak{m}(\rho) = \mathfrak{m}_0(\rho) < \mathfrak{m}_1(\rho) < \cdots
\]
by
\[
\mathfrak{m}(\rho) = \mathfrak{m}_j(\rho) < \mathfrak{m}_{j+1}(\rho) < \cdots
\]
for $j$ sufficiently large. Then, we can assume that $\mathfrak{m}(\rho)$ is sufficiently  large.

 In this step, we will prove that
\begin{equation} \label{Lemma:Growth3:E4}
	\int_{[0, 4R_i)} \mathrm{d}\omega \, G(t, \omega) \geq \frac{\mathcal{C}_*}{3} R_i^\rho, \quad \forall t > t(\rho),
\end{equation}
for all $i \ge \mathfrak{m}(\rho)$.

Following \eqref{Lemma:ConcreteSuper:E2}, we use $\phi(\omega) = \exp\left[\frac{1}{R_{i}}\left(4{R_{i}} - \omega\right)_+\right] - 1$ as a test function. Similar to \eqref{Lemma:Growth2:E1}, we find
\begin{equation} \label{Lemma:Growth3:E5}
	\partial_t \left( \int_{\mathbb{R}^{+}} \mathrm{d}\omega\, \phi(\omega) G(t, \omega) \right)
	= \mathfrak{G}_1 + \mathfrak{G}_2 + \mathfrak{G}_3,
\end{equation}
where
\begin{equation} \label{Lemma:Growth3:E6}
	\begin{aligned}
		\mathfrak{G}_1 :=\; & 2c_{12} \iint_{\omega_1 > \omega_2} \mathrm{d}\omega_1 \mathrm{d}\omega_2\, \mathfrak{A}(\omega_1) \mathfrak{A}(\omega_2)\, f(\omega_1) f(\omega_2) \\
		& \quad \times \Big[ \mathfrak{A}(\omega_1 + \omega_2) \big( \phi(\omega_1 + \omega_2) - \phi(\omega_1) - \phi(\omega_2) \big) \\
		& \qquad\quad - \mathfrak{A}(\omega_1 - \omega_2) \big( \phi(\omega_1) - \phi(\omega_1 - \omega_2) - \phi(\omega_2) \big) \Big] \\
		& + c_{12} \iint_{\omega_1 = \omega_2} \mathrm{d}\omega_1 \mathrm{d}\omega_2\, \mathfrak{A}(\omega_1)^2 f(\omega_1)^2\, \mathfrak{A}(2\omega_1)\, \big[ \phi(2\omega_1) - 2\phi(\omega_1) \big],
	\end{aligned}
\end{equation}

\begin{equation} \label{Lemma:Growth3:E7}
	\begin{aligned}
		\mathfrak{G}_2 :=\; & c_{22} \iiint_{\mathbb{R}_+^3} \mathrm{d}\omega_1\,\mathrm{d}\omega_2\,\mathrm{d}\omega\, f(\omega_1) f(\omega_2) f(\omega)\, \mathbf{1}_{\omega+\omega_1-\omega_2 \ge 0} \\
		& \times \left[ \frac{\max\{|\omega_1 - \omega_2|,\ |\omega_2 - \omega|\}}{2(\omega + \omega_1)} \right]^\mu \cdot \Big\{ 
		[-\phi(\omega_{\text{Max}}) - \phi(\omega_{\text{Min}}) + \phi(\omega_{\text{Mid}}) \\
		& \quad + \phi(\omega_{\text{Max}} + \omega_{\text{Min}} - \omega_{\text{Mid}})] \prod_{x \in \{\omega_{\text{Max}}, \omega_{\text{Min}}, \omega_{\text{Mid}}, \omega_{\text{Max}} + \omega_{\text{Min}} - \omega_{\text{Mid}}\}}\Theta(x)\, |k_{\text{Min}}| \\
		& + [-\phi(\omega_{\text{Max}}) - \phi(\omega_{\text{Mid}}) + \phi(\omega_{\text{Min}}) + \phi(\omega_{\text{Max}} + \omega_{\text{Mid}} - \omega_{\text{Min}})] \\
		& \quad \times \prod_{x \in \{\omega_{\text{Max}}, \omega_{\text{Mid}}, \omega_{\text{Min}}, \omega_{\text{Max}} + \omega_{\text{Mid}} - \omega_{\text{Min}}\}}\Theta(x)\, |k_{\text{Min}}| \\
		& + [-\phi(\omega_{\text{Min}}) - \phi(\omega_{\text{Mid}}) + \phi(\omega_{\text{Max}}) + \phi(\omega_{\text{Min}} + \omega_{\text{Mid}} - \omega_{\text{Max}})] \\
		& \quad \times \mathbf{1}_{\omega_{\text{Min}} + \omega_{\text{Mid}} - \omega_{\text{Max}} \ge 0} \prod_{x \in \{\omega_{\text{Max}}, \omega_{\text{Mid}}, \omega_{\text{Min}}, \omega_{\text{Min}} + \omega_{\text{Mid}} - \omega_{\text{Max}}\}}\Theta(x) \\
		& \quad \times \min \left\{ |k(\omega_{\text{Max}})|, |k(\omega_{\text{Min}})|, |k(\omega_{\text{Mid}})|, |k(\omega_{\text{Min}} + \omega_{\text{Mid}} - \omega_{\text{Max}})| \right\}
		\Big\}
	\end{aligned}
\end{equation}

and
\begin{equation} \label{Lemma:Growth3:E8}
	\begin{aligned}
		\mathfrak{G}_3 :=\; & 3c_{31} \iiint_{\omega_1 > \omega_2 + \omega_3} \mathrm{d}\omega_1\, \mathrm{d}\omega_2\, \mathrm{d}\omega_3\, \mathfrak{A}(\omega_1) \mathfrak{A}(\omega_2) \mathfrak{A}(\omega_3)\, f(\omega_1) f(\omega_2) f(\omega_3) \\
		& \times \Big[ \mathfrak{A}(\omega_1 + \omega_2 + \omega_3) \big( \phi(\omega_1 + \omega_2 + \omega_3) - \phi(\omega_1) - \phi(\omega_2) - \phi(\omega_3) \big) \\
		& \quad - \mathfrak{A}(\omega_1 - \omega_2 - \omega_3) \big( \phi(\omega_1) - \phi(\omega_1 - \omega_2 - \omega_3) - \phi(\omega_2) - \phi(\omega_3) \big) \Big] \\
		& + c_{31} \iint_{\omega_1 = \omega_2 + \omega_3} \mathrm{d}\omega_1\, \mathrm{d}\omega_2\, \mathfrak{A}(\omega_1) \mathfrak{A}(\omega_2) \mathfrak{A}(\omega_3)\, f(\omega_1) f(\omega_2) f(\omega_3) \\
		& \quad \times \mathfrak{A}(3\omega_1) \big[ \phi(2\omega_1) - \phi(\omega_1)- \phi(\omega_2)- \phi(\omega_3) \big] \\
		& + c_{31} \iiint_{\mathbb{R}_+^3 \setminus \left( \{\omega_1 > \omega_2 + \omega_3\} \cup \{\omega_2 > \omega_1 + \omega_3\} \cup \{\omega_3 > \omega_1 + \omega_2\} \right)} \mathrm{d}\omega_1\, \mathrm{d}\omega_2\, \mathrm{d}\omega_3 \\
		& \quad \times \mathfrak{A}(\omega_1 + \omega_2 + \omega_3)\, \mathfrak{A}(\omega_1) \mathfrak{A}(\omega_2) \mathfrak{A}(\omega_3)\, f(\omega_1) f(\omega_2) f(\omega_3) \\
		& \quad \times \big[ \phi(\omega_1 + \omega_2 + \omega_3) - \phi(\omega_1) - \phi(\omega_2) - \phi(\omega_3) \big].
	\end{aligned}
\end{equation}

Similar to \eqref{Lemma:Growth2:E5}--\eqref{Lemma:Growth2:E8}, we can estimate
\begin{equation}\label{Lemma:Growth3:E9}
	\mathfrak{G}_1
	\ge -C_0 R_i^{\theta} \phi(0)\left( \int_{[0,\infty)} \mathrm{d}\omega\, G(\omega) \right)^2,
\end{equation}
for some constant $C_0$ independent of $f$ and $R_i$, and which may vary from line to line.

Similarly, as in \eqref{Lemma:Growth2:E9}--\eqref{Lemma:Growth2:E12}, we estimate
\begin{equation}\label{Lemma:Growth3:E10}
	\mathfrak{G}_3
	\ge -C_0 R_i^{\theta}\phi(0) \left( \int_{[0,\infty)} \mathrm{d}\omega\, G(\omega) \right)^3,
\end{equation}
for some constant $C_0$ independent of $f$ and $R_i$.

Similar to \eqref{Lemma:Growth2:E15}--\eqref{Lemma:Growth2:E16}, we deduce that
\begin{equation}\label{Lemma:Growth3:E11}
	\mathfrak{G}_2 \ge 0.
\end{equation}

Combining \eqref{Lemma:Growth3:E5}--\eqref{Lemma:Growth3:E11}, we obtain the bound
\begin{equation} \label{Lemma:Growth3:E12}
	\partial_t \left( \int_{\mathbb{R}^{+}} \mathrm{d}\omega\, \phi(\omega) G(t, \omega) \right)
	\ge -C_0 \phi(0) R_i^{\theta} \left( \mathscr{E} + \mathscr{M} \right)^2 - C_0 \phi(0) R_i^{\theta} \left( \mathscr{E} + \mathscr{M} \right)^3,
\end{equation}
for some constant \( C_0 > 0 \) independent of \( f \) and \( R_i \), where \( \mathscr{E} \) and \( \mathscr{M} \) denote the energy and mass, respectively.

Developing \eqref{Lemma:Growth3:E12}, we estimate, for $t\ge t(\rho)$,
\begin{equation} \label{Lemma:Growth3:E13}
	\begin{aligned}
		& \int_{\mathbb{R}^{+}} \mathrm{d}\omega\,\left[ \exp\left( \frac{1}{R_i} \left(4R_i - \omega\right)_+  \right)- 1\right] G(t, \omega) \\
		\ge\; & \int_{\mathbb{R}^{+}} \mathrm{d}\omega\, \left[\exp\left( \frac{1}{R_i} \left(4R_i - \omega\right)_+  \right) -1\right]G(t(\rho), \omega)\\
		& \  -\ C_0 \phi(0) \left[ R_i^{\theta} \left( \mathscr{E} + \mathscr{M} \right)^2 + R_i^{\theta} \left( \mathscr{E} + \mathscr{M} \right)^3 \right] (t - t(\rho)) \\
		\ge\; & \int_{[0, R_i]} \mathrm{d}\omega\, \left[e^3-1\right] G(t(\rho), \omega) \  - \ C_0 \phi(0) \left[ R_i^{\theta} \left( \mathscr{E} + \mathscr{M} \right)^2 + R_i^{\theta} \left( \mathscr{E} + \mathscr{M} \right)^3 \right] T,
	\end{aligned}
\end{equation}
which, in combination with \eqref{Lemma:Growth3:E3} and \eqref{rho}, implies
\begin{equation} \label{Lemma:Growth3:E14}
	\begin{aligned}
		\int_{[0, 4R_i]} \mathrm{d}\omega\, G(t, \omega)
		\ge\; & \int_{[0, 4R_i]} \mathrm{d}\omega\, \frac{1}{e^4-1}\left[\exp\left( \frac{1}{R_i} \left(4R_i - \omega\right)_+  \right) -1\right]G(t, \omega) \\
		\ge\; & \int_{[0, R_i]} \mathrm{d}\omega\, \frac{e^3-1}{e^4-1} G(t(\rho), \omega) \\
		& \quad - \frac{1}{e^4-1} C_0 \phi(0) \left[ R_i^{\theta} \left( \mathscr{E} + \mathscr{M} \right)^2 + R_i^{\theta} \left( \mathscr{E} + \mathscr{M} \right)^3 \right] T \\
		\ge\; & \frac{e^3-1}{e^4-1} \mathcal{C}_* R_i^\rho - \frac{1}{e^4-1} C_0 \phi(0) \left[ R_i^{\theta} \left( \mathscr{E} + \mathscr{M} \right)^2 + R_i^{\theta} \left( \mathscr{E} + \mathscr{M} \right)^3 \right] T \ 
		\ge\  \frac{1}{3} \mathcal{C}_* R_i^\rho,
	\end{aligned}
\end{equation}
for \( \mathfrak{m}(\rho) \) sufficiently large. Hence, \eqref{Lemma:Growth3:E4} is proved.

\noindent
\textit{Step 3:} 

We define
\[
t_* = \sup_{0< \rho < \min\left\{ \frac{2\delta - \frac{1}{2} - \varrho }{10(2 + \mu + \varrho)},\ \theta, \, 2\delta - \varrho,\ \tfrac{2}{3}\delta,\ \tfrac{\frac{2\delta - \frac{1}{2} - \varrho }{5(2 + \mu + \varrho)} + \varrho}{2} \right\}} t(\rho),
\]
so that \( 0 \le t_* < T_0 \). Moreover,
\begin{equation} \label{Lemma:Growth3:E15}
	\int_{[0, 4R_i]} \mathrm{d}\omega\, G(t, \omega) \ge \frac{\mathcal{C}_*}{3} R_i^\rho = \frac{\mathcal{C}_*}{4^\rho \cdot 3} (4R_i)^\rho,
\end{equation}
for all \( t > t_*,\ i \ge \mathfrak{m}(\rho)\).

We fix  \( \rho \), and there exists \( \mathfrak{m}^* = \mathfrak{m}^*(\rho) \) such that
\begin{equation} \label{Lemma:Growth3:E16}
	\int_{[0, R_j]} \mathrm{d}\omega\, G(t, \omega) > \mathcal{C}_*' |R_j|^\rho,
\end{equation}
for all \( j > \mathfrak{m}^*(\rho) \) and for all \( t_* < t < T_0 \), where \( \mathcal{C}_*' \) is a positive constant independent of \( j \) and \( t \).

For \( t_* < T_1 < T_0 \), \( j > \mathfrak{m}^*(\rho) \), and \( t_* < t < T_1 \), the goal of this step is to prove the existence of an unbounded set \( \mathscr{S}(t) \) such that for all \( j \in \mathscr{S}(t) \),
\begin{equation} \label{Lemma:Growth3:E17}
	\int_{[R_{j+1}, R_j)} \mathrm{d}\omega\, G(t, \omega) > \frac{2^\rho - 1}{2^\rho} \cdot \frac{\mathcal{C}_*'}{2} |R_j|^\rho,
\end{equation}
while for \( j \notin \mathscr{S}(t) \), inequality \eqref{Lemma:Growth3:E17} does not hold.

This can be shown by a contradiction argument.  
Suppose there exists \( \mathfrak{M}_0 > 0 \) such that for all \( j \ge \mathfrak{M}_0 \),
\[
\int_{[R_{j+1}, R_j)} \mathrm{d}\omega\, G(t, \omega) \le \frac{2^\rho - 1}{2^\rho} \cdot \frac{\mathcal{C}_*'}{2} |R_j|^\rho.
\]
Then summing over \( j \ge \mathfrak{M}_0 \), we obtain
\begin{equation*}
	\int_{\bigcup_{j = \mathfrak{M}_0}^\infty [R_{j+1}, R_j)} \mathrm{d}\omega\, G(t, \omega) 
	\le  \frac{2^\rho - 1}{2^\rho} \cdot \frac{\mathcal{C}_*'}{2} \sum_{j = \mathfrak{M}_0}^\infty |R_j|^\rho.
\end{equation*}
Since \( R_j = R_{\mathfrak{M}_0} \cdot 2^{-j + \mathfrak{M}_0} \), it follows that
\[
|R_j|^\rho = |R_{\mathfrak{M}_0}|^\rho \cdot 2^{-(j - \mathfrak{M}_0)\rho},
\]
and hence,
\[
\sum_{j = \mathfrak{M}_0}^\infty |R_j|^\rho = |R_{\mathfrak{M}_0}|^\rho \sum_{j = 0}^\infty 2^{-j\rho} 
= |R_{\mathfrak{M}_0}|^\rho \cdot \frac{1}{1 - 2^{-\rho}} = |R_{\mathfrak{M}_0}|^\rho \cdot \frac{2^\rho}{2^\rho - 1}.
\]
Therefore,
\[
\int_{\bigcup_{j = \mathfrak{M}_0}^\infty [R_{j+1}, R_j)} \mathrm{d}\omega\, G(t, \omega) 
\le |R_{\mathfrak{M}_0}|^\rho \cdot \frac{\mathcal{C}_*'}{2}.
\]

Thus,
\begin{equation*}
	\int_{\{0\}} \mathrm{d}\omega\, G(t, \omega) \ge |R_{\mathfrak{M}_0}|^\rho \cdot \frac{\mathcal{C}_*'}{2}.
\end{equation*}
This contradicts our original assumption on $T_0$ (see \eqref{T0}), and therefore confirms the existence of the unbounded set \( \mathscr{S}(t) \).

\noindent
\textit{Step 4:} 

For \( \mathfrak{m} \in \mathbb{N} \), we define \( \mathscr{S}_{\mathfrak{m}} \) to be the set of all \( t \in (t_*, T_1) \) such that \( \mathfrak{m} \in \mathscr{S}(t) \).  
That means,
\[
\bigcup_{\mathfrak{m} \in \mathbb{N}} \mathscr{S}_{\mathfrak{m}} = (t_*, T_1).
\]

Let $\varkappa\in(0,1)$ be a constant to be fixed later. We now show that there exists \( \mathfrak{n}_0 \in \mathbb{N} \) such that
 \[
 |\mathscr{S}_{\mathfrak{n}_0}| \ge |T_1 - t_*| \cdot 2^{-\mathfrak{n}_0 \varkappa} \cdot \frac{1}{2} \cdot \frac{2^\varkappa - 1}{2^\varkappa}
 \]
 by a contradiction argument. Suppose, on the contrary, that
 \[
 |\mathscr{S}_{n}| \le |T_1 - t_*| \cdot 2^{-n \varkappa} \cdot \frac{1}{2} \cdot \frac{2^\varkappa - 1}{2^\varkappa}
 \quad \text{for all } n \in \mathbb{N}.
 \]
 Then, summing over \( n \), we obtain
 \[
 \left| \bigcup_{n \in \mathbb{N}} \mathscr{S}_n \right| \le \frac{1}{2} \cdot \frac{2^\varkappa - 1}{2^\varkappa} \cdot |T_1 - t_*| \cdot \sum_{n=0}^\infty 2^{-n \varkappa}
 = \frac{1}{2} \cdot \frac{2^\varkappa - 1}{2^\varkappa} \cdot |T_1 - t_*| \cdot \frac{1}{1 - 2^{-\varkappa}}.
 \]
 This simplifies to
 \[
 \left| \bigcup_{n \in \mathbb{N}} \mathscr{S}_n \right| \le |T_1 - t_*| \cdot \frac{1}{2} < |T_1 - t_*| =  \left| \bigcup_{n \in \mathbb{N}} \mathscr{S}_n \right|,
 \]
 a contradiction. Therefore, there must exist \( \mathfrak{n}_0 \in \mathbb{N} \) such that
 \[
 |\mathscr{S}_{\mathfrak{n}_0}| \ge |T_1 - t_*| \cdot 2^{-\mathfrak{n}_0 \varkappa} \cdot \frac{1}{2} \cdot \frac{2^\varkappa - 1}{2^\varkappa}.
 \]
 
 Since for each \( t \in (t_*, T_1) \), the set \( \mathscr{S}(t) \) is unbounded, the subset \( \{n \in \mathscr{S}(t) \mid n > \mathfrak{n}_0\} \) is also unbounded. As a result,
 \[
 \bigcup_{n \in \mathbb{N},\, n > \mathfrak{n}_0} \mathscr{S}_n = (t_*, T_1).
 \]
 Applying the same argument, we find \( \mathfrak{n}_1 > \mathfrak{n}_0 \) such that
 \[
 |\mathscr{S}_{\mathfrak{n}_1}| \ge |T_1 - t_*| \cdot 2^{-\mathfrak{n}_1 \varkappa} \cdot \frac{1}{2} \cdot \frac{2^\varkappa - 1}{2^\varkappa}.
 \]
 
 Repeating this process, we construct an unbounded sequence \( \{ \mathfrak{n}_0, \mathfrak{n}_1, \dots \} \subset \mathbb{N} \) such that for each \( \mathfrak{n} \) in the sequence,
 \[
 |\mathscr{S}_{\mathfrak{n}}| \ge |T_1 - t_*| \cdot 2^{-\mathfrak{n} \varkappa} \cdot \frac{1}{2} \cdot \frac{2^\varkappa - 1}{2^\varkappa}.
 \]

Let \( \mathfrak{n} \in \{ \mathfrak{n}_0, \mathfrak{n}_1, \dots \} \), where \( \mathfrak{n} \) is large enough, by the pigeonhole principle, there exists a set \( \mathscr{O}_{i}^{h_{\mathfrak{n}}, R_{\mathfrak{n}}} \subset \left[ R_{\mathfrak{n}+1}, R_{\mathfrak{n}} \right) \) with the following property:
\begin{equation} \label{Lemma:Growth3:E18}
	\int_{\mathscr{O}_{i}^{h_{\mathfrak{n}}, R_{\mathfrak{n}}}} \mathrm{d}\omega \, G(t, \omega) >  \frac{2^\rho - 1}{2^\rho} \cdot \frac{\mathcal{C}_*'}{100} \frac{|R_{\mathfrak{n}}|^\rho}{\mathfrak{N}_\mathfrak{n}} \ge \mathcal{C}_*'' R_\mathfrak{n}^{\rho + \frac{2\delta - \frac{1}{2} - \varrho }{4(2 + \mu + \varrho)}}.
\end{equation}
Here, \( \mathcal{C}_*'' > 0 \) is a constant that may vary from line to line, and \( t \in \mathscr{S}_\mathfrak{n} \). This yields
\begin{equation} \label{Lemma:Growth3:E19}
	\int_{t_*}^{T_1} \mathrm{d}t \, \chi_{\mathscr{S}_\mathfrak{n}}(t) \left[ \int_{\mathscr{O}_{i}^{h_{\mathfrak{n}}, R_{\mathfrak{n}}}} \mathrm{d}\omega \, G(t, \omega) \right]^2 > \mathcal{C}_*''' R_\mathfrak{n}^{2\rho + \frac{2\delta - \frac{1}{2} - \varrho }{2(2 + \mu + \varrho)} + \varkappa},
\end{equation}
where \( \mathcal{C}_*''' = \mathcal{C}_*''  |T_1 - t_*|  \cdot \frac{1}{2} \cdot \frac{2^\varkappa - 1}{2^\varkappa}\).

By \eqref{Parameters},  we obtain the bound
\[
\frac{2\delta - \frac{1}{2} - \varrho}{2(2 + \mu + \varrho)} < \delta.
\]

By \eqref{Parameters}, we can choose a sufficiently small \( \varkappa > 0 \) such that \( \varkappa + \varepsilon < \delta \). Then we have
\[
2\rho + \frac{2\delta - \frac{1}{2} - \varrho }{2(2 + \mu + \varrho)} + \varkappa 
< 2\delta + 2\rho - \varepsilon + \varrho.
\]

Therefore, for the fixed constant $\mathcal{C}_*''' $, there must exist a time $T_2$, \( t_* < T_2 < T_1 \) such that
\begin{equation} \label{Lemma:Growth3:E20}
	\int_{t_*}^{T_2} \mathrm{d}t\, \chi_{\mathscr{S}_\mathfrak{n}}(t) \left[ \int_{\mathscr{O}_{i}^{h_{\mathfrak{n}}, R_{\mathfrak{n}}}} \mathrm{d}\omega\, F(t, \omega) \right]^2  
	= \frac{\mathcal{C}_*'''  R_{\mathfrak{n}}^{2\delta + 2\rho - \varepsilon}}{ \Theta(R_{\mathfrak{n}} / 2)}.
\end{equation}

Similarly to \eqref{Lemma:Growth2:E24}, we bound for \( t \in (t_*, T_1) \):
\begin{equation} \label{Lemma:Growth3:E21}
	\begin{aligned}
		\partial_t \left( \int_{\mathbb{R}^{+}} \mathrm{d}\omega\, G \Psi_t \right)
		\ge\ & - C_0 R_{\mathfrak{n}}^{\theta} \Psi_t(0) \mathscr{M}^2
		- C_0 R_{\mathfrak{n}}^{\theta} \Psi_t(0) \mathscr{M}^3 \\
		& + 2^{-4\mu} c_{22} C_{\text{disper}}^{2\delta} R_{\mathfrak{n}}^{-2\delta}
		\left[ \int_{\mathbb{R}_+} \mathrm{d}\omega\, G(\omega) \Psi_t(\omega) \right]
		\left[ \int_{\mathbb{R}_+} \mathrm{d}\omega\, G(\omega)
		\chi_{\left\{ \omega \in \mathscr{O}_{\sigma(t)}^{h_{\mathfrak{n}}, R_{\mathfrak{n}}} \right\}}
		\chi_{\mathscr{S}_\mathfrak{n}}(t) \right]^2.
	\end{aligned}
\end{equation}

By the same argument as in \eqref{Lemma:Growth2:E30}, and combining \eqref{Lemma:Growth3:E20}--\eqref{Lemma:Growth3:E21}, we deduce a contradiction as \( \mathfrak{n} \to \infty \). Hence, \eqref{Lemma:Growth3:E1} holds, and \eqref{Lemma:Growth3:1} follows from \eqref{Lemma:Growth1:2}.

\end{proof}

\section{Condensation growth}\label{Sec:CondensateGrowth}
\begin{proposition}[Immediate Condensation]
	\label{Lemma:Growth4}  	We assume Assumption A and Assumption B. 
 Suppose
\begin{equation}\label{Lemma:Growth4:1} 
0<	c_{\mathrm{ini}} <  \min\left\{\frac{2\delta - \frac{1}{2} - \varrho }{10(2 + \mu + \varrho)},\,\theta,\,{2\delta - \varrho},\,\frac23\delta,\,\frac{\frac{2\delta - \frac{1}{2} - \varrho }{5(2 + \mu + \varrho)}+\varrho}{2}\right\}
\end{equation}

	then $$	T_0 =0.$$
\end{proposition}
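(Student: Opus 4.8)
The plan is to argue by contradiction, assuming $T_0>0$, and to combine the three multiscale estimates --- Proposition~\ref{Lemma:Growth2}, Proposition~\ref{Lemma:Growth1}, and Proposition~\ref{Lemma:Growth3} --- with the hypothesis~\eqref{Theorem1:4} on the initial concentration of $G(0,\cdot)$. The key observation is that the exponents appearing in the definitions~\eqref{rho} of $\rho$ and $\varepsilon$ were chosen precisely so that, under the constraint~\eqref{Lemma:Growth4:1} on $c_{\mathrm{ini}}$, one may select a parameter $\rho$ satisfying $c_{\mathrm{ini}} < \rho$ while still obeying all the inequalities in~\eqref{rho}. Indeed, since $c_{\mathrm{ini}}$ is strictly below the minimum of the five quantities listed in~\eqref{Lemma:Growth4:1}, and that minimum coincides with (or bounds) the supremum of admissible $\rho$ in~\eqref{rho}, such a choice of $\rho \in (c_{\mathrm{ini}}, \cdot)$ exists. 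Fix this $\rho$, together with a corresponding $\varepsilon$ as in~\eqref{rho}, and the constant $\mathcal{C}_*$ in~\eqref{Sec:Growthlemmas:2} to be determined below.

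First I would establish that, for all sufficiently large $\mathfrak{m}$, the time $t=0$ belongs to $\mathcal{A}_\mathfrak{m}^T$ for any $T \in (0,T_0)$. This follows directly from~\eqref{Theorem1:4}: for $0<R_\mathfrak{m}<r_0$,
\[
\int_{[0,R_\mathfrak{m})}\mathrm{d}\omega\,G(0,\omega) \ \ge\ C_{\mathrm{ini}}\,R_\mathfrak{m}^{c_{\mathrm{ini}}} \ \ge\ \mathcal{C}_*\,R_\mathfrak{m}^{\rho},
\]
where the last inequality holds for $\mathfrak{m}$ large because $c_{\mathrm{ini}} < \rho$ forces $R_\mathfrak{m}^{c_{\mathrm{ini}}-\rho} \to \infty$ as $\mathfrak{m}\to\infty$ (here one fixes $\mathcal{C}_*$ once and for all, e.g. $\mathcal{C}_*=1$, and absorbs constants). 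Hence $0 \in \mathcal{A}_\mathfrak{m}^T$, so $|\mathcal{A}_\mathfrak{m}^T| > 0$ for every such $\mathfrak{m}$ and every $T\in(0,T_0)$. On the other hand, Proposition~\ref{Lemma:Growth3} gives, for $\mathfrak{m} > \mathfrak{M}^*(\rho)$,
\[
\bigl|\mathcal{A}_\mathfrak{m}^T\bigr| \ \le\ C_\mathcal{A}\,R_\mathfrak{m}^{c_\mathcal{A}},
\]
with $c_\mathcal{A} = \min\{\tfrac14(2\delta-\tfrac12-\varrho),\,2\delta-\varepsilon-\varrho\} > 0$ by~\eqref{Parameters} and~\eqref{rho}.

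The contradiction now comes from a careful examination of what $0\in\mathcal{A}_\mathfrak{m}^T$ forces at \emph{positive} times. The point is that membership of $t=0$ in $\mathcal{A}_\mathfrak{m}^T$ is not by itself incompatible with $|\mathcal{A}_\mathfrak{m}^T|$ being small --- a single point has measure zero. The real engine is Step~2 of the proof of Proposition~\ref{Lemma:Growth3}: from $0\in\mathcal{A}_{\mathfrak{m}(\rho)}^T$ and the structure of the nested sets, one propagates the lower bound $\int_{[0,4R_i)}G(t,\omega)\,\mathrm{d}\omega \gtrsim R_i^\rho$ forward to \emph{all} $t>t(\rho)$ with $t(\rho)=0$ here (see~\eqref{Lemma:Growth3:E4}--\eqref{Lemma:Growth3:E16}), using the supersolution/test-function argument~\eqref{Lemma:Growth3:E5}--\eqref{Lemma:Growth3:E14} and crucially the fact that the error term is of order $R_i^\theta$ with $\theta > c_{\mathrm{ini}}$, hence negligible against $R_i^\rho$ when $\rho<\theta$. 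Thus an entire interval $(0,T)$ would have to lie in (a dilated version of) $\mathcal{A}_\mathfrak{m}^T$-type sets, giving $|\mathcal{A}_\mathfrak{m}^T|$ bounded below by a fixed positive quantity independent of $\mathfrak{m}$ --- directly contradicting $|\mathcal{A}_\mathfrak{m}^T| \le C_\mathcal{A} R_\mathfrak{m}^{c_\mathcal{A}} \to 0$. More precisely, I would run the contradiction exactly as in Steps~3--4 of the proof of Proposition~\ref{Lemma:Growth3}: one obtains from~\eqref{Lemma:Growth3:E16} that $\int_{[0,R_j)}G(t,\omega)\,\mathrm{d}\omega > \mathcal{C}_*' R_j^\rho$ for all $j>\mathfrak{m}^*(\rho)$ and all $t\in(0,T_0)$, then extracts the unbounded set $\mathscr{S}(t)$ from~\eqref{Lemma:Growth3:E17}, and finally derives the blow-up~\eqref{Lemma:Growth3:E19}--\eqref{Lemma:Growth3:E21} which, together with the a priori bound $\mathscr{M}+\mathscr{E}<\infty$, is impossible as $\mathfrak{n}\to\infty$.

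The main obstacle --- and the place requiring the most care --- is the bookkeeping of which exponent inequalities are actually used, and verifying that the choice $c_{\mathrm{ini}} < \rho < \theta$ (with $\rho$ also below all four other quantities in~\eqref{rho}) is simultaneously achievable under hypothesis~\eqref{Lemma:Growth4:1}; this is a purely arithmetic check comparing the five bounds in~\eqref{Lemma:Growth4:1} with the five bounds in~\eqref{rho}, noting that the fifth bound $\tfrac{1}{2}\bigl(\tfrac{2\delta-1/2-\varrho}{5(2+\mu+\varrho)}+\varrho\bigr)$ appears in both lists and that $\varepsilon$ can be taken in $\bigl(\max\{0,2\rho-\varrho\},\,\min\{2\delta-\varrho-\rho,\,\tfrac{2\delta-1/2-\varrho}{5(2+\mu+\varrho)}\}\bigr)$ once $\rho$ is fixed small enough. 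A second, more technical point is ensuring the propagation estimate~\eqref{Lemma:Growth3:E12}--\eqref{Lemma:Growth3:E14} really does keep the mass concentrated for all positive times starting from $t(\rho)=0$: here one uses that the negative contributions from $\mathfrak{G}_1$ and $\mathfrak{G}_3$ carry the favorable factor $R_i^\theta$ and that $c_{\mathrm{ini}}<\theta$, so that for $\mathfrak{m}(\rho)$ large the loss over the bounded time horizon $T<T_0$ is dominated by the gain $\tfrac{e^3-1}{e^4-1}\mathcal{C}_* R_i^\rho$. Everything else is a direct invocation of the already-established Propositions~\ref{Lemma:Growth2}, \ref{Lemma:Growth1}, and~\ref{Lemma:Growth3}, so the proof reduces to these two verifications plus the final measure-theoretic contradiction $0<|\text{interval}| \le |\mathcal{A}_\mathfrak{m}^T| \le C_\mathcal{A}R_\mathfrak{m}^{c_\mathcal{A}}\to 0$.
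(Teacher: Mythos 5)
Your proposal is correct and takes essentially the same route as the paper: the paper also propagates the initial concentration \eqref{Theorem1:4} forward via the exponential test function (with loss of order $R_{\mathfrak{m}}^{\theta}\,t$, harmless since $c_{\mathrm{ini}}<\theta$), picks $c_{\mathrm{ini}}<\rho$ as in \eqref{Lemma:Growth4:E1} so that the whole interval $[0,T]$ lies in $\mathcal{A}_{\mathfrak{m}-2}^T$, and then plays this against the bound $|\mathcal{A}_{\mathfrak{m}-2}^T|\le C_\mathcal{A} R_{\mathfrak{m}-2}^{c_\mathcal{A}}$ of Proposition \ref{Lemma:Growth3} (stated there directly as $T_0\le C_\mathcal{A} R_{\mathfrak{m}-2}^{c_\mathcal{A}}\to 0$ rather than as your contradiction). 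Your additional suggestion to re-run Steps 3--4 of the proof of Proposition \ref{Lemma:Growth3} is redundant once its statement is invoked, but it is not incorrect.
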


\begin{proof}
	We choose
	\begin{equation}\label{Lemma:Growth4:E1} 
		\begin{aligned}
			& 0 < c_{\mathrm{ini}} < \rho < \min\left\{\frac{2\delta - \frac{1}{2} - \varrho }{10(2 + \mu + \varrho)},\ {2\delta - \varrho},\, \theta,\, \tfrac{2}{3}\delta,\ \tfrac{\frac{2\delta - \frac{1}{2} - \varrho }{5(2 + \mu + \varrho)} + \varrho}{2} \right\}, \\
			& \max\{0,\ 2\rho - \varrho\} < \varepsilon < \min\left\{ 2\delta - \rho - \varrho,\ \frac{2\delta - \frac{1}{2} - \varrho }{5(2 + \mu + \varrho)} \right\}.
		\end{aligned}
	\end{equation}
	
Hence 	\begin{equation}\label{Lemma:Growth4:E2} \varepsilon \ < \  2\delta - \rho -\varrho  \ < \  2\delta -\varrho - c_{\mathrm{ini}}.	\end{equation}
Following \eqref{Lemma:Growth3:E5}-\eqref{Lemma:Growth3:E8}, we use $\phi(\omega) = \exp\left[\frac{4}{R_\mathfrak m}\left({R_{\mathfrak m}} - \omega\right)_+\right] - 1$ as a test function and find
	\begin{equation*} 
		\partial_t \left( \int_{\mathbb{R}^{+}} \mathrm{d}\omega\, \phi(\omega) G(t, \omega) \right)
		= \mathfrak{G}_1 + \mathfrak{G}_2 + \mathfrak{G}_3,
	\end{equation*}
	where
	\begin{equation*} 
		\begin{aligned}
			\mathfrak{G}_1 :=\; & 2c_{12} \iint_{\omega_1 > \omega_2} \mathrm{d}\omega_1 \mathrm{d}\omega_2\, \mathfrak{A}(\omega_1) \mathfrak{A}(\omega_2)\, f(\omega_1) f(\omega_2) \\
			& \quad \times \Big[ \mathfrak{A}(\omega_1 + \omega_2) \big( \phi(\omega_1 + \omega_2) - \phi(\omega_1) - \phi(\omega_2) \big) \\
			& \qquad\quad - \mathfrak{A}(\omega_1 - \omega_2) \big( \phi(\omega_1) - \phi(\omega_1 - \omega_2) - \phi(\omega_2) \big) \Big] \\
			& + c_{12} \iint_{\omega_1 = \omega_2} \mathrm{d}\omega_1 \mathrm{d}\omega_2\, \mathfrak{A}(\omega_1)^2 f(\omega_1)^2\, \mathfrak{A}(2\omega_1)\, \big[ \phi(2\omega_1) - 2\phi(\omega_1) \big],
		\end{aligned}
	\end{equation*}
	
	\begin{equation*} \label{Lemma:Growth4:E3}
		\begin{aligned}
			\mathfrak{G}_2 :=\; & c_{22} \iiint_{\mathbb{R}_+^3} \mathrm{d}\omega_1\,\mathrm{d}\omega_2\,\mathrm{d}\omega\, f(\omega_1) f(\omega_2) f(\omega)\, \mathbf{1}_{\omega+\omega_1-\omega_2 \ge 0} \\
			& \times \left[ \frac{\max\{|\omega_1 - \omega_2|,\ |\omega_2 - \omega|\}}{2(\omega + \omega_1)} \right]^\mu \cdot \Big\{ 
			[-\phi(\omega_{\text{Max}}) - \phi(\omega_{\text{Min}}) + \phi(\omega_{\text{Mid}}) \\
			& \quad + \phi(\omega_{\text{Max}} + \omega_{\text{Min}} - \omega_{\text{Mid}})] \prod_{x \in \{\omega_{\text{Max}}, \omega_{\text{Min}}, \omega_{\text{Mid}}, \omega_{\text{Max}} + \omega_{\text{Min}} - \omega_{\text{Mid}}\}}\Theta(x)\, |k_{\text{Min}}| \\
			& + [-\phi(\omega_{\text{Max}}) - \phi(\omega_{\text{Mid}}) + \phi(\omega_{\text{Min}}) + \phi(\omega_{\text{Max}} + \omega_{\text{Mid}} - \omega_{\text{Min}})] \\
			& \quad \times \prod_{x \in \{\omega_{\text{Max}}, \omega_{\text{Mid}}, \omega_{\text{Min}}, \omega_{\text{Max}} + \omega_{\text{Mid}} - \omega_{\text{Min}}\}}\Theta(x)\, |k_{\text{Min}}| \\
			& + [-\phi(\omega_{\text{Min}}) - \phi(\omega_{\text{Mid}}) + \phi(\omega_{\text{Max}}) + \phi(\omega_{\text{Min}} + \omega_{\text{Mid}} - \omega_{\text{Max}})] \\
			& \quad \times \mathbf{1}_{\omega_{\text{Min}} + \omega_{\text{Mid}} - \omega_{\text{Max}} \ge 0} \prod_{x \in \{\omega_{\text{Max}}, \omega_{\text{Mid}}, \omega_{\text{Min}}, \omega_{\text{Min}} + \omega_{\text{Mid}} - \omega_{\text{Max}}\}}\Theta(x) \\
			& \quad \times \min \left\{ |k(\omega_{\text{Max}})|, |k(\omega_{\text{Min}})|, |k(\omega_{\text{Mid}})|, |k(\omega_{\text{Min}} + \omega_{\text{Mid}} - \omega_{\text{Max}})| \right\}
			\Big\}
		\end{aligned}
	\end{equation*}
	
	and
	\begin{equation*} \label{Lemma:Growth4:E4}
		\begin{aligned}
			\mathfrak{G}_3 :=\; & 3c_{31} \iiint_{\omega_1 > \omega_2 + \omega_3} \mathrm{d}\omega_1\, \mathrm{d}\omega_2\, \mathrm{d}\omega_3\, \mathfrak{A}(\omega_1) \mathfrak{A}(\omega_2) \mathfrak{A}(\omega_3)\, f(\omega_1) f(\omega_2) f(\omega_3) \\
			& \times \Big[ \mathfrak{A}(\omega_1 + \omega_2 + \omega_3) \big( \phi(\omega_1 + \omega_2 + \omega_3) - \phi(\omega_1) - \phi(\omega_2) - \phi(\omega_3) \big) \\
			& \quad - \mathfrak{A}(\omega_1 - \omega_2 - \omega_3) \big( \phi(\omega_1) - \phi(\omega_1 - \omega_2 - \omega_3) - \phi(\omega_2) - \phi(\omega_3) \big) \Big] \\
			& + c_{31} \iint_{\omega_1 = \omega_2 + \omega_3} \mathrm{d}\omega_1\, \mathrm{d}\omega_2\, \mathfrak{A}(\omega_1) \mathfrak{A}(\omega_2) \mathfrak{A}(\omega_3)\, f(\omega_1) f(\omega_2) f(\omega_3) \\
			& \quad \times \mathfrak{A}(3\omega_1) \big[ \phi(3\omega_1) - 3\phi(\omega_1) \big] \\
			& + c_{31} \iiint_{\mathbb{R}_+^3 \setminus \left( \{\omega_1 > \omega_2 + \omega_3\} \cup \{\omega_2 > \omega_1 + \omega_3\} \cup \{\omega_3 > \omega_1 + \omega_2\} \right)} \mathrm{d}\omega_1\, \mathrm{d}\omega_2\, \mathrm{d}\omega_3 \\
			& \quad \times \mathfrak{A}(\omega_1 + \omega_2 + \omega_3)\, \mathfrak{A}(\omega_1) \mathfrak{A}(\omega_2) \mathfrak{A}(\omega_3)\, f(\omega_1) f(\omega_2) f(\omega_3) \\
			& \quad \times \big[ \phi(\omega_1 + \omega_2 + \omega_3) - \phi(\omega_1) - \phi(\omega_2) - \phi(\omega_3) \big].
		\end{aligned}
	\end{equation*}
Similar to \eqref{Lemma:Growth3:E13}-\eqref{Lemma:Growth3:E14}, we estimate  
\begin{equation*} \label{Lemma:Growth4:E5}
	\begin{aligned}
		& \int_{\mathbb{R}^{+}} \mathrm{d}\omega\, \left[\exp\left( \frac{1}{R_{\mathfrak m}} \left(4R_{\mathfrak m} - \omega\right)_+ \right) - 1\right] G(t, \omega) \\
		\ge\; & \int_{\mathbb{R}^{+}} \mathrm{d}\omega\, \left[\exp\left( \frac{1}{R_{\mathfrak m}} \left(4R_{\mathfrak m} - \omega\right)_+ \right) - 1\right] G(0, \omega) \ -\ C_0 \phi(0) \left[ R_{\mathfrak m}^{\theta} \left( \mathscr{E} + \mathscr{M} \right)^2 + R_{\mathfrak m}^{\theta} \left( \mathscr{E} + \mathscr{M} \right)^3 \right] t \\
		\ge\; & \int_{[0, R_{\mathfrak m}]} \mathrm{d}\omega\, \left[e^3-1\right] G(0, \omega) \ - \  C_0 \phi(0) \left[ R_{\mathfrak m}^{\theta} \left( \mathscr{E} + \mathscr{M} \right)^2 + R_{\mathfrak m}^{\theta} \left( \mathscr{E} + \mathscr{M} \right)^3 \right] t \\
		\ge\; &  \left[e^3-1\right] C_{\mathrm{ini}} R_{\mathfrak m}^{c_{\mathrm{ini}}} \ - \  C_0 \phi(0) \left[ R_\mathfrak m^{\theta} \left( \mathscr{E} + \mathscr{M} \right)^2 + R_\mathfrak m^{\theta} \left( \mathscr{E} + \mathscr{M} \right)^3 \right] t,
	\end{aligned}
\end{equation*}

which implies
\begin{equation*} \label{Lemma:Growth4:E6}
	\begin{aligned}
		\int_{[0, 4R_{\mathfrak m}]} \mathrm{d}\omega\, G(t, \omega)
		\ge\; & \int_{[0, 4R_{\mathfrak m}]} \mathrm{d}\omega\, \frac{1}{e^4-1} \left[\exp\left( \frac{1}{R_{\mathfrak m}} \left(4R_{\mathfrak m} - \omega\right)_+ \right) - 1\right]  \\
		\ge\; &  \frac{e^3-1}{e^4-1}  C_{\mathrm{ini}} R_{\mathfrak m}^{c_{\mathrm{ini}}} \  - \ \frac{1}{e^4-1}  C_0 \phi(0) \left[ R_\mathfrak m^{\theta} \left( \mathscr{E} + \mathscr{M} \right)^2 + R_{\mathfrak m}^{\theta} \left( \mathscr{E} + \mathscr{M} \right)^3 \right] t \\
		\ge\; & \frac{1}{2e} C_{\mathrm{ini}} R_{\mathfrak m}^{c_{\mathrm{ini}}} \ge \frac{1}{2e}  C_{\mathrm{ini}}4^{-\rho} R_{\mathfrak m-2}^{\rho},
	\end{aligned}
\end{equation*}

for
\begin{equation*} \label{Lemma:Growth4:E7}
	\begin{aligned}
		0 \le t \le \frac{( e^4-2e+1) C_{\mathrm{ini}} R_{\mathfrak m}^{c_{\mathrm{ini}}}}{ 2e(e^4-1) C_0  \left[ R_\mathfrak m^{\theta} (\mathscr{E} + \mathscr{M})^2 + R_\mathfrak m^{\theta} (\mathscr{E} + \mathscr{M})^3 \right]}.
	\end{aligned}
\end{equation*}
Here, $\mathfrak{C}_0$ denotes a universal constant that is independent of both $\mathfrak m$ and $G$.

Using the definition \eqref{Sec:Growthlemmas:2}, we obtain, for $\mathcal{C}_* = \frac{1}{2e} C_{\mathrm{ini}} 4^{-\rho}$,
\begin{equation*} \label{Lemma:Growth4:E8}
	\mathcal{A}_{\mathfrak m-2}^T = [0, T],
\end{equation*}
for all
\[
T \in \left[0, \min\left\{T_0,  \frac{( e^4-2e+1) C_{\mathrm{ini}} R_{\mathfrak m}^{c_{\mathrm{ini}}}}{ 2e(e^4-1) C_0  \left[ R_\mathfrak m^{\theta} (\mathscr{E} + \mathscr{M})^2 + R_\mathfrak m^{\theta} (\mathscr{E} + \mathscr{M})^3 \right]}\right\}\right).
\]

By Proposition \ref{Lemma:Growth3}, we bound
\begin{equation*} \label{Lemma:Growth4:E9}
	\left\vert \mathcal{A}_{\mathfrak m-2}^T \right\vert \leq C_\mathcal{A} \left( R_
	{\mathfrak{m}-2} \right)^{\min\left\{ \tfrac{1}{4}(2\delta  - \tfrac{1}{2} - \varrho),\, 2\delta - \varepsilon - \varrho \right\}},
\end{equation*}
for all
\[
T \in \left[0, \min\left\{T_0,\ \frac{ C_{\mathrm{ini}} R_{\mathfrak m}^{c_{\mathrm{ini}}}}{2e {C}_0   \left[ R_\mathfrak m^{\theta} (\mathscr{E} + \mathscr{M})^2 + R_\mathfrak m^{\theta} (\mathscr{E} + \mathscr{M})^3 \right]} \right\} \right).
\]

Therefore,
\begin{equation*}
	\begin{aligned}
		\min\left\{T_0,  \frac{( e^4-2e+1) C_{\mathrm{ini}} R_{\mathfrak m}^{c_{\mathrm{ini}}}}{ 2e(e^4-1) C_0  \left[ R_\mathfrak m^{\theta} (\mathscr{E} + \mathscr{M})^2 + R_\mathfrak m^{\theta} (\mathscr{E} + \mathscr{M})^3 \right]}\right\}
		\leq\ C_\mathcal{A} \left( R_{\mathfrak m-2} \right)^{\min\left\{ \tfrac{1}{4}(2\delta  - \tfrac{1}{2} - \varrho),\, 2\delta - \varepsilon - \varrho \right\}}.
	\end{aligned}
\end{equation*}

By \eqref{Lemma:Growth4:E1}-\eqref{Lemma:Growth4:E2} 
\[
c_{\mathrm{ini}} < \min\left\{ \tfrac{1}{4}(2\delta  - \tfrac{1}{2} - \varrho),\, 2\delta - \varepsilon - \varrho \right\},
\]
we deduce that, for sufficiently large $\mathfrak m$,
\begin{equation*}
	\begin{aligned}
		T_0 \leq C_\mathcal{A} \left( R_{\mathfrak m-2} \right)^{\min\left\{ \tfrac{1}{4}(2\delta  - \tfrac{1}{2} - \varrho),\, 2\delta - \varepsilon - \varrho \right\}} \to 0, 
	\end{aligned}
\end{equation*}
as $\mathfrak m\to\infty$.

\end{proof}

\begin{proposition}[Finite-Time Condensation] 
	\label{Lemma:Growth5} 	We assume Assumption A and Assumption B. 
  Let  
	\begin{equation}\label{Lemma:Growth5:2} 
		0<	c_{\mathrm{ini}} <  \theta,
	\end{equation}	
	
	then
	$$	T_0 <\infty.$$

\end{proposition}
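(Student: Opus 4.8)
The plan is to argue by contradiction and to re‑run the test‑function/multiscale machinery used in the proof of Proposition~\ref{Lemma:Growth4}, but now extracting only a \emph{finite} bound on $T_0$ rather than $T_0=0$; this is precisely what the weakened hypothesis $c_{\mathrm{ini}}<\theta$ should buy. If $c_{\mathrm{ini}}$ happens to satisfy the full condition~\eqref{Theorem1:3}, Proposition~\ref{Lemma:Growth4} already gives $T_0=0<\infty$, so we may assume $c_{\mathrm{ini}}$ is at least as large as the right–hand side of~\eqref{Theorem1:3}; then any admissible $\rho$ (one satisfying~\eqref{rho}) automatically obeys $\rho<c_{\mathrm{ini}}<\theta$. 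Suppose, for contradiction, that $T_0=\infty$.

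First I would record the persistence estimate coming from the test function $\phi(\omega)=\exp\!\big[\tfrac1R(4R-\omega)_+\big]-1$, exactly as in~\eqref{Lemma:Growth4:E5}--\eqref{Lemma:Growth4:E6}: using $\mathfrak G_2\ge 0$, the lower bounds $\mathfrak G_1,\mathfrak G_3\gtrsim -R^{\theta}\phi(0)\big[(\mathscr M+\mathscr E)^2+(\mathscr M+\mathscr E)^3\big]$, and the initial lower bound~\eqref{Theorem1:4}, one obtains, for every $R\in(0,r_0/5)$,
\[
\int_{[0,4R)}\mathrm d\omega\,G(t,\omega)\ \ge\ c_1 R^{\,c_{\mathrm{ini}}}\qquad\text{for all }0\le t\le \tau_R:=c_2\,R^{\,c_{\mathrm{ini}}-\theta},
\]
with $c_1,c_2>0$ fixed. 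The decisive point is that $c_{\mathrm{ini}}<\theta$ makes the exponent $c_{\mathrm{ini}}-\theta$ negative, so $\tau_R\to\infty$ as $R\to 0$: whatever concentration near the origin is present initially, it persists over arbitrarily long time windows at small scales.

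Next I would feed this into the multiscale estimates at scale $R=R_{\mathfrak m}$. The key adjustment relative to Proposition~\ref{Lemma:Growth4} is that, since only $\rho<c_{\mathrm{ini}}$ is available, the constant $\mathcal C_*$ in~\eqref{Sec:Growthlemmas:2} must be taken scale–dependent, $\mathcal C_*:=c_1R_{\mathfrak m-2}^{\,c_{\mathrm{ini}}-\rho}\in(0,c_1)$, so that $c_1R_{\mathfrak m-2}^{\,c_{\mathrm{ini}}}=\mathcal C_*R_{\mathfrak m-2}^{\,\rho}$ and hence $[0,\min\{T,\tau_{\mathfrak m}\}]\subset\mathcal A_{\mathfrak m-2}^{T}$ for every $T$. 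Re‑running Propositions~\ref{Lemma:Growth2},~\ref{Lemma:Growth1} and~\ref{Lemma:Growth3} with this $\mathcal C_*$ and $\rho$ (the arguments are unchanged; the constants are merely tracked as functions of $\mathcal C_*$), and using $T_0=\infty$ to allow $T$ arbitrarily large, gives $\min\{T,\tau_{\mathfrak m}\}\le|\mathcal A_{\mathfrak m-2}^{T}|\le C_{\mathcal A}R_{\mathfrak m-2}^{\,c_{\mathcal A}}$, so letting $T\uparrow\infty$ with $\mathfrak m$ fixed and large yields $c_2R_{\mathfrak m}^{\,c_{\mathrm{ini}}-\theta}\le C_{\mathcal A}(4R_{\mathfrak m})^{c_{\mathcal A}}$.

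The main obstacle is precisely this last step: both the threshold $\mathfrak M^{*}(\rho)$ above which Proposition~\ref{Lemma:Growth3} applies and the constant $C_{\mathcal A}$ depend on $\mathcal C_*=c_1R_{\mathfrak m-2}^{\,c_{\mathrm{ini}}-\rho}$, hence on $\mathfrak m$, so the contradiction $c_2R_{\mathfrak m}^{\,c_{\mathrm{ini}}-\theta}>C_{\mathcal A}(4R_{\mathfrak m})^{c_{\mathcal A}}$ is not automatic. I expect the correct bookkeeping to show, using $c_{\mathrm{ini}}<\theta$ together with~\eqref{Parameters}, that the required scale threshold $\mathfrak M^{*}(\rho,\mathcal C_*(\mathfrak m))$ grows strictly slower in $\mathfrak m$ than $\mathfrak m$ itself and that $C_{\mathcal A}$ blows up slower than $R_{\mathfrak m}^{\,c_{\mathrm{ini}}-\theta}$ decays, so that an admissible large $\mathfrak m$ producing the contradiction can be selected; this coupled choice of $\mathfrak m$ and $\mathcal C_*$ is the heart of the proof, the rest being a verbatim adaptation of Propositions~\ref{Lemma:Growth3} and~\ref{Lemma:Growth4}. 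Alternatively, one can bypass Proposition~\ref{Lemma:Growth3} and argue directly in the spirit of its Step~3--Step~4: under $T_0=\infty$ the persistence estimate forces, at every small scale, a $(1-\nu)$‑concentration of mass in one of the overlapping cells $\mathscr O_i^{h_{\mathfrak m},R_{\mathfrak m}}$; iterating this down through scales and invoking the collisional bound of Proposition~\ref{Propo:Collision} together with the a priori estimate of Lemma~\ref{lemma:Concave} produces a nontrivial Dirac mass at the origin in finite time, contradicting~\eqref{T0}.
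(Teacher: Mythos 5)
Your overall strategy coincides with the paper's: from the test function $\exp[\tfrac1R(4R-\omega)_+]-1$ and the hypothesis \eqref{Theorem1:4} you extract a persistence window $\tau_R\sim R^{\,c_{\mathrm{ini}}-\theta}$ on which $\int_{[0,4R)}G(t)\gtrsim R^{\,c_{\mathrm{ini}}}$, you rewrite this concentration as $\mathcal C_*R^{\rho}$ with the scale--dependent constant $\mathcal C_*=c_1R^{\,c_{\mathrm{ini}}-\rho}$, and you invoke Proposition \ref{Lemma:Growth3} to bound $|\mathcal A^{T}|$, so that $\min\{T_0,\tau\}$ is controlled by $C_{\mathcal A}R^{c_{\mathcal A}}$; finiteness of $T_0$ then follows because $c_{\mathrm{ini}}<\theta$ makes $\tau_R\to\infty$ while $C_{\mathcal A}R^{c_{\mathcal A}}\to0$. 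This is exactly the paper's proof, which fixes a single scale $R_{\mathcal N_0}<r_0$, defines $\mathcal C_*^o$ through \eqref{Lemma:Growth5:1} (i.e.\ $\mathcal C_*^o=C_{\mathrm{ini}}R_{\mathcal N_0}^{\,c_{\mathrm{ini}}-\rho^o}$, the same rescaling you propose), applies Propositions \ref{Lemma:Growth4}-type persistence and \ref{Lemma:Growth3}, and reads off $T_0\le C_{\mathcal A}R_{\mathcal N_0}^{c_{\mathcal A}}<\infty$ for one sufficiently large $\mathcal N_0$.

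The problem is that, as written, your proposal does not finish the proof: the decisive step is precisely where you write ``I expect the correct bookkeeping to show \dots'' and stop. You correctly observe that the thresholds $\mathfrak M_*$, $\mathfrak M^*(\rho)$ of Propositions \ref{Lemma:Growth2} and \ref{Lemma:Growth3} depend on $T$, and that $C_{\mathcal A}$ enters through estimates such as \eqref{Lemma:Growth1:E1}, where the lower bound $\int_{[0,R_{\mathfrak m})}G\ge\mathcal C_*R_{\mathfrak m}^{\rho}$ is used, so $C_{\mathcal A}$ degrades polynomially in $1/\mathcal C_*$ when $\mathcal C_*=c_1R_{\mathfrak m-2}^{\,c_{\mathrm{ini}}-\rho}$ shrinks with $\mathfrak m$; but the ``coupled choice of $\mathfrak m$ and $\mathcal C_*$'' that you declare to be the heart of the proof is never carried out, and with, say, $C_{\mathcal A}\propto\mathcal C_*^{-3}$ the inequality $\tau_{\mathfrak m}>C_{\mathcal A}R_{\mathfrak m}^{c_{\mathcal A}}$ at large $\mathfrak m$ is not automatic when $c_{\mathrm{ini}}$ is close to $\theta$, so the contradiction cannot simply be asserted. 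The alternative route bypassing Proposition \ref{Lemma:Growth3} is likewise only a sketch. Note that the paper itself sidesteps rather than performs this bookkeeping: it fixes one scale $\mathcal N_0$, from then on treats $\mathcal C_*^o$ as the fixed constant of \eqref{Sec:Growthlemmas:2}, and applies Proposition \ref{Lemma:Growth3} with $C_{\mathcal A}$ and the thresholds regarded as independent of $\mathcal N_0$. If you adopt that convention your argument reduces verbatim to the paper's; if instead you insist on tracking the $\mathcal C_*$-dependence (a legitimate concern), then the exponent accounting you postpone is genuinely needed and is missing, so the proposal as submitted has a gap.
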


\begin{proof}
Let \( \rho^o, \varepsilon^o \) be sufficiently small constants satisfying
\begin{equation}\label{Lemma:Growth5:3} 
	0 < \rho^o < \min\left\{ \frac{2\delta - \tfrac{1}{2} - \varrho}{10(2 + \mu + \varrho)},\ 2\delta - \varrho,\ \theta,\ \tfrac{2}{3}\delta,\ \tfrac{\frac{2\delta - \tfrac{1}{2} - \varrho}{5(2 + \mu + \varrho)} + \varrho}{2} \right\}.
\end{equation}

We choose a sufficiently large natural number \( \mathcal{N}_0 \) and a constant \( \mathcal{C}_*^o > 0 \) such that
\begin{equation}  \label{Lemma:Growth5:1} 
	\int_{0}^{R_{\mathcal{N}_0}} \mathrm{d}\omega\, G(0, \omega) \ge \mathcal{C}_*^o R_{\mathcal{N}_0}^{\rho^o} = C_{\mathrm{ini}}\, R_{\mathcal{N}_0}^{c_{\mathrm{ini}}}.
\end{equation}

We now set
\[
\max\{0,\ 2\rho^o - \varrho\} < \varepsilon < \min\left\{ 2\delta - \varrho - \rho^o,\ \frac{2\delta - \tfrac{1}{2} - \varrho}{5(2 + \mu + \varrho)} \right\}.
\]

Setting $\mathcal{C}_*=\mathcal{C}_*^o$, $\rho   =\rho^o$, similar to the proof of Proposition \ref{Lemma:Growth4}, we obtain
\begin{equation*} 
	\mathcal{A}_{\mathcal{N}_0}^T = [0, T],
\end{equation*}
for all
\[
T \in \left[0, \min\left\{ T_0,\ \frac{\mathcal{C}_*( e^4-2e+1) R_{\mathcal{N}_0}^\rho}{2e(e^4-1){C}_0 \left[ R_{\mathcal{N}_0}^{\theta} (\mathscr{E} + \mathscr{M})^2 + R_{\mathcal{N}_0}^{\theta} (\mathscr{E} + \mathscr{M})^3 \right]} \right\} \right).
\]

By Proposition \ref{Lemma:Growth3}, we bound
\begin{equation*} \label{Lemma:Growth4:E9}
	\left\vert \mathcal{A}_{\mathcal{N}_0}^T \right\vert \leq C_\mathcal{A} \left( R_{\mathcal{N}_0} \right)^{\min\left\{ \tfrac{1}{4}(2\delta  - \tfrac{1}{2} - \varrho),\ 2\delta - \varepsilon - \varrho \right\}},
\end{equation*}
for all
\[
T \in \left[0, \min\left\{ T_0,\ \frac{\mathcal{C}_*( e^4-2e+1) R_{\mathcal{N}_0}^\rho}{2e(e^4-1){C}_0 \left[ R_{\mathcal{N}_0}^{\theta} (\mathscr{E} + \mathscr{M})^2 + R_{\mathcal{N}_0}^{\theta} (\mathscr{E} + \mathscr{M})^3 \right]} \right\} \right).
\]

Therefore,
\begin{equation*}
	\begin{aligned}
	\min\left\{ T_0,\ \frac{\mathcal{C}_*( e^4-2e+1) R_{\mathcal{N}_0}^\rho}{2e(e^4-1){C}_0 \left[ R_{\mathcal{N}_0}^{\theta} (\mathscr{E} + \mathscr{M})^2 + R_{\mathcal{N}_0}^{\theta} (\mathscr{E} + \mathscr{M})^3 \right]} \right\}
		\leq C_\mathcal{A} \left( R_{\mathcal{N}_0} \right)^{\min\left\{ \tfrac{1}{4}(2\delta - \tfrac{1}{2} - \varrho),\ 2\delta - \varepsilon - \varrho \right\}}.
	\end{aligned}
\end{equation*}

Since
\[
\rho < \min\left\{ \tfrac{1}{4}(2\delta  - \tfrac{1}{2} - \varrho),\ 2\delta - \varepsilon - \varrho \right\},
\]
we deduce that, for sufficiently large $\mathcal{N}_0$,
\begin{equation*}
	\begin{aligned}
		T_0 \leq C_\mathcal{A} \left( R_{\mathcal{N}_0} \right)^{\min\left\{ \tfrac{1}{4}(2\delta  - \tfrac{1}{2} - \varrho),\ 2\delta - \varepsilon - \varrho \right\}} < \infty.
	\end{aligned}
\end{equation*}

\end{proof}
\section{Proof of the main Theorem \ref{Theorem1}}\label{Sec:Proof}
The global existence result follows from Proposition \ref{Lemma:Global}. Item (i) of the main theorem follows from Proposition \ref{Lemma:Growth4} and Item (ii) follows from Proposition \ref{Lemma:Growth5}.

\bibliographystyle{plain}

\bibliography{WaveTurbulence}

\end{document}